\newlength{\defbaselineskip}
\newtheorem{theorem}{Theorem}
\newtheorem{definition}{Definition}
\newtheorem{lemma}{Lemma}
\newtheorem{proposition}{Proposition}
\DeclareMathOperator{\poly}{poly}
\DeclareMathOperator{\bigO}{\mathcal{O}}
\DeclareMathOperator{\nnz}{nnz}
\DeclareMathOperator{\A}{\mathcal{A}}
\DeclareMathOperator{\R}{\mathbb{R}}
\DeclareMathOperator{\C}{\mathcal{C}}
\DeclareMathOperator{\E}{\mathcal{E}}
\DeclareMathOperator{\Exp}{\mathbf{E}}
\DeclareMathOperator{\Var}{\mathbf{Var}}
\begin{document}

\title{Quantile Regression for Large-scale Applications%
\footnote{A conference version of this paper appears under the same title in the Proc. of the 2013 ICML~\cite{YMM13_ICML}.}
}

\author{
  Jiyan Yang
  \thanks{
    ICME,
    Stanford University,
    Stanford, CA 94305.
    Email: jiyan@stanford.edu
  }
  \and
  Xiangrui Meng
  \thanks{
    LinkedIn,
    2029 Stierlin Ct,
    Mountain View, CA 94043.
    Email: ximeng@linkedin.com
  }
  \and
  Michael W. Mahoney
  \thanks{
       International Computer Science Institute and the Department of Statistics, 
       University of California at Berkeley, Berkeley, CA 94720.
       Email: mmahoney@icsi.berkeley.edu
  }
}

\date{}
\maketitle


\begin{abstract} 
Quantile regression is a method to estimate the quantiles of the 
conditional distribution of a response variable, and as such it permits a 
much more accurate portrayal of the relationship between the response 
variable and observed covariates than methods such as Least-squares or 
Least Absolute Deviations regression.
It can be expressed as a linear program, and, with appropriate 
preprocessing, interior-point methods can be used to find a solution for
moderately large problems.
Dealing with very large problems, \emph{e.g.}, involving data up to and 
beyond the terabyte regime, remains a challenge.
Here, we present a randomized algorithm that runs in nearly 
linear time in the size of the input and that, with constant probability, 
computes a $(1+\epsilon)$ approximate solution to an arbitrary quantile 
regression problem.
As a key step, our algorithm computes a low-distortion subspace-preserving
embedding with respect to the loss function of quantile regression.
Our empirical evaluation illustrates that our algorithm is competitive with 
the best previous work on small to medium-sized problems, and that in 
addition it can be implemented in MapReduce-like environments and  
applied to terabyte-sized~problems.
\end{abstract}

\section{Introduction}

Quantile regression is a method to estimate the quantiles of the conditional 
distribution of a response variable, expressed as functions of observed 
covariates~\cite{KB78}, in a manner analogous to the way in which 
Least-squares regression estimates the conditional mean. 
The Least Absolute Deviations regression (\emph{i.e.}, $\ell_1$ regression) 
is a special case of quantile regression that involves computing the median 
of the conditional distribution. 
In contrast with $\ell_1$ regression and the more popular $\ell_2$ or 
Least-squares regression, quantile regression involves minimizing 
asymmetrically-weighted absolute residuals. 
Doing so, however, permits a much more accurate portrayal of the 
relationship between the response variable and observed covariates, and it 
is more appropriate in certain non-Gaussian settings.
For these reasons, quantile regression has found applications in many 
areas, \emph{e.g.}, survival analysis and 
economics~\cite{Buc94,KH01,Buhai05}.
As with $\ell_1$ regression, the quantile regression problem can be 
formulated as a linear programming problem, and thus simplex or interior-point 
methods can be applied~\cite{KD93,PK97,Por97}.
Most of these methods are efficient only for problems of small to 
moderate size, and thus to solve very large-scale quantile regression 
problems more reliably and efficiently, we need new computational techniques.


In this paper, we provide a fast algorithm to compute a $(1+\epsilon)$ 
relative-error approximate solution to the over-constrained quantile 
regression problem.
Our algorithm constructs a low-distortion subspace embedding of the form 
that has been used in recent developments in randomized algorithms for 
matrices and large-scale data problems, and our algorithm runs in time that
is nearly linear in the number of nonzeros in the input data.

In more detail, recall that a quantile regression problem can be specified 
by a (design) matrix $A \in \R^{n \times d}$, a (response) vector 
$b \in \R^n$, and a parameter $\tau \in (0,1)$, in which case the quantile 
regression problem can be solved via the optimization problem 
\begin{align}
 \label{ori_pb}
   \textrm{minimize}_{x \in \R^d}   \quad  \rho_\tau(b-Ax)  ,
\end{align}
where $\rho_\tau(y) = \sum_{i = 1}^d \rho_\tau(y_i)$, for $y \in \R^d$, where
\begin{align}
     \label{eq:loss_func}
     \rho_{\tau}(z) = 
      \begin{cases}
           \tau z,  &  z\geq0;  \\
           (\tau-1)z,  &  z<0  ,
           \end{cases}
\end{align} 
for $z \in \R$,
is the corresponding loss function.
In the remainder of this paper, we will use $A$ to denote the augmented 
matrix $\begin{bmatrix} b & -A \end{bmatrix}$, and we will consider 
$A \in \R^{n\times d}$.
With this notation, the quantile regression problem of Eqn.~\eqref{ori_pb} can 
equivalently be expressed as a constrained optimization problem with a single linear 
constraint,
\begin{align}
 \label{pb}
     \textrm{minimize}_{x\in \C} \quad \rho_\tau(Ax),
\end{align}
where $\C = \{ x \in \R^d \,|\, c^Tx = 1 \}$ and $c$ is a unit vector with the 
first coordinate set to be $1$.
The reasons we want to switch from Eqn.~\eqref{ori_pb} to Eqn.~\eqref{pb} are as follows.
    Firstly, it is for notational simplicity in the presentation of our theorems and algorithms. 
    Secondly, all the results about low-distortion or $(1\pm \epsilon)$-subspace embedding in this paper
    holds for any $x \in \R^d$,
     \begin{equation*}
 (1/ \kappa_1) \|Ax\|_1 \leq \|\Pi Ax\|_1 \leq  \kappa_2 \|Ax\|_1.
  \end{equation*}
    In particular, we can consider $x$ in some specific subspace of $\R^d$.
    For example, in our case, $x \in \C$.
    Then, the equation above is equivalent to the following,
     \begin{equation*}
 (1/ \kappa_1) \|b-Ax\|_1 \leq \|\Pi b - \Pi Ax\|_1 \leq  \kappa_2 \|b - Ax\|_1.
  \end{equation*}
  Therefore, using notation $Ax$ with $x$ in some constraint is a more general form of expression.
We will focus on very over-constrained problems with size $n \gg d$. 

Our main algorithm depends on a technical result, presented as 
Lemma~\ref{lem:samp_lem}, which is of independent interest.
Let $A \in \R^{n \times d}$ be an input matrix, and let 
$S \in \R^{s \times n}$ be a random sampling matrix constructed based on 
the following importance sampling probabilities,
 \begin{align*}
   p_i = \min\{1, s \cdot \| U_{(i)} \|_1 /  \|U\|_1 \},
\end{align*}
where $\| \cdot \|_1$ is the element-wise $\ell_1$-norm, and where $U_{(i)}$ 
is the $i$-th row of an $\ell_1$ well-conditioned basis $U$ for the range of 
$A$ (see Definition~\ref{def:basis} and Proposition~\ref{lem:const_samp}).
Then, Lemma~\ref{lem:samp_lem} states that, for a sampling complexity $s$ 
that depends on $d$ but is independent of $n$, 
\begin{equation*}
 (1-\epsilon) \rho_\tau(Ax)   \leq  \rho_\tau(SAx)  \leq (1+\epsilon) \rho_\tau(Ax)
\end{equation*}
will be satisfied for every $x \in \R^d$.

Although one could use, \emph{e.g.}, the algorithm of~\cite{DDHKM09} to 
compute such a well-conditioned basis $U$ and then ``read off'' the 
$\ell_1$-norm of the rows of $U$, doing so would be much slower than the 
time allotted by our main algorithm.
As Lemma~\ref{lem:samp_lem} enables us to leverage
the fast quantile regression theory and the algorithms developed for $\ell_1$ regression,
we provide two sets of additional results, most of which are built from the previous work.
First, we describe three algorithms (Algorithm~\ref{spc_alg},
Algorithm~\ref{cond_alg}, and Algorithm~\ref{new_cond_alg}) for 
computing an implicit representation of a well-conditioned basis; and 
second, we describe an algorithm (Algorithm~\ref{embed_alg}) for 
approximating the $\ell_1$-norm of the rows of the well-conditioned basis from 
that implicit representation.
For each of these algorithms, we prove quality-of-approximation bounds in quantile regression problems,
and we show that they run in nearly ``input-sparsity'' time, \emph{i.e.}, in 
$\bigO(\nnz(A) \cdot \log n)$ time, where $\nnz(A)$ is the number of nonzero 
elements of $A$, plus lower-order terms.
These lower-order terms depend on the time to solve the subproblem we 
construct; and they depend on the smaller dimension $d$ but not on the 
larger dimension $n$.
Although of less interest in theory, these lower-order terms can be 
important in practice, as our empirical evaluation will demonstrate.

We should note that, of the three algorithms for computing a 
well-conditioned basis, the first two appear in~\cite{MM12} and are stated 
here for completeness; and the third algorithm, which is new to this paper, 
is \emph{not} uniformly better than either of the two previous algorithms 
with respect to either condition number or the running time.
(In particular, Algorithm~\ref{spc_alg} has slightly better running time, 
and Algorithm~\ref{cond_alg} has slightly better conditioning properties.)
Our new conditioning algorithm is, however, only slightly worse than the 
better of the two previous algorithms with respect to each of those two
measures.
Because of the trade-offs involved in implementing quantile regression 
algorithms in practical settings, our empirical results show that by using 
a conditioning algorithm that is only slightly worse than the best previous 
conditioning algorithms for each of these two criteria, our new conditioning 
algorithm can lead to better results than either of the previous algorithms 
that was superior by only one of those criteria.

Given these results, our main algorithm for quantile regression is presented 
as Algorithm~\ref{qr_alg}. 
Our main theorem for this algorithm, Theorem~\ref{qr_thm}, states 
that, with constant probability, this algorithm returns a 
$(1+\epsilon)$-approximate solution to the quantile regression problem; and 
that this solution can be obtained in $\bigO(\nnz(A) \cdot \log n)$ time, 
plus the time for solving the subproblem (whose size is 
$\bigO(\mu d^{3} \log(\mu/\epsilon)/\epsilon^2) \times d$, where 
$\mu = \frac{\tau}{1-\tau}$, independent of $n$, when $\tau \in [1/2,1)$).

We also provide a detailed empirical evaluation of our main algorithm for 
quantile regression, including characterizing the quality of the solution 
as well as the running time, as a function of the high dimension $n$, the
lower dimension $d$, the sampling complexity $s$, and the quantile parameter 
$\tau$.
Among other things, our empirical evaluation demonstrates that the output of 
our algorithm is highly accurate, in terms of not only objective function 
value, but also the actual solution quality (by the latter, we mean a norm of the
difference between the exact solution to the full problem and the solution to
the subproblem constructed by our algorithm), when compared with the exact quantile regression, as measured in three different norms.
More specifically, our algorithm yields 2-digit accuracy solution
by sampling only, \emph{e.g.}, about $0.001\%$ of a problem with 
size $2.5e9 \times 50$.\footnote{We use this notation throughout; \emph{e.g.}, by $2.5e9 \times 50$, 
we mean that $n=2.5 \times 10^{9}$ and $d=50$.}
Our new conditioning algorithm outperforms other conditioning-based methods, 
and it permits much larger small dimension $d$ than previous conditioning 
algorithms.
In addition to evaluating our algorithm on moderately-large data that can 
fit in RAM, we also show that our algorithm can be implemented in 
MapReduce-like environments and applied to computing the solution of
terabyte-sized quantile regression problems.


The best previous algorithm for moderately large quantile regression 
problems is due to \cite{PK97} and \cite{Por97}.
Their algorithm uses an interior-point method on a smaller problem that has 
been preprocessed by randomly sampling a subset of the data.
Their preprocessing step involves predicting the sign of each 
$A_{(i)} x^* - b_i$, where $A_{(i)}$ and $b_i$ are the $i$-th row of the input 
matrix and the $i$-th element of the response vector, respectively, and $x^*$ is an optimal solution to the 
original problem. 
When compared with our approach, they compute an optimal solution, while we 
compute an approximate solution; but in worst-case analysis it can be shown that with
high probability our algorithm is guaranteed to work, while their algorithm do not come with such guarantees.
Also, the sampling complexity of their algorithm depends on the higher
dimension $n$, while the number of samples required by our algorithm depends 
only on the lower dimension $d$; but our sampling is with respect to a 
carefully-constructed nonuniform distribution, while they sample uniformly 
at random. 

For a detailed overview of recent work on using randomized algorithms to 
compute approximate solutions for least-squares regression and related 
problems, see the recent review~\cite{Mah-mat-rev_BOOK}.
Most relevant for our work is the algorithm of~\cite{DDHKM09} that 
constructs a well-conditioned basis by ellipsoid rounding and a 
subspace-preserving sampling matrix in order to approximate the solution of 
general $\ell_p$ regression problems, for $p \in [1,\infty)$, in roughly 
$\bigO(nd^5 \log n)$; the algorithms of~\cite{SW11} and~\cite{CDMMMW12} 
that use the ``slow'' and ``fast' versions of Cauchy Transform to obtain a 
low-distortion $\ell_1$ embedding matrix and solve the over-constrained 
$\ell_1$ regression problem in $\bigO(nd^{1.376+})$ and $\bigO(nd\log n)$ 
time, respectively; 
and the algorithm of~\cite{MM12} that constructs low-distortion embeddings 
in ``input sparsity'' time and uses those embeddings to construct a well-conditioned basis
and approximate the 
solution of the over-constrained $\ell_1$ regression problem in 
$\bigO(\nnz(A) \cdot \log n + \poly(d) \log(1/\epsilon)/\epsilon^2)$ time. 
In particular, we will use the two conditioning methods in~\cite{MM12}, as 
well as our ``improvement'' of those two methods, for constructing 
$\ell_1$-norm well-conditioned basis matrices in nearly input-sparsity time.
In this work, we also demonstrate that such well-conditioned basis in $\ell_1$ sense
can be used to solve over-constrained quantile regression problem.

\section{Background and Overview of Conditioning Methods}

\subsection{Preliminaries}
\label{pre}

We use $\| \cdot \|_1$ to denote the element-wise $\ell_1$ norm for both 
vectors and matrices; and we use $[n]$ to denote the set $\{1,2,\ldots,n\}$. 
For any matrix $A$, $A_{(i)}$ and $A^{(j)}$ denote the $i$-th row and the 
$j$-th column of $A$, respectively; and $\A$ denotes the column space of $A$.
For simplicity, we assume $A$ has full column rank; and we always assume that
$\tau \geq \frac{1}{2}$. 
All the results hold for $\tau < \frac{1}{2}$ by simply switching the 
positions of $\tau$ and $1-\tau$.

Although $\rho_\tau( \cdot )$, defined in Eqn.~\ref{eq:loss_func}, is not a norm, since the loss function does 
not have the positive linearity, it satisfies some ``good'' properties, as 
stated in the following lemma:
\begin{lemma}
 \label{properties}
Suppose that $\tau \geq \frac{1}{2}$.  
Then, for any $x, y \in \R^d, a\geq 0$, the following hold:
 \begin{enumerate}
    \item  $\rho_\tau(x + y) \leq \rho_\tau(x) + \rho_\tau(y)$;
    \item  $ (1-\tau)\|x\|_1 \leq \rho_\tau(x) \leq   \tau \|x\|_1$;
    \item $ \rho_\tau(ax) = a\rho_\tau(x)$; and
    \item $ |\rho_\tau(x) - \rho_\tau(y)| \leq \tau \|x-y\|_1$.
 \end{enumerate}
\end{lemma}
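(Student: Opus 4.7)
The plan is to reduce every claim to the scalar (coordinate-wise) case via the sum definition $\rho_\tau(x)=\sum_i \rho_\tau(x_i)$, and then handle each scalar inequality through the decomposition
\[
\rho_\tau(z) = \tau\, z^+ + (1-\tau)\, z^-, \qquad z^+ := \max(z,0),\ z^- := \max(-z,0),
\]
which one reads directly off the case analysis in Eqn.~\eqref{eq:loss_func}. Note that $z^+ + z^- = |z|$ and, crucially, both coefficients $\tau$ and $1-\tau$ are nonnegative when $\tau\in[1/2,1)$.

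I would first prove property 3 (positive homogeneity), since it is immediate: for $a\ge 0$, $(az)^+ = a z^+$ and $(az)^- = a z^-$ in each coordinate, so $\rho_\tau(ax)=a\rho_\tau(x)$. Next I would prove property 2: from $\tau\ge 1-\tau\ge 0$ and the decomposition, $(1-\tau)|z| \le \tau z^+ + (1-\tau)z^- \le \tau|z|$, and summing over coordinates gives the desired sandwich by $(1-\tau)\|x\|_1$ and $\tau\|x\|_1$.

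Property 1 (subadditivity) is the most substantive step, but it also reduces to the scalar case. For any $u,v\in\R$ one checks the elementary inequalities $(u+v)^+ \le u^+ + v^+$ and $(u+v)^- \le u^- + v^-$ (e.g.\ by splitting on the signs of $u,v$, or by observing that $z\mapsto z^+$ and $z\mapsto z^-$ are convex and vanish at $0$). Multiplying by the nonnegative weights $\tau$ and $1-\tau$ respectively and adding yields $\rho_\tau(u+v) \le \rho_\tau(u)+\rho_\tau(v)$; summing coordinatewise gives property 1 on $\R^d$.

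Finally, property 4 follows from combining 1 and 2. Writing $x = y + (x-y)$ and applying property 1 gives $\rho_\tau(x) \le \rho_\tau(y) + \rho_\tau(x-y)$, hence $\rho_\tau(x) - \rho_\tau(y) \le \rho_\tau(x-y) \le \tau\|x-y\|_1$ by property 2. The symmetric argument with the roles of $x$ and $y$ swapped bounds $\rho_\tau(y)-\rho_\tau(x)$ by $\tau\|y-x\|_1 = \tau\|x-y\|_1$, and combining the two gives the Lipschitz estimate $|\rho_\tau(x)-\rho_\tau(y)|\le \tau\|x-y\|_1$. The main thing to be careful about is that the upper constant in property 2 is $\tau$ (not $1-\tau$), which is exactly why the assumption $\tau\ge 1/2$ matters; this is the only place where that hypothesis is used, and without it the constant in property 4 would have to be $\max\{\tau,1-\tau\}$.
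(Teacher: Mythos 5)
Your proof is correct and follows essentially the same route as the paper: reduce every claim to the scalar case via the coordinatewise sum definition of $\rho_\tau$, then verify the one-dimensional facts. The paper simply declares the scalar verifications trivial and stops there, whereas you carry them out explicitly using the decomposition $\rho_\tau(z)=\tau z^+ + (1-\tau)z^-$, which is a clean way to organize the case analysis and correctly isolates where the hypothesis $\tau\ge 1/2$ enters.
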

 \begin{proof}
  It is trivial to prove every equality or inequality for $x, y$ in one dimension. Then by the definition of $\rho_\tau( \cdot )$ for vectors, the inequalities and equalities hold for general $x$ and $y$.
\end{proof}

To make our subsequent presentation self-contained, here we will provide
a brief review of recent work on subspace embedding algorithms.
We start with the definition of a low-distortion embedding matrix for $\A$
in terms of $\| \cdot \|_1$, see \emph{e.g.},~\cite{MM12}.

\begin{definition}[Low-distortion $\ell_1$ Subspace Embedding]
 \label{def:low_dist}
Given  $A \in \R^{n \times d}$, $\Pi \in \R^{r\times n}$ is a low-distortion embedding of $\A$ if $r = \poly(d)$ and
for all $x \in \R^d$,
 \begin{equation*}
 (1/ \kappa_1) \|Ax\|_1 \leq \|\Pi Ax\|_1 \leq  \kappa_2 \|Ax\|_1.
  \end{equation*}
 where $\kappa_1$ and $\kappa_2$ are low-degree polynomials of $d$.
\end{definition}

\noindent
The following stronger notion of a $(1\pm\epsilon)$-distortion 
subspace-preserving embedding will be crucial for our method.
In this paper, the ``measure functions'' we will consider are $\| \cdot \|_1$ 
and $\rho_\tau(\cdot)$. 

\begin{definition}[$(1\pm\epsilon)$-distortion Subspace-preserving Embedding]
Given $A \in \R^{n \times d}$ and a measure function of vectors $f(\cdot)$,
 $S \in \R^{s\times n}$ is a $(1 \pm \epsilon)$-distortion 
subspace-preserving matrix of $(\A, f(\cdot))$ if $s = \poly (d)$ and for all 
$x \in \R^d$,
\begin{equation*}
   (1 - \epsilon) f(Ax)  \leq  f(SAx)  \leq (1 + \epsilon) f(Ax).
\end{equation*}
 Furthermore, if $S$ is a sampling matrix (one nonzero element per row in $S$),
   we call it a $(1\pm\epsilon)$-distortion subspace-preserving sampling matrix.
 \end{definition}

\noindent
In addition, the following notion, originally introduced by~\cite{Cla05}, and stated more precisely in~\cite{DDHKM09}, 
of a basis that is well-conditioned for the $\ell_1$ norm will also be 
crucial for our method.

\begin{definition}[$(\alpha, \beta)$-conditioning and well-conditioned basis]
 \label{def:basis}
Given $A \in \R^{n \times d}$,
$A$ is $(\alpha, \beta)$-conditioned if $ \|A\|_1 \leq \alpha$ and for all $x\in \R^q, \|x\|_\infty \leq \beta \|Ax\|_1$. 
Define $\kappa(A)$ as the minimum value of $\alpha \beta$ such that $A$ is $(\alpha, \beta)$-conditioned.
We will say that a basis $U$ of $A$ is a well-conditioned basis if 
$\kappa=\kappa(U)$ is a polynomial in $d$, independent of~$n$.
\end{definition}

For a low-distortion embedding matrix for $(\A, \|\cdot\|_1)$, we next state 
a fast construction algorithm that runs in ``input-sparsity'' time by 
applying the Sparse Cauchy Transform.
This was originally proposed as Theorem 2 in \cite{MM12}.

\begin{lemma}[Fast construction of Low-distortion $\ell_1$ Subspace Embedding Matrix, from \cite{MM12}]
 \label{lem:spc_low_dist}
Given $A \in \R^{n \times d}$ with full column rank, let $\Pi_1 = SC \in \R^{r_1 \times n}$, where $S \in \R^{r_1 \times n}$ has each column chosen independently and uniformly from the $r_1$ standard basis vector of $\R^{r_1}$, and where $C \in \R^{n \times n}$ is a diagonal matrix with diagonals chosen independently from Cauchy distribution. Set $r_1 = \omega d^5\log^5 d$ with $\omega$ sufficiently large. Then, with a constant probability, we have
 \begin{equation}
    1 / \bigO(d^2 \log^2 d) \cdot \|Ax\|_1 \leq \|\Pi_1Ax\|_1 \leq \bigO(d \log d) \cdot \|Ax\|_1,   \:\:  \forall x \in \R^d.
 \end{equation}
In addition, $\Pi_1 A$ can be computed in $\bigO(\textup{nnz}(A))$ time.
\end{lemma}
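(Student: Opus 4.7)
The plan is to establish the two sides of the embedding inequality separately, first pointwise at a fixed $y = Ax$ and then uniformly over the column space $\A$ via a net argument on a well-conditioned basis. The starting observation is to unpack $\Pi_1 = SC$: the matrix $S$ assigns each index $i \in [n]$ to a uniformly random bucket $h(i) \in [r_1]$, and $C$ multiplies row $i$ by an independent standard Cauchy variable $c_i$. By the $1$-stability of the Cauchy distribution, the $k$-th coordinate of $\Pi_1 y$ has the same distribution as $m_k(y)\, c^{(k)}$, where the $c^{(k)}$ are independent standard Cauchies and $m_k(y) = \sum_{i:\, h(i)=k} |y_i|$; crucially $\sum_k m_k(y) = \|y\|_1$. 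Everything therefore reduces to controlling the weighted sum $\sum_k m_k(y)\, |c^{(k)}|$.

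For the upper bound at fixed $y$, I would truncate each $|c^{(k)}|$ at a level $M = \poly(d)$. Since $\Exp[\min(|c|, M)] = \bigO(\log M)$, the truncated sum has expectation $\bigO(\log M) \cdot \|y\|_1$, and Markov's inequality converts this into an $\bigO(d \log d) \cdot \|y\|_1$ upper bound with constant probability. The untruncated tails are controlled via the Cauchy tail estimate $\Pr[|c| > M] \leq 2/(\pi M)$ together with a union bound over the $r_1$ buckets, which is part of what forces $r_1$ to grow polynomially in $d$.

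For the lower bound at fixed $y$ I would use anti-concentration: since $\Pr[m_k|c^{(k)}| \geq \|y\|_1] \geq c \cdot m_k / \|y\|_1$ and the $c^{(k)}$ are independent, one obtains $\Pr[\max_k m_k|c^{(k)}| \geq \|y\|_1] \geq 1 - e^{-c}$, giving a constant-probability lower bound $\|\Pi_1 y\|_1 \geq c'\|y\|_1$. To make this uniform over $x \in \R^d$, I would parametrize $y = Ux$ using a Clarkson/Auerbach-type well-conditioned basis $U$ for $\A$, take a sufficiently fine net of the $\ell_1$ unit ball in coefficient space, apply the pointwise bounds at each net point, union-bound, and extend to all of $\A$ using the Lipschitz property of $y \mapsto \|\Pi_1 y\|_1$ combined with the conditioning of $U$. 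The scaling $r_1 = \omega d^5 \log^5 d$ is set by the size of the net (essentially $\poly(d)^d$) together with the need to amplify the pointwise lower bound to probability $1 - \exp(-\Omega(d \log d))$ before the union bound; this amplification is exactly what costs an extra $d \log d$ factor on the lower side, yielding the $\bigO(d^2 \log^2 d)$ distortion stated there. The $\bigO(\nnz(A))$ running time is immediate: each nonzero of $A$ is scaled by one Cauchy and hashed into a single row of the sketch.

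The main obstacle is the uniform lower bound. Because the Cauchy distribution is heavy-tailed, standard sub-Gaussian or sub-exponential concentration is unavailable, and the constant-probability pointwise anti-concentration bound does not survive a union bound over an exponentially large net without a careful amplification step — typically by averaging the contributions of many moderately-lucky buckets rather than relying on a single very-lucky one, controlled via the well-conditioning of $U$. Engineering this amplification is the delicate step of the argument, and it is the ultimate source of the gap between the $d \log d$ upper-side distortion and the weaker $d^2 \log^2 d$ lower-side distortion.
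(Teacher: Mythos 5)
The paper does not actually prove this lemma; it is quoted verbatim as Theorem~2 of \cite{MM12} and used as a black box, so your sketch has to stand on its own, and two of its steps do not go through as written. For the dilation bound, Markov applied to the truncated sum at a fixed $y$ gives only constant (or at best $1/\poly(d)$) failure probability, and that cannot survive a union bound over a net of size $(3/\gamma)^d$; a net argument is simply the wrong tool on this side. The standard Cauchy-sketching argument instead applies Markov \emph{once} to the single quantity $\sum_{j\in[d]}\|\Pi_1 U^{(j)}\|_1$ for an $(\alpha,\beta)$-conditioned basis $U$, obtaining $\sum_j\|\Pi_1 U^{(j)}\|_1\le \bigO(\alpha\log r_1)$ with constant probability, and then extends to every $y=Ux$ in $\A$ deterministically via the triangle inequality together with $\|x\|_\infty\le\beta\|Ux\|_1$ and $\|U\|_1\le\alpha$; this, not a net, is where the extra factor of $d$ in the $\bigO(d\log d)$ dilation comes from.

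The contraction bound has a deeper problem. Your amplification plan---averaging many moderately-lucky buckets---cannot produce an exponentially small per-point failure probability in the case that is actually dangerous. Nothing about well-conditioning of $U$ prevents a unit vector $y\in\A$ from being supported on a single coordinate of $\R^n$ (the column space of $A$ may well contain a standard basis vector). After hashing, such a $y$ has exactly one nonzero $m_k$, so $\|\Pi_1 y\|_1$ is a single scaled half-Cauchy, and the probability it drops below $\|y\|_1/(d^2\log^2 d)$ is $\Theta\bigl(1/(d^2\log^2 d)\bigr)$: polynomially, not exponentially, small, and there is literally nothing to average. The per-net-point failure bound is therefore far too weak for a union bound over $(3/\gamma)^d$ net points to close, and the gap is not one of bookkeeping but of mechanism. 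Any correct argument has to couple the randomness of the hash with the geometry of $\A$---for instance, conditioning on the hash separating the high-$\ell_1$-leverage rows, or otherwise arguing that every unit vector in $\A$ must spread its mass over many buckets---and it is this structural step, which your sketch does not supply, that both drives the requirement $r_1=\omega\, d^5\log^5 d$ and produces the asymmetric $d^2\log^2 d$ contraction factor.
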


\noindent
{\bf Remark.}
This result has very recently been improved.
In \cite{WZ13}, the authors show that one can achieve a $\bigO(d^2 \log^2 d)$ distortion $\ell_1$ subspace embedding matrix
with embedding dimension $\bigO(d \log d)$ in $\nnz(A)$ time by
 replacing Cauchy variables in the above lemma with exponential variables. 
Our theory can also be easily improved by using this improved lemma.

Next, we state a result for the fast construction of a 
$(1\pm\epsilon)$-distortion subspace-preserving sampling matrix for 
$(\A, \| \cdot \|_1)$, from Theorem 5.4 in \cite{CDMMMW12}, with $p = 1$, 
as follows.

\begin{lemma}[Fast construction of 
 $\ell_1$ Sampling Matrix, from Theorem 5.4 in \cite{CDMMMW12}]
 \label{lem:l1_samp}
Given a matrix $A \in \R^{n \times d}$ and a matrix $R \in \R^{d\times d}$ 
such that $AR^{-1}$ is a well-conditioned basis for $\A$ with condition 
number $\kappa$, 
it takes $\bigO(\nnz(A) \cdot \log n)$ time to compute a sampling matrix $S \in \R^{s\times n}$ with
  $s = \bigO(\kappa d \log(1/\epsilon)/\epsilon^2)$ such that with a constant probability, for any $x \in \R^d$,
   $$ (1-\epsilon) \|Ax\|_1 \leq \|SAx\|_1 \leq (1+\epsilon) \|Ax\|_1. $$
 \end{lemma}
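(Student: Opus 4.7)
The plan is to combine two ingredients: (i) a standard importance-sampling argument that says sampling rows of $A$ with probabilities proportional to the $\ell_1$ row norms of a well-conditioned basis $U=AR^{-1}$ yields a subspace-preserving embedding, and (ii) a fast way to estimate those row norms without ever forming $U$ explicitly, so that the whole procedure runs in $\bigO(\nnz(A)\cdot\log n)$ time.

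First I would describe the estimation step. Forming $U=AR^{-1}$ and reading off its row norms would cost $\bigO(nd^2)$, which is too slow. Instead, generate an auxiliary matrix $G\in\R^{d\times t}$ with $t=\bigO(\log n)$ of i.i.d.\ standard Cauchy entries, compute $R^{-1}G$ in $\bigO(d^2\log n)$ time, and then form $Y=A(R^{-1}G)$ in $\bigO(\nnz(A)\cdot\log n)$ time. Each row $Y_{(i)}=U_{(i)}G$ is a vector whose coordinates, by the $1$-stability of the Cauchy distribution, are distributed as $\|U_{(i)}\|_1$ times a Cauchy. Taking the coordinate-wise median of $|Y_{(i)}|$ across the $t=\bigO(\log n)$ columns yields, by a standard concentration/Chernoff argument, an estimate $\hat\lambda_i$ that satisfies $\hat\lambda_i \in [c_1\|U_{(i)}\|_1,\, c_2\|U_{(i)}\|_1]$ simultaneously for all $i\in[n]$ with high probability. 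The total running time through this point is $\bigO(\nnz(A)\cdot\log n)+\bigO(d^2\log n)$.

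Next, define importance probabilities $p_i=\min\{1,\, s\cdot \hat\lambda_i/\sum_j \hat\lambda_j\}$ with $s=\bigO(\kappa d\log(1/\epsilon)/\epsilon^2)$, and build the sampling-and-rescaling matrix $S\in\R^{s\times n}$ whose rows independently select row $i$ of $A$ with probability $p_i$ and rescale it by $1/p_i$. Then $\|SAx\|_1$ is an unbiased estimator of $\|Ax\|_1$ for every fixed $x$. To bound its variance, write $Ax=URx$ and set $y=Rx$; the $(\alpha,\beta)$-conditioning of $U$ gives
\begin{equation*}
|U_{(i)}y|\;\leq\;\|U_{(i)}\|_1\cdot\|y\|_\infty\;\leq\;\beta\,\|U_{(i)}\|_1\cdot\|Uy\|_1\;=\;\beta\,\|U_{(i)}\|_1\cdot\|Ax\|_1,
\end{equation*}
so each term of the sum defining $\|SAx\|_1$ is at most $\bigO(\kappa\|Ax\|_1/s)$ in absolute value (using $\hat\lambda_i\asymp\|U_{(i)}\|_1$ and $\|U\|_1\leq\alpha$). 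A Bernstein inequality then shows that, for fixed $x$, $|\,\|SAx\|_1-\|Ax\|_1\,|\leq\epsilon\|Ax\|_1$ with probability at least $1-e^{-\Omega(d\log(1/\epsilon))}$.

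Finally, to upgrade pointwise concentration to a uniform guarantee over the $d$-dimensional subspace $\A$, I would build a standard $(\epsilon/\kappa)$-net on $\{Ax:\|Ax\|_1=1\}$, which has cardinality $(\kappa/\epsilon)^{\bigO(d)}$; union-bounding Bernstein over the net and then extending to the whole unit sphere via a telescoping/triangle-inequality argument using Definition~\ref{def:basis} yields the claimed $(1\pm\epsilon)$ subspace-preserving property. The main conceptual obstacle is ensuring the net argument is compatible with the multiplicative estimation error in $\hat\lambda_i$; but since these estimates are within constant factors of $\|U_{(i)}\|_1$ with high probability, they only blow up the sample complexity by a constant and the rest of the argument is the standard $\ell_1$ sampling proof from~\cite{DDHKM09,CDMMMW12}.
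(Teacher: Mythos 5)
Your overall strategy is the right one, and it is essentially the strategy used in \cite{CDMMMW12} and mirrored in this paper's own proof of Lemma~\ref{lem:samp_lem}: estimate row norms of $U=AR^{-1}$ cheaply via Cauchy sketches and medians, then run the standard importance-sampling plus Bernstein plus net argument. The running-time analysis of the estimation step is correct, and the pointwise Bernstein bound (using the $(\alpha,\beta)$-conditioning to control $|U_{(i)}y|/p_i \lesssim \kappa\|Ax\|_1/s$) is exactly the calculation that appears in the proof of Lemma~\ref{lem:samp_lem}. Note, for the record, that the paper does not prove this lemma itself --- it is cited verbatim from Theorem~5.4 of \cite{CDMMMW12} --- so the relevant comparison is to that source and to the analogous Lemma~\ref{lem:samp_lem}.

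There is one genuine, if modest, gap: the choice of an $(\epsilon/\kappa)$-net. With a net of cardinality $(\kappa/\epsilon)^{\bigO(d)}$, the union bound forces the Bernstein exponent to beat $d\log(\kappa/\epsilon)$, which yields a sample complexity of order $\kappa d\log(\kappa/\epsilon)/\epsilon^2$, not the claimed $\kappa d\log(1/\epsilon)/\epsilon^2$; indeed, your own stated failure probability $e^{-\Omega(d\log(1/\epsilon))}$ is internally inconsistent with a $(\kappa/\epsilon)^{\bigO(d)}$-sized net. The reason you felt you needed the finer $\epsilon/\kappa$ scale is that you were planning to control $\|S(z-z_\gamma)\|_1$ by the crude conditioning bound $\|Sw\|_1 \lesssim \kappa\|w\|_1$, which costs a factor of $\kappa$. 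The argument in \cite{DDHKM09,CDMMMW12} (and the one used in this paper's proof of Lemma~\ref{lem:samp_lem}) avoids this by a bootstrap: one first uses an $\bigO(\epsilon)$-net and shows separately, via Lemma~\ref{lem:l1_embed} instantiated at a constant distortion level such as $(1\pm 2/3)$, that $S$ already preserves all $\ell_1$ norms in the subspace up to a constant factor. Then in the triangle-inequality chain one bounds $\|S(z-z_\gamma)\|_1 \leq (1+2/3)\|z-z_\gamma\|_1 \leq \bigO(\gamma)$ directly, with no $\kappa$ penalty, so a $\gamma=\bigO(\epsilon)$-net of size $(1/\epsilon)^{\bigO(d)}$ suffices and the stated sample complexity follows. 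If you replace your $(\epsilon/\kappa)$-net with an $\bigO(\epsilon)$-net and add the constant-distortion bootstrap step, the rest of your argument goes through and recovers the bound as stated.
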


\noindent
We also cite the following lemma for finding a matrix $R$, such that $AR^{-1}$ is a 
well-condition basis, which is based on ellipsoidal rounding proposed 
in~\cite{CDMMMW12}.

\begin{lemma}[Fast Ellipsoid Rounding, from \cite{CDMMMW12}]
 \label{lem:er}
 Given an $n \times d$ matrix $A$, by applying an ellipsoid rounding method, it takes at most $\bigO(nd^3 \log n)$ time to find a matrix $R \in \R^{d\times d}$ such that $\kappa(AR^{-1}) \leq 2d^2$.
\end{lemma}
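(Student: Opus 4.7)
The plan is to apply an approximate ellipsoidal rounding to the centrally symmetric convex body $K = \{x \in \R^d : \|Ax\|_1 \leq 1\}$ and read off $R$ from the resulting ellipsoid. By John's theorem such a $K$ admits an ellipsoid $E \subseteq K \subseteq \sqrt{d}\,E$, but since computing the John ellipsoid exactly is prohibitively expensive, I would instead use a Lov\'asz-style iterative rounding (as developed in \cite{CDMMMW12}) that outputs a positive-definite $R^T R$ whose ellipsoid $E_R = \{x : \|Rx\|_2 \leq 1\}$ satisfies the weaker containment $E_R \subseteq K \subseteq 2d\, E_R$, equivalently
\[
 \|Ax\|_1 \;\leq\; \|Rx\|_2 \;\leq\; 2d \cdot \|Ax\|_1 \qquad \forall x \in \R^d .
\]

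Given such an $R$, I would then verify directly that $U := AR^{-1}$ is $(d,\, 2d)$-conditioned, which yields $\kappa(U) \leq 2d^2$. For the $\alpha$ bound, I write
\[
 \|U\|_1 \;=\; \sum_{j=1}^d \|A R^{-1} e_j\|_1 \;\leq\; \sum_{j=1}^d \|R R^{-1} e_j\|_2 \;=\; \sum_{j=1}^d \|e_j\|_2 \;=\; d,
\]
using the left inequality of the ellipsoidal sandwich applied to $y = R^{-1}e_j$. For the $\beta$ bound, for arbitrary $x \in \R^d$ I set $y = R^{-1}x$ and use the right inequality to obtain
\[
 \|Ux\|_1 \;=\; \|Ay\|_1 \;\geq\; \tfrac{1}{2d}\|Ry\|_2 \;=\; \tfrac{1}{2d}\|x\|_2 \;\geq\; \tfrac{1}{2d}\|x\|_\infty ,
\]
so that $\|x\|_\infty \leq 2d\,\|Ux\|_1$. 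Multiplying the two gives $\kappa(U) \leq 2d^2$, matching the claim.

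The main obstacle is not the conditioning analysis, which is essentially a one-line consequence of the containment, but rather showing that the rounding itself can be implemented in $\bigO(nd^3 \log n)$ time. The argument has two ingredients: (i) a preprocessing step that rescales $A$ so that the initial ellipsoid has aspect ratio $\poly(n)$ relative to $K$, ensuring that only $\bigO(d^2 \log n)$ rounding iterations are needed for the log-volume potential to reach its stopping condition; and (ii) a careful implementation of each iteration in $\bigO(nd)$ time for the separation-oracle / longest-axis call on $K$, which reduces to evaluating $\|A v\|_1$ for a single direction $v \in \R^d$, plus $\bigO(d^2)$ time for the rank-one update to the ellipsoid matrix. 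Combining these bounds yields the $\bigO(nd^3 \log n)$ total, and I would invoke the detailed analysis of \cite{CDMMMW12} for the iteration complexity rather than reproving it from scratch.
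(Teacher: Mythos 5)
Your proposal is correct and follows essentially the same approach sketched in the paper's Section~2.2: obtain a $2d$-ellipsoidal rounding of $K = \{x : \|Ax\|_1 \le 1\}$ via the iterative rounding of~\cite{CDMMMW12}, then read off the $(\alpha,\beta)$-conditioning of $AR^{-1}$ from the two-sided norm comparison. The only cosmetic difference is the containment convention (you take $E_R \subseteq K \subseteq 2d\,E_R$, the paper takes $\mathcal{E}/\eta \subseteq \mathcal{C} \subseteq \mathcal{E}$), which is just a rescaling of $R$ by $2d$ and shifts the $2d^2$ factor from your split $\alpha = d$, $\beta = 2d$ to the paper's $\alpha = 2d^2$, $\beta = 1$, yielding the same $\kappa(AR^{-1}) \le 2d^2$.
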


Finally, two important ingredients for proving subspace preservation are 
$\gamma$-nets and tail inequalities. 
Suppose that $Z$ is a point set and $\| \cdot \|$ is a metric on $Z$. 
A subset $Z_\gamma$ is called a $\gamma$-net for some $\gamma>0$ if for every
$x\in Z$ there is a $y \in Z_\gamma$ such that $\|x-y\|\leq \gamma$. 
It is well-known that the unit ball of a $d$-dimensional subspace has a
$\gamma$-net with size at most $(3/\gamma)^d$~\cite{BLM89}. 
Also, we will use the standard Bernstein inequality to prove concentration
results for the sum of independent random variables.

\begin{lemma}[Bernstein inequality, \cite{BLM89}]
   Let $X_1, \ldots, X_n$ be independent random variables with zero-mean. Suppose that $|X_i| \leq M$, for $i \in [n]$, then for any positive number $t$, we have
   \begin{equation*}
     \Pr\left[ \sum_{i \in [n]} X_i > t \right]  \leq \exp \left( -\frac{ t^2/2 }{ \sum_{i \in [n]} \Exp X_j^2 + Mt/3} \right). 
   \end{equation*}
\end{lemma}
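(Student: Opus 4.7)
The plan is to prove this via the classical Cram\'er--Chernoff moment-generating-function method, exactly as in the standard textbook derivation of Bernstein's inequality. Let $S = \sum_{i \in [n]} X_i$. For any $\lambda > 0$, an application of Markov's inequality to the nonnegative random variable $e^{\lambda S}$ gives
\[
  \Pr[S > t] \;\leq\; e^{-\lambda t}\,\Exp[e^{\lambda S}] \;=\; e^{-\lambda t}\prod_{i \in [n]} \Exp[e^{\lambda X_i}],
\]
where the factorization uses independence of the $X_i$. So the whole task reduces to controlling each moment generating function $\Exp[e^{\lambda X_i}]$ and then optimizing over $\lambda$.

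The key step is to show that for $\lambda \in (0, 3/M)$,
\[
  \Exp[e^{\lambda X_i}] \;\leq\; \exp\!\left( \frac{\lambda^2 \Exp X_i^2}{2(1 - \lambda M/3)} \right).
\]
I would derive this by expanding the exponential in its Taylor series, using $\Exp X_i = 0$ to kill the linear term, and then using $|X_i|^k \leq M^{k-2} X_i^2$ for $k \geq 2$ to obtain
\[
  \Exp[e^{\lambda X_i}] \;\leq\; 1 + \Exp X_i^2 \sum_{k \geq 2} \frac{\lambda^k M^{k-2}}{k!}.
\]
The routine bookkeeping step is the estimate $k! \geq 2 \cdot 3^{k-2}$ for $k \geq 2$, which turns the sum into a geometric series bounded by $\frac{\lambda^2}{2(1 - \lambda M/3)}$. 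Then $1 + u \leq e^u$ yields the displayed inequality.

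Combining the two estimates, writing $V = \sum_{i \in [n]} \Exp X_i^2$, gives
\[
  \Pr[S > t] \;\leq\; \exp\!\left( -\lambda t + \frac{\lambda^2 V}{2(1 - \lambda M/3)} \right).
\]
The last step is to choose $\lambda$ to minimize the exponent; the clean choice $\lambda = t / (V + Mt/3)$ (which automatically satisfies $\lambda M < 3$ for all $t > 0$) plugs in and simplifies to the advertised bound $\exp(-t^2/(2V + 2Mt/3))$.

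The main obstacle is purely computational: making sure the geometric-series tail bound on $\sum_{k \geq 2} \lambda^k M^{k-2}/k!$ matches the exact constants $1/2$ and $1/3$ in the denominator of the stated inequality, and verifying that the choice of $\lambda$ above is indeed (essentially) optimal and produces the precise exponent. Both are standard and require only careful arithmetic; no new idea beyond the Chernoff recipe is needed. There is no issue with the independence or the boundedness hypotheses because both are used in a single, transparent way (independence to factor the MGF, boundedness to control the Taylor tail).
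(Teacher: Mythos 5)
Your proof sketch is correct, and there is nothing in the paper to compare it against: the paper states Bernstein's inequality as a citation from \cite{BLM89} and does not supply a proof. Your argument is the standard Cram\'er--Chernoff derivation, and the arithmetic checks out. In particular, the MGF bound follows from the Taylor expansion once you note $|\Exp X_i^k| \le M^{k-2}\Exp X_i^2$ and $k! \ge 2\cdot 3^{k-2}$ for $k\ge 2$, and with $V=\sum_i \Exp X_i^2$ and $\lambda = t/(V+Mt/3)$ one has $1-\lambda M/3 = 3V/(3V+Mt)>0$, $\lambda t = 3t^2/(3V+Mt)$, and $\lambda^2 V/(2(1-\lambda M/3)) = 3t^2/(2(3V+Mt))$, so the exponent becomes $-3t^2/(2(3V+Mt)) = -\tfrac{t^2/2}{V+Mt/3}$, exactly as stated. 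The only tiny edge case worth mentioning is $V=0$, where the $X_i$ are a.s.\ zero and the bound is trivial; otherwise your choice of $\lambda$ is automatically admissible.
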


\subsection{Conditioning methods for $\ell_1$ regression problems}
 \label{cond}

Before presenting our main results, we start here by outlining the theory for 
conditioning for overconstrained $\ell_1$ (and $\ell_p$) regression problems.

The concept of a well-conditioned basis $U$ (recall 
Definition~\ref{def:basis}) plays an important role in our algorithms, and 
thus in this subsection we will discuss several related conditioning methods.
By a conditioning method, we mean an algorithm for finding, for an input matrix $A$, a 
well-conditioned basis, \emph{i.e.}, either finding a well-conditioned matrix 
$U$ or finding a matrix $R$ such that $U=AR^{-1}$ is well-conditioned.
There exist many approaches that have been proposed for conditioning.
The two most important properties of these methods for our subsequent 
analysis are: (1) the condition number $\kappa=\alpha\beta$; and (2) the 
running time to construct $U$ (or $R$).
The importance of the running time should be obvious; but the condition 
number directly determines the number of rows that we need to select, and
thus it has an indirect effect on running time (via the time required to 
solve the subproblem).
See Table~\ref{cond_table} for a summary of the basic properties of the 
conditioning methods that will be discussed in this subsection.

 \begin{table}[t]
  \centering
   \begin{tabular}{c|ccc}
     name  &  running time & $\kappa$ &  type  \\
    \hline
      SC \cite{SW11}    & $\bigO(nd^2 \log d)$   &  $\bigO(d^{5/2} \log^{3/2} n)$ & QR  \\
      FC \cite{CDMMMW12}   &  $\bigO(nd \log d)$  &  $\bigO(d^{7/2} \log^{5/2} n)$  & QR \\ 
      Ellipsoid rounding \cite{Cla05} &  $\bigO(nd^5 \log n)$  & $d^{3/2} (d+1)^{1/2}$ & ER \\
      Fast ellipsoid rounding \cite{CDMMMW12} &  $\bigO(nd^3 \log n)$  & $2d^2$ & ER \\
      SPC1 \cite{MM12}  &  $\bigO(\textup{nnz}(A))$  &   $\bigO(d^{\frac{13}{2}} \log^{\frac{11}{2}} d)$ & QR  \\
      SPC2 \cite{MM12}   &  $\bigO(\textup{nnz}(A) \cdot \log n)$ + \texttt{ER\_small}  &   $6d^2$ & QR+ER  \\
      SPC3 (proposed in this article)  &  $\bigO(\textup{nnz}(A) \cdot \log n)$  + \texttt{QR\_small}  &   $\bigO(d^{\frac{19}{4}} \log^{\frac{11}{4}} d)$ & QR+QR
   \end{tabular}
   \caption{Summary of running time, condition number, and type of 
            conditioning methods proposed recently.
            QR and ER refer, respectively, to methods based on the QR 
            factorization and methods based on Ellipsoid Rounding, as 
            discussed in the text.
            \texttt{QR\_small} and \texttt{ER\_small} denote the running 
            time for applying QR factorization and Ellipsoid Rounding,
            respectively, on a small matrix with size independent of $n$.}
    \label{cond_table}
 \end{table}

In general, there are three basic ways for finding a matrix $R$ such that 
$U=AR^{-1}$ is well-conditioned: those based on the QR factorization; those 
based on Ellipsoid Rounding; and those based on combining the two basic 
methods. 
\begin{itemize}
\item
\textbf{Via QR Factorization (QR).}
To obtain a well-conditioned basis, one can first construct a low-distortion 
$\ell_1$ embedding matrix.  By Definition~\ref{def:low_dist}, this means 
finding a $\Pi \in \R^{r \times d}$, 
such that for any $x \in \R^d$,
 \begin{equation}
   \label{low_dist}
     (1/ \kappa_1) \|Ax\|_1 \leq \|\Pi Ax\|_1 \leq \kappa_2 \|Ax\|_1,
  \end{equation}
where $r \ll n$ and is independent of $n$ and
the factors $\kappa_1$ and $\kappa_2$ here will be low-degree 
polynomials of $d$ (and related to $\alpha$ and $\beta$ of 
Definition~\ref{def:basis}).
For example, $\Pi$ could be the Sparse Cauchy Transform described in Lemma~\ref{lem:spc_low_dist}.
After obtaining $\Pi$, by calculating a matrix $R$ such that $\Pi AR^{-1}$ 
has orthonormal columns, the matrix $AR^{-1}$ is a well-conditioned basis 
with $\kappa \leq d \sqrt{r} \kappa_1\kappa_2$.  See Theorem 4.1 in \cite{MM12} for more details.
Here, the matrix $R$ can be obtained by a QR factorization (or, alternately,
the Singular Value Decomposition).
As the choice of $\Pi$ varies, the condition number of $AR^{-1}$, \emph{i.e.}, $\kappa(AR^{-1})$, and the
corresponding running time will also vary, and there is in general a trade-off 
among these.

For simplicity, the acronyms for these types of conditioning methods will 
come from the name of the corresponding transformations:
SC stands for Slow Cauchy Transform from \cite{SW11}; 
FC stands for Fast Cauchy Transform from \cite{CDMMMW12}; and
SPC1 (see Algorithm~\ref{spc_alg}) will be the first method based on the 
Sparse Cauchy Transform (see Lemma~\ref{lem:spc_low_dist}).
We will call the methods derived from this scheme QR-type methods.
\item
\textbf{Via Ellipsoid Rounding (ER).}
Alternatively, one can compute a well-conditioned basis by applying ellipsoid 
rounding.
This is a deterministic algorithm that computes a $\eta$-rounding of a 
centrally symmetric convex set $\C = \{x \in \R^d | \|Ax\|_1 \leq 1\}$.
By $\eta$-rounding here we mean finding an ellipsoid 
$\E = \{ x \in \R^d | \|Rx\|_2 \leq 1\}$, satisfying
$\E/\eta \subseteq \C  \subseteq \E$,
 which implies 
$\|Rx\|_2 \leq \|Ax\|_1 \leq \eta \|Rx\|_2$, $\forall x \in \R^d$.
With a transformation of the coordinates, it is not hard to show the following,
\begin{align}
 \label{er}
  \|x\|_2 \leq \|AR^{-1}x\|_1 \leq \eta \|x\|_2.
\end{align}
From this, it is not hard to show the following inequalities,
\begin{align*}
  \|AR^{-1}\|_1 &\leq \sum_{j \in [d]}  \|AR^{-1}e_j\|_1 \leq \sum_{j \in [d]} \eta \|e_j\|_2  \leq d\eta,  \\
  \|AR^{-1}x\|_1 &\geq \|x\|_2 \geq  \|x\|_\infty.
\end{align*}
This directly leads to a well-conditioned matrix $U=AR^{-1}$ with 
$\kappa \leq d \eta$.
Hence, the problem boils down to finding a $\eta$-rounding with $\eta$ small in a reasonable time.

By Theorem 2.4.1 in \cite{Lov86}, one can find a $(d(d+1))^{1/2}$-rounding in polynomial time.
This result was used by \cite{Cla05} and \cite{DDHKM09}.
As we mentioned in the previous section, Lemma~\ref{lem:er},
in \cite{CDMMMW12}, a new fast ellipsoid rounding algorithm was proposed.
For an $n\times d$ matrix $A$ with full rank,
it takes at most $\bigO(nd^3\log n)$ time to find a matrix $R$
such that $AR^{-1}$ is a well-conditioned basis with $\kappa \leq 2d^2$.
We will call the methods derived from this scheme ER-type methods.
\item
\textbf{Via Combined QR+ER Methods.}
Finally, one can construct a well-conditioned basis by combining QR-like and
ER-like methods.
For example, after we obtain $R$ such that $AR^{-1}$ is a well-conditioned 
basis, as described in Lemma~\ref{lem:l1_samp}, one can then construct a 
$(1\pm \epsilon)$-distortion subspace-preserving sampling matrix $S$ in 
$\bigO(\nnz(A) \cdot \log n)$ time.
We may view that the price we pay for obtaining $S$ is very low in terms of running time. 
Since $S$ is a sampling matrix with constant distortion factor and 
since the dimension of $SA$ is independent of $n$, we can apply 
additional operations on that smaller matrix in order to obtain a better 
condition number, without much additional running time, in theory at least, 
if $n \gg \poly(d)$, for some low-degree $\poly(d)$.

Since the bottleneck for ellipsoid rounding is its running time, when 
compared to QR-type methods, one possibility is to apply ellipsoid rounding 
on $SA$.  Since the bigger dimension of $SA$ only depends on $d$, the running 
time for computing $R$ via ellipsoid rounding will be acceptable if 
$n \gg \poly(d)$.
As for the condition number, for any general $\ell_1$ subspace embedding 
$\Pi$ satisfying 
Eqn.~\eqref{low_dist}, \emph{i.e.}, which preserves the $\ell_1$ norm up to
some factor determined by $d$, including $S$, if we apply ellipsoid rounding 
on $\Pi A$, then the resulting $R$ may still satisfy Eqn.~\eqref{er} with some 
$\eta$.
In detail, viewing $R^{-1}x$ as a vector in $\R^d$, from Eqn.~\eqref{low_dist}, we have
\begin{equation*}
 (1/\kappa_2) \|\Pi AR^{-1}x\|_1 \leq \|AR^{-1}x\|_1 \leq  \kappa_1 \|\Pi AR^{-1}x\|_1.
\end{equation*}
In Eqn.~\eqref{er}, replace $A$ with $\Pi A$, combining the inequalities above, we have
 \begin{equation*}
 (1/\kappa_2) \|x\|_2 \leq \|AR^{-1}x\|_1 \leq \eta\kappa_1 \|x\|_2.
\end{equation*}
With appropriate scaling, one can show that $AR^{-1}$ a well-conditioned matrix 
with $\kappa = d \eta \kappa_1\kappa_2$.
Especially, when $S$ has constant distortion, say $(1 \pm 1/2)$,
the condition number is preserved at sampling complexity $\bigO(d^2)$, while
the running time has been reduced a lot, when compared to the vanilla ellipsoid rounding method.
(See Algorithm~\ref{cond_alg} (SPC2) below for a version of this method.)

A second possibility is to view $S$ as a sampling matrix satisfying 
Eqn.~\eqref{low_dist} with $\Pi = S$. 
Then, according to our discussion of the QR-type methods, if we compute the 
QR factorization of $SA$, we may expect the resulting $AR^{-1}$ to be a 
well-conditioned basis with lower condition number $\kappa$.
As for the running time, QR factorization on a smaller matrix will be 
inconsequential, in theory at least.
(See Algorithm~\ref{new_cond_alg} (SPC3) below for a version of this method.)
\end{itemize}
In the remainder of this subsection, we will describe three related 
methods for computing a well-conditioned basis that we will use in our 
empirical evaluations.
Recall that Table~\ref{cond_table} provides a summary of these three methods and the 
other methods that we will use.

We start with the algorithm obtained when we use Sparse Cauchy Transform 
from \cite{MM12} as the random projection $\Pi$ in a vanilla QR-type method.
We call it SPC1 since we will describe two of its variants below.
Our main result for Algorithm~\ref{spc_alg} is given in Lemma~\ref{lem:spc}.
Since the proof is quite straightforward, we omit it here.

\begin{algorithm}[t]
  \caption{SPC1: vanilla QR type method with Sparse Cauchy Transform}
    \label{spc_alg}
  \begin{algorithmic}[1]
    \Require  $A \in \R^{n \times d}$ with full column rank.
    
    \Ensure $R^{-1} \in \R^{d \times d}$ such that $AR^{-1}$ is a well-conditioned basis with $\kappa \leq \bigO(d^{\frac{13}{2}} \log^{\frac{11}{2}} d)$. 
        
    \State Construct a low-distortion embedding matrix $\Pi_1 \in \R^{r_1 \times n}$ of $(\A, \| \cdot \|_1)$ via Lemma~\ref{lem:spc_low_dist}.
        
    \State  Compute $R\in \R^{ d \times d}$ such that $AR^{-1}$ is a well-conditioned basis for $\A$ via QR factorization of $\Pi_1 A$.
    
    \end{algorithmic}
 \end{algorithm}

\begin{lemma}
\label{lem:spc}
Given $A \in \R^{n \times d}$ with full rank, Algorithm~\ref{spc_alg} takes $\bigO(\nnz(A) \cdot \log n)$ time to compute a matrix $R \in \R^{d \times d}$ such that with a constant probability, $AR^{-1}$ is a well-conditioned basis for $\A$ with $\kappa \leq \bigO(d^{\frac{13}{2}} \log^{\frac{11}{2}} d)$.
\end{lemma}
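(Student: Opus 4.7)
The plan is to combine Lemma~\ref{lem:spc_low_dist} with standard QR-based conditioning in the style of Theorem 4.1 of \cite{MM12}. Write $\Pi_1 \in \R^{r_1 \times n}$ for the Sparse Cauchy embedding with $r_1 = \bigO(d^5 \log^5 d)$, so by Lemma~\ref{lem:spc_low_dist} with constant probability
\begin{equation*}
\tfrac{1}{\kappa_1}\|Ax\|_1 \le \|\Pi_1 A x\|_1 \le \kappa_2 \|Ax\|_1, \qquad \forall x \in \R^d,
\end{equation*}
with $\kappa_1 = \bigO(d^2 \log^2 d)$ and $\kappa_2 = \bigO(d \log d)$. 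Let $\Pi_1 A = QR$ be a (thin) QR factorization so that $Q := \Pi_1 A R^{-1}$ has orthonormal columns; this is exactly the $R$ returned by Algorithm~\ref{spc_alg}.

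To bound $\alpha := \|AR^{-1}\|_1$, I would apply the lower embedding inequality columnwise and then pass to the Frobenius norm of $Q$. For every column $e_j$ one has $\|AR^{-1}e_j\|_1 \le \kappa_1 \|\Pi_1 A R^{-1} e_j\|_1 = \kappa_1 \|Q e_j\|_1 \le \kappa_1 \sqrt{r_1}\,\|Q e_j\|_2$; summing over $j$ and using $\|Q\|_F = \sqrt{d}$ together with Cauchy--Schwarz gives
\begin{equation*}
\alpha \le \kappa_1 \sqrt{r_1} \sum_{j=1}^d \|Q e_j\|_2 \le \kappa_1 \sqrt{r_1}\, d^{1/2}\,\|Q\|_F = \kappa_1 d\,\sqrt{r_1}.
\end{equation*}
For $\beta$, use that $Q$ has orthonormal columns together with the upper embedding inequality: for any $x \in \R^d$,
\begin{equation*}
\|x\|_\infty \le \|x\|_2 = \|Qx\|_2 \le \|Q x\|_1 = \|\Pi_1 A R^{-1} x\|_1 \le \kappa_2 \,\|A R^{-1} x\|_1,
\end{equation*}
so $\beta \le \kappa_2$. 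Multiplying gives $\kappa(AR^{-1}) \le \alpha\beta \le \kappa_1 \kappa_2\, d\,\sqrt{r_1}$, and plugging in $\kappa_1 = \bigO(d^2 \log^2 d)$, $\kappa_2 = \bigO(d \log d)$, $\sqrt{r_1} = \bigO(d^{5/2} \log^{5/2} d)$ yields the claimed $\kappa \le \bigO(d^{13/2} \log^{11/2} d)$.

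For the running time, Lemma~\ref{lem:spc_low_dist} computes $\Pi_1 A$ in $\bigO(\nnz(A))$ time, and the QR factorization of the $r_1 \times d$ matrix $\Pi_1 A$ costs $\bigO(r_1 d^2) = \bigO(d^7 \log^5 d)$, which is an additive lower-order term; absorbing it into an $\bigO(\nnz(A) \cdot \log n)$ bound (assuming $n \gg \poly(d)$ in the regime of interest) gives the stated complexity. There is no real obstacle here: everything reduces to the two-sided embedding bound of Lemma~\ref{lem:spc_low_dist} plus orthonormality of $Q$; the only mildly delicate step is the norm-conversion $\|Q\|_1 \le \sqrt{r_1 d}\,\|Q\|_F = d\sqrt{r_1}$ used to bound $\alpha$, which is what drives the $\sqrt{r_1}$ factor and hence the final exponents.
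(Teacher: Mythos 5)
Your proof is correct and matches the approach the paper intends: the paper omits a proof of Lemma~\ref{lem:spc} (``Since the proof is quite straightforward, we omit it here'') but points to the formula $\kappa \le d\sqrt{r}\,\kappa_1\kappa_2$ from Theorem~4.1 of \cite{MM12} in its discussion of QR-type conditioning, and your argument is exactly the standard derivation of that bound: $\alpha \le \kappa_1\sum_j\|Qe_j\|_1 \le \kappa_1 d\sqrt{r_1}$ from the lower embedding inequality plus the per-column $\ell_1$--$\ell_2$ conversion, and $\beta \le \kappa_2$ from orthonormality of $Q$ plus the upper embedding inequality, giving $\kappa_1\kappa_2 d\sqrt{r_1} = \bigO(d^{13/2}\log^{11/2} d)$. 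Your running-time accounting ($\bigO(\nnz(A))$ for $\Pi_1 A$ plus $\bigO(r_1 d^2)$ for the small QR, treated as lower order) likewise mirrors the paper's treatment.
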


Next, we summarize the two Combined Methods described above in
Algorithm~\ref{cond_alg} and Algorithm~\ref{new_cond_alg}.
Since they are variants of SPC1, we call them SPC2 and SPC3, respectively.
Algorithm~\ref{cond_alg} originally appeared as first four steps of Algorithm 2 
in~\cite{MM12}.
Our main result for Algorithm~\ref{cond_alg} is given in Lemma~\ref{lem:cond}; 
since the proof of this lemma is very similar to the proof of Theorem 7 
in~\cite{MM12}, we omit it here.
Algorithm~\ref{new_cond_alg} is new to this paper.
Our main result for Algorithm~\ref{new_cond_alg} is given in 
Lemma~\ref{lem:new_cond}.

\begin{algorithm}[t]
  \caption{SPC2: QR + ER type method with Sparse Cauchy Transform}
    \label{cond_alg}
  \begin{algorithmic}[1]
    \Require  $A \in \R^{n \times d}$ with full column rank.
    
    \Ensure $R^{-1} \in \R^{d \times d}$ such that $AR^{-1}$ is a well-conditioned basis with $\kappa \leq 6d^2$. 
        
    \State Construct a low-distortion embedding matrix $\Pi_1 \in \R^{r_1 \times n}$ of $(\A, \| \cdot \|_1)$ via Lemma~\ref{lem:spc_low_dist}.
        
    \State  Construct $\tilde R\in \R^{ d \times d}$ such that $A \tilde R^{-1}$ is a well-conditioned basis for $\A$ via QR factorization of $\Pi_1 A$.
    
    \State Compute a $(1 \pm 1/2)$-distortion sampling matrix $\tilde S\in \R^{\poly(d) \times n}$ of $( \mathcal{A}, \| \cdot \|_1)$
                 via Lemma~\ref{lem:l1_samp}.
    
    \State Compute $R \in \R^{d\times d}$ by ellipsoid rounding for $\tilde SA$ via Lemma~\ref{lem:er}.
    
    \end{algorithmic}
 \end{algorithm}
 
\begin{lemma}
\label{lem:cond}
Given $A \in \R^{n \times d}$ with full rank, Algorithm~\ref{cond_alg} takes $\bigO(\nnz(A) \cdot \log n)$ time to compute a matrix $R \in \R^{d \times d}$ such that with a constant probability, $AR^{-1}$ is a well-conditioned basis for $\A$ with $\kappa \leq 6d^2$.
\end{lemma}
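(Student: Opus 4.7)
\textbf{Proof proposal for Lemma~\ref{lem:cond}.}
The plan is to analyze the two claims (running time and condition number) separately, tracking how the guarantees of Lemmas~\ref{lem:spc_low_dist}, \ref{lem:l1_samp}, and \ref{lem:er} compose through the four steps of Algorithm~\ref{cond_alg}.

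For the running time, I would simply add up the cost of each step. Step~1 produces $\Pi_1 A$ in $\bigO(\nnz(A))$ time via Lemma~\ref{lem:spc_low_dist}, with $\Pi_1 A$ of size $r_1 \times d$ for $r_1 = \bigO(d^5 \log^5 d)$. Step~2 is a QR factorization on this $r_1 \times d$ matrix, which is $\poly(d)$ and independent of $n$. Step~3 invokes Lemma~\ref{lem:l1_samp} with $A\tilde R^{-1}$ as the well-conditioned basis (whose condition number is $\poly(d)$ by the standard QR-type argument in Section~\ref{cond}), so it runs in $\bigO(\nnz(A)\cdot \log n)$ time and produces $\tilde S A$ of size $\poly(d)\times d$. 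Step~4 applies Lemma~\ref{lem:er} to this small matrix, again costing only $\poly(d)$. Thus the dominant term involving $n$ is the $\bigO(\nnz(A)\cdot \log n)$ from Step~3, while the remaining work is absorbed into lower-order terms that depend only on $d$.

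For the condition number, I would follow the combined-methods argument sketched in the paragraph preceding Algorithm~\ref{cond_alg}. Since $\tilde S$ is a $(1\pm \tfrac12)$-distortion subspace-preserving sampling matrix,
\begin{equation*}
\tfrac{1}{2}\|AR^{-1}x\|_1 \,\le\, \|\tilde S A R^{-1}x\|_1 \,\le\, \tfrac{3}{2}\|AR^{-1}x\|_1, \qquad \forall x \in \R^d.
\end{equation*}
Lemma~\ref{lem:er} applied to $\tilde S A$ yields $R$ with $\kappa(\tilde S A R^{-1})\le 2d^2$; unpacking the ellipsoid-rounding bound (as in the derivation of~\eqref{er}) gives $\|x\|_2 \le \|\tilde S A R^{-1}x\|_1 \le 2d\,\|x\|_2$. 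Combining the two chains of inequalities,
\begin{equation*}
\tfrac{2}{3}\|x\|_2 \,\le\, \|AR^{-1}x\|_1 \,\le\, 4d\,\|x\|_2.
\end{equation*}
From the upper bound, $\|AR^{-1}\|_1 \le \sum_{j\in[d]} \|AR^{-1}e_j\|_1 \le 4d^2$, so $\alpha\le 4d^2$. From the lower bound, $\|AR^{-1}x\|_1 \ge \tfrac{2}{3}\|x\|_2 \ge \tfrac{2}{3}\|x\|_\infty$, so one may take $\beta \le \tfrac{3}{2}$. Thus $\kappa(AR^{-1}) \le \alpha\beta \le 6d^2$, as required.

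The main obstacle is not really a technical hurdle but a bookkeeping one: ensuring that the $(1\pm\tfrac12)$-distortion of $\tilde S$ and the rounding factor $\eta\le 2d$ from Lemma~\ref{lem:er} compose with the correct constants to land exactly at $6d^2$, and verifying that all intermediate matrices on which we invoke Lemmas~\ref{lem:l1_samp} and~\ref{lem:er} satisfy their hypotheses (in particular, that $A\tilde R^{-1}$ is indeed well-conditioned with $\poly(d)$ condition number, which is needed to certify the $\bigO(\nnz(A)\cdot\log n)$ running time of Step~3). Everything else is a direct invocation of the cited lemmas.
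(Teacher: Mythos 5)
Your proof is correct and follows the same route the paper intends: the paper itself omits the proof of Lemma~\ref{lem:cond}, deferring to Theorem~7 of~\cite{MM12}, but the general discussion of Combined QR+ER methods in Section~\ref{cond} (and the analogous proof of Lemma~\ref{lem:new_cond}) lays out exactly the composition you carry out, and your constants ($\alpha \le 4d^2$, $\beta \le 3/2$, hence $\kappa \le 6d^2$) agree with the paper's formula $\kappa \le d\eta\kappa_1\kappa_2$ with $\eta = 2d$, $\kappa_1\kappa_2 = 3$.

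One small point worth flagging, though it does not invalidate your argument: Lemma~\ref{lem:er} as stated only asserts $\kappa(\tilde S A R^{-1}) \le 2d^2$, whereas your chain of inequalities requires the stronger $\eta$-rounding guarantee $\|x\|_2 \le \|\tilde S A R^{-1}x\|_1 \le 2d\,\|x\|_2$. The former follows from the latter but not conversely, so strictly speaking you cannot ``unpack'' the $\kappa$ bound back into an $\eta$-bound; you must instead cite the fact that the fast ellipsoid rounding of~\cite{CDMMMW12} actually produces a $2d$-rounding (from which $\kappa \le 2d^2$ is then derived). The paper makes the same implicit assumption in its discussion preceding the algorithm, so your reading is the intended one, but a fully rigorous write-up would cite the rounding factor directly rather than the condition-number corollary. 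Everything else, including the running-time accounting and the observation that $A\tilde R^{-1}$ from Step~2 is $\poly(d)$-conditioned so that Lemma~\ref{lem:l1_samp} applies in Step~3, is exactly right.
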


\begin{algorithm}[t]
  \caption{SPC3: QR + QR type method with Sparse Cauchy Transform}
  \label{new_cond_alg}
  \begin{algorithmic}[1]
    \Require  $A \in \R^{n \times d}$ with full column rank.
    
    \Ensure $R^{-1} \in \R^{d \times d}$ such that $AR^{-1}$ is a well-conditioned basis with 
    $\kappa \leq \bigO(d^{\frac{19}{4}} \log^{\frac{11}{4}} d)$. 
        
    \State Construct a low-distortion embedding matrix $\Pi_1 \in \R^{r_1 \times n}$ of $(\A, \| \cdot \|_1)$ via Lemma~\ref{lem:spc_low_dist}.   
        
    \State  Construct $\tilde R\in \R^{ d \times d}$ such that $A \tilde R^{-1}$ is a well-conditioned basis for $\A$ via QR factorization of $\Pi_1 A$.
    
    \State Compute a $(1 \pm 1/2)$-distortion sampling matrix $\tilde S\in \R^{\poly(d) \times n}$ of $( \mathcal{A}, \| \cdot \|_1)$ via Lemma~\ref{lem:l1_samp}.
    
    \State Compute $R \in \R^{d\times d}$ via the QR factorization of $\tilde S A$.
    
    \end{algorithmic}
 \end{algorithm}

\begin{lemma}
\label{lem:new_cond}
Given $A \in \R^{n \times d}$ with full rank, Algorithm~\ref{new_cond_alg} takes $\bigO(\nnz(A) \cdot \log n)$ time to compute a matrix $R \in \R^{d \times d}$ such that with a constant probability, $AR^{-1}$ is a well-conditioned basis for $\A$ with $\kappa \leq \bigO(d^{\frac{19}{4}} \log^{\frac{11}{4}} d)$.
\end{lemma}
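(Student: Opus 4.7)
The plan is to mirror the analysis of SPC1 (Lemma~\ref{lem:spc}), but with the Sparse Cauchy embedding replaced by the sampling matrix $\tilde S$ produced in Step~3, using the fact that a constant-distortion $\ell_1$ subspace-preserving sampling matrix is itself a low-distortion $\ell_1$ embedding with constant distortion factors. The running time bound will follow by bookkeeping each step, since every operation after the application of $\Pi_1$ and the construction of $\tilde S$ is performed on matrices whose row dimension depends only on $d$.

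First I would bound the running time. Step~1 costs $\bigO(\nnz(A))$ by Lemma~\ref{lem:spc_low_dist}. Step~2 is a QR factorization of the $r_1\times d$ matrix $\Pi_1 A$ with $r_1 = \bigO(d^5\log^5 d)$, contributing a cost independent of $n$. By Lemma~\ref{lem:spc}, after Step~2 the basis $A\tilde R^{-1}$ is $(\alpha,\beta)$-conditioned with $\kappa(A\tilde R^{-1}) \leq \bigO(d^{13/2}\log^{11/2} d)$. Step~3 then invokes Lemma~\ref{lem:l1_samp} with this conditioning quality and $\epsilon = 1/2$, so it runs in $\bigO(\nnz(A)\cdot\log n)$ time and outputs a sampling matrix $\tilde S$ with $s = \bigO(\kappa(A\tilde R^{-1})\cdot d) = \bigO(d^{15/2}\log^{11/2} d)$ rows. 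Finally, Step~4 is a QR factorization of the $s\times d$ matrix $\tilde S A$, whose cost depends only on $d$. Summing gives the claimed $\bigO(\nnz(A)\cdot\log n)$ plus \texttt{QR\_small} bound.

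Next I would bound the condition number. Since $\tilde S$ is a $(1\pm 1/2)$-distortion subspace-preserving matrix, it satisfies Eqn.~\eqref{low_dist} for $(\A,\|\cdot\|_1)$ with $\kappa_1 = 2$ and $\kappa_2 = 3/2$. Let $R$ come from the QR factorization of $\tilde S A$, so that $\tilde S A R^{-1}$ has orthonormal columns. Then for any $x\in\R^d$, writing $Q = \tilde S A R^{-1}$,
\begin{align*}
\|A R^{-1} x\|_1 &\geq (1/\kappa_2)\,\|Qx\|_1 \geq (1/\kappa_2)\,\|Qx\|_2 = (1/\kappa_2)\,\|x\|_2 \geq (1/\kappa_2)\,\|x\|_\infty, \\
\|A R^{-1}\|_1 &= \sum_{j\in[d]} \|A R^{-1} e_j\|_1 \leq \kappa_1 \sum_{j\in[d]} \|Q e_j\|_1 \leq \kappa_1\sqrt{s}\sum_{j\in[d]} \|Q e_j\|_2 = \kappa_1 d\sqrt{s}.
\end{align*}
This gives an $(\alpha,\beta)$-conditioning of $AR^{-1}$ with $\alpha = \kappa_1 d\sqrt{s}$ and $\beta = \kappa_2$, hence $\kappa(AR^{-1}) \leq \kappa_1\kappa_2 \cdot d\sqrt{s}$. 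Substituting $\kappa_1\kappa_2 = \bigO(1)$ and $\sqrt{s} = \bigO(d^{15/4}\log^{11/4} d)$ yields $\kappa(AR^{-1}) \leq \bigO(d^{19/4}\log^{11/4} d)$, as claimed.

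The only real subtlety — and what I view as the main obstacle — is verifying that the composition of approximation guarantees goes through cleanly: one must check that Lemma~\ref{lem:l1_samp} applies with $\kappa = \kappa(A\tilde R^{-1})$, so that $s$ is correctly a polynomial in $d$, and that the constant-probability events from Step~1 and Step~3 can be taken simultaneously by a union bound. Everything else reduces to the $\gamma$-net/orthogonality bookkeeping already present in the SPC1 analysis, applied to the embedding $\tilde S$ rather than $\Pi_1$.
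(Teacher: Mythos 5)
Your proof is correct and follows essentially the same route as the paper: both treat $\tilde S$ as a constant-distortion $\ell_1$ embedding and apply the QR-type conditioning bound $\kappa(AR^{-1}) \le d\sqrt{r}\,\kappa_1\kappa_2$ with $r = \tilde s = \bigO(d^{15/2}\log^{11/2} d)$ and $\kappa_1\kappa_2 = 3$. The only difference is cosmetic — you re-derive that bound from the orthonormality of $\tilde S A R^{-1}$ and the definition of $(\alpha,\beta)$-conditioning, whereas the paper cites it from the discussion of QR-type methods (Theorem 4.1 of~\cite{MM12}).
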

\begin{proof}
 By Lemma~\ref{lem:spc_low_dist}, in Step 1, $\Pi$ is a low-distortion embedding satisfying Eqn.~\eqref{low_dist}
 with $\kappa_1 \kappa_2 = \bigO(d^3 \log^3 d)$, and $r_1 = \bigO(d^5 \log^5 d)$.
  As a matter of fact, as we discussed in Section~\ref{cond}, the resulting $AR^{-1}$ in Step 2 is a well-conditioned basis with $\kappa =  \bigO(d^{\frac{13}{2}} \log^{\frac{11}{2}} d)$.
  In Step 3, by Lemma~\ref{lem:l1_samp}, the sampling complexity required for obtaining a $(1\pm 1/2)$-distortion sampling matrix is $\tilde s = \bigO(d^{\frac{15}{2}} \log^{\frac{11}{2}} d)$.
  Finally, if we view $\tilde S$ as a low-distortion embedding matrix with $r = \tilde s$ and $\kappa_2\kappa_1 = 3$,
  then the resulting $R$ in Step 4 will satisfy that $AR^{-1}$ is a well-conditioned basis with $\kappa =  \bigO(d^{\frac{19}{4}} \log^{\frac{11}{4}} d)$.
  
  For the running time, it takes $\bigO(\nnz(A))$ time for completing Step 1. In Step 2, the running time is $r_1 d^2 = \poly(d)$.
  As Lemma~\ref{lem:l1_samp} points out, the running time for constructing $\tilde S$ in Step 3 is $\bigO(\nnz(A) \cdot \log n)$. 
  Since the large dimension of $\tilde SA$ is a low-degree polynomial of $d$, the QR factorization of it costs $\tilde s d^2 = \poly(d)$ time in Step 4.
  Overall, the running time of Algorithm~\ref{new_cond_alg} is $\bigO(\nnz(A) \cdot \log n)$.   
\end{proof}

\noindent
Both Algorithm~\ref{cond_alg} and Algorithm~\ref{new_cond_alg} have 
additional steps (Steps 3 \& 4), when compared with Algorithm~\ref{spc_alg}, 
and this leads to some improvements, at the cost of additional computation time.
For example, in Algorithm~\ref{new_cond_alg} (SPC3), we obtain a 
well-conditioned basis with smaller $\kappa$ when comparing to 
Algorithm~\ref{spc_alg} (SPC1).
As for the running time, it will be still $\bigO(\nnz(A) \cdot \log n)$,
since the additional time is for constructing sampling matrix and 
solving a QR factorization of a matrix whose dimensions are determined by $d$.
Note that when we summarize these results in
Table~\ref{cond_table}, we explicitly list the additional running 
time for SPC2 and SPC3, in order to show the tradeoff between these
SPC-derived methods.
We will evaluate the performance of all these methods on quantile 
regression problems in Section~\ref{mid_emp} (except for FC, since it is 
similar to but worse than SPC1, and ellipsoid rounding, since on the full 
problem it is too expensive).
  
\noindent
\textbf{Remark.}
For all the methods we described above,
the output is \emph{not} the well-conditioned matrix $U$, but instead it is the matrix $R$, the inverse of which 
transforms $A$ into $U$.

\noindent
\textbf{Remark.}
As we can see in Table~\ref{cond_table}, with respect to conditioning
quality, SPC2 has the lowest condition number $\kappa$, followed by 
SPC3 and then SPC1, which has the worst condition number.
On the other hand, with respect to running time, SPC1 is the fastest, 
followed by SPC3, and then SPC1, which is the slowest.
(The reason for this ordering of the running time is that SPC2 and SPC3 need 
additional steps and ellipsoid rounding takes longer running time that doing
a QR decomposition.)

\section{Main Theoretical Results}

In this section, we present our main theoretical results on $(1\pm \epsilon)$-distortion 
subspace-preserving embeddings and our fast randomized algorithm for 
quantile regression. 


\subsection{Main technical ingredients}
\label{tech}


In this subsection, we present the main technical ingredients underlying our
main algorithm for quantile regression.
We start with a result which says that if we sample sufficiently many 
(but still only $\poly(d)$) rows according to an appropriately-defined 
non-uniform importance sampling distribution (of the form given in
Eqn.~\eqref{eqn:samp_prob} below), then we obtain a 
$(1\pm\epsilon)$-distortion embedding matrix with respect to the loss 
function of quantile regression.
Note that the form of this lemma is based on ideas 
from~\cite{DDHKM09,CDMMMW12}.

\begin{lemma}[Subspace-preserving Sampling Lemma]
 \label{lem:samp_lem}
Given $A \in \R^{n \times d}$, let $U \in \R^{n\times d}$ be a
well-conditioned basis for $\A$ with condition number $\kappa$. 
For $s>0$, define 
\begin{equation}
\label{eqn:samp_prob}
\hat p_i \geq \min\{1, s \cdot  \|U_{(i)}\|_1 / \|U\|_1  \}   , 
\end{equation}
and let $S \in \R^{n\times n}$
be a random diagonal matrix with $S_{ii} = 1/\hat p_i$ with probability 
$\hat p_i$, and 0 otherwise. 
Then when $\epsilon < 1/2  $ and 
$$
 s \geq \frac{\tau}{1-\tau}  \frac{27\kappa}{\epsilon^2} \left( d\log \left( \frac{\tau}{1-\tau} \frac{18}{\epsilon} \right) + \log\left(\frac{4}{\delta} \right) \right)  , 
$$
with probability at least $1-\delta$, for every $x \in \R^d$, 
 \begin{align}
  \label{bd}
    (1-\varepsilon)\rho_\tau(Ax) \leq \rho_\tau(SAx) \leq (1+\varepsilon)\rho_\tau(Ax).
 \end{align}
\end{lemma}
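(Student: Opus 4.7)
The plan is to prove this by first establishing concentration for a single fixed $x$ via Bernstein's inequality, and then promoting this to a uniform bound over all $x \in \R^d$ by a $\gamma$-net argument on the unit ball of $\A$ (in the $\rho_\tau$ sense). The key leverage comes from writing $Ax = Uz$ with $U$ a well-conditioned basis and $z = R x$, so that the sampling probabilities $\hat p_i \ge s\|U_{(i)}\|_1/\|U\|_1$ control both the per-term magnitude and the variance.

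For fixed $x$, I would define $X_i = (S_{ii}-1)\rho_\tau((Uz)_i)$, so that $\sum_i X_i = \rho_\tau(SAx) - \rho_\tau(Ax)$ and $\Exp X_i = 0$. Using property 2 of Lemma~\ref{properties} together with H\"older's inequality $|(Uz)_i| \le \|U_{(i)}\|_1 \|z\|_\infty$ and the $(\alpha,\beta)$-conditioning, I would get, for indices with $\hat p_i<1$,
\begin{equation*}
\frac{\rho_\tau((Uz)_i)}{\hat p_i}
\;\le\; \frac{\tau |(Uz)_i|\,\|U\|_1}{s\,\|U_{(i)}\|_1}
\;\le\; \frac{\tau\,\alpha\beta\,\|Uz\|_1}{s}
\;\le\; \frac{\tau}{1-\tau}\cdot\frac{\kappa\,\rho_\tau(Ax)}{s} =: M.
\end{equation*}
This gives $|X_i|\le M$ and, since $\Exp X_i^2 \le \rho_\tau((Uz)_i)^2/\hat p_i \le M\,\rho_\tau((Uz)_i)$, summing yields $\sum_i\Exp X_i^2 \le M\rho_\tau(Ax)$. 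Applying Bernstein with $t=\epsilon\rho_\tau(Ax)$ and $\epsilon<1/2$ produces a single-point failure probability of at most $2\exp\bigl(-c\,\epsilon^2 s(1-\tau)/(\tau\kappa)\bigr)$ for an explicit constant $c$ (tuned so that the final constant $27$ in the lemma works out).

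To make the statement uniform, I would use positive homogeneity (property 3) to restrict attention to $K=\{x:\rho_\tau(Ax)=1\}$, on which $\|z\|_\infty\le \beta/(1-\tau)$. I would build a $\gamma$-net $K_\gamma$ of $K$, using the ambient $\ell_1$ metric on $Ax\in\A$ combined with the covering-number bound $(3/\gamma)^d$ for the unit ball of a $d$-dimensional subspace, with $\gamma$ chosen proportional to $\epsilon(1-\tau)/\tau$ divided by an operator bound $C_S$ on $\|S\cdot\|_1/\|\cdot\|_1$ over $\A$. Union-bounding Bernstein over $|K_\gamma|\le(C/\gamma)^d$ net points and equating the failure probability to $\delta$ gives the required $s$, since
\begin{equation*}
\log|K_\gamma| \lesssim d\log\!\Bigl(\tfrac{\tau}{(1-\tau)\epsilon}\Bigr)
\end{equation*}
matches the logarithmic factor in the statement.

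The main obstacle is the \emph{extension step}: for $x\in K$ and its nearest net point $\hat x$, property 1 (subadditivity) and property 4 (Lipschitzness) of Lemma~\ref{properties} give
$|\rho_\tau(SAx)-\rho_\tau(SA\hat x)|\le \tau\|SA(x-\hat x)\|_1$,
and I need a \emph{deterministic} or high-probability bound on $\|SA v\|_1/\|v\|_1$ for $v\in\A$ to close the argument. I would obtain this by a short side-calculation: $\|SAv\|_1\le\|z_v\|_\infty\cdot\|SU\|_1$ where $v=Uz_v$, and $\|SU\|_1=\sum_i S_{ii}\|U_{(i)}\|_1$ is a sum of independent nonnegative variables with mean $\|U\|_1$ and each summand bounded by $\|U\|_1/s$; Bernstein (or Markov) then gives $\|SU\|_1\le 2\|U\|_1$ with probability at least $1-\delta/2$. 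Combined with $\|z_v\|_\infty\le\beta\|v\|_1$, this yields $\|SAv\|_1\le 2\kappa\|v\|_1$. Absorbing the resulting factors into $\gamma$ and splitting the total failure probability between the net step and the $\|SU\|_1$ step produces the stated sampling complexity, completing the proof.
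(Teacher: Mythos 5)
Your overall strategy — Bernstein concentration at a single point, then a $\gamma$-net to make the bound uniform, with $\rho_\tau$'s subadditivity and Lipschitz property (Lemma~\ref{properties}) used to transfer from net points to arbitrary points — is exactly the structure of the paper's proof, and your fixed-point Bernstein computation (the bound $M$, the variance bound $\sum_i \Exp X_i^2 \le M\rho_\tau(Ax)$) matches the paper's almost line for line. Where you genuinely diverge is in the extension step, i.e., how you control $\|S(z - z_\gamma)\|_1$. The paper notices that the same random matrix $S$, at the sampling complexity already required, is also a $(1\pm 2/3)$-distortion $\ell_1$ subspace embedding with probability $\ge 1-\delta/2$ by invoking the $\ell_1$ sampling lemma of \cite{DDHKM09} (restated as Lemma~\ref{lem:l1_embed}). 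This gives $\|S(z-z_\gamma)\|_1 \le (5/3)\|z-z_\gamma\|_1$ with \emph{no} $\kappa$ factor, and so the final choice $\gamma = \frac{1-\tau}{6\tau}\epsilon$ produces exactly the stated $d\log\bigl(\frac{\tau}{1-\tau}\frac{18}{\epsilon}\bigr)$ in the sample complexity.

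Your alternative — bounding $\|SAv\|_1 \le \|z_v\|_\infty\,\|SU\|_1$ and concentrating the scalar $\|SU\|_1 = \sum_i S_{ii}\|U_{(i)}\|_1$ about its mean $\|U\|_1$ — is correct and has the virtue of being self-contained (you do not need to import the $\ell_1$ subspace-embedding lemma; a one-line Bernstein suffices). However, this route yields $\|SAv\|_1 \le 2\kappa\|v\|_1$ rather than $\le (5/3)\|v\|_1$. Consequently you must take $\gamma \sim \frac{(1-\tau)\epsilon}{\tau\kappa}$, so $\log(1/\gamma)$ picks up an extra $\log\kappa$ term and the sample complexity becomes $\Omega\bigl(\mu\kappa(d\log(\mu\kappa/\epsilon) + \log(1/\delta))/\epsilon^2\bigr)$, which is weaker by a $d\log\kappa = \bigO(d\log d)$ additive term than the lemma as stated. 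Your claim that $\log|K_\gamma| \lesssim d\log(\tau/((1-\tau)\epsilon))$ ``matches the logarithmic factor in the statement'' therefore silently drops the $\log\kappa$ contribution. The argument proves a lemma of the same form but not with the stated constants; to recover them you would need a subspace-level bound on $\|S\cdot\|_1/\|\cdot\|_1$ free of $\kappa$, which is precisely what the paper's appeal to the $\ell_1$ subspace-preserving sampling lemma delivers.
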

\begin{proof}
Since $U$ is a well-conditioned basis for the range space of $A$, to prove Eqn.~\eqref{bd} it is equivalent to prove the following holds for all $y \in \R^d$,
 \begin{align}
  \label{bbd2}
    (1-\varepsilon)\rho_\tau(Uy) \leq \rho_\tau(SUy) \leq (1+\varepsilon)\rho_\tau(Uy).
 \end{align}
To prove that Eqn.~\eqref{bbd2} holds for any $y \in \R^d$, firstly, we show that Eqn.~\eqref{bbd2} holds for any fixed $y \in \R^d$; and, secondly, we apply a standard $\gamma$-net argument to show that \eqref{bbd2} holds for every $y \in \R^d$.

Assume that $U$ is $(\alpha, \beta)$-conditioned with $\kappa = \alpha \beta$.
For $i \in [n]$, let $v_i = U_{(i)} y$. Then $\rho_\tau(SUy) = \sum_{i \in [n]}  \rho_\tau(S_{ii} v_i) = \sum_{i \in [n]}  S_{ii} \rho_\tau(v_i)$ since $S_{ii} \geq 0$. Let  $w_i =  S_{ii} \rho_\tau(v_i) - \rho_\tau(v_i)$ be a random variable, and we have
\begin{align*}
w_i =
\begin{cases}
 (\frac{1}{\hat p_i}-1)\rho_\tau(v_i),  &  \mbox{with  probability }  \hat p_i;  \\
  -\rho_\tau(v_i),  &   \mbox{with probability }  1-\hat p_i.
\end{cases} 
\end{align*}
Therefore, $\Exp[w_i]= 0, \Var[w_i] = (\frac{1}{\hat p_i}-1) \rho_\tau(v_i)^2, |w_i| \leq \frac{1}{\hat p_i}\rho_\tau(v_i)$. 
Note here we only consider $i$ such that $\|U_{(i)}\|_1 / \|U\|_1<1$ since otherwise we have $\hat p_i = 1$, and the corresponding term will not contribute to the variance. According to our definition,
$\hat p_i \geq s \cdot  \| U_{(i)} \|_1 /  \| U \|_1 = s \cdot t_i$.
Consider the following,
$$   \rho_\tau(v_i) = \rho_\tau(U_{(i)}y) \leq  \tau \|U_{(i)}y\|_1 \leq \tau  \|(U_{(i)})\|_1 \|y\|_\infty.  $$
Hence,
\begin{align*}
  |w_i|  &\leq  \frac{1}{\hat p_i} \rho_\tau(v_i) \leq \frac{1}{\hat p_i}  \tau  \|U_{(i)}\|_1 \|y\|_\infty \leq \frac{\tau}{s} \|U\|_1 \|y\|_\infty   \\
 & \leq \frac{1}{s} \frac{\tau}{1-\tau} \alpha\beta \rho_\tau(Uy) := M.
\end{align*}
Also,
\begin{equation*}
  \sum_{i \in [n]} \Var[w_i] \leq \sum_{i \in [n]}  \frac{1}{\hat p_i} \rho_\tau(v_i)^2 \leq  M \rho_\tau(Uy).
 \end{equation*}
Applying the Bernstein inequality to the zero-mean random variables $w_i$ gives
 \begin{equation*}
    \Pr\left[ \left | \sum_{i\in [n]} w_i \right | > \varepsilon  \right] \leq 2\exp \left(  \frac{ - \varepsilon^2}{2\sum_i \Var[w_i] + \frac{2}{3}M\epsilon} \right).
 \end{equation*}
Since $\sum_{i\in [n]} w_i =  \rho_\tau(SUy) - \rho_\tau(Uy) $, setting $\varepsilon$ to $\varepsilon \rho_\tau(Uy)$ and plugging all the results we derive above, we have
 \begin{align*}
    &\Pr\left[ \left | \rho_\tau(SUy) - \rho_\tau(Uy) \right | > \varepsilon\rho_\tau(Uy)  \right]  \leq  2\exp \left(  \frac{ - \varepsilon^2\rho^2_\tau(Uy)}{2M\rho_\tau(Uy) + \frac{2\varepsilon}{3}M \rho_\tau(Uy)}\right).  
  \end{align*}
 Let's simplify the exponential term on the right hand side of the above expression: 
 \begin{align*}   
  \frac{ - \varepsilon^2\rho^2_\tau(Uy)}{2M\rho_\tau(Uy) + \frac{2\varepsilon}{3}M \rho_\tau(Uy)}
       = \frac{ -s\varepsilon^2}{\alpha\beta}  \frac{1-\tau}{\tau} \frac{1}{ 2 + \frac{2\varepsilon}{3}} 
           \leq   \frac{ -s\varepsilon^2}{3\alpha\beta}  \frac{1-\tau}{\tau} .  
   \end{align*}
Therefore, when $s \geq  \frac{\tau}{1 - \tau}  \frac{27\alpha\beta}{\epsilon^2} \left( d\log \left( \frac{3}{\gamma} \right) + \log\left(\frac{4}{\delta} \right) \right)$, with probability at least $1 - ( \gamma/3  )^d \delta/2$,
  \begin{align}
    \label{bbd_fixed}
    (1- \epsilon/3) \rho_\tau(Uy) \leq \rho_\tau(SUy) \leq (1+ \epsilon/3)\rho_\tau(Uy),
  \end{align}
where $\gamma$ will be specified later.

We will show that, for all $ z \in \textup{range}(U)$,
\begin{equation}
 \label{ep_bound}
   (1-\epsilon) \rho_\tau(z) \leq \rho_\tau(Sz) \leq (1+\epsilon) \rho_\tau(z).
\end{equation} 
By the positive linearity of $\rho_\tau(\cdot)$, it suffices to show the bound above holds for all $z$ with $\|z\|_1 = 1$.

Next, let $Z = \{ z \in \textup{range}(U) \,|\, \|z\|_1 \leq 1\}$ and construct a
$\gamma$-net of $Z$, denoted by $Z_\gamma$, such that for any $z \in Z$, there
exists a $z_\gamma \in Z_\gamma$ that satisfies $\|z - z_\gamma\|_1 \leq \gamma$.
By \cite{BLM89}, the number of elements in $Z_\gamma$ is at most $(3/\gamma)^d$.
Hence, with probability at least $1-\delta/2$, Eqn.~\eqref{bbd_fixed} holds for all $z_\gamma \in Z_\gamma$.

We claim that, with suitable choice $\gamma$, with probability at least $1-\delta/2$, $S$ will be a $(1 \pm 2/3)$-distortion embedding matrix of $(\A, \| \cdot \|_1)$.
To show this, firstly, we state a similar result for $\| \cdot \|_1$ from Theorem 6 in \cite{DDHKM09} with $p = 1$ as follows.
\begin{lemma}[$\ell_1$ Subspace-preserving Sampling Lemma]
 \label{lem:l1_embed}
 Given $A \in \R^{n \times d}$, let $U \in \R^{n\times d}$ be an $(\alpha, \beta)$-conditioned basis for $\A$. For $s>0$, define
$$\hat p_i \geq \min\{1, s \cdot \| U_{(i)} \|_1 / \|U\|_1 \} ,$$ 
and let $S \in \R^{n\times n}$ be a random diagonal matrix with $S_{ii} = 1/\hat p_i$ with probability $\hat p_i$, and 0 otherwise. 
Then when $\epsilon < 1/2$ and
$$ s \geq  \frac{32 \alpha \beta}{\epsilon^2}  \left(d \log\left( \frac{12}{\epsilon} \right) + \log\left( \frac{2}{\delta} \right) \right) ,$$ 
with probability at least $1-\delta$, for every $x \in \R^d$,
 \begin{align}
    (1-\varepsilon) \|Ax\|_1 \leq \|SAx\|_1 \leq (1+\varepsilon) \|Ax\|_1.
 \end{align}
\end{lemma}

Note here we change the constraint $\epsilon \leq 1/7$ and the original theorem to $\epsilon \leq 1/2$ above. One can easily show that the result still holds with such setting.
If we set $\epsilon = 2/3$ and the failure probability to be at most $\delta/2$, 
the construction of $S$ defined above satisfies conditions of Lemma~\ref{lem:l1_embed} when the expected sampling complexity $s \geq  \bar s :=  72 \alpha \beta  \left(d \log\left( 18 \right) + \log\left( \frac{4}{\delta} \right) \right) $. Then our claim for $S$ holds. Hence we only need to make sure with suitable choice of $\gamma$, we have $s \geq \bar s$.

For any $z$ with $\|z\|_1 = 1$,
 we have
\begin{eqnarray*}
|\rho_\tau(Sz) - \rho_\tau(z)| 
   &\leq & |\rho_\tau(Sz) - \rho_\tau(S z_\gamma)| + |\rho_\tau(S z_\gamma) - \rho_\tau(z_\gamma)| + |\rho_\tau(z_\gamma) - \rho_\tau(z)| \\
 &\leq & \tau \|S (z - z_\gamma)\|_1 + (\epsilon/3) \rho_\tau(z_\gamma) + \tau \|z_\gamma - z\|_1 \\
  &\leq & \tau |\|S(z - z_\gamma)\|_1 - \|(z-z_\gamma)\|_1| +  (\epsilon/3) \rho_\tau(z) + (\epsilon/3)   \rho_\tau(z_\gamma-z) + 2 \tau \|z_\gamma - z\|_1 \\
  &\leq & 2\tau/3 \|z-z_\gamma\|_1 + (\epsilon/3) \rho_\tau(z) + \tau (\epsilon/3)  \|z_\gamma-z\|_1 + 2 \tau \|z_\gamma - z\|_1 \\
  &\leq &  (\epsilon/3) \rho_\tau(z) + \tau \gamma (2/3 + \epsilon/3 + 2) \\
  &\leq & \left( \epsilon/3 + \frac{\tau}{1-\tau} \gamma (2/3 + \epsilon/3 + 2) \right)  \rho_\tau(z)   \\
  &\leq &  \epsilon \rho_\tau(z),
\end{eqnarray*}
where we take $\gamma = \frac{1-\tau}{6\tau} \epsilon$, and the expected sampling size becomes 
 \begin{equation*}
  s  =  \frac{\tau}{1-\tau}  \frac{27\alpha\beta}{\epsilon^2} \left( d\log \left( \frac{\tau}{1-\tau} \frac{18}{\epsilon} \right) + \log\left(\frac{4}{\delta} \right) \right).  
 \end{equation*} 
When $\epsilon < 1/2$, we will have $s > \bar s$. Hence the claim for $S$ holds and Eqn.~\eqref{ep_bound} holds for every $ z \in \textup{range}(U)$.

Since the proof is involved with two random events with failure probability at most $\delta/2$, by a simple union bound,  Eqn.~\eqref{bbd2} holds with probability at least $1 - \delta$. Our results follows since $\kappa = \alpha \beta$.
\end{proof}

\noindent
\textbf{Remark.}
It is not hard to see that for any matrix $S$ satisfying Eqn.~\eqref{bd}, the
rank of $A$ is preserved.

\noindent
\textbf{Remark.}
Given such a subspace-preserving sampling matrix, it is not hard
to show that, by solving the sub-sampled problem induced by $S$, \emph{i.e.},
by solving $\min_{x \in \C} \rho_\tau(SAx)$, then one obtains a
$(1+\epsilon)/(1-\epsilon)$-approximate solution to the original problem. 
For more details, see the proof for Theorem~\ref{qr_thm}.

In order to apply Lemma~\ref{lem:samp_lem} to quantile regression, we need to compute the sampling
probabilities in Eqn.~\eqref{eqn:samp_prob}.
This requires two steps: first, find a well-conditioned basis $U$; and 
second, compute the $\ell_1$ row norms of $U$. 
For the first step, we can apply any method described in the previous 
subsection.
(Other methods are possible, but Algorithms~\ref{spc_alg}, ~\ref{cond_alg}, 
and~\ref{new_cond_alg} are of particular interest due to their nearly 
input-sparsity running time.  
We will now present an algorithm that will perform the second step of approximating the $\ell_1$ row norms of $U$ in the 
allotted $\bigO(\nnz(A) \cdot \log n)$ time.

Suppose we have obtained $R^{-1}$ such that $AR^{-1}$ is a well-conditioned basis.
Consider, next, computing $\hat p_i$ from $U$ (or from $A$ and $R^{-1}$), 
and note that forming $U$ explicitly is expensive both when $A$ is dense and 
when $A$ is sparse. 
In practice, however, we will not need to form $U$ explicitly, and we will
not need to compute the exact value of the $\ell_1$-norm of each row of $U$.
Indeed, it suffices to get estimates of $\|U_{(i)}\|_1$, in which case we 
can adjust the sampling complexity $s$ to maintain a small approximation 
factor. 
Algorithm~\ref{embed_alg} provides a way to compute the estimates of the 
$\ell_1$ norm of each row of $U$ fast and construct the sampling matrix. 
The same algorithm was used in~\cite{CDMMMW12} except for the choice of desired sampling complexity $s$
and we present the entire algorithm for completeness.
Our main result for Algorithm~\ref{embed_alg} is presented in 
Proposition~\ref{lem:const_samp}.

\begin{algorithm}[tb]
  \caption{Fast Construction of $(1\pm \epsilon)$-distortion Sampling Matrix of $(\A, \rho_\tau(\cdot))$}
  \label{embed_alg}

  \begin{algorithmic}[1]
    \Require $A \in \R^{n \times d}, R \in \R^{d \times d}$ such that $AR^{-1}$ is well-conditioned with condition number $\kappa$, $\epsilon \in (0, 1/2)$, $\tau \in [1/2,1)$.
      
    \Ensure Sampling matrix $S \in \R^{n \times n}$.
    
    \State Let $\Pi_2 \in \R^{d \times r_2}$ be a matrix of independent Cauchys with $r_2 = 15 \log (40n)$.
    
    \State Compute $R^{-1} \Pi_2$ and construct $\Lambda = A R^{-1} \Pi_2 \in \R^{n \times r_2}$.
    
    \State For $i \in [n]$, compute $\lambda_i = \textrm{median}_{j\in [r_2]} |\Lambda_{ij}|$.
    
    \State For $s =  \frac{\tau}{1-\tau}  \frac{81\kappa}{\epsilon^2} \left( d\log \left( \frac{\tau}{1-\tau} \frac{18}{\epsilon} \right) + \log 80 \right) $ and $i \in [n]$, compute probabilities
         \begin{equation*}
               \hat p_i = \min \left\{1, s \cdot \frac{\lambda_i}{\sum_{i \in [n]} \lambda_i} \right\}.  
         \end{equation*}
       
     \State Let $S \in \R^{n\times n}$ be diagonal with independent entries 
                 \begin{align*}
   S_{ii} =
    \begin{cases}
    \frac{1}{\hat p_i},  &   \mbox{with probability }  \hat p_i;  \\
      0,  &     \mbox{with probability }  1-\hat p_i.
    \end{cases} 
    \end{align*}
  \end{algorithmic}
\end{algorithm}

\begin{proposition}[Fast Construction of $(1 \pm \epsilon)$-distortion Sampling Matrix]
 \label{lem:const_samp}
Given a matrix $A \in \R^{n \times d}$, and a matrix $R \in \R^{d\times d}$ such that $AR^{-1}$ is a well-conditioned basis for $\A$ with
condition number $\kappa$, Algorithm~\ref{embed_alg} takes
$\bigO(\nnz(A) \cdot \log n)$ time to compute a sampling matrix $S  \in \R^{\hat s\times n}$ (with only one nonzero per row), such that with probability at least $0.9$,
$S$ is a $(1\pm \epsilon)$-distortion sampling matrix. That is for all
$x \in \R^d$,
    \begin{equation}
      \label{bd2}
      (1- \epsilon) \rho_\tau(Ax)  \leq \rho_\tau(SAx)  \leq (1+\epsilon)\rho_\tau(Ax). 
     \end{equation}
Further, with probability at least $1 - o(1)$, $\hat s= \bigO \left( \mu \kappa d\log \left(  \mu / \epsilon \right)  / \epsilon^2 \right)$,
where $\mu = \frac{\tau}{1-\tau}$.
\end{proposition}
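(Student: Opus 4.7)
\textbf{Proof plan for Proposition~\ref{lem:const_samp}.}
The plan is to reduce the claim to Lemma~\ref{lem:samp_lem}, by showing that the quantities $\lambda_i$ computed in Algorithm~\ref{embed_alg} are constant-factor approximations to the row norms $\|U_{(i)}\|_1$ of the implicit well-conditioned basis $U = AR^{-1}$. Once this is established, the sampling probabilities $\hat p_i = \min\{1, s \cdot \lambda_i/\sum_j \lambda_j\}$ will satisfy (up to absolute constants) the hypothesis of Lemma~\ref{lem:samp_lem}, and the extra factor $81$ in the definition of $s$ (versus the $27$ in that lemma) absorbs precisely this constant-factor slack.

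First I would exploit the $1$-stability of the Cauchy distribution. Since the columns of $\Pi_2$ are independent standard Cauchy vectors, each entry $\Lambda_{ij} = U_{(i)} (\Pi_2)^{(j)}$ is distributed as $\|U_{(i)}\|_1 \cdot C$ where $C$ is a standard Cauchy; moreover the $r_2$ entries in a fixed row of $\Lambda$ are independent. Using the standard concentration fact that the sample median of $r$ i.i.d.\ copies of $|C|$ lies in a constant-factor window around the population median ($=1$) with probability $1 - e^{-\Omega(r)}$, the choice $r_2 = 15\log(40n)$ yields $\lambda_i \in [\tfrac{1}{2}\|U_{(i)}\|_1, \tfrac{3}{2}\|U_{(i)}\|_1]$ for all $i \in [n]$ simultaneously with probability at least $1 - 1/(2\cdot 40) = 39/40$ after a union bound over the $n$ rows. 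The main obstacle here is simply picking the right explicit tail bound for the Cauchy median so that the constants in $r_2$ and in the factor $81$ line up; this is a routine but slightly fiddly computation.

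Conditioned on this good event for $\lambda_i$, the probabilities $\hat p_i$ dominate $\min\{1, (s/3) \cdot \|U_{(i)}\|_1 / \|U\|_1\}$, so they meet the requirements of Lemma~\ref{lem:samp_lem} with effective sampling parameter $s/3 = (27\kappa/\epsilon^2)\mu(d\log(18\mu/\epsilon) + \log 80)$ (taking $\delta = 1/20$ in that lemma). Applying Lemma~\ref{lem:samp_lem} then yields Eqn.~\eqref{bd2} with probability at least $1 - 1/20$; combining with the $39/40$ event via a union bound gives overall success probability at least $0.9$.

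Finally, I would bound the running time and the realized sample size. Computing $R^{-1}\Pi_2$ costs $\bigO(d^2 r_2)$, forming $\Lambda = A(R^{-1}\Pi_2)$ costs $\bigO(\nnz(A)\cdot r_2) = \bigO(\nnz(A)\cdot \log n)$, and the $n$ row-medians and the construction of $S$ cost $\bigO(n\log n)$ in total, giving the stated running time. For the realized sample size $\hat s$, note that $\Exp[\hat s] = \sum_i \hat p_i \leq s = \bigO(\mu\kappa d\log(\mu/\epsilon)/\epsilon^2)$; since $\hat s$ is a sum of independent indicators, a standard Bernstein or Chernoff bound (identical in spirit to Lemma~\ref{lem:samp_lem}'s Bernstein step) gives $\hat s = \bigO(\mu\kappa d\log(\mu/\epsilon)/\epsilon^2)$ with probability $1 - o(1)$, completing the proof.
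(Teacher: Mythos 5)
Your proposal is correct and follows essentially the same route as the paper's proof: reduce to Lemma~\ref{lem:samp_lem} by showing the $\lambda_i$ are within a factor of $3$ of the true row norms $\|U_{(i)}\|_1$ (so that $\hat{p}_i$ dominates $\min\{1,(s/3)\|U_{(i)}\|_1/\|U\|_1\}$, absorbed by the $81$-vs-$27$ slack), take a union bound over the two failure events, and control the realized sample size by Bernstein. The only difference is that the paper simply invokes Lemma~12 of~\cite{CDMMMW12} for the statement that the median-of-Cauchy estimator satisfies $\tfrac{1}{2}\|U_{(i)}\|_1 \leq \lambda_i \leq \tfrac{3}{2}\|U_{(i)}\|_1$ for all $i$ with probability at least $0.95$, whereas you sketch a from-scratch derivation via $1$-stability of the Cauchy distribution, median concentration, and a union bound over rows — this is exactly what that cited lemma encapsulates, so it is a re-derivation rather than a genuinely different argument.
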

\begin{proof}
In this lemma, slightly different from the previous notation,
we will use $s$ and $\hat s$ to denote the actual number of rows selected
and the input parameter for defining the sampling probability, respectively.
From Lemma~\ref{lem:samp_lem}, a $(1 \pm \epsilon)$-distortion
sampling matrix $S$ could be constructed by calculating the $\ell_1$ norms of
the rows of $AR^{-1}$.
Indeed, we will estimate these row norms and adjust the sampling complexity $s$.
According to Lemma 12 in \cite{CDMMMW12}, with probability at least 0.95, the
$\lambda_i, i\in [n]$ we compute in the first three steps of
Algorithm~\ref{embed_alg} satisfy
 \begin{align*}
      \frac{1}{2} \|U_{(i)}\|_1 \leq \lambda_i \leq \frac{3}{2} \|U_{(i)}\|_1,
 \end{align*}
 where $U = AR^{-1}$.
 Conditioned on this high probability event, we set $ \hat p_i \geq \min
 \left\{1, \hat s \cdot \frac{\lambda_i}{\sum_{i \in [n]} \lambda_i} \right\} $.
 Then we will have $ \hat p_i \geq \min \left\{1, \frac{\hat s}{3} \cdot \frac{\|
     U_{(i)} \|_1}{ \| U \|_1} \right\} $.
 Since $\hat s/3$ satisfies the sampling complexity required in
 Lemma~\ref{lem:samp_lem} with $\delta = 0.05$,
 the corresponding sampling matrix $S$ is constructed as desired.
 These are done in Step 4 and Step 5.
 Since the algorithm involves two random events, by a simple union
 bound, with probability at least 0.9, $S$ is a $(1 \pm \epsilon)$-distortion
 sampling matrix.

 By the definition of sampling probabilities, $\Exp[s] = \sum_{i \in [n]} \hat
 p_i \leq \hat s $.
 Note here $s$ is the sum of some random variables and it is tightly
 concentrated around its expectation.
 By a standard Bernstein bound, with probability $1 - o(1)$, $s \leq 2\hat s = \bigO
 \left( \mu \kappa d\log \left( \mu / \epsilon \right) / \epsilon^2
 \right)$, where $\mu = \frac{\tau}{1-\tau}$, as claimed.

 Now let's compute the running time in Algorithm~\ref{embed_alg}.
 The main computational cost comes from Steps 2, 3 and 5.
 The running time in other steps will be dominated by it.
 It takes $d^2r_2$ time to compute $R^{-1}\Pi_2$;
 then it takes $\bigO(\nnz(A) \cdot r_2)$ time to compute $AR^{-1}\Pi_2$; and finally it
 takes $\bigO(n)$ time to compute all the $\lambda_i$ and construct $S$.
 Since $r_2 = \bigO(\log n)$, in total, the running time is $\bigO( (d^2 + \nnz(A))
 \log n + n) = \bigO(\nnz(A) \cdot \log n)$.
 \end{proof}

\noindent
{\bf Remark.}
Such technique can also be used to fast approximate the $\ell_2$ row norms
of a well-conditioned basis by post-multiplying a matrix consisted of Gaussian variables; see \cite{DMMW12}.

\noindent
{\bf Remark.}
In the text before Proposition~\ref{lem:const_samp},
$s$ denotes an input parameter for defining the importance sampling probabilities.
However, the actual sample size might be less than that.
Since Proposition~\ref{lem:const_samp} is about the construction of the sampling matrix $S$,
we let $\hat s$ denote the actual number of row selected.
Also, as stated, the output of Algorithm~\ref{embed_alg} is a $n \times n$ matrix;
but if we zero-out the all-zero rows, then the actual size of $S$ is indeed $\hat s$ by $d$ as described in Proposition~\ref{lem:const_samp}.
Throughout the following text, by sampling size $s$, we mean the desired sampling size
which is the parameter in the algorithm.

\subsection{Main algorithm}
\label{main}
 
In this subsection, we state our main algorithm for computing an approximate 
solution to the quantile regression problem.
Recall that, to compute a relative-error approximate solution, it suffices to
compute a $(1 \pm \epsilon)$-distortion sampling matrix $S$. 
To construct $S$, we first compute a well-conditioned basis $U$ with
Algorithm~\ref{spc_alg}, \ref{cond_alg}, or~\ref{new_cond_alg} (or some 
other conditioning method), and then we apply Algorithm~\ref{embed_alg} to 
approximate the $\ell_1$ norm of each row of $U$. 
These procedures are summarized in Algorithm~\ref{qr_alg}.
The main quality-of-approximation result for this algorithm by using 
Algorithm~\ref{cond_alg} is stated in Theorem~\ref{qr_thm}.

\begin{algorithm}[tb]
  \caption{Fast Randomized Algorithm for Quantile Regression}
  \label{qr_alg}
  \begin{algorithmic}[1]
    \Require $A \in \R^{n \times d}$ with full column rank, $\epsilon \in (0, 1/2)$, $\tau \in [1/2,1)$.
    
    \Ensure An approximated solution $\hat x \in \R^d$ to problem $\textrm{minimize}_{x \in \C} \: \rho_\tau(Ax)$. 
        
    \State Compute $R \in \R^{d \times d}$ such that $AR^{-1}$ is a well-conditioned basis for $\A$ via Algorithm~\ref{spc_alg}, \ref{cond_alg}, or \ref{new_cond_alg}.
      
    \State Compute a $(1 \pm \epsilon)$-distortion embedding $S\in \R^{n \times n}$ of $(\mathcal{A}, \rho_\tau(\cdot))$ via Algorithm~\ref{embed_alg}.
    
     \State Return $\hat x \in \R^d$ that minimizes $\rho_\tau(SAx)$ with respect to $x \in \C$.

  \end{algorithmic}
\end{algorithm}

\begin{theorem}[Fast Quantile Regression]
\label{qr_thm}
Given $ A \in \R^{n \times d}$ and $\varepsilon \in (0,1/2)$, 
if Algorithm~\ref{cond_alg} is used in Step 1,
Algorithm~\ref{qr_alg} returns a vector $\hat x$ that,
with probability at least 0.8,
satisfies
\begin{equation*}
  \rho_\tau(A\hat x) \leq \left( \frac{1+\varepsilon}{1-\varepsilon} \right) \rho_\tau(Ax^*),
\end{equation*}
where $x^*$ is an optimal solution to the original problem.
In addition, the algorithm to construct $\hat x$ runs in time
\begin{equation*}
 \bigO(\textup{nnz}(A) \cdot \log n) + \phi \left(\bigO(\mu d^{3} \log(\mu/\epsilon)/\epsilon^2),d \right),
\end{equation*}
where $\mu = \frac{\tau}{1-\tau}$ and $\phi(s,d)$ is the time to solve a quantile regression problem of size $s \times d$.
\end{theorem}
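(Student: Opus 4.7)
The plan is to chain together the three ingredients already in place: the conditioning guarantee from Lemma~\ref{lem:cond} (SPC2), the $(1\pm\epsilon)$-distortion sampling construction from Proposition~\ref{lem:const_samp}, and a short standard computation showing that minimizing over the sketched problem yields a $(1+\epsilon)/(1-\epsilon)$-factor approximation to the original optimum. First I would invoke Lemma~\ref{lem:cond} in Step 1 of Algorithm~\ref{qr_alg}: with constant probability this produces $R \in \R^{d\times d}$ such that $AR^{-1}$ is a well-conditioned basis with $\kappa \leq 6d^2$, and it runs in $\bigO(\nnz(A)\cdot\log n)$ time. Then, feeding this $R$ into Algorithm~\ref{embed_alg}, Proposition~\ref{lem:const_samp} gives, with probability at least $0.9$, a sampling matrix $S$ that is $(1\pm\epsilon)$-distortion for $(\A, \rho_\tau(\cdot))$, again in $\bigO(\nnz(A)\cdot\log n)$ time. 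Substituting $\kappa = 6d^2$ into the sample-size bound of Proposition~\ref{lem:const_samp} yields $\hat s = \bigO(\mu d^{3}\log(\mu/\epsilon)/\epsilon^2)$, so the subproblem $\min_{x\in\C}\rho_\tau(SAx)$ has precisely the advertised size.

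Next I would derive the approximation guarantee. Let $\hat x$ be any minimizer of $\rho_\tau(SAx)$ over $\C$ and $x^*$ an optimum of the original problem. Conditioning on the event that $S$ is a $(1\pm\epsilon)$-distortion sampling matrix (which holds on both sides of the inequality \eqref{bd2} of Proposition~\ref{lem:const_samp}), a three-line chain gives
\begin{equation*}
(1-\epsilon)\rho_\tau(A\hat x) \;\leq\; \rho_\tau(SA\hat x) \;\leq\; \rho_\tau(SAx^*) \;\leq\; (1+\epsilon)\rho_\tau(Ax^*),
\end{equation*}
where the middle inequality uses optimality of $\hat x$ for the sketched problem and the fact that $x^*\in\C$. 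Dividing by $(1-\epsilon)$ yields the quality bound stated in the theorem.

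For the running time I would simply add the three contributions: Step 1 costs $\bigO(\nnz(A)\cdot\log n)$ plus the $\poly(d)$ overhead of the QR factorization of $\Pi_1 A$ and the ellipsoid-rounding on $\tilde S A$ (all absorbed into the input-sparsity term since they depend only on $d$); Step 2 costs $\bigO(\nnz(A)\cdot\log n)$ by Proposition~\ref{lem:const_samp}; and Step 3 costs $\phi(\hat s,d)$ where $\hat s = \bigO(\mu d^{3}\log(\mu/\epsilon)/\epsilon^2)$, matching the theorem statement. Finally, a union bound over the two random events (conditioning succeeds, sampling succeeds) gives overall success probability at least $0.8$, provided the failure probability in Lemma~\ref{lem:cond} is at most $0.1$ (and if not, standard constant adjustments in its internal sampling complexity make it so).

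The only mildly delicate point, and the one I would be most careful about, is bookkeeping the sample complexity through the two Lemmas: one must use the \emph{actual} (high-probability) realized sample size $\hat s$ of Proposition~\ref{lem:const_samp} rather than the parameter $s$ used to define the probabilities, and track that with probability $1-o(1)$ the realized size is at most twice its expectation, so that the subproblem passed to the solver truly has the claimed size. Everything else is routine: plugging $\kappa=6d^2$ into the bound, applying the subspace-preservation inequality on both $\hat x$ and $x^*$, and taking a union bound.
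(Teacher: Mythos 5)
Your proposal matches the paper's own proof almost step for step: invoking Lemma~\ref{lem:cond} for the $\kappa\le 6d^2$ conditioning bound, Proposition~\ref{lem:const_samp} for the $(1\pm\epsilon)$-distortion sampling matrix and the sample size $\hat s=\bigO(\mu d^3\log(\mu/\epsilon)/\epsilon^2)$, the same three-inequality chain to obtain $\rho_\tau(A\hat x)\le\tfrac{1+\epsilon}{1-\epsilon}\rho_\tau(Ax^*)$, and a union bound over the two random events to reach success probability $0.8$. The only differences are cosmetic (you rearranged the inequality chain before dividing by $1-\epsilon$, and you flag the expected-vs-realized sample size caveat, which the paper handles in a remark after Proposition~\ref{lem:const_samp}).
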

\begin{proof}
 In Step 1, by Lemma~\ref{lem:cond}, the matrix $R \in \R^{d \times
  d}$ computed by Algorithm~\ref{cond_alg} satisfies that with probability at
least 0.9, $A R^{-1}$ is a well-condition basis for $\A$ with $\kappa =
6d^2$.
The probability bound can be attained by setting the corresponding constants
sufficiently large.
In Step 2, when we apply Algorithm~\ref{embed_alg} to construct the sampling
matrix $S$, by Proposition~\ref{lem:const_samp}, with probability at least 0.9, $S$
will be a $(1 \pm \epsilon)$-distortion sampling matrix of $(\A, \rho_\tau
(\cdot ))$. 
Solving the subproblem $\min_{ x \in \C} \rho_\tau(SAx) $ gives a
$(1+\epsilon)/(1-\epsilon)$ solution to the original problem Eqn.~\eqref{pb}. 
This is because
\begin{equation}
\rho_\tau(A\hat x) 
   \leq \frac{1}{1-\varepsilon} \rho_\tau(SA\hat x) 
    \leq \frac{1}{1-\varepsilon} \rho_\tau(SAx^*) 
\leq \frac{1+\varepsilon}{1-\varepsilon} \rho_\tau(Ax^*),
\end{equation}
where the first and third inequalities come from Eqn.~\eqref{bd2} and the second inequality comes from the fact that $\hat x$ is the minimizer of the subproblem.
Hence the solution $\hat x$ returned by Step 3 satisfies our claim. The whole algorithm involves two random events, the overall success probability is at least 0.8.
 
Now let's compute the running time for Algorithm~\ref{qr_alg}. In Step 1, by Lemma~\ref{lem:cond}, the running time for Algorithm~\ref{cond_alg} to compute $R$ is $\bigO(\nnz A)$. By Proposition~\ref{lem:const_samp}, the running for Step 2 is $\bigO(\nnz(A)\cdot \log n)$. Furthermore, as stated in Proposition~\ref{lem:const_samp} and $\kappa(AR^{-1}) = 2d^2$, with probability $1 - o(1)$, the actual sampling complexity is $\bigO \left( \mu  d^3\log \left(  \mu / \epsilon \right)  / \epsilon^2 \right)$, where $\mu = \tau/(1-\tau)$, and it takes $\phi \left( \bigO \left( \mu d^3\log \left(  \mu / \epsilon \right)  / \epsilon^2 \right), d \right)$ time to solve the subproblem in Step 3. This follows the overall running time of Algorithm~\ref{qr_alg} as claimed.
\end{proof}
   
\noindent
\textbf{Remark.}
As stated, Theorem~\ref{qr_thm} uses Algorithm~\ref{cond_alg} in Step 3; 
we did this since it leads to the best known running-time results in worst-case analysis, but 
our empirical results will indicate that due to various trade-offs the situation is more complex in 
practice.

\noindent
\textbf{Remark.}
Our theory provides a bound on the solution quality, as measured by the 
objective function of the quantile regression problem, and it does not 
provide bounds for the difference between the exact solution vector and 
the solution vector returned by our algorithm.
We will, however, compute this latter quantity in our empirical evaluation.
   

\section{Empirical Evaluation on Medium-scale Quantile Regression}
\label{mid_emp}

In this section and the next section, we present our main empirical results.
We have evaluated an implementation of 
Algorithm~\ref{qr_alg} using several different conditioning methods in 
Step 1.
We have considered both simulated data and real data, and we have 
considered both medium-sized data as well as terabyte-scale data.
In this section, we will summarize our results for medium-sized data.
The results on terabyte-scale data can be found in Section~\ref{large_emp}.

\paragraph{Simulated skewed data.}
For the synthetic data, in order to increase the difficulty for sampling, we 
will add imbalanced measurements to each coordinates of the solution vector.
A similar construction for the test data was appeared in \cite{CDMMMW12}.
Due to the skewed structure of the data, we will call this data set 
``skewed data'' in the following discussion.  
This data set is generated in the following~way. 

\begin{enumerate}
 \item Each row of the design matrix $A$ is a canonical vector.  Suppose the number of measurements on the $j$-th column are $c_j$, where $c_j = q c_{j-1}$, for $j = 2, \ldots, d$.  Here $1 < q \leq 2$. $A$ is a $n \times d$ matrix.
 \item The true vector $x^*$ with length $d$ is a vector with independent Gaussian entries. Let $b^* = Ax^*$.
 \item The noise vector $\epsilon$ is generated with independent Laplacian entries. We scale $\epsilon$ such that $\|\epsilon\| / \|b^*\|  = 0.2$. The response vector is given by
 $b_i = \begin{cases}  500 \epsilon_i &  \textrm{with probability } 0.001;  \\   b^*_i + \epsilon_i   &   \textrm{otherwise}.  \end{cases}$ 
 \end{enumerate}
When making the experiments, we require $c_1 \geq 161$. This implies that if we choose $s/n \geq 0.01$, and perform the uniform sampling, with probability at least 0.8, at least one row in the first block (associated with the first coordinate) will be selected,  due to $1 - (1 - 0.01)^{161} \geq 0.8$. Hence, if we choose $ s \geq 0.01n$, we may expect uniform sampling produce acceptable estimation.

\paragraph{Real census data.}
For the real data,
we consider a data set consisting of a $5\%$ sample of the U.S.
2000 Census data\footnote{U.S. Census, \url{http://www.census.gov/census2000/PUMS5.html}}, consisting of annual salary and related features on people 
who reported that they worked 40 or more weeks in the previous year and 
worked 35 or more hours per week.
The size of the design matrix is $5 \times 10^6$ by 11.

\vspace{3mm}
The remainder of this section will consist of six subsections, the first 
five of which will show the results of experiments on the skewed data,
and then Section~\ref{census_data}, which will show the results on census data.
In more detail, Section~\ref{relerr_s}, \ref{relerr_n},
\ref{relerr_d}, and~\ref{relerr_tau} will summarize the 
performance of the methods in terms of solution quality as the parameters 
$s$, $n$, $d$, and $\tau$, respectively, are varied; and
Section~\ref{running_time} will show how the running time changes as 
$s$, $n$, and $d$ change.

Before showing the details, we provide a quick summary of the numerical results.
We show high quality of approximation on both objective value and solution 
vector by using our main algorithm, \emph{i.e.}, Algorithm~\ref{qr_alg}, with 
various conditioning methods.
Among all the conditioning methods, SPC2 and SPC3 show higher accuracy than 
other methods.
They can achieve 2-digit accuracy by only sampling 1\% of the rows for 
moderately-large dataset.
Also, we show that using conditioning yields much higher accuracy, especially
when approximating the solution vector, as we can see in Figure~\ref{err_s}.
Next, we demonstrate that the empirical results are consistent to our theory,
that is, when we fix the lower dimension of the dataset, $d$, and fix the 
conditioning method we use, we always achieve the same accuracy, regardless 
how large the higher dimension $n$ is, as shown in Figure~\ref{err_n}.  
In Figure~\ref{err_d}, we explore the relationship between the accuracy and 
the lower dimension $d$ when $n$ is fixed.
The accuracy is monotonically decreasing as $d$ increases.
We also show that our algorithms are reliable for $\tau$ ranging from 0.05 
to 0.95 as shown in Figure~\ref{err_tau}, and the magnitude of the relative 
error remains almost the same.
As for the running time comparison, in Figure~\ref{time_s}, 
Figure~\ref{time_n} and Figure~\ref{time_d}, we show that running time of 
Algorithm~\ref{qr_alg} with different conditioning method is consistent with 
our theory. 
Moreover, SPC1 and SPC3 have a much better scalability than other methods,
including the standard solver \texttt{ipm} and best previous sampling 
algorithm \texttt{prqfn}.
For example, for $n = 1e6$ and $d = 280$, we can get at least 1-digit 
accuracy in a reasonable time, while we can only solve problem with size 
$1e6$ by $180$ exactly by using the standard solver in that same amount of 
time.

\subsection{Quality of approximation when the sampling size $s$ changes}
\label{relerr_s}

As discussed in Section~\ref{cond}, we can use one of several methods for 
the conditioning step, \emph{i.e.}, for finding the well-conditioned basis 
$U = AR^{-1}$ in the Step 1 of Algorithm~\ref{qr_alg}. 
Here, we will consider the empirical performance of \emph{six} methods for 
doing this conditioning step, namely: SC, SPC1, SPC2, SPC3, NOCO, and UNIF. 
The first four methods (SC, SPC1, SPC2, SPC3) are described in 
Section~\ref{cond}; 
NOCO stands for ``no conditioning,'' meaning the matrix $R$ in the 
conditioning step is taken to be identity; and, 
UNIF stands for the uniform sampling method, which we include here for 
completeness.
Note that, for all the methods, we compute the row norms of the well-conditioned basis 
exactly instead of estimating them with Algorithm \ref{embed_alg}.
The reason is that this permits a cleaner evaluation of the quantile 
regression algorithm, as this may reduce the error due to the estimating 
step.  We have, however, observed similar results if we approximate the row norms 
well.

Rather than determining the sample size from a given tolerance $\epsilon$, we
let the sample size $s$ vary in a range as an input to the algorithm.
Also, for a fixed data set, we will show the results when $\tau=0.5,0.75,0.95$.
In our figure, we will plot the first and the third quartiles of the relative errors of the
objective value and solution measured in three different norms from $50$ independent
trials. 
We restrict the $y$ axis in the plots to the range of $[0,100]$ to show 
more details.
We start with a test on skewed data with size $1e6 \times 50$.
(Recall that, by $1e6 \times 50$, we mean that 
$n=1 \times 10^{6}$ and $d=50$.)
The resulting plots are shown in Figure~\ref{err_s}.

\begin{figure}[h!tbp]
 \begin{center}
 \begin{tabular}{ccc}
\subfigure[$\tau = 0.5$, $|f-f^*|/|f^*|$]{
   \includegraphics[width=0.3\textwidth] {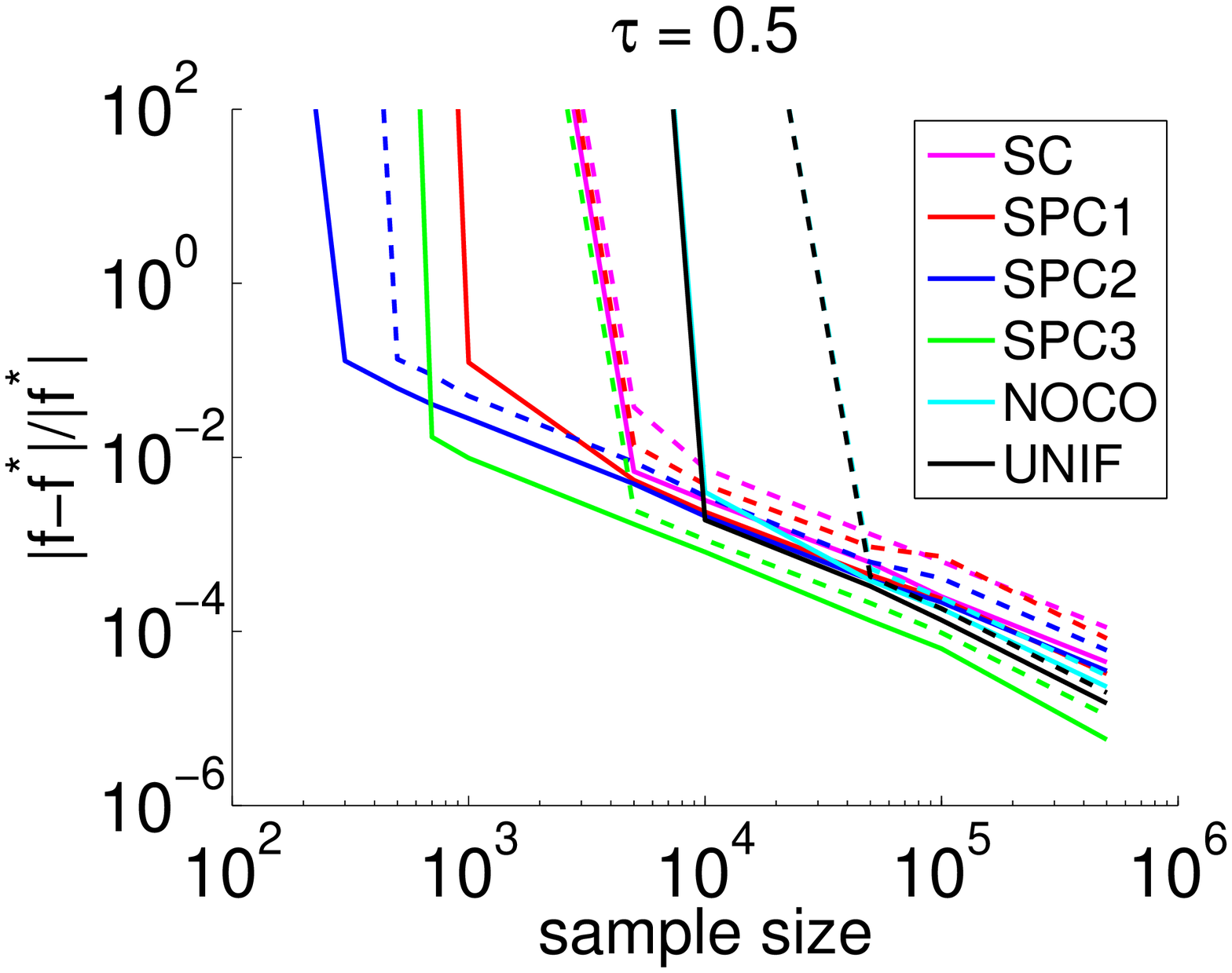}
 }
 &
 \subfigure[$\tau = 0.75$,  $|f-f^*|/|f^*|$]{
   \includegraphics[width=0.3\textwidth] {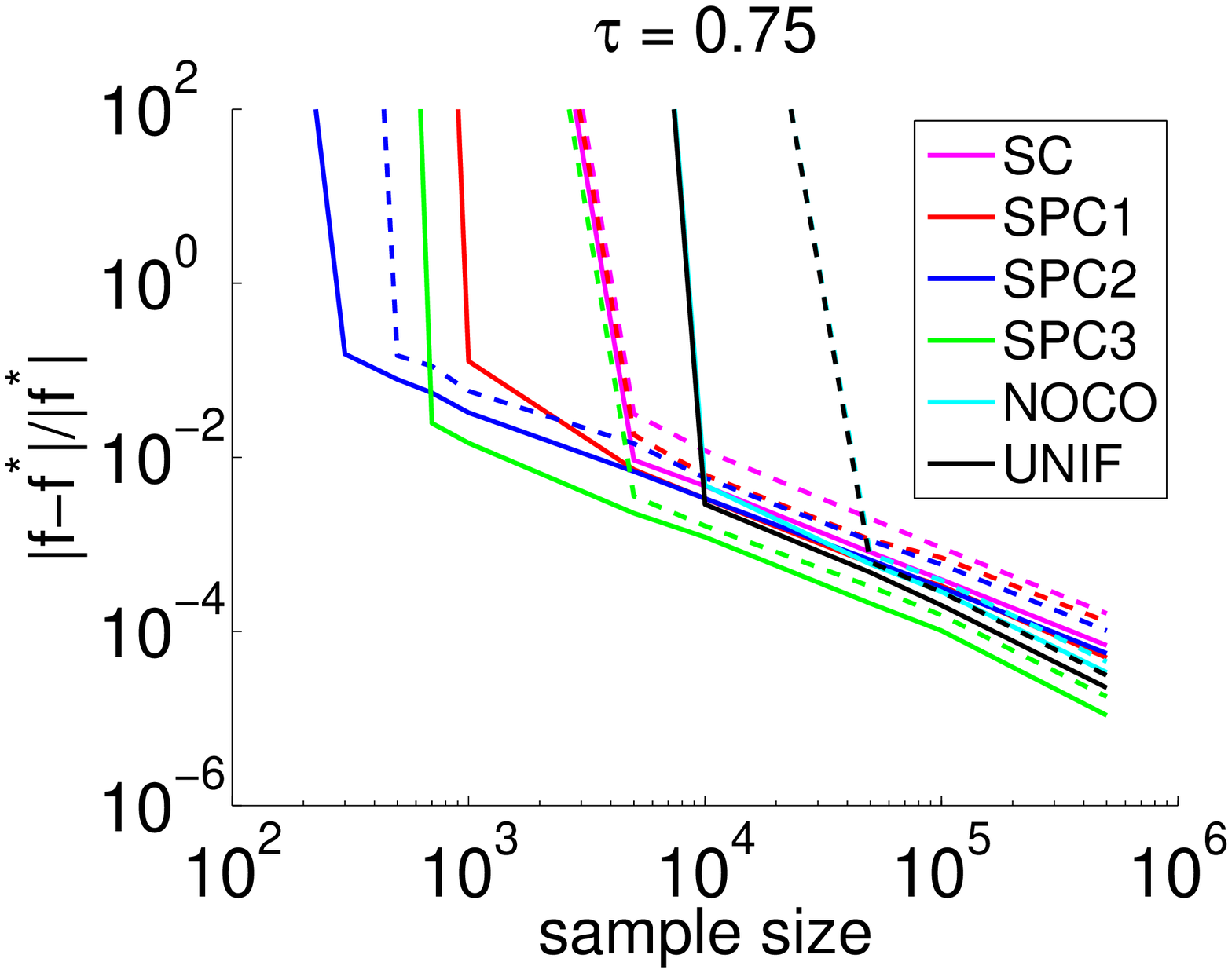}
 }
 &
\subfigure[$\tau = 0.95$,  $|f-f^*|/|f^*|$]{
   \includegraphics[width=0.3\textwidth] {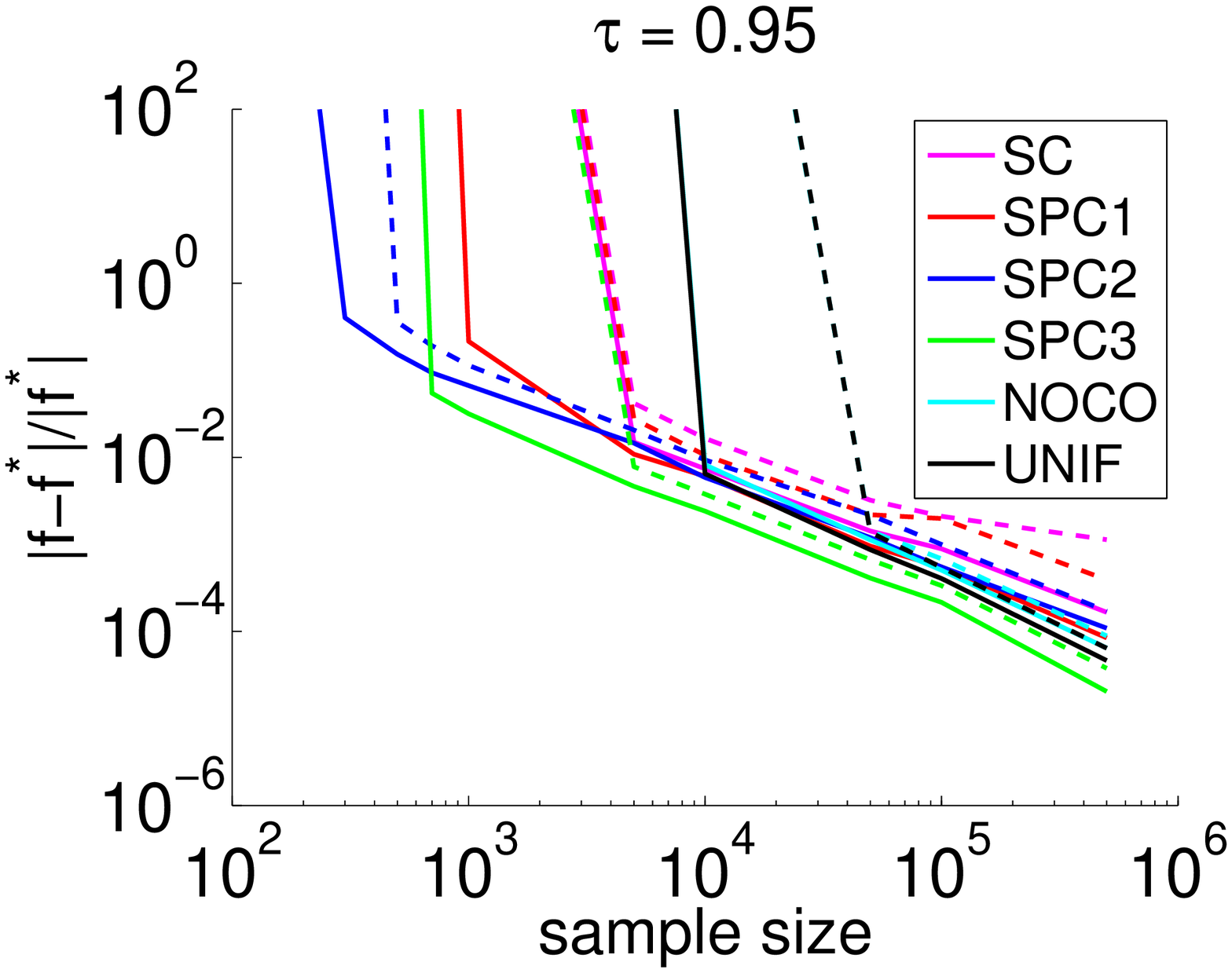}
 }
 \\
 \subfigure[$\tau = 0.5$, $\|x-x^*\|_2/\|x^*\|_2$]{
   \includegraphics[width=0.3\textwidth] {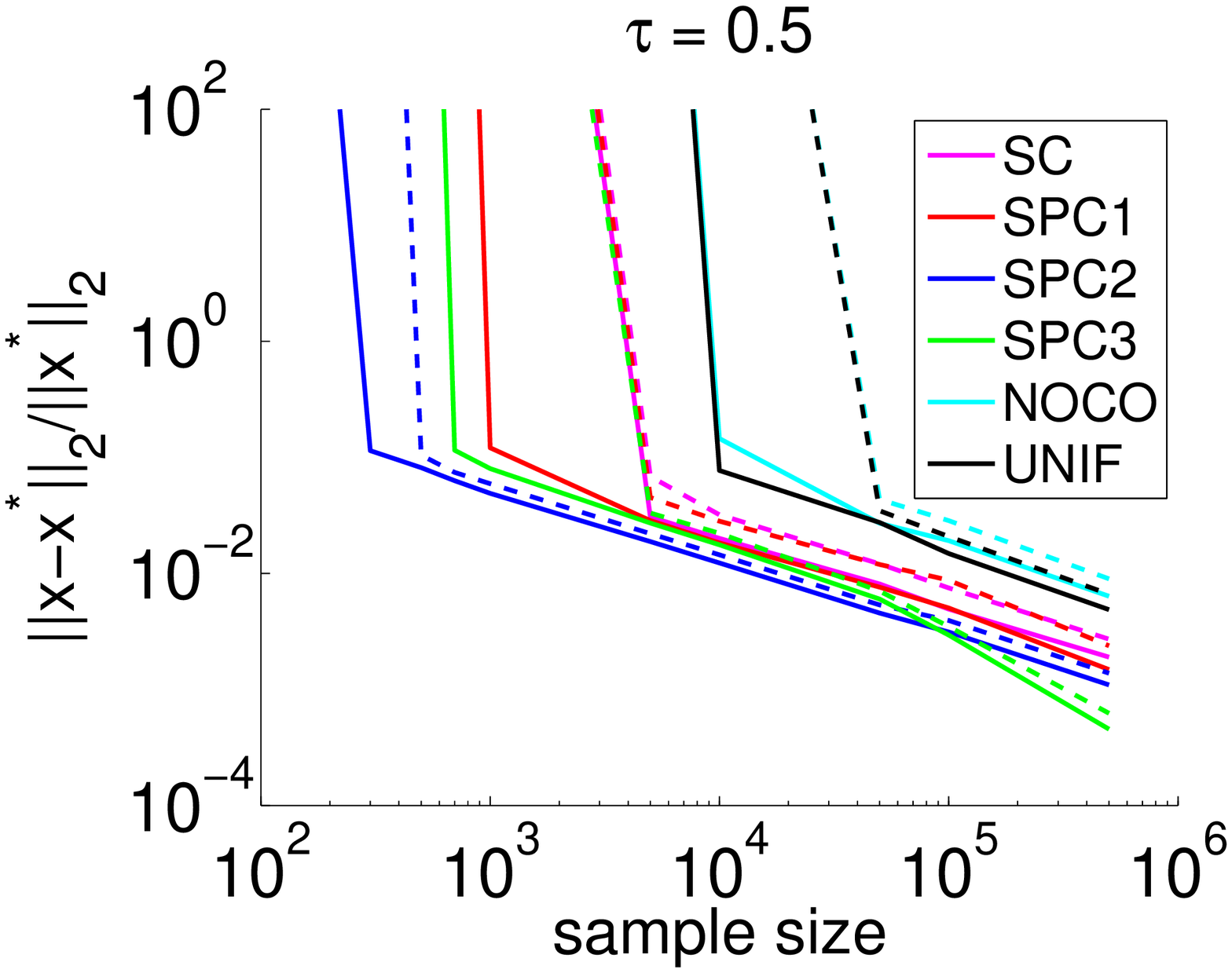}
 }
 &
 \subfigure[$\tau = 0.75$, $\|x-x^*\|_2/\|x^*\|_2$]{
   \includegraphics[width=0.3\textwidth] {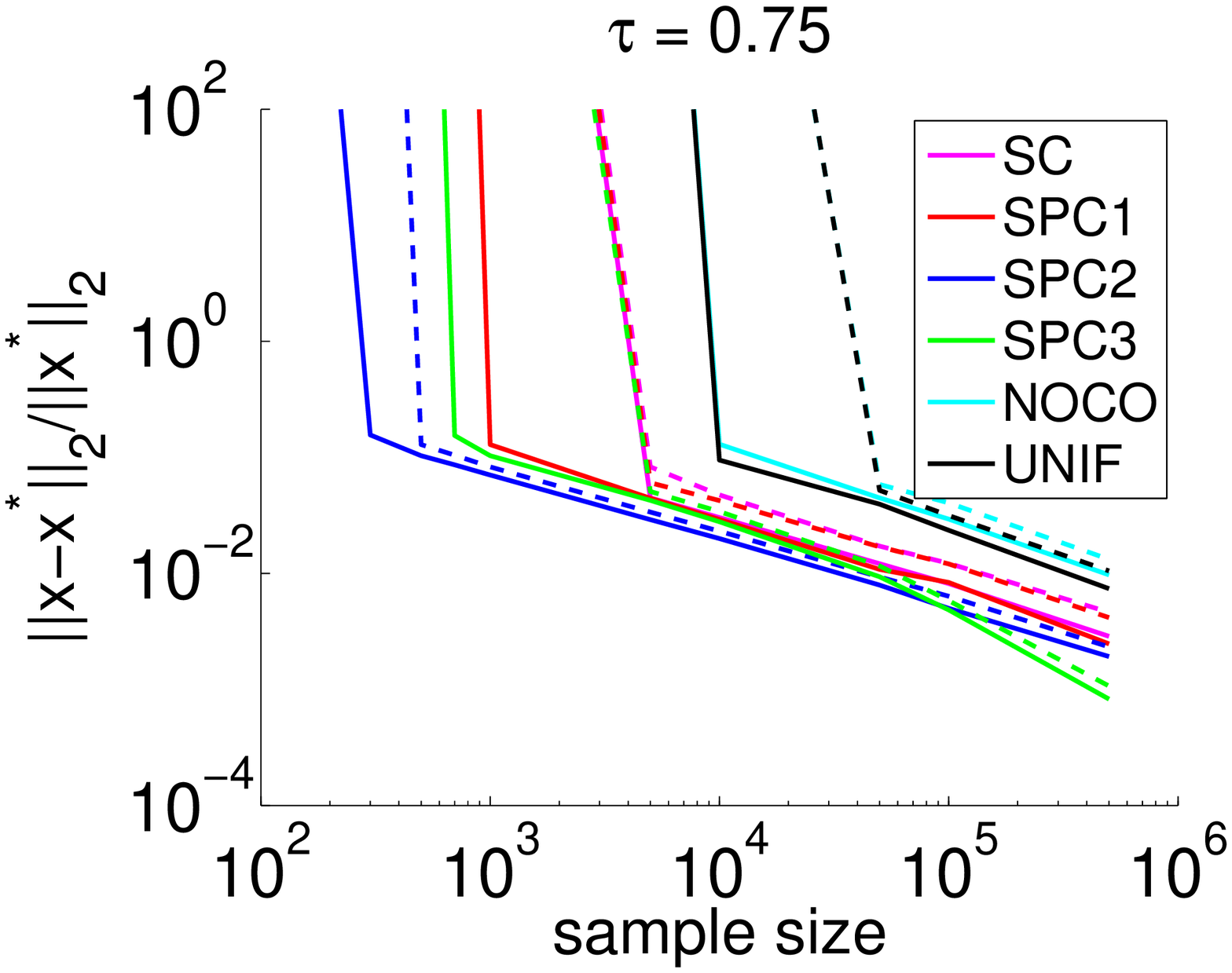}
 }
 &
 \subfigure[$\tau = 0.95$, $\|x-x^*\|_2/\|x^*\|_2$]{
   \includegraphics[width=0.3\textwidth] {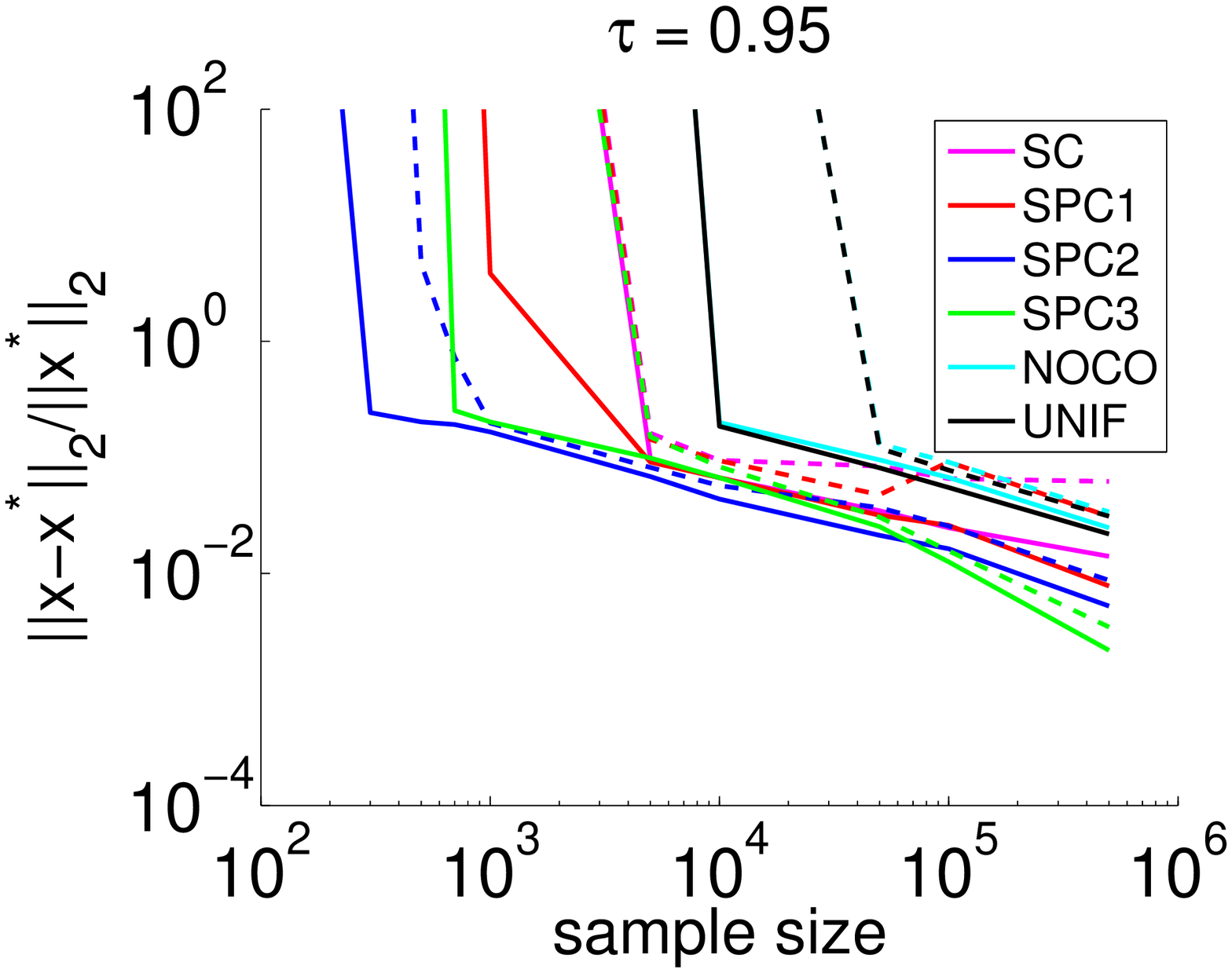}
 }
 \\
\subfigure[$\tau = 0.5$, $\|x-x^*\|_1/\|x^*\|_1$]{
   \includegraphics[width=0.3\textwidth] {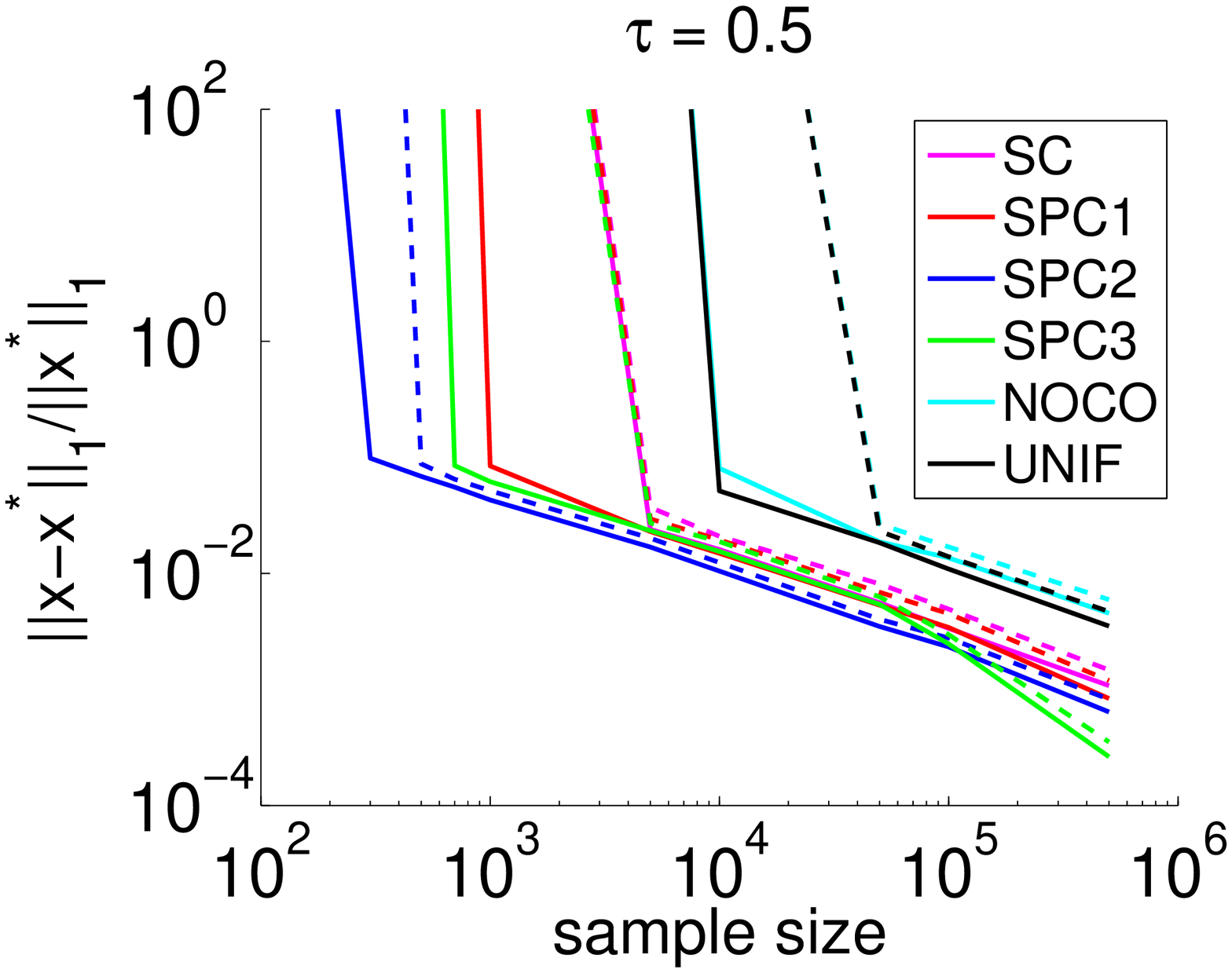}
 }
 &
 \subfigure[$\tau = 0.75$, $\|x-x^*\|_1/\|x^*\|_1$]{
   \includegraphics[width=0.3\textwidth] {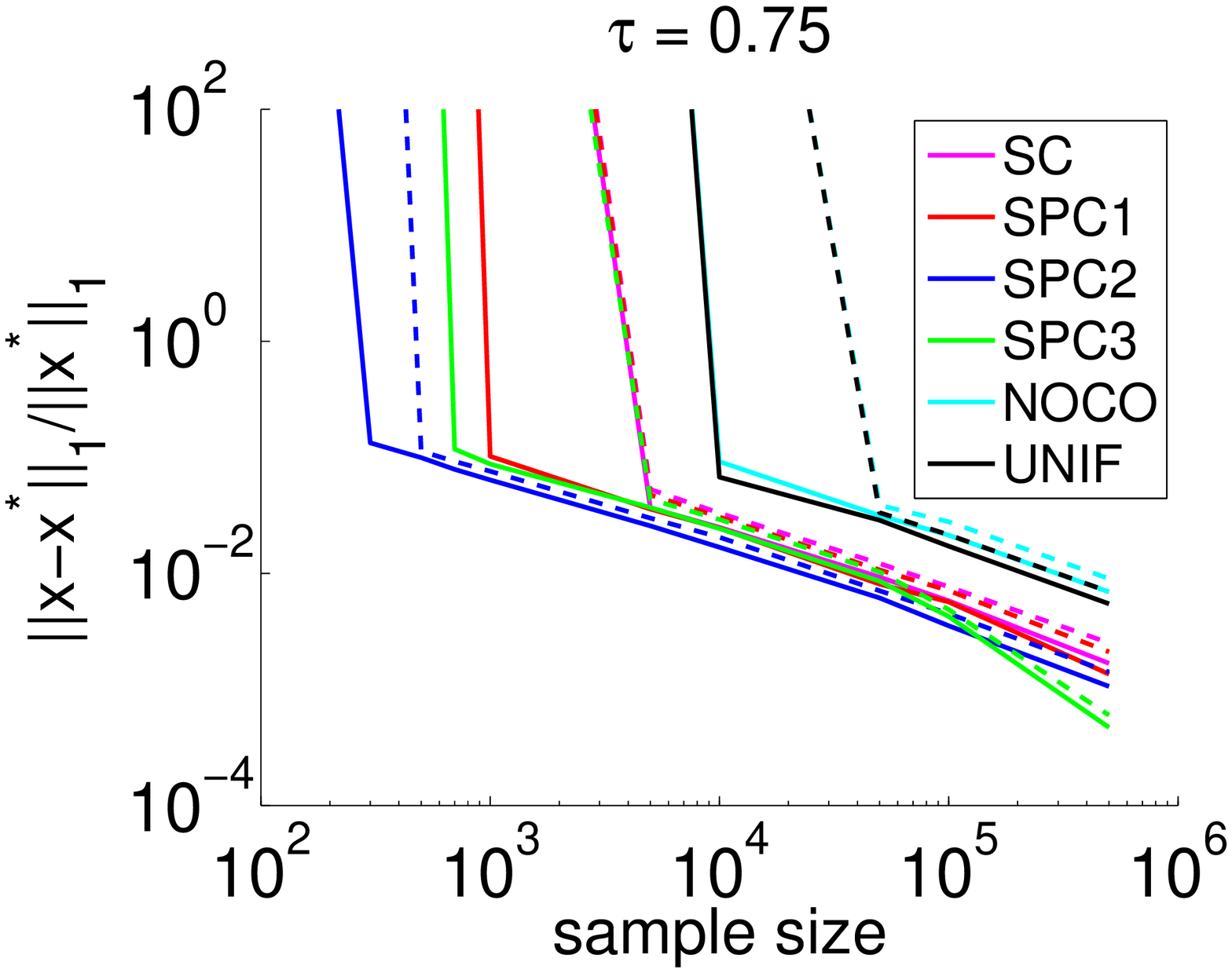}
 }
 &
 \subfigure[$\tau = 0.95$, $\|x-x^*\|_1/\|x^*\|_1$]{
   \includegraphics[width=0.3\textwidth] {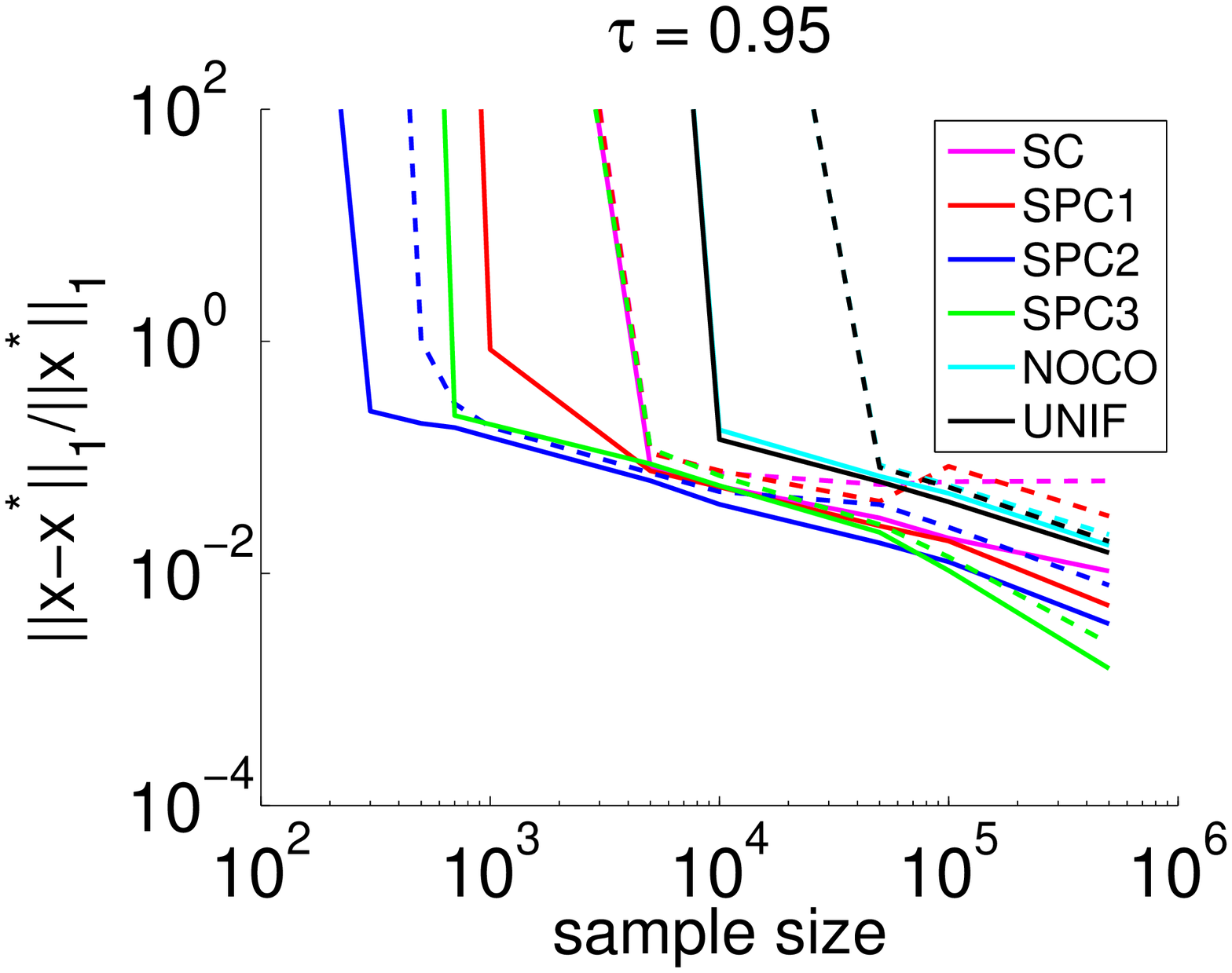}
 }
 \\
 \subfigure[$\tau = 0.5$, $\|x-x^*\|_\infty/\|x^*\|_\infty$]{
   \includegraphics[width=0.3\textwidth] {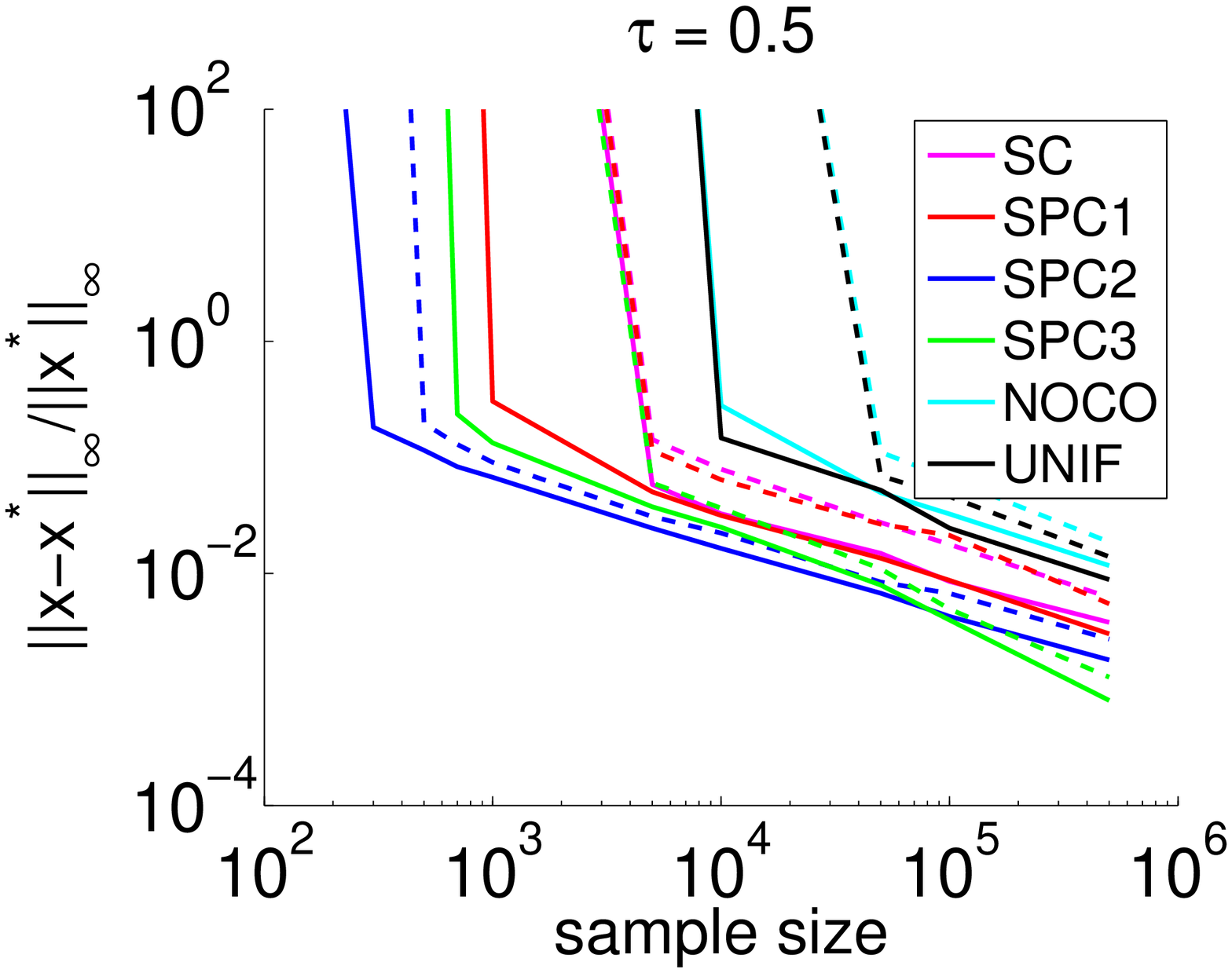}
 } 
 &
\subfigure[$\tau = 0.75$, $\|x-x^*\|_\infty/\|x^*\|_\infty$]{
   \includegraphics[width=0.3\textwidth] {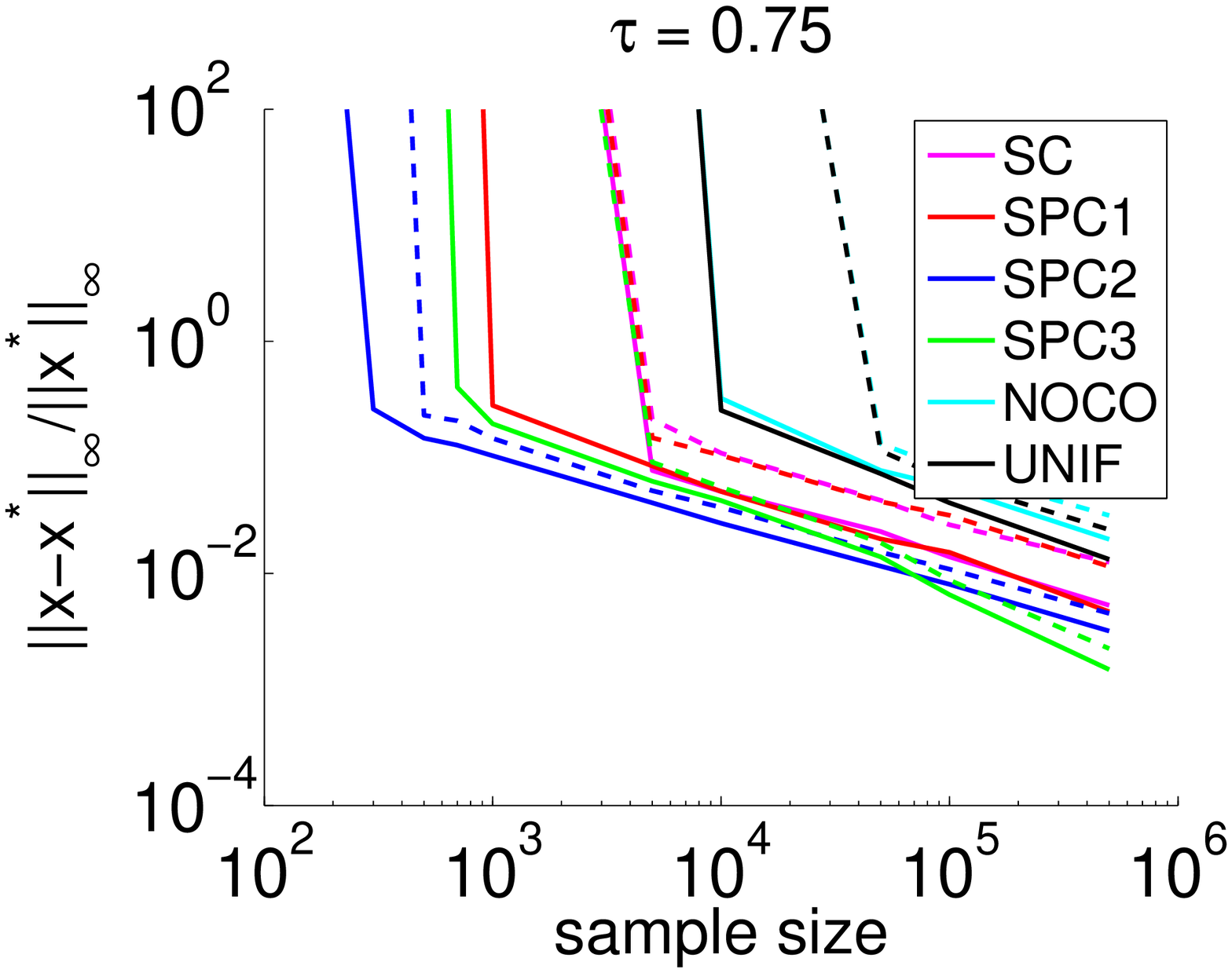}
 }
 &
 \subfigure[$\tau = 0.5$, $\|x-x^*\|_\infty/\|x^*\|_\infty$]{
   \includegraphics[width=0.3\textwidth] {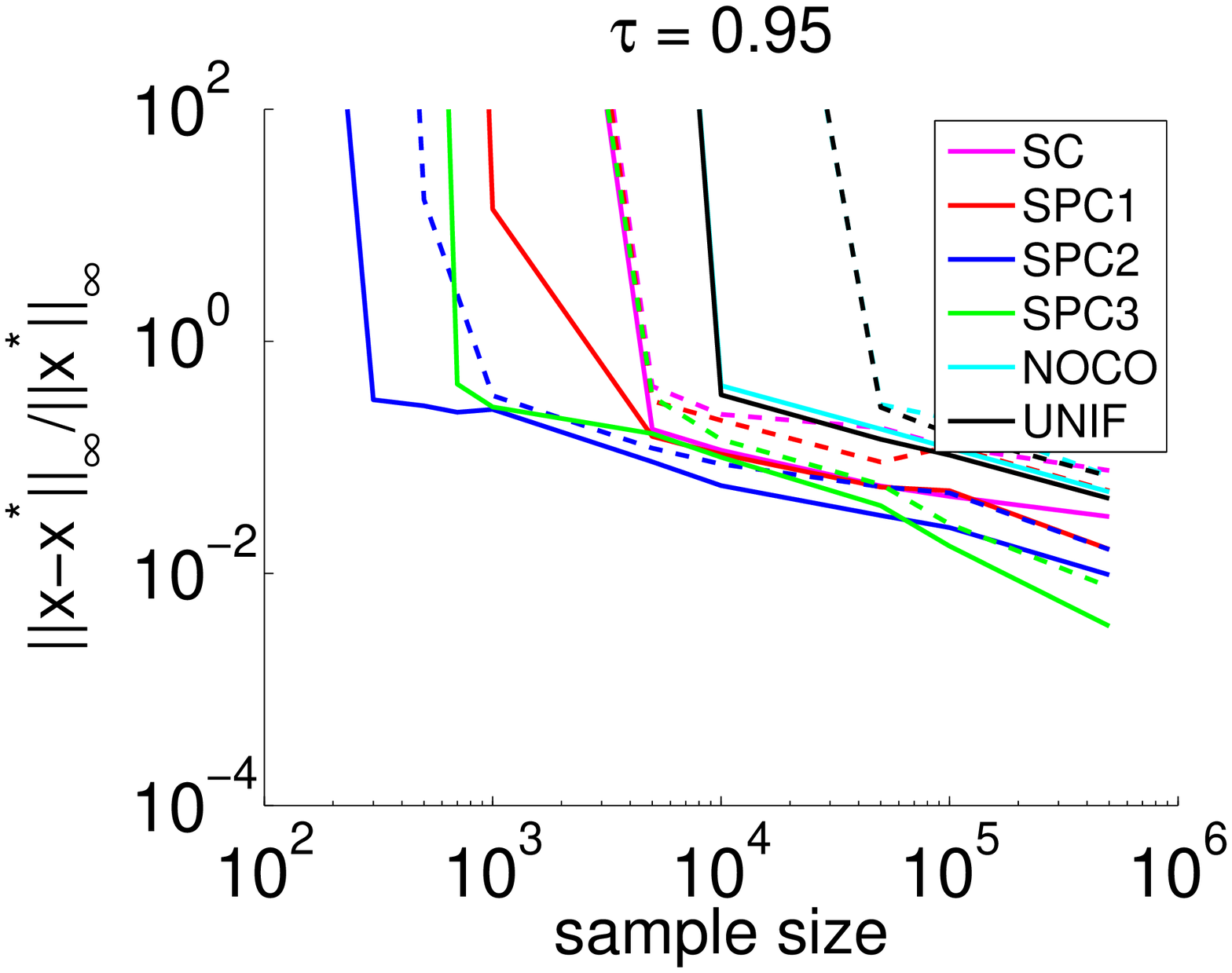}
 }
 \end{tabular}
 \end{center}
 \caption{
  The first (solid lines) and the third (dashed lines) quartiles of the relative errors of the objective value
  (namely, $|f-f^*|/|f^*|$) and solution vector (measured in three different norms, namely, the $\ell_2, \ell_1$ and $\ell_{\infty}$ norms),
  by using 6 different methods, among 50 independent trials.
  The test is on skewed data with size $1e6$ by 50.
   The three different columns correspond to $\tau = 0.5, 0.75, 0.95$, respectively.
    }
   \label{err_s}
\end{figure}

From these plots, if we look at the sampling size required for generating at 
least 1-digit accuracy, then SPC2 needs the fewest samples, followed by 
SPC3, and then SPC1.
This is consistent with the order of the condition numbers of these methods.
For SC, although in theory it has good condition number properties, in practice it 
performs worse than other methods.
Not surprisingly,  NOCO and UNIF are not reliable when $s$ is very small, 
\emph{e.g.}, less than $1e4$.

When the sampling size $s$ is large enough, the accuracy of each conditioning method is 
close to the others in terms of the objective value. 
Among these, SPC3 performs slightly better than others.
When estimating the actual solution vectors, the conditioning-based methods 
behave substantially better than the two naive methods.
SPC2 and SPC3 are the most reliable methods since they can yield the least 
relative error for every sample size $s$.
NOCO is likely to sample the outliers, and UNIF cannot get accurate answer 
until the sampling size $s \ge 1e4$.
This accords with our expectations.
For example, when $s$ is less than $1e4$, as we pointed out in the remark 
below the description of the skewed data, it is very likely that none of 
the rows in the first block corresponding to the first coordinate will be 
selected.
Thus, poor estimation will be generated due to the imbalanced measurements in 
the design matrix.
Note that from the plots we can see that if a method fails with some 
sampling complexity $s$, then for that value of $s$ the relative errors will 
be huge (\emph{e.g.}, larger than $100$, which is clearly a trivial result).
Note also that all the methods can generate at least 1-digit accuracy if $s$ 
is large enough.

It is worth mentioning the performance difference among SPC1, SPC2 and SPC3.
From Table~\ref{cond_table}, we show the tradeoff between running time and 
condition number for the three methods.
As we pointed out, SPC2 always needs the least sampling complexity to 
generate 2-digit accuracy, followed by SPC3 and then SPC1.
When $s$ is large enough, SPC2 and SPC3 perform substantially better than 
SPC1.
As for the running time, SPC1 is the fastest, followed by SPC3, and then SPC2.
Again, all of these follow the theory about our SPC methods.
We will present a more detailed discussion for the running time in 
Section~\ref{running_time}.

Although our theory doesn't say anything about the quality of the solution
vector itself (as opposed to the value of the objective function), we 
evaluate this here.
To measure the approximation to the solution vectors, we use three norms 
(the $\ell_1$, $\ell_2$, and $\ell_{\infty}$ norms). 
From Figure~\ref{err_s}, we see that the performance among these method is 
qualitatively similar for each of the three norms, but the relative error 
is higher when measured in the $\ell_{\infty}$ norm.
In more detail, see Table~\ref{sol_table}, where we show the exact quartiles 
of the relative error on vectors for each methods for $s = 5e4$ and 
$\tau = 0.75$. 
Not surprisingly, NOCO and UNIF are not among the reliable methods when $s$ 
is small (and they get worse when $s$ is even smaller).
Note that the relative error for each method doesn't change substantially 
when $\tau$ takes different values.
We present a more detailed discussion of the $\tau$ dependence 
in Section~\ref{relerr_tau}.

(We note also that, for subsequent figures in subsequent subsections, we 
obtained similar qualitative trends for the errors in the approximate 
solution vectors when the errors were measured in different norms.
Thus, due to this similarity and to save space, in subsequent figures, we 
will only show errors for $\ell_2$ norm.)

\begin{table}[ht]
\begin{center}
\begin{sc}
\small
\begin{tabular}{c|ccc}
   &  $\|x - x^*\|_2/\|x^*\|_2$ & $\|x - x^*\|_1/\|x^*\|_1$ & $\|x - x^*\|_\infty/\|x^*\|_\infty$ \\
\hline
SC & [0.0121, 0.0172] & [0.0093, 0.0122] & [0.0229, 0.0426]  \\
SPC1 & [0.0108,  0.0170]  &  [0.0081, 0.0107] & [0.0198, 0.0415] \\
SPC2 & [0.0079,  0.0093]  &  [0.0061, 0.0071] & [0.0115, 0.0152] \\
SPC3 & [0.0094,  0.0116]  &  [0.0086, 0.0103] & [0.0139, 0.0184] \\
NOCO & [0.0447,  0.0583] &  [0.0315, 0.0386] & [0.0769, 0.1313] \\
UNIF  &  [0.0396,  0.0520]  &  [0.0287, 0.0334] & [0.0723, 0.1138]
\end{tabular}
\end{sc}
\end{center}
\caption{The first and the third quartiles of relative errors of the 
solution vector, measured in $\ell_1$, $\ell_2$, and $\ell_\infty$ norms.
The test data set is the skewed data, with size $1e6 \times 50$, the 
sampling size $s = 5e4$, and $\tau = 0.75$.
}
\label{sol_table}
\end{table}


\subsection{Quality of approximation when the higher dimension $n$ changes}
\label{relerr_n}

Next, we describe how the performance of our algorithm varies when higher 
dimension $n$ changes.  
(We present the results when the lower dimension $d$ changes in 
Section~\ref{relerr_d}.)
Figures~\ref{comp_n} and~\ref{err_n} summarize our results.

Figure~\ref{comp_n} shows the performance of the relative error of the 
objective value and solution vector by using the six different methods, 
as $n$ is varied, for fixed values of $\tau = 0.75$ and $d = 50$. 
For each row, the three figures come from three data sets with $n$ taking 
value in $1e5, 5e5, 1e6$.
(Recall that, in these experiments, we only list the plots showing the 
relative error on vectors measured in $\ell_2$ norm.
Since the plots for the $\ell_1$ and $\ell_\infty$ norm are similar, we 
omit them.)
We see that, when $d$ is fixed, the basic structure in the plots that we observed before is 
preserved when $n$ takes three different values.
In particular, the minimum sampling complexity $s$ needed for each method 
for yielding high accuracy does not vary a lot.
When $s$ is large enough, the relative performance among all the methods 
is similar; and, when all the parameters are fixed except for $n$, the 
relative error for each method does not change quantitatively.

\begin{figure}[h!tbp]
 \begin{center}
 \begin{tabular}{ccc}
 \subfigure[$1e5 \times 50$, $|f-f^*|/|f^*|$]{
   \includegraphics[width=0.3\textwidth] {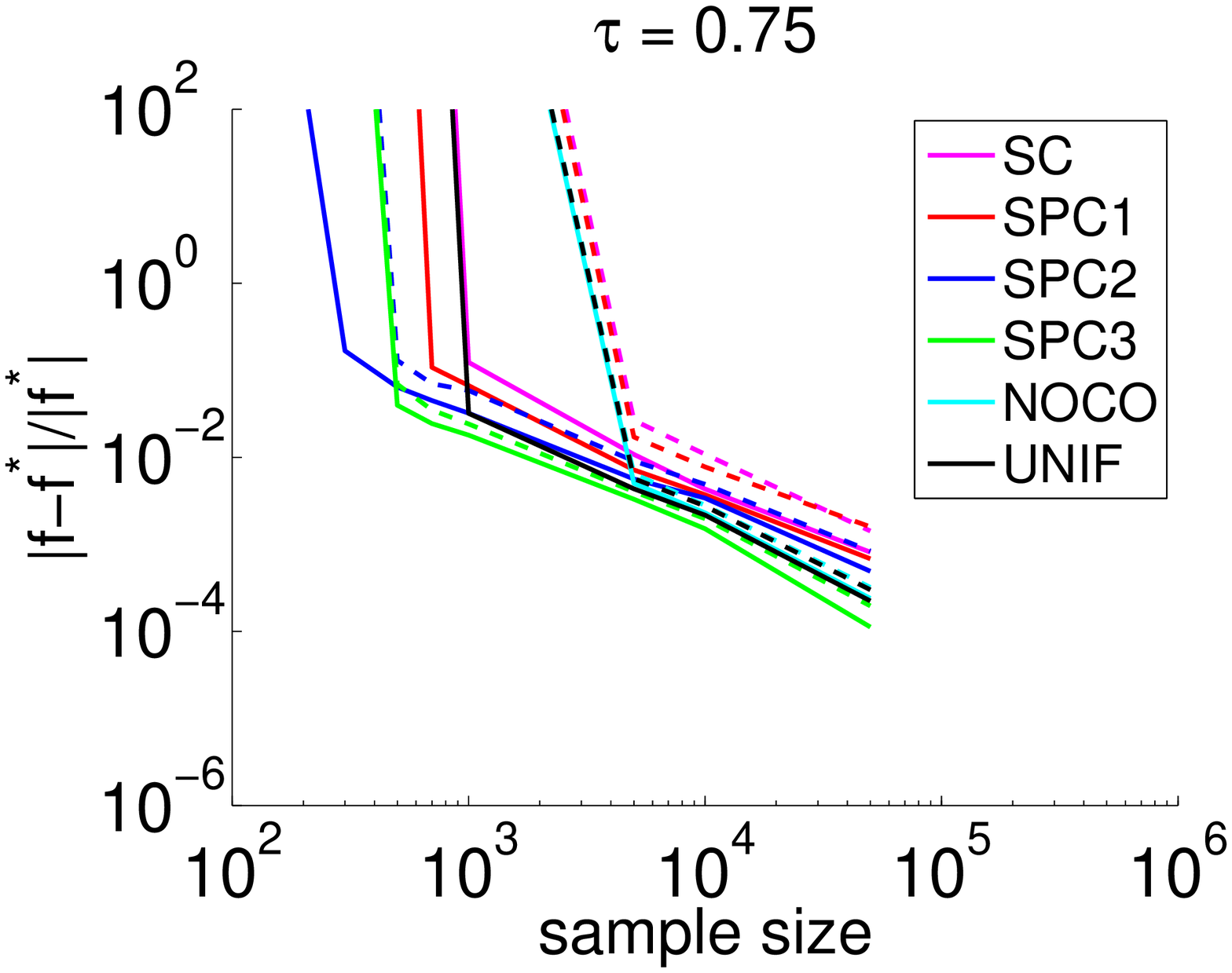}
 } 
 &
   \subfigure[$5e5 \times 50$, $|f-f^*|/|f^*|$]{
   \includegraphics[width=0.3\textwidth] {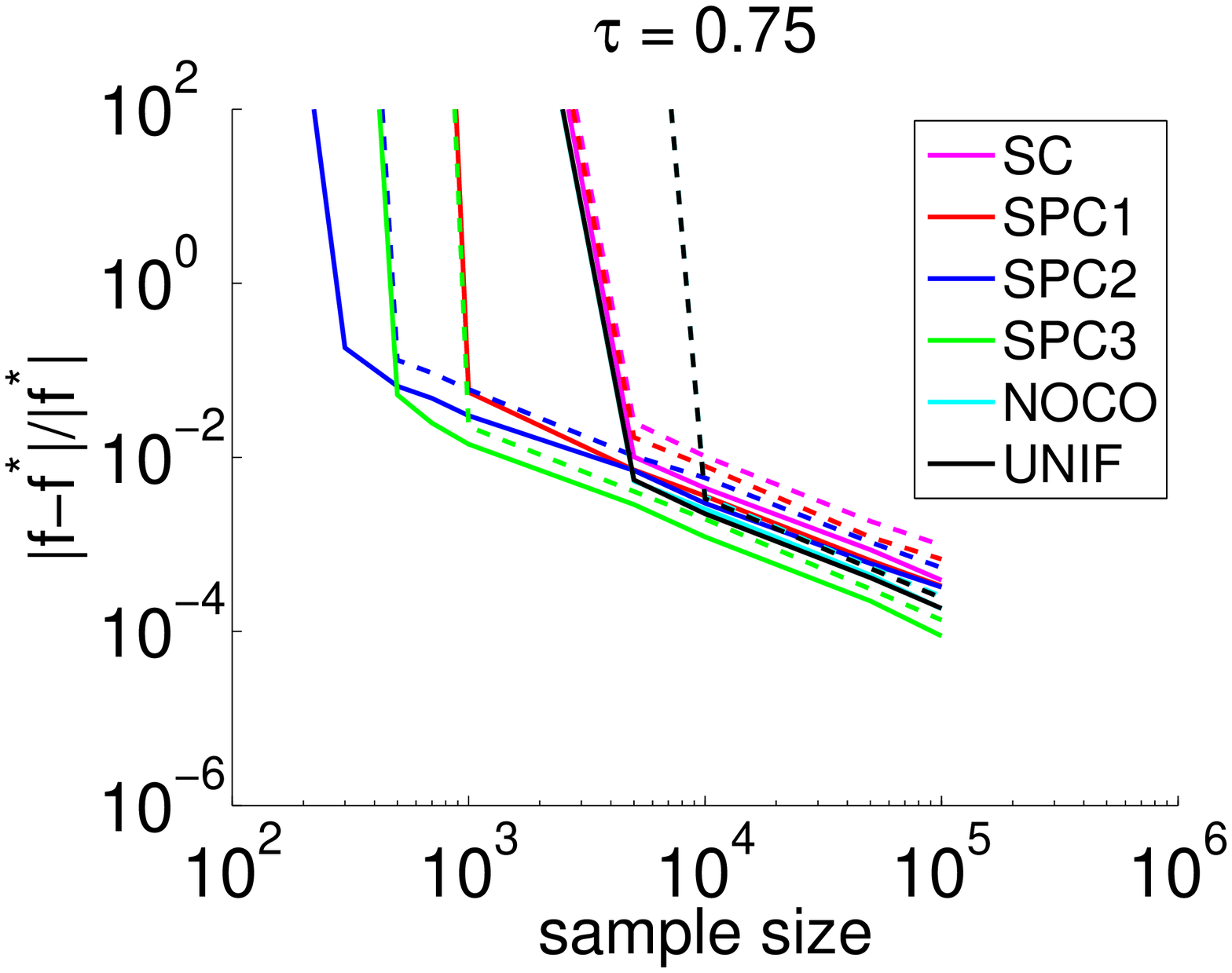}
 }
 &
  \subfigure[$1e6 \times 50$, $|f-f^*|/|f^*|$]{
   \includegraphics[width=0.3\textwidth] {FIG/err_s/quartile11_2.eps}
 }
 \\
  \subfigure[$1e5 \times 50$, $\|x-x^*\|_2/\|x^*\|_2$]{
   \includegraphics[width=0.3\textwidth] {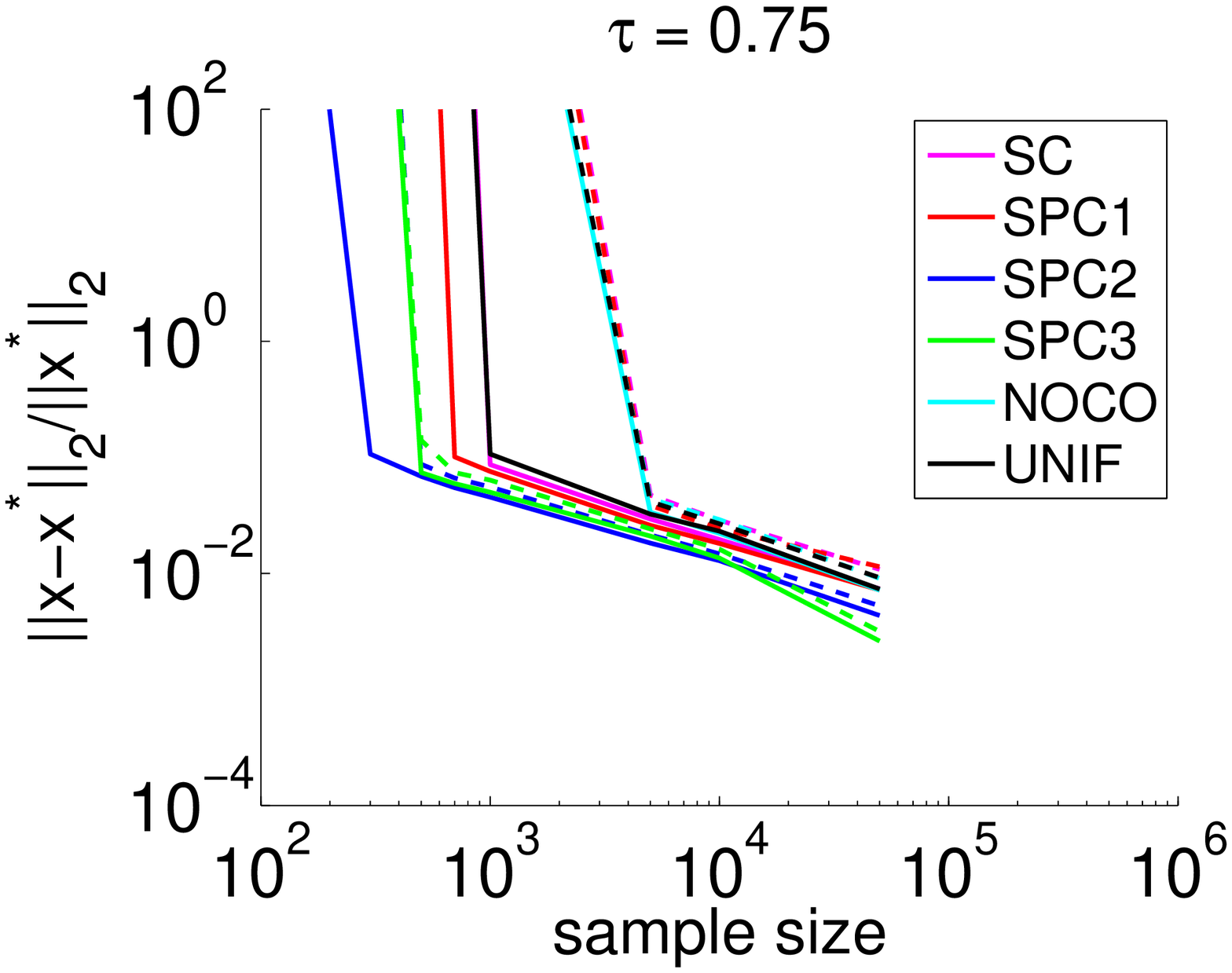}
 } 
 &
 \subfigure[$5e5 \times 50$, $\|x-x^*\|_2/\|x^*\|_2$]{
   \includegraphics[width=0.3\textwidth] {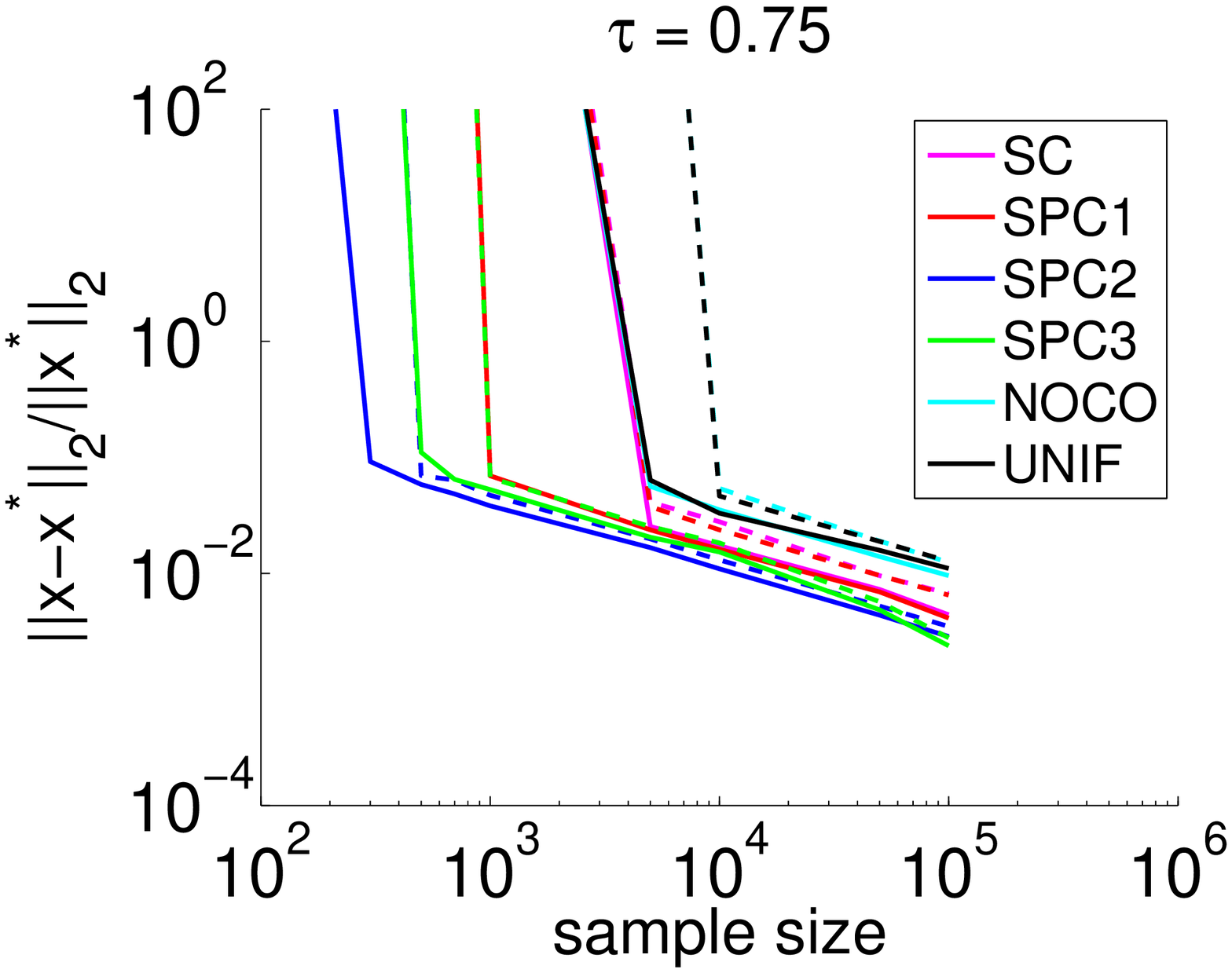}
 }
 &
   \subfigure[$1e6 \times 50$, $\|x-x^*\|_2/\|x^*\|_2$]{
   \includegraphics[width=0.3\textwidth] {FIG/err_s/quartile11_5.eps}
 }
 \end{tabular}
 \end{center}
  \caption{
   The first (solid lines) and the third (dashed lines) quartiles of the relative errors of the objective value
   (namely, $|f-f^*|/|f^*|$) and solution vector (namely, $\|x-x^*\|_2/\|x^*\|_2$),
   when the sample size $s$ changes, for different values of $n$, while $d=50$ by using 6 different methods, among 50 independent trials.
   The test is on skewed data and $\tau = 0.75$.
   The three different columns correspond to $n = 1e5, 5e5, 1e6$, respectively.
   }
  \label{comp_n}
\end{figure}

We will also let $n$ take a wider range of values. 
Figure~\ref{err_n} shows the change of relative error on the objective value 
and solution vector by using SPC3 and letting $n$ vary from $1e4$ to $1e6$ 
and $d = 50$ fixed.
Recall, from Theorem~\ref{qr_thm}, that for given a tolerance $\epsilon$, 
the required sampling complexity $s$ depends only on $d$.
That is, if we fix the sampling size $s$ and $d$, then the relative error 
should not vary much, as a function of $n$.
If we inspect Figure~\ref{err_n}, we see that the relative errors 
are almost constant as a function of increasing $n$, provided that $n$ is
much larger than $s$.
When $s$ is very close to $n$, since we are sampling roughly the same number 
of rows as in the full data, we should expect lower errors.
Also, we can see that by using SPC3, relative errors remain roughly the same in
magnitude.

\begin{figure}[h!tpb]
 \begin{center}
 \begin{tabular}{ccc}
\subfigure[$\tau = 0.5$, $|f-f^*|/|f^*|$]{
   \includegraphics[width=0.3\textwidth] {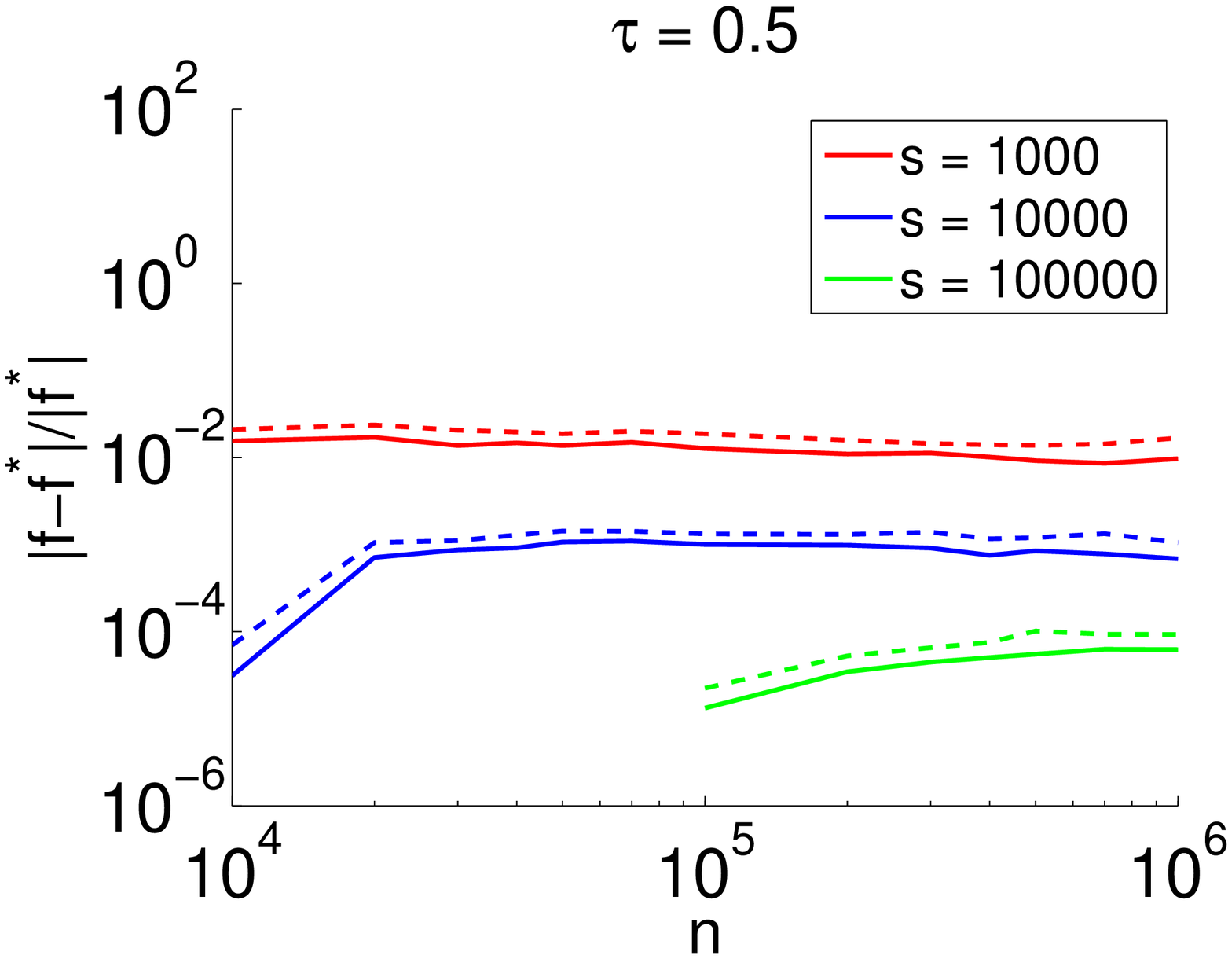}
 }
 &
 \subfigure[$\tau = 0.75$,  $|f-f^*|/|f^*|$]{
   \includegraphics[width=0.3\textwidth] {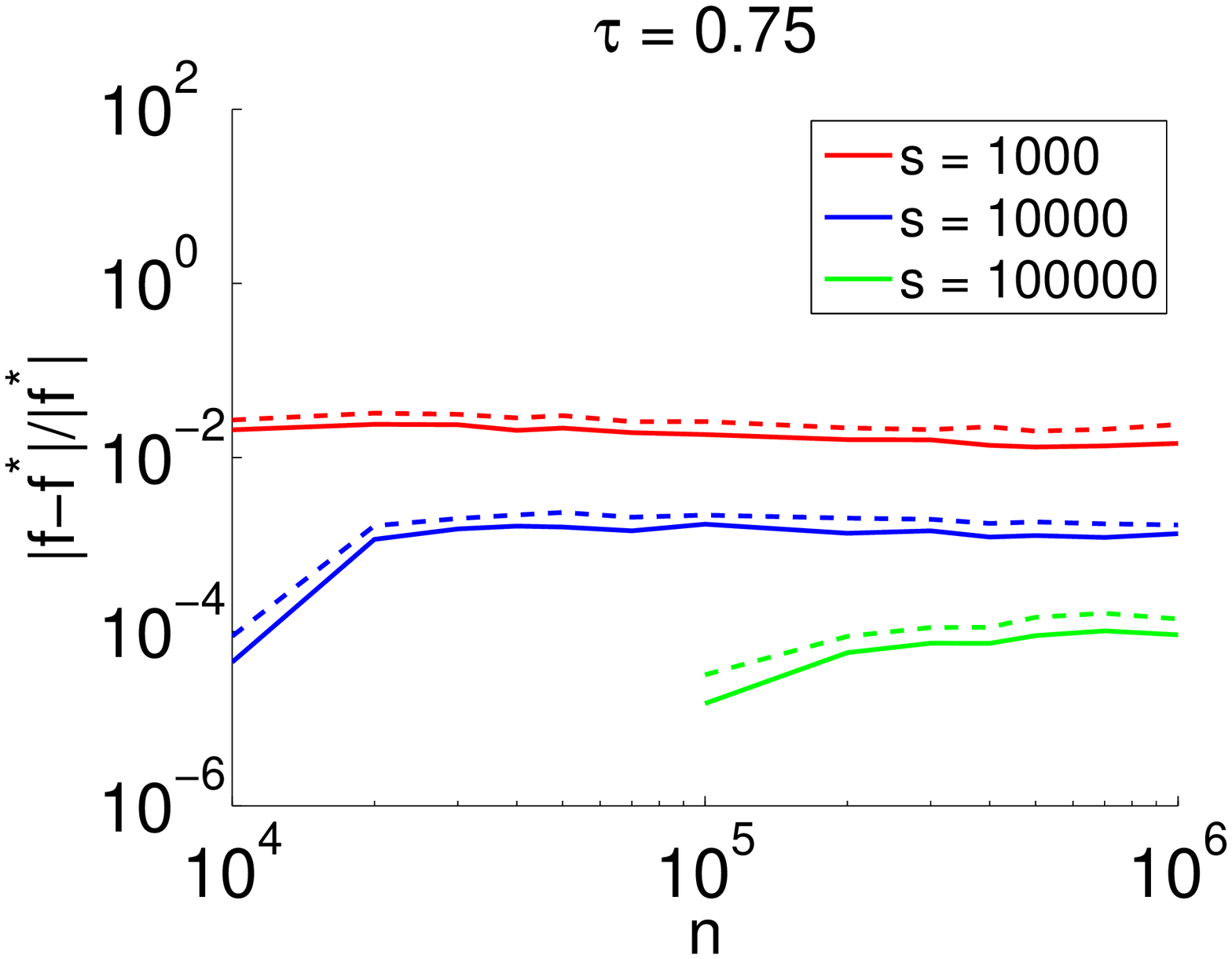}
 }
 &
\subfigure[$\tau = 0.95$,  $|f-f^*|/|f^*|$]{
   \includegraphics[width=0.3\textwidth] {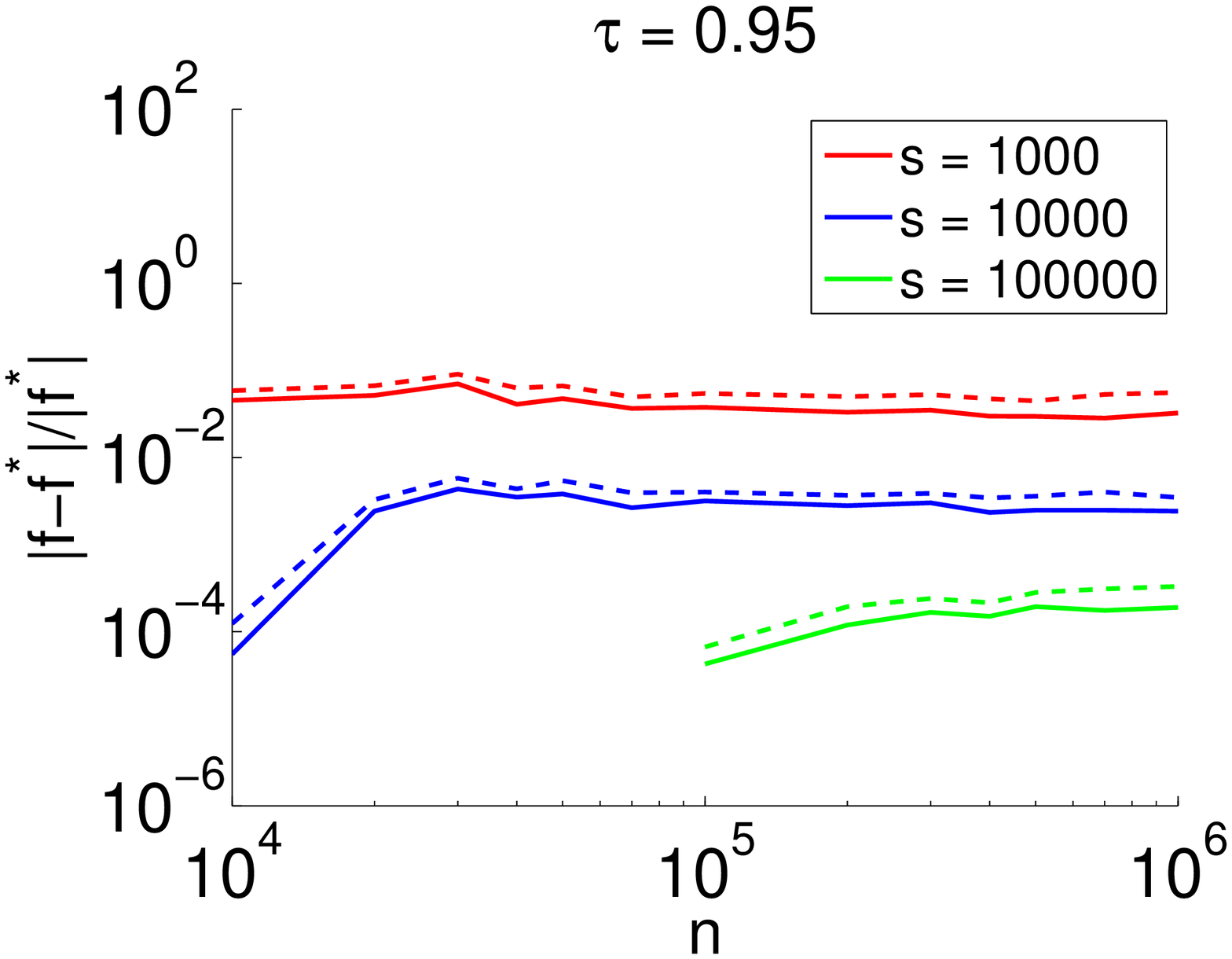}
 }
 \\
 \subfigure[$\tau = 0.5$, $\|x-x^*\|_2/\|x^*\|_2$]{
   \includegraphics[width=0.3\textwidth] {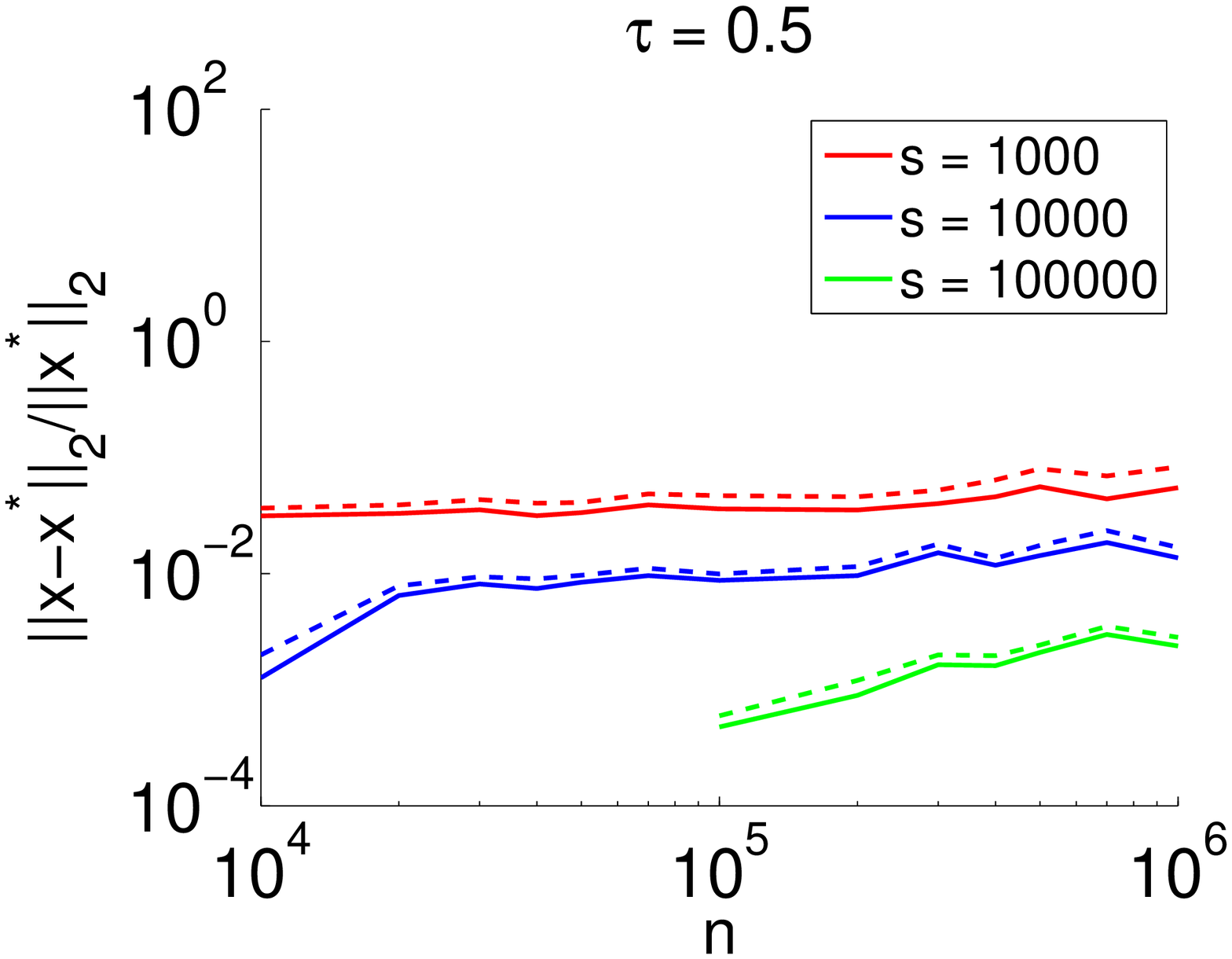}
 }
 &
 \subfigure[$\tau = 0.75$, $\|x-x^*\|_2/\|x^*\|_2$]{
   \includegraphics[width=0.3\textwidth] {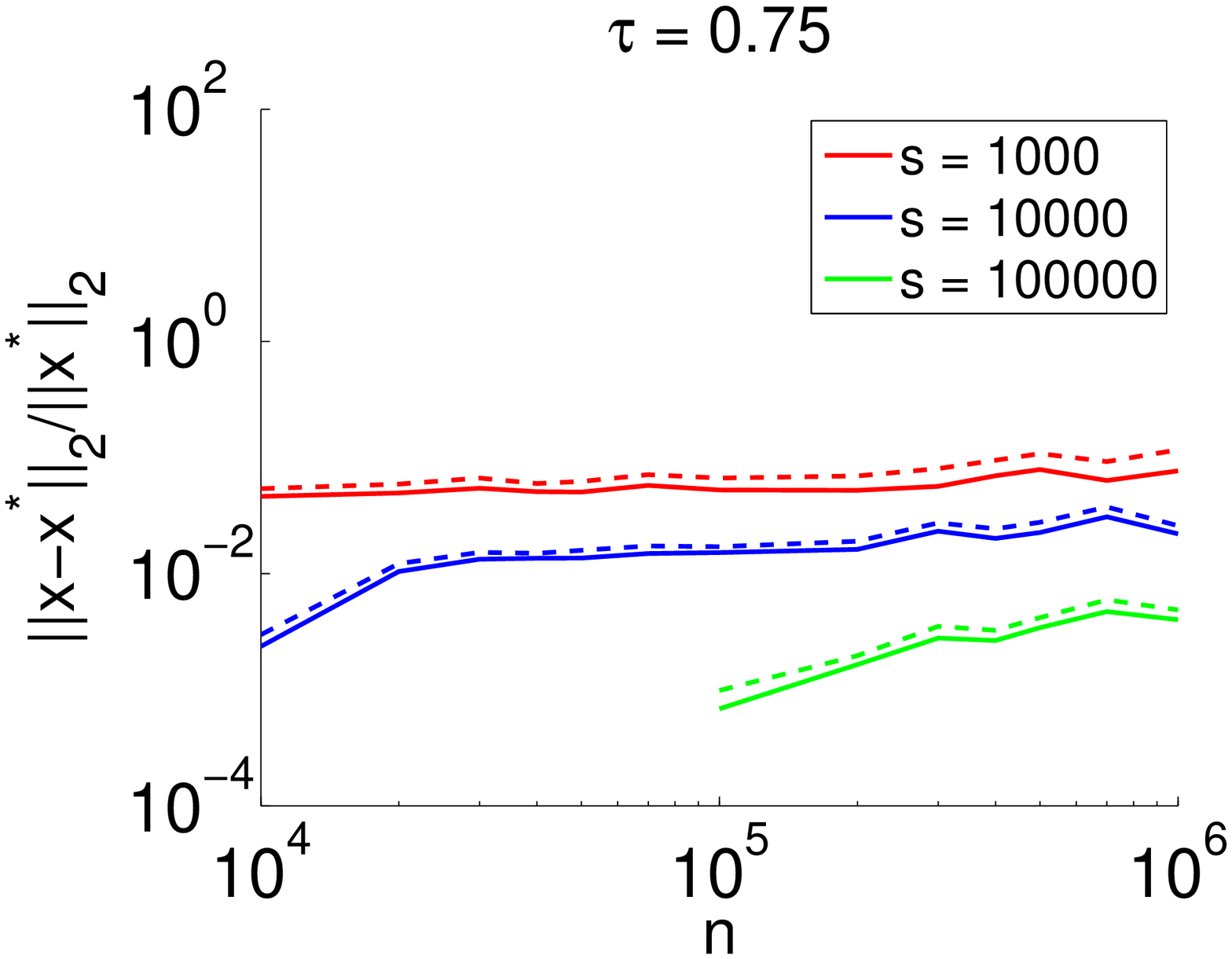}
 }
 &
 \subfigure[$\tau = 0.95$, $\|x-x^*\|_2/\|x^*\|_2$]{
   \includegraphics[width=0.3\textwidth] {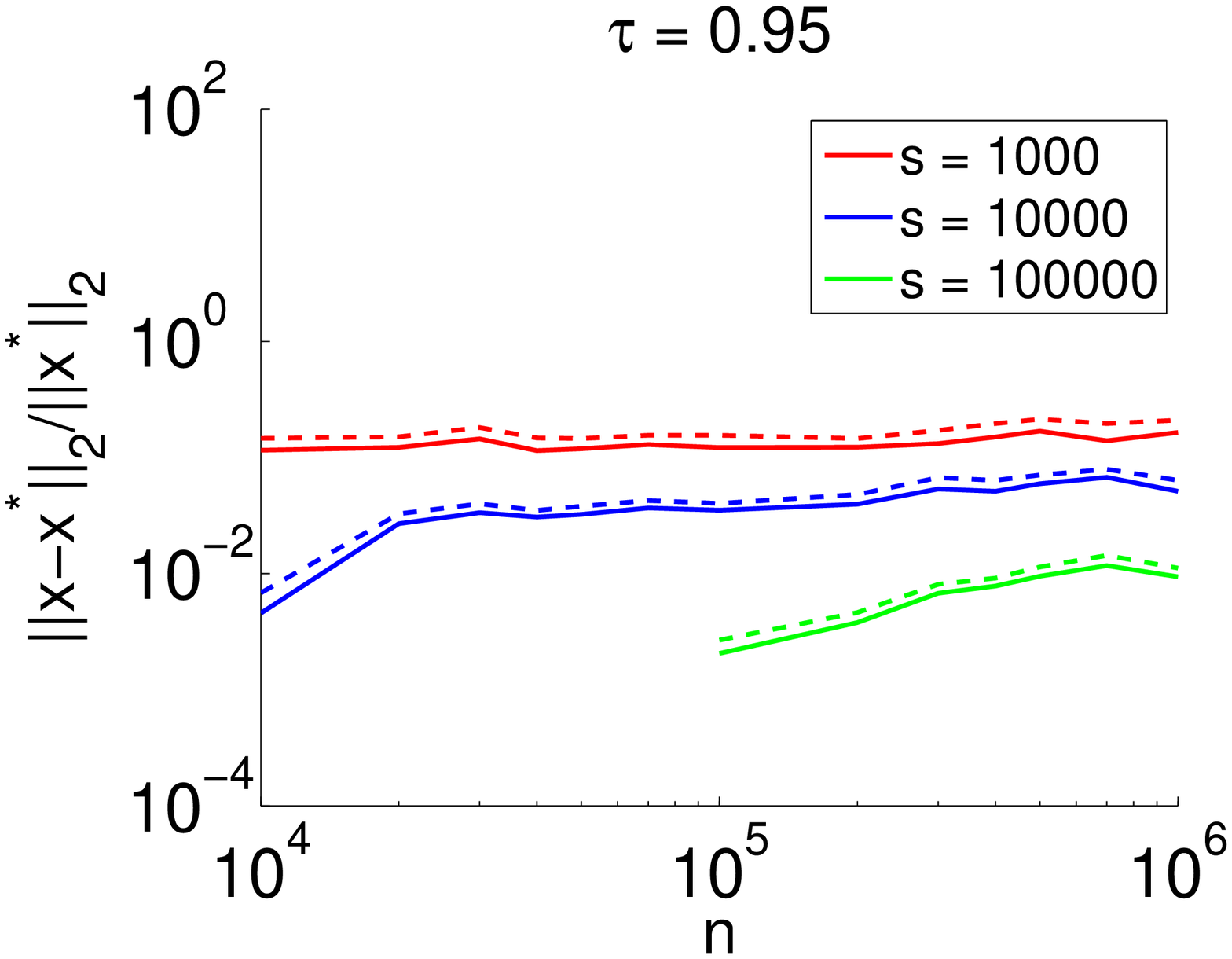}
 }
 \end{tabular}
 \end{center}
 \caption{
   The first (solid lines) and the third (dashed lines) quartiles of the relative errors of the objective value
   (namely, $|f-f^*|/|f^*|$) and solution vector (namely, $\|x-x^*\|_2/\|x^*\|_2$),
   when $n$ varying from $1e4$ to $1e6$ and $d = 50$ by using SPC3, among 50 independent trials.
   The test is on skewed data.
   The three different columns correspond to $\tau = 0.5, 0.75, 0.95$, respectively.
  }
 \label{err_n}
\end{figure}


\subsection{Quality of approximation when the lower dimension $d$ changes}
\label{relerr_d}

Next, we describe how the overall performance changes when the lower 
dimension $d$ changes.
Figures~\ref{comp_d} and~\ref{err_d} summarize our results.
These figures show the same quantities that were plotted in the previous subsection, 
except that here it is the lower dimension $d$ that is now changing, and 
the higher dimension $n = 1e6$ is fixed. 
In Figure~\ref{comp_d}, we let $d$ take values in $10, 50, 100$, 
we set $\tau = 0.75$, and we show the relative error for all 6 conditioning methods.
In Figure~\ref{err_d}, we let $d$ take more values in the range of 
$[10, 100]$, and we show the relative errors by using SPC3 for different 
sampling sizes $s$ and $\tau$~values.

\begin{figure}[h!tbp]
 \begin{center}
 \begin{tabular}{ccc}
 \subfigure[$1e6 \times 10$, $|f-f^*|/|f^*|$]{
   \includegraphics[width=0.3\textwidth] {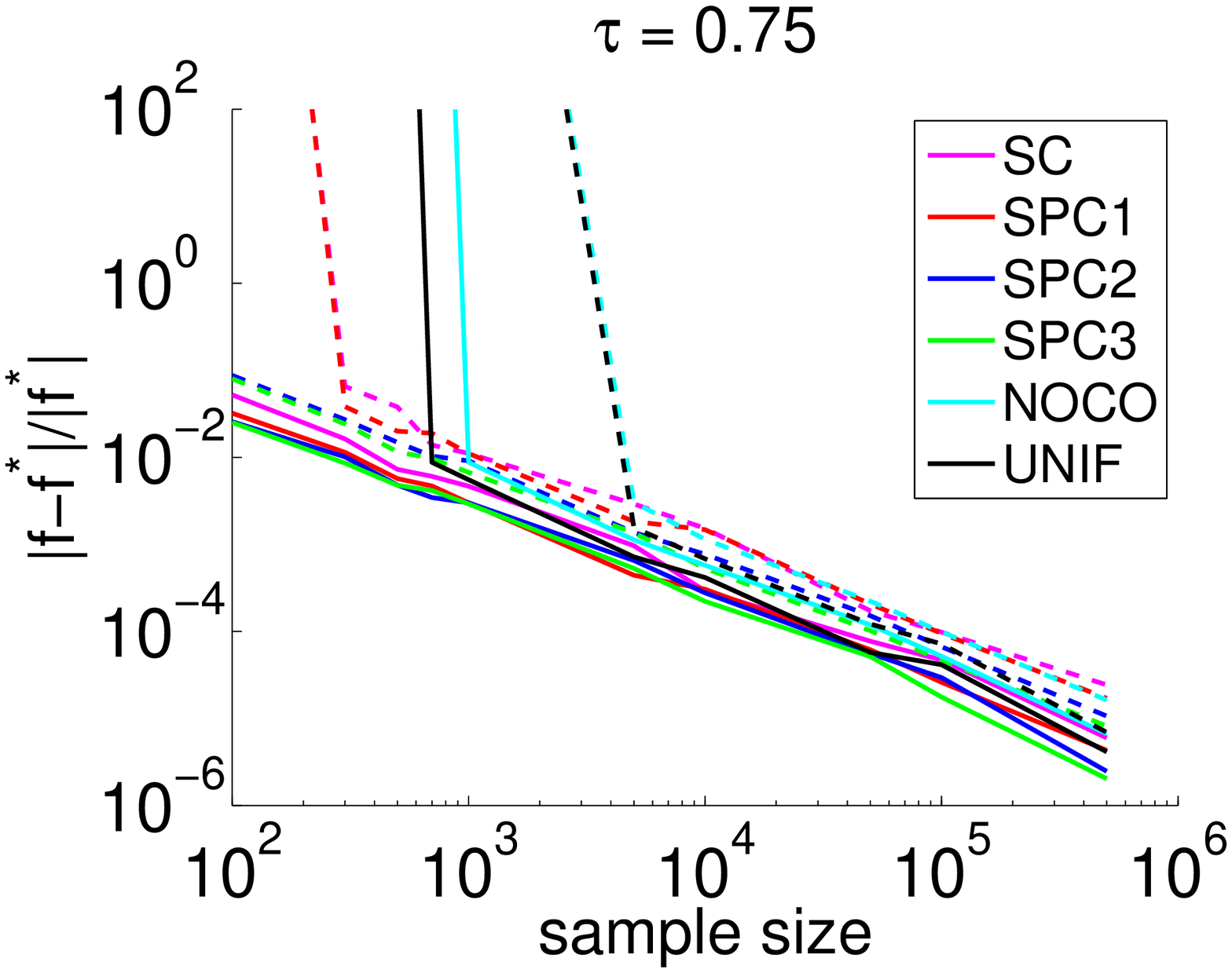}
 } 
 &
   \subfigure[$1e6 \times 50$, $|f-f^*|/|f^*|$]{
   \includegraphics[width=0.3\textwidth] {FIG/err_s/quartile11_2.eps}
 }
 &
   \subfigure[$1e6 \times 100$, $|f-f^*|/|f^*|$]{
   \includegraphics[width=0.3\textwidth] {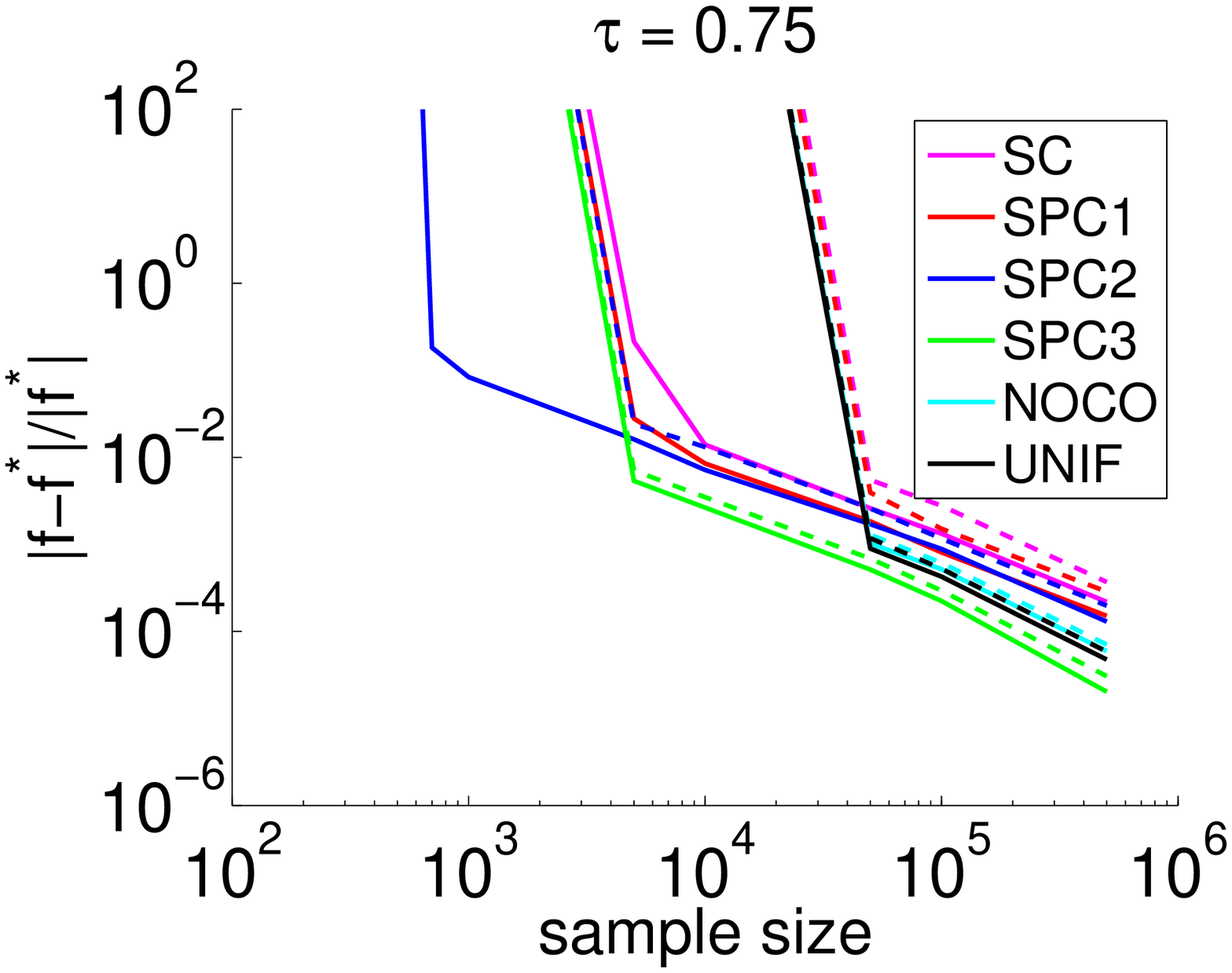}
 }
 \\
  \subfigure[$1e6 \times 10$, $\|x-x^*\|_2/\|x^*\|_2$]{
   \includegraphics[width=0.3\textwidth] {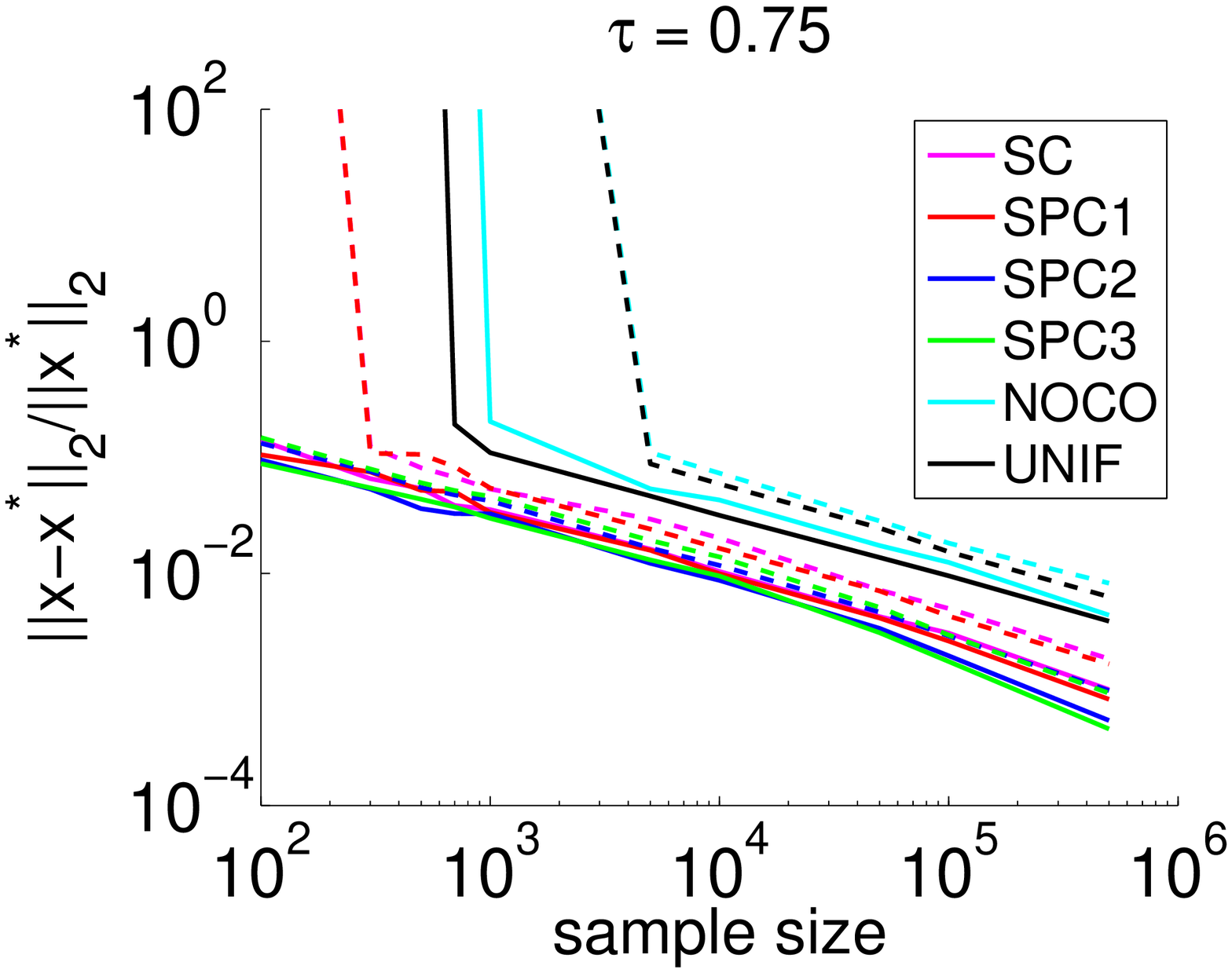}
 } 
 &
 \subfigure[$1e6 \times 50$, $\|x-x^*\|_2/\|x^*\|_2$]{
   \includegraphics[width=0.3\textwidth] {FIG/err_s/quartile11_5.eps}
 }
  &
   \subfigure[$1e6 \times 100$, $\|x-x^*\|_2/\|x^*\|_2$]{
   \includegraphics[width=0.3\textwidth] {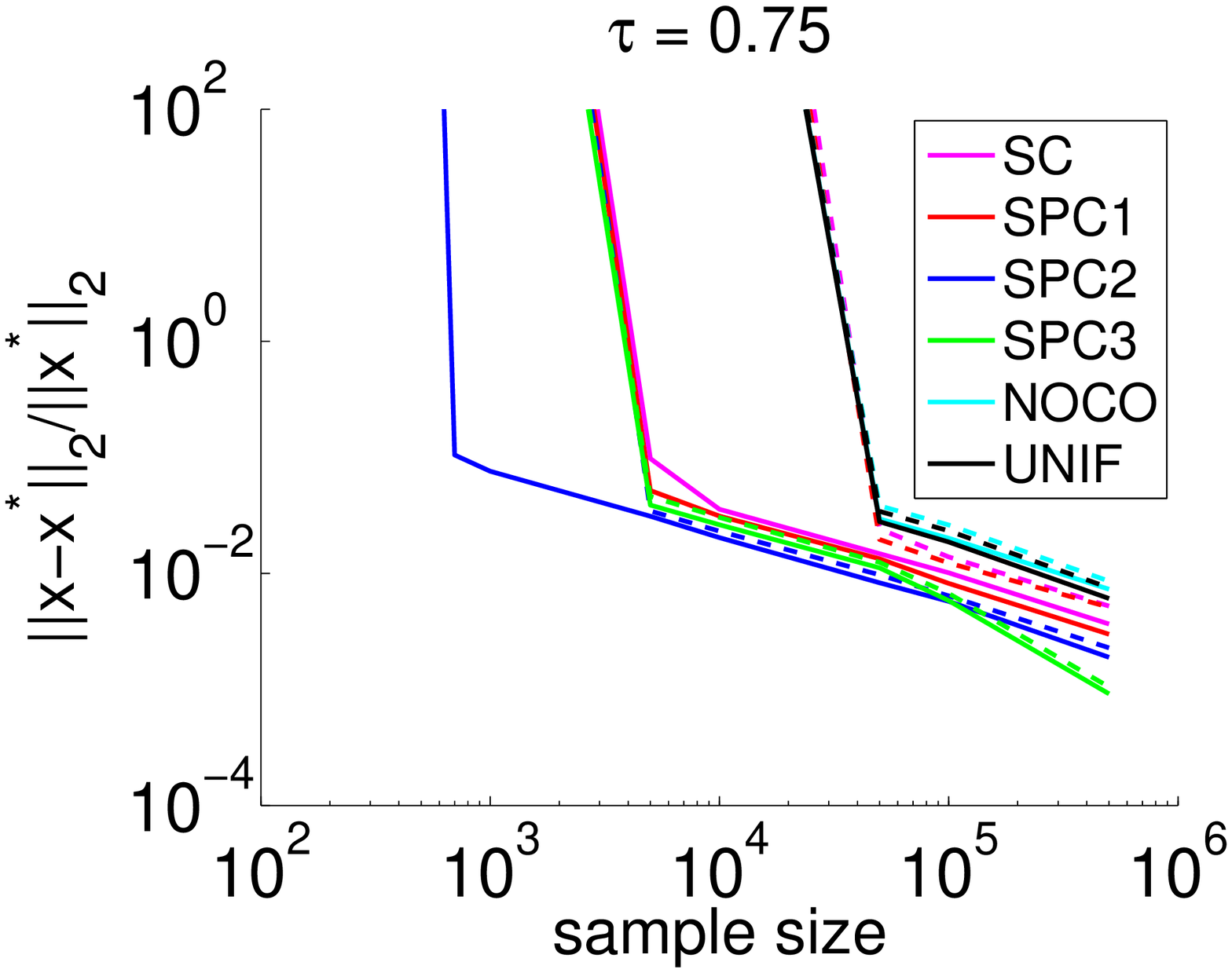}
 }
 \end{tabular}
 \end{center}
  \caption{
   The first (solid lines) and the third (dashed lines) quartiles of the relative errors of the objective value
   (namely, $|f-f^*|/|f^*|$) and solution vector (namely, $\|x-x^*\|_2/\|x^*\|_2$),
   when the sample size $s$ changes, for different values of $d$, while $n=1e6$ by using 6 different methods, among 50 independent trials.
   The test is on skewed data and $\tau = 0.75$.
   The three different columns correspond to $d = 10, 50, 100$, respectively.
   }
  \label{comp_d}
\end{figure}

For Figure~\ref{comp_d}, as $d$ gets larger, the performance of the two 
naive methods do not vary a lot.
However, this increases the difficulty for conditioning methods to yield 
2-digit accuracy.
When $d$ is quite small, most methods can yield 2-digit accuracy even when 
$s$ is not large.
When $d$ becomes large, SPC2 and SPC3 provide good estimation, even when $s<1000$. 
The relative performance among these methods remains unchanged.
For Figure~\ref{err_d}, the relative errors are monotonically increasing 
for each sampling size.
This is consistent with our theory that, to yield high accuracy, the required 
sampling size is a low-degree polynomial of $d$. 

\begin{figure}[h!tbp]
 \begin{center}
 \begin{tabular}{ccc}
\subfigure[$\tau = 0.5$, $|f-f^*|/|f^*|$]{
   \includegraphics[width=0.3\textwidth] {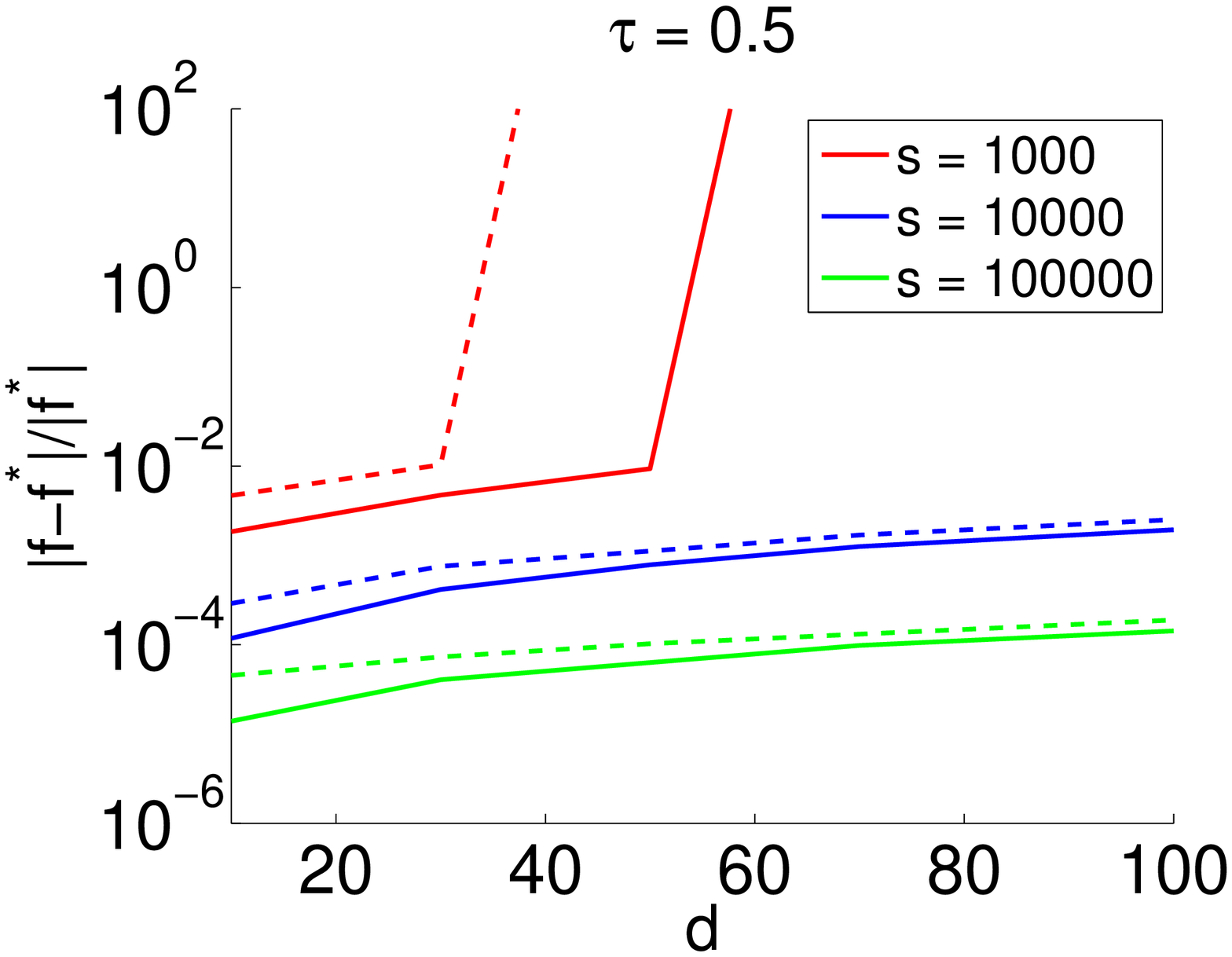}
 }
 &
 \subfigure[$\tau = 0.75$,  $|f-f^*|/|f^*|$]{
   \includegraphics[width=0.3\textwidth] {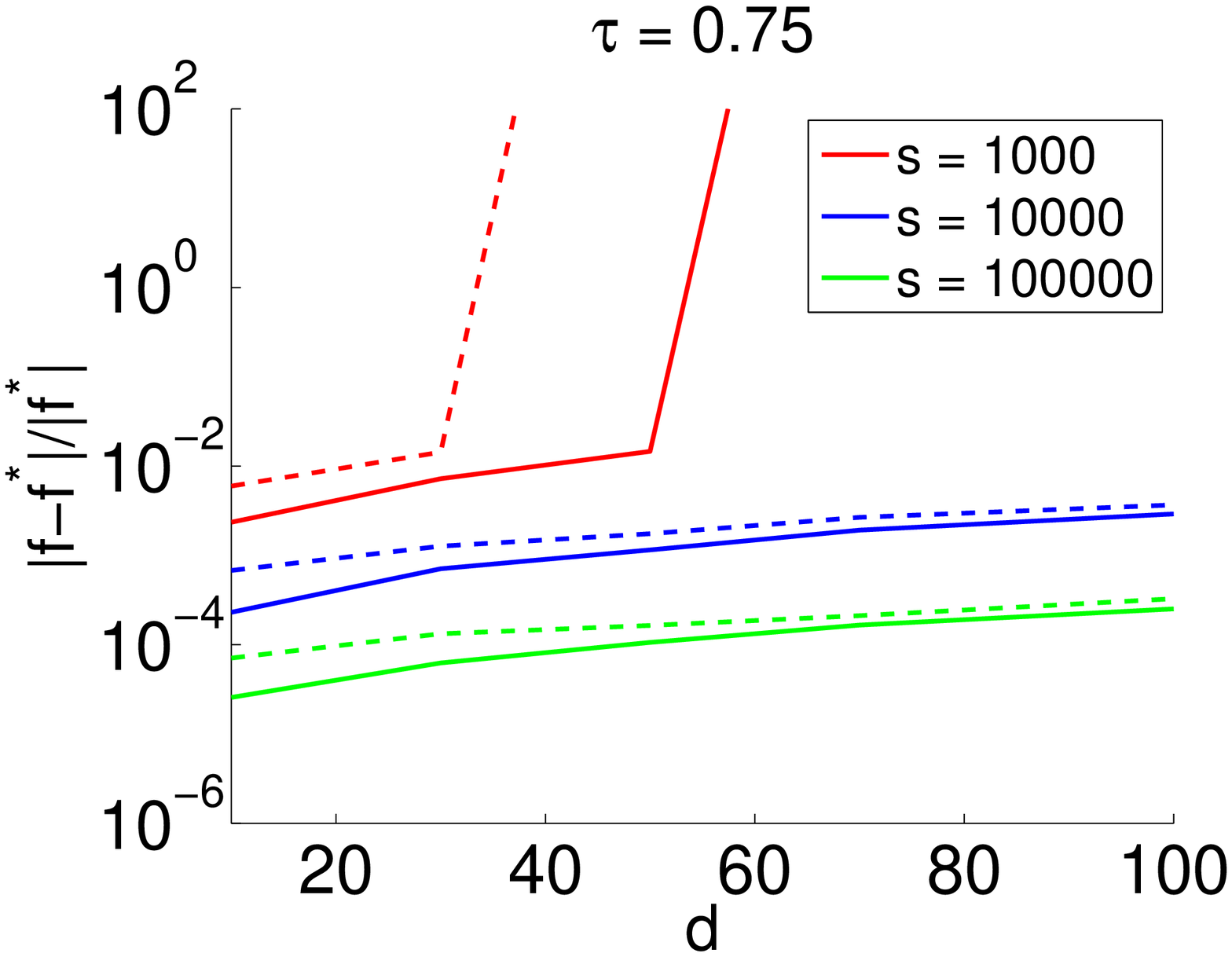}
 }
 &
\subfigure[$\tau = 0.95$,  $|f-f^*|/|f^*|$]{
   \includegraphics[width=0.3\textwidth] {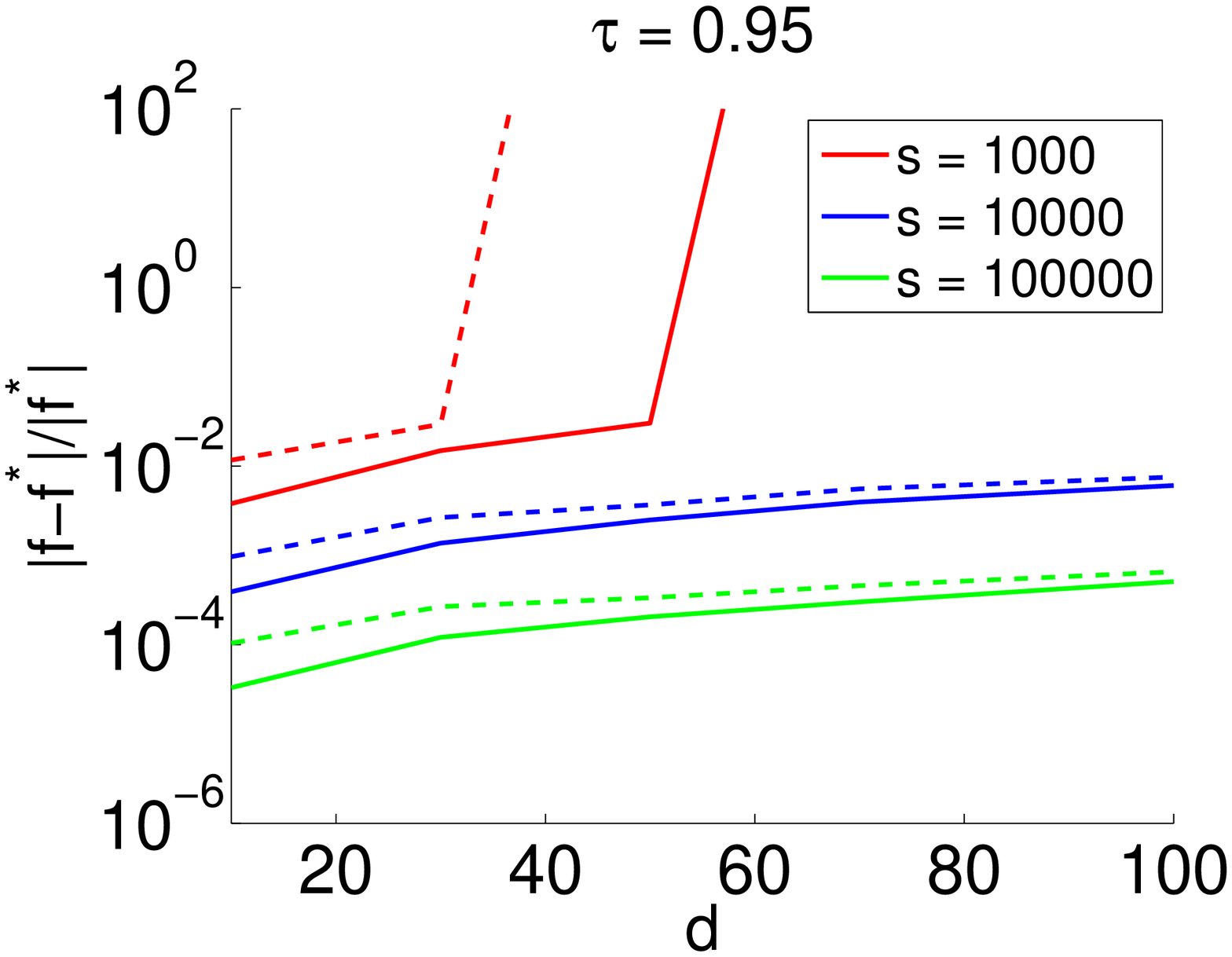}
 }
 \\
 \subfigure[$\tau = 0.5$, $\|x-x^*\|_2/\|x^*\|_2$]{
   \includegraphics[width=0.3\textwidth] {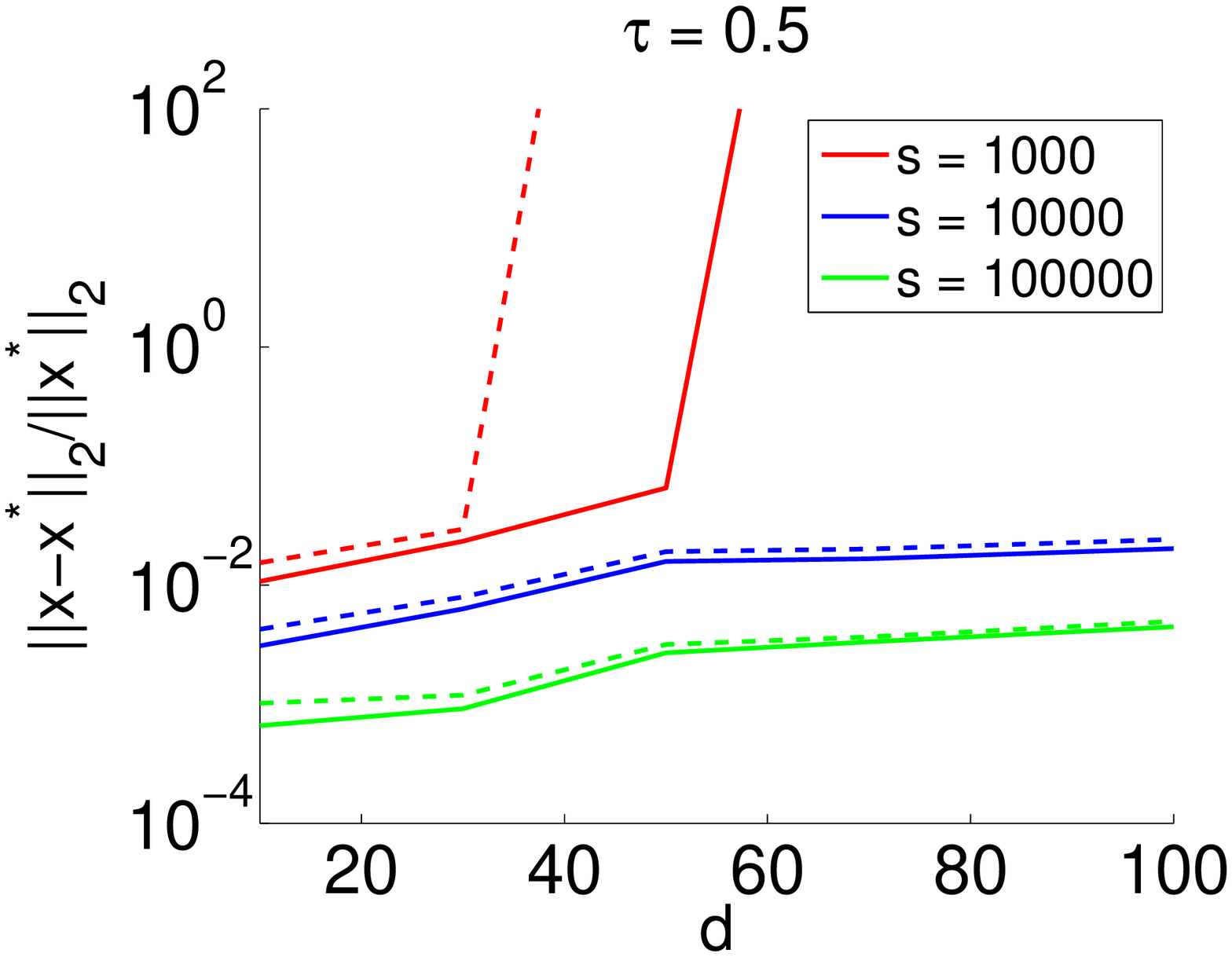}
 }
 &
 \subfigure[$\tau = 0.75$, $\|x-x^*\|_2/\|x^*\|_2$]{
   \includegraphics[width=0.3\textwidth] {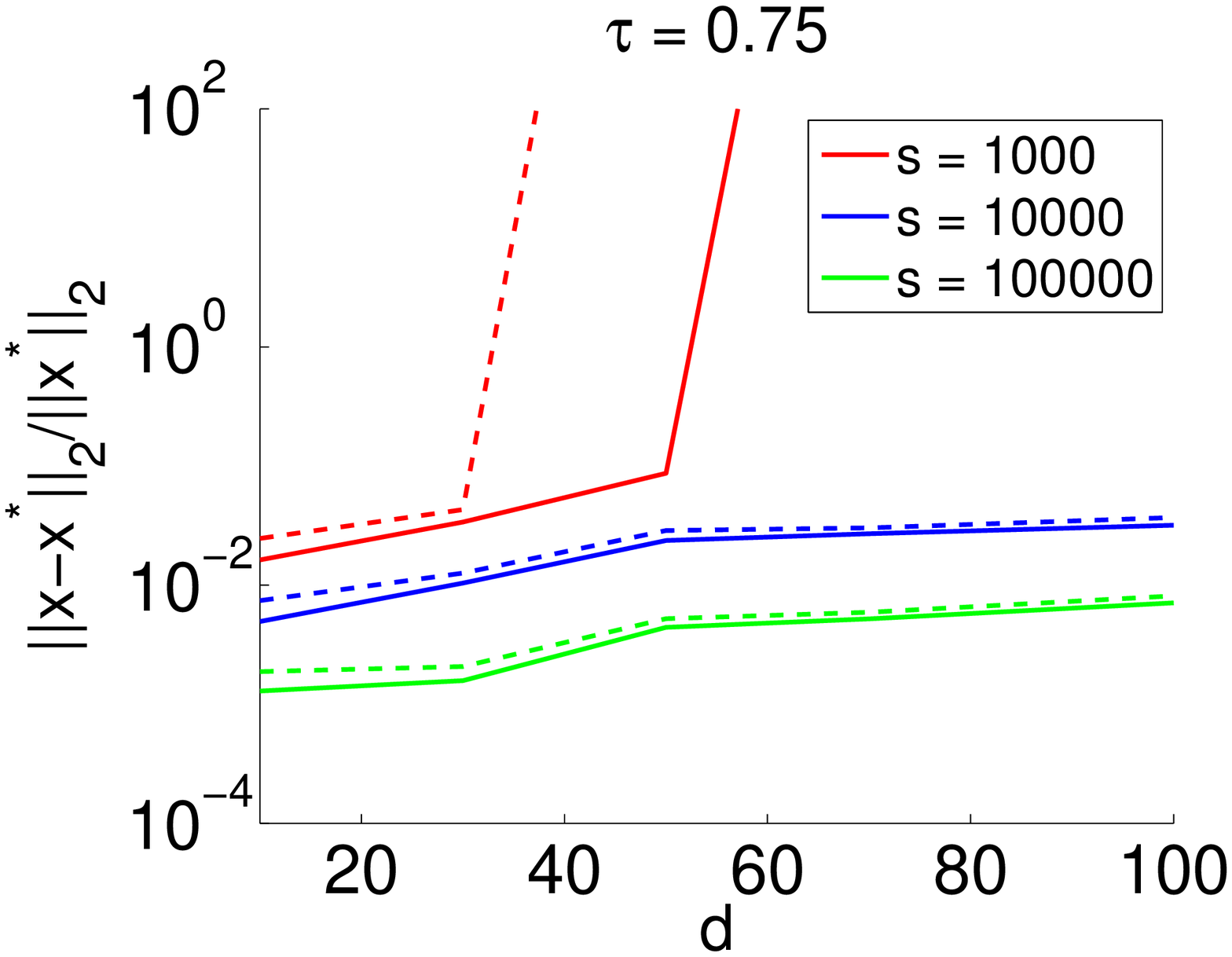}
 }
 &
 \subfigure[$\tau = 0.95$, $\|x-x^*\|_2/\|x^*\|_2$]{
   \includegraphics[width=0.3\textwidth] {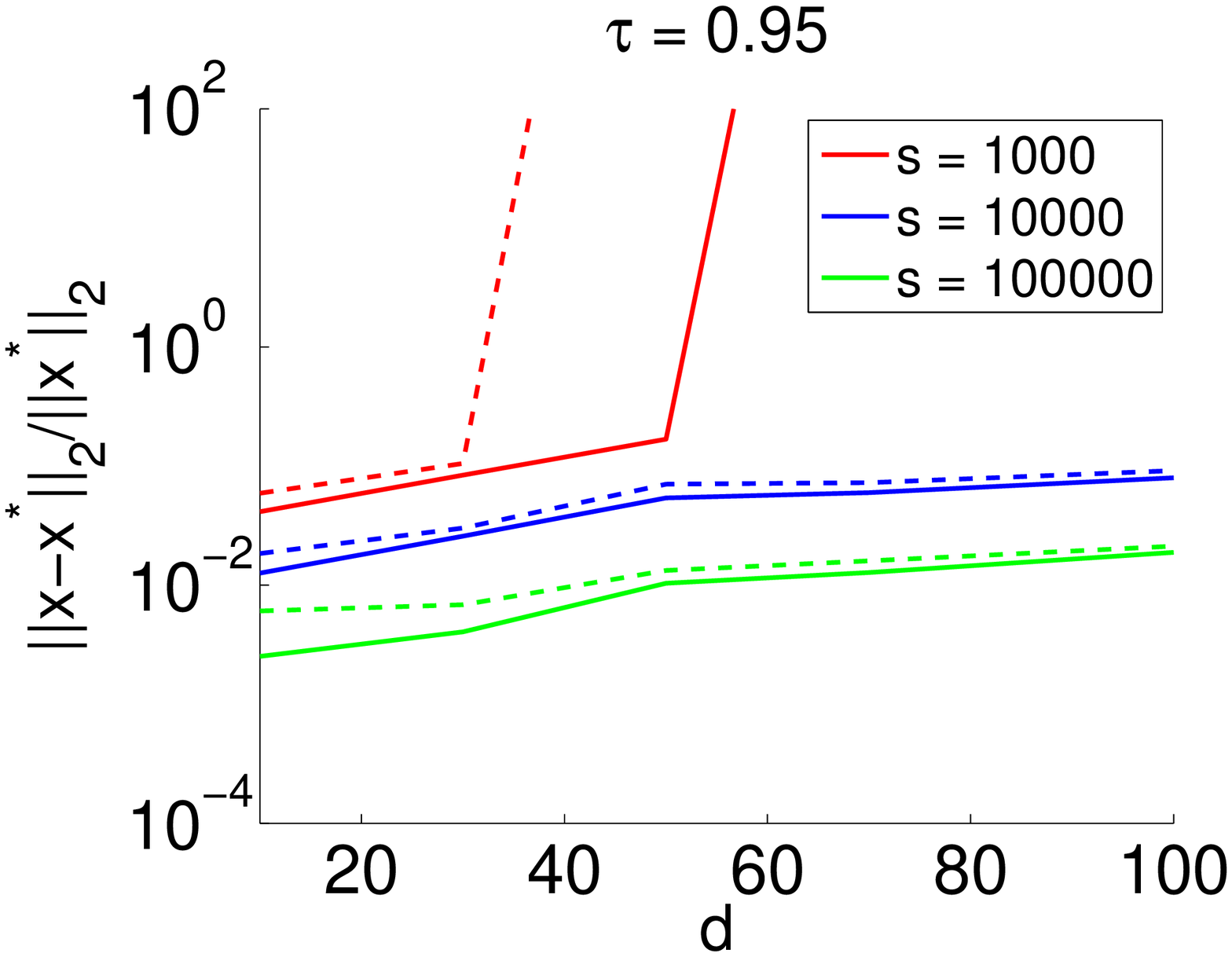}
 }
 \end{tabular}
 \end{center}
 \caption{
   The first (solid lines) and the third (dashed lines) quartiles of the relative errors of the objective value
   (namely, $|f-f^*|/|f^*|$) and solution vector (namely, $\|x-x^*\|_2/\|x^*\|_2$),
   when $d$ varying from 10 to 100 and $n = 1e6$ by using SPC3, among 50 independent trials
   The test is on skewed data.
   The three different columns correspond to $\tau = 0.5, 0.75, 0.95$, respectively.
   }
 \label{err_d}
\end{figure}


\subsection{Quality of approximation when the quantile parameter $\tau$ changes}
\label{relerr_tau}

Next, we will let $\tau$ change, for a fixed data set and fixed conditioning method, 
and we will investigate how the resulting errors behave as a function of 
$\tau$.
We will consider $\tau$ in the range of $[0.5, 0.9]$, equally spaced by 0.05,
as well as several extreme quantiles such as $0.975$ and $0.98$.
We consider skewed data with size $1e6 \times 50$; and our plots are shown in 
Figure~\ref{err_tau}.

\begin{figure}[h!tbp]
 \begin{center}
 \begin{tabular}{ccc}
\subfigure[SPC1, $|f-f^*|/|f^*|$]{
   \includegraphics[width=0.3\textwidth] {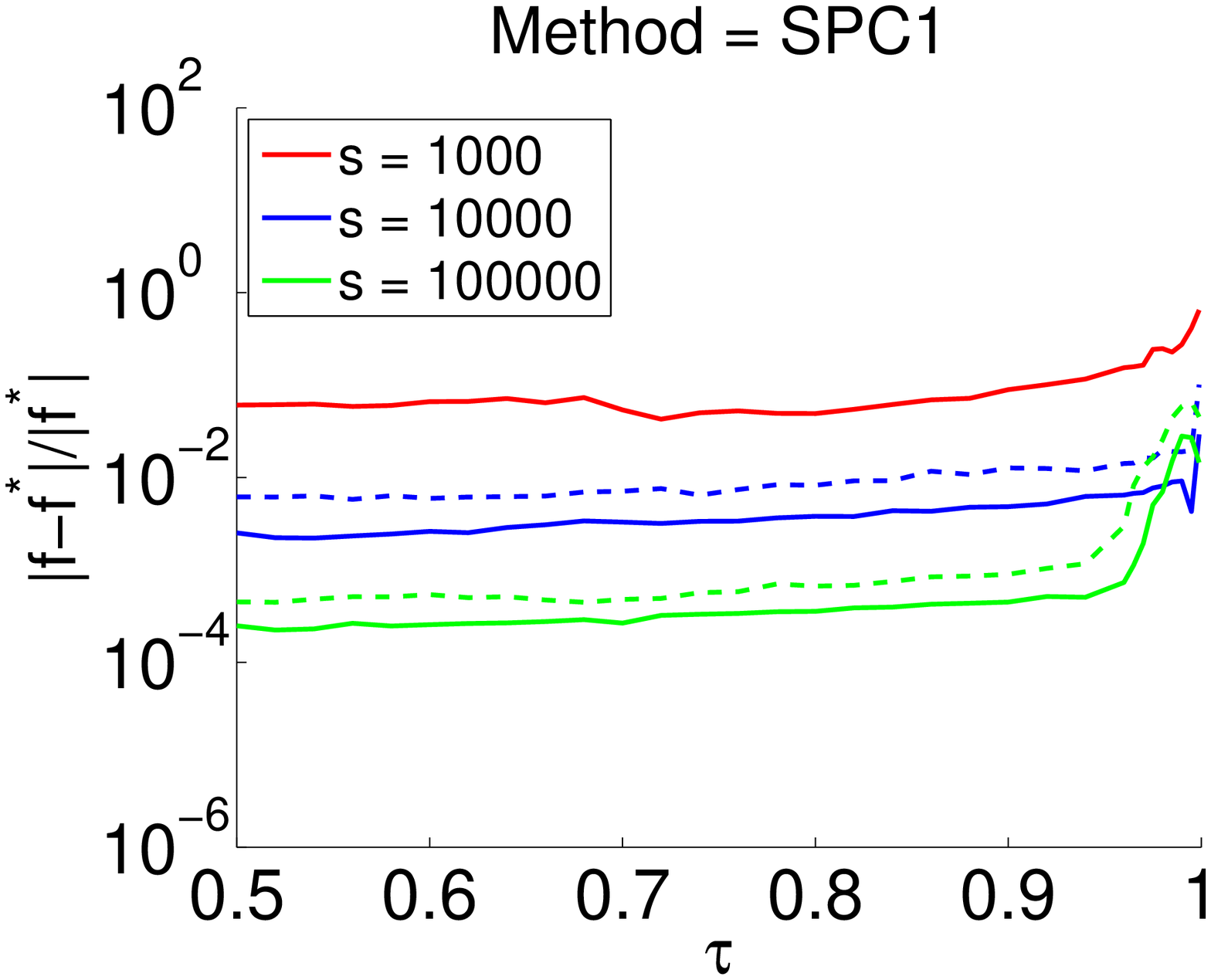}
 }
 &
 \subfigure[SPC2, $|f-f^*|/|f^*|$]{
   \includegraphics[width=0.3\textwidth] {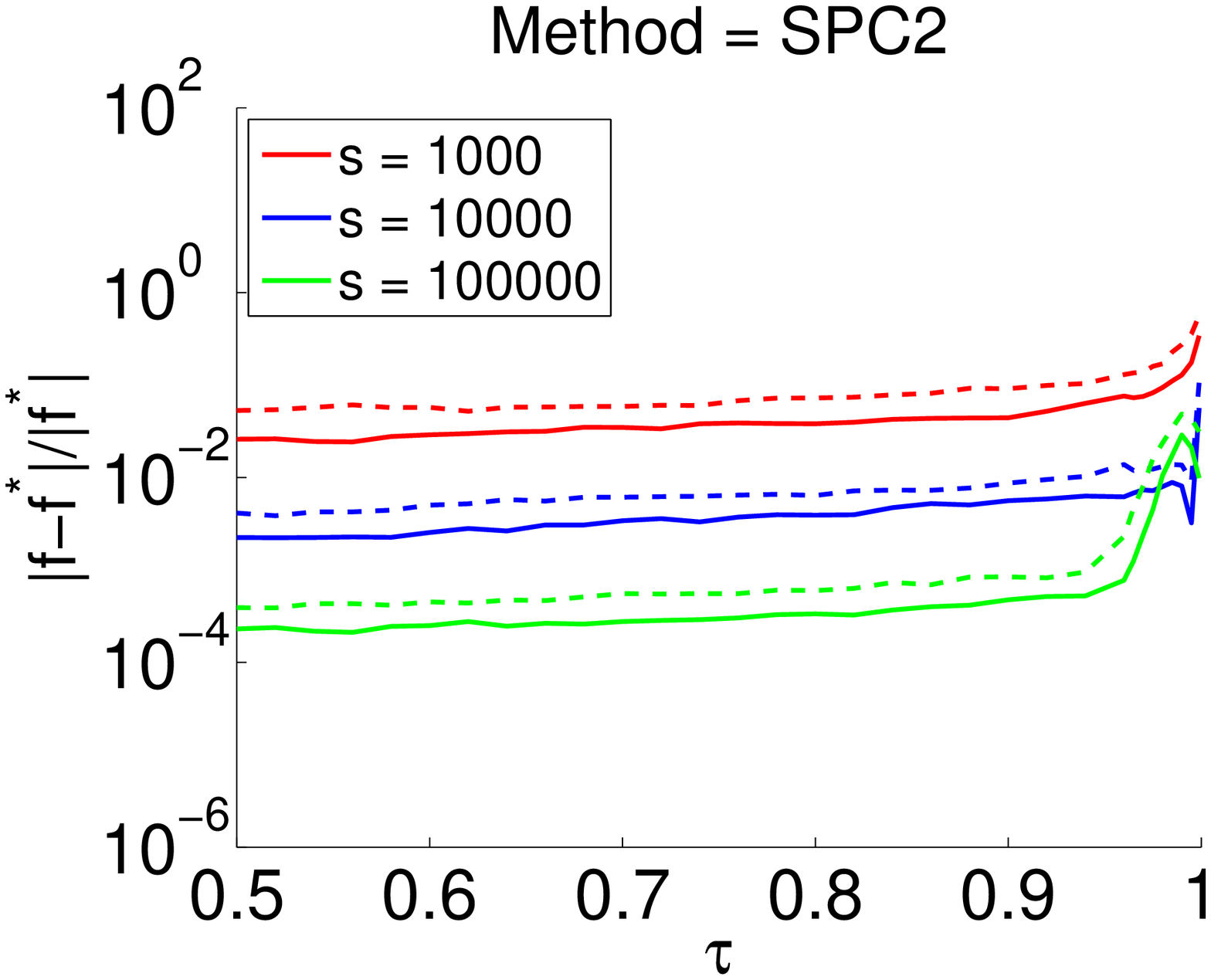}
   }
  &
  \subfigure[SPC3, $|f-f^*|/|f^*|$]{
   \includegraphics[width=0.3\textwidth] {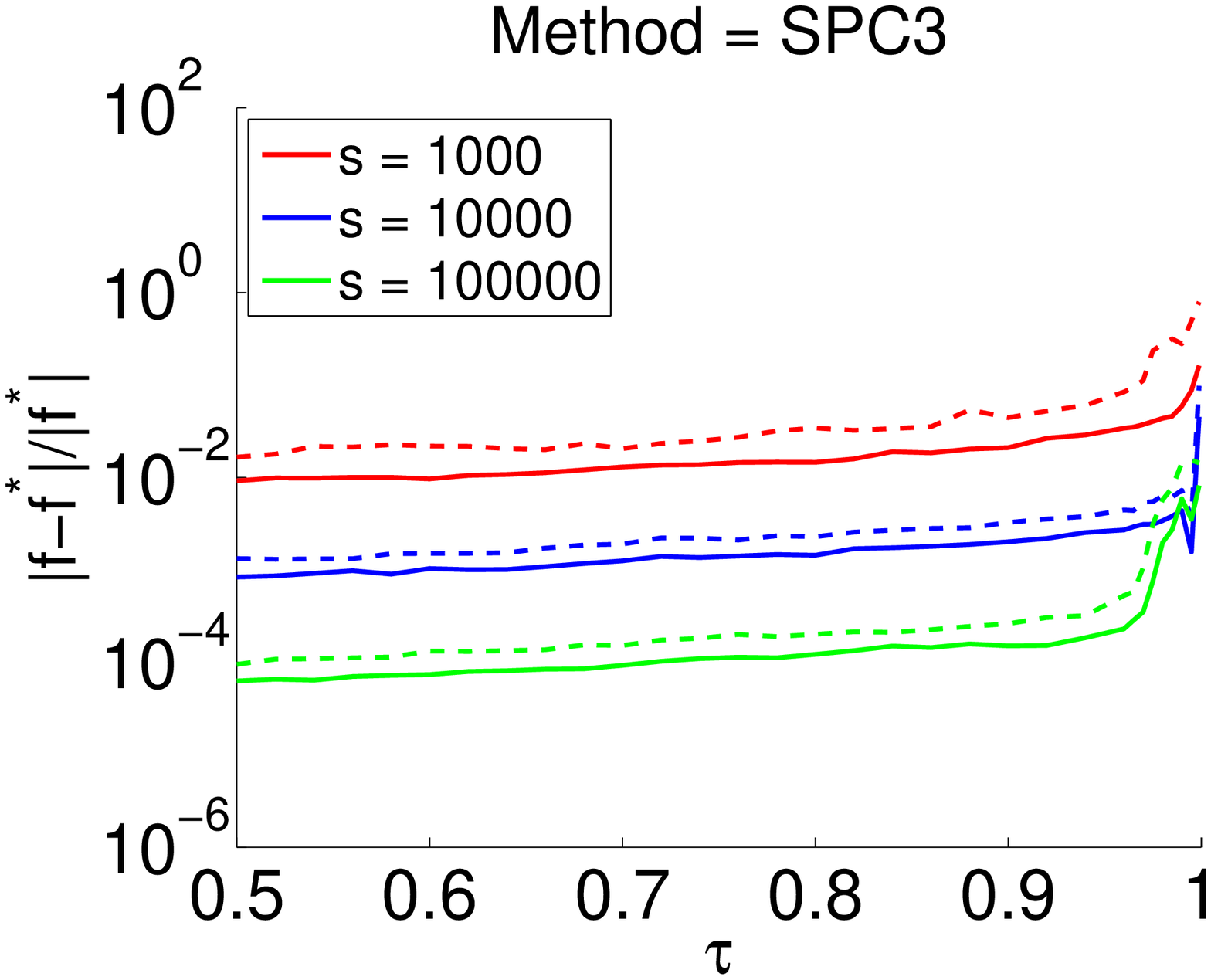}
  }
  \\
  \subfigure[SPC1, $\|x-x^*\|_2/\|x^*\|_2$]{
   \includegraphics[width=0.3\textwidth] {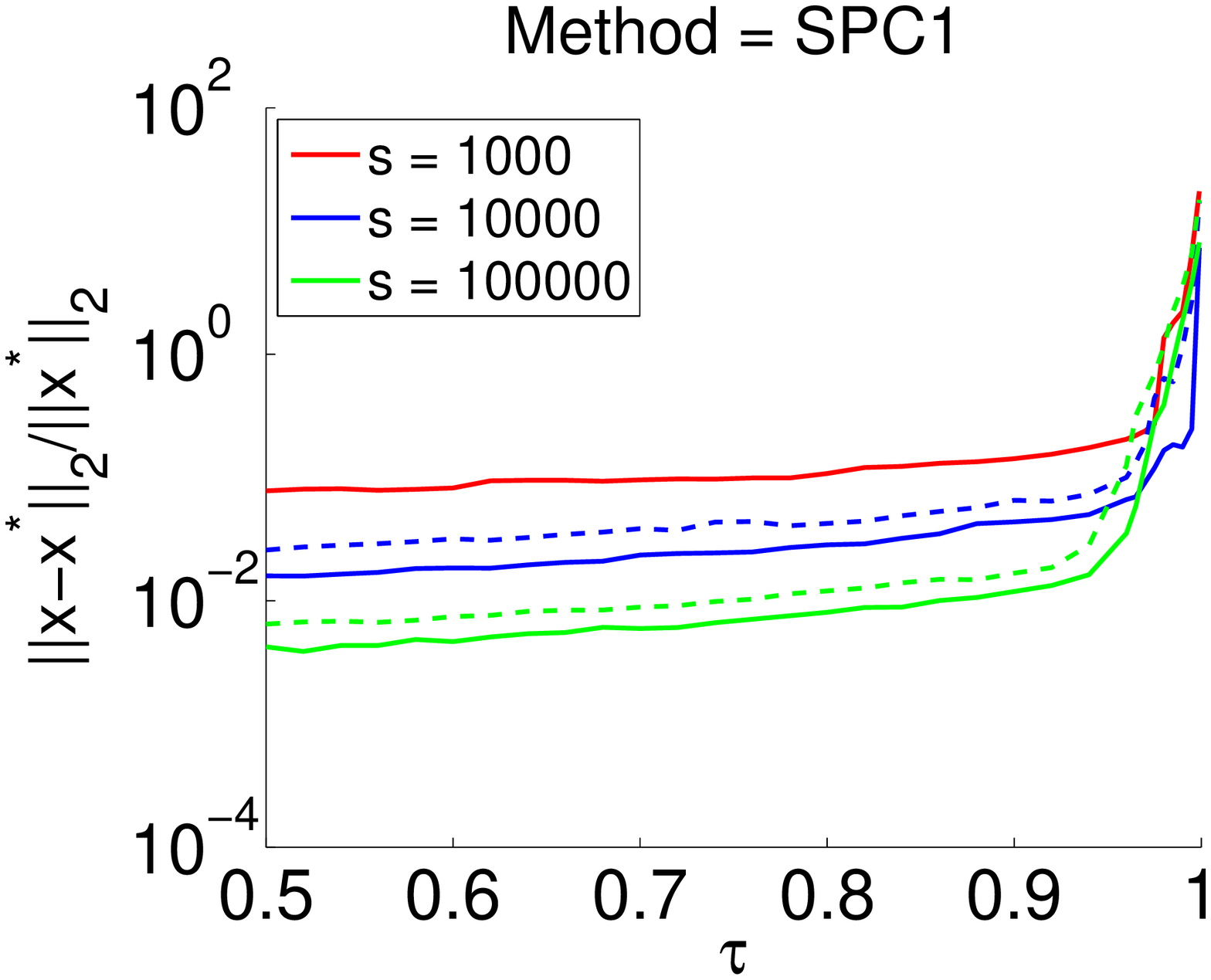}
 }
 &
 \subfigure[SPC2, $\|x-x^*\|_2/\|x^*\|_2$]{
   \includegraphics[width=0.3\textwidth] {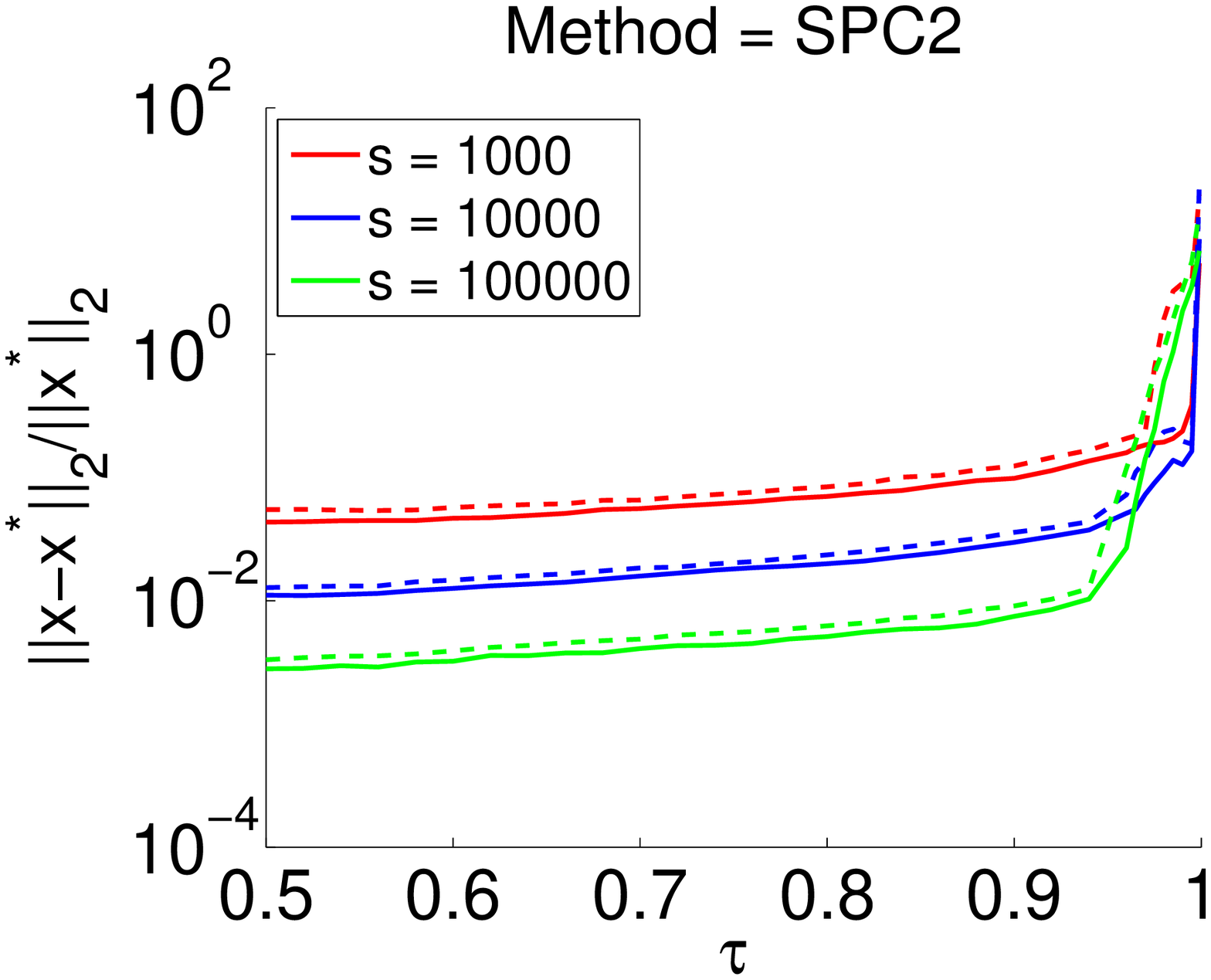}
 }
 &
\subfigure[SPC3, $\|x-x^*\|_2/\|x^*\|_2$]{
   \includegraphics[width=0.3\textwidth] {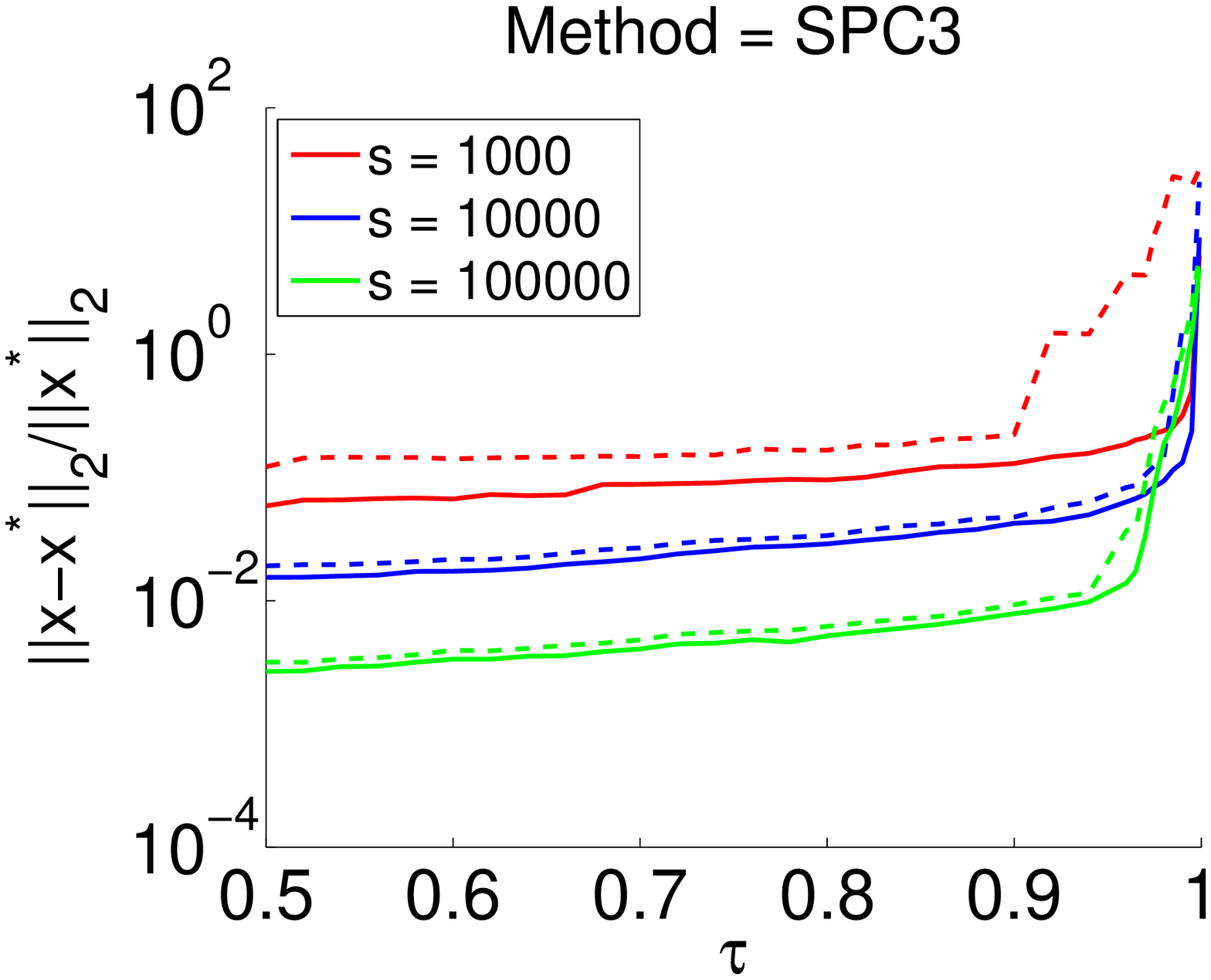}
 }
 \end{tabular}
 \end{center}
 \caption{
   The first (solid lines) and the third (dashed lines) quartiles of the relative errors of the objective value,
   (namely  $|f-f^*|/|f^*|$) and solution vector  (namely, $\|x-x^*\|_2/\|x^*\|_2$),
   when $\tau$ varying from 0.5 to 0.999 by using SPC1, SCP2, SPC3, among 50 independent trials.
   The test is on skewed data with size $1e6$ by 50.
   Within each plot, three sampling sizes are considered, namely, $1e4, 1e4, 1e5$.
     }
  \label{err_tau}
\end{figure}

The plots in Figure~\ref{err_tau} demonstrate that, given the same method 
and sampling size $s$, the relative errors are monotonically increasing 
but only very gradually, \emph{i.e.}, they do not change very substantially 
in the range of $[0.5, 0.95]$.
On the other hand, all the methods generate high relative errors when $\tau$ 
takes extreme values very near $1$ (or $0$).
Overall, SPC2 and SPC3 performs better than SPC1.
Although for some quantiles SPC3 can yield slightly lower errors than SPC2, 
it too yields worst results when $\tau$ takes on extreme values.


\subsection{Evaluation on running time performance}
\label{running_time}
 
In this subsection, we will describe running time issues, with an emphasis 
on how the running time behaves as a function of $s$, $d$ and $n$.

\paragraph{When the sampling size $s$ changes\\}

To start, Figure~\ref{time_s} shows the running time for computing three 
subproblems associated with three different $\tau$ values by using six 
methods (namely, SC, SPC1, SPC2, SPC3, NOCO, UNIF) when the sampling size 
$s$ changes.
(This is simply the running time comparison for all the six methods used to 
generate Figure~\ref{err_s}.)
As expected, the two naive methods (NOCO and UNIF) run faster than other methods in most 
cases---since they don't perform the additional step of conditioning.
For $s < 10^4$, among the conditioning-based methods, SPC1 runs fastest, 
followed by SPC3 and then SPC2.
As $s$ increases, however, the faster methods, including NOCO and UNIF, become 
relatively more expensive; and when $s \approx 5e5$, all of the curves, 
except for SPC1, reach almost the same point.

To understand what is happening here, recall that we accept the sampling 
size $s$ as an input in our algorithm; and we then construct our sampling 
probabilities by $\hat p_i = \min\{1, s \cdot \lambda_i / \sum \lambda_i\}$,
where $\lambda_i$ is the estimation of the $\ell_1$ norm of the $i$-th row 
of a well-conditioned basis.
(See Step 4 in Algorithm~\ref{embed_alg}.)
Hence, the $s$ is not the exact sampling size.
Indeed, upon examination, in this regime when $s$ is large, the actual 
sampling size is often much less than the input $s$.
As a result, almost all the conditioning-based algorithms are solving a 
subproblem with size, say, $s/2 \times d$, while the two naive methods are 
are solving subproblem with size about $s \times d$.
The difference of running time for solving problems with these sizes can be
quite large when $s$ is large.
For conditioning-based algorithms, the running time mainly comes from the 
time for conditioning and solving the subproblem.
Thus, since SPC1 needs the least time for conditioning, it should be clear 
why SPC1 needs much less time when $s$ is very large.

\begin{figure}[h!tbp]
\begin{center}
   \includegraphics[scale = 0.5] {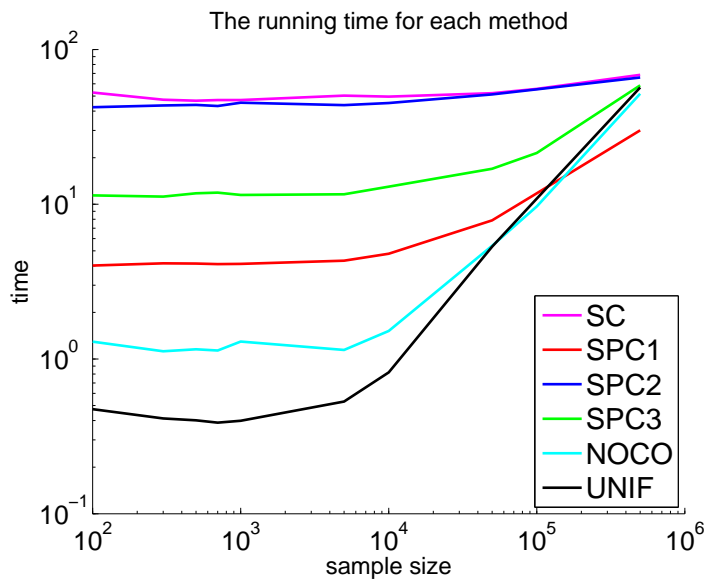}
 \end{center}
 \caption{The running time for solving the three problems associated with three different $\tau$ values
  by using six methods, namely, SC, SPC1, SPC2, SPC3, NOCO, UNIF, when the sampling size $s$ changes.
     }
 \label{time_s}
\end{figure}

\paragraph{When the higher dimension $n$ changes\\}

Next, we compare the running time of our method with some competing methods 
when data size increases.
The competing methods are the primal-dual method, referred to as 
\texttt{ipm}, and that with preprocessing, referred to as \texttt{prqfn}; 
see \cite{PK97} for more details on these two methods.

We let the large dimension $n$ increase from $1e5$ to $1e8$, and we fix 
$s = 5e4$.
For completeness, in addition to the skewed data, we will consider two 
additional data sets.
First, we also consider a design matrix with entries generated from i.i.d. 
Gaussian distribution, where the response vector is generated in the same 
manner as the skewed data.
Also, we will replicate the census data 20 times to obtain a data set with 
size $1e8$ by $11$.
For each $n$, we extract the leading $n \times d$ submatrix of the 
replicated matrix, and we record the corresponding running time.  
The results of running time on all three data sets are shown in 
Figure~\ref{time_n}.

\begin{figure}[h!tbp]
 \begin{center}
  \begin{tabular}{ccc}  
  \subfigure[$\tau  = 0.5$]{
   \includegraphics[width=0.3\textwidth] {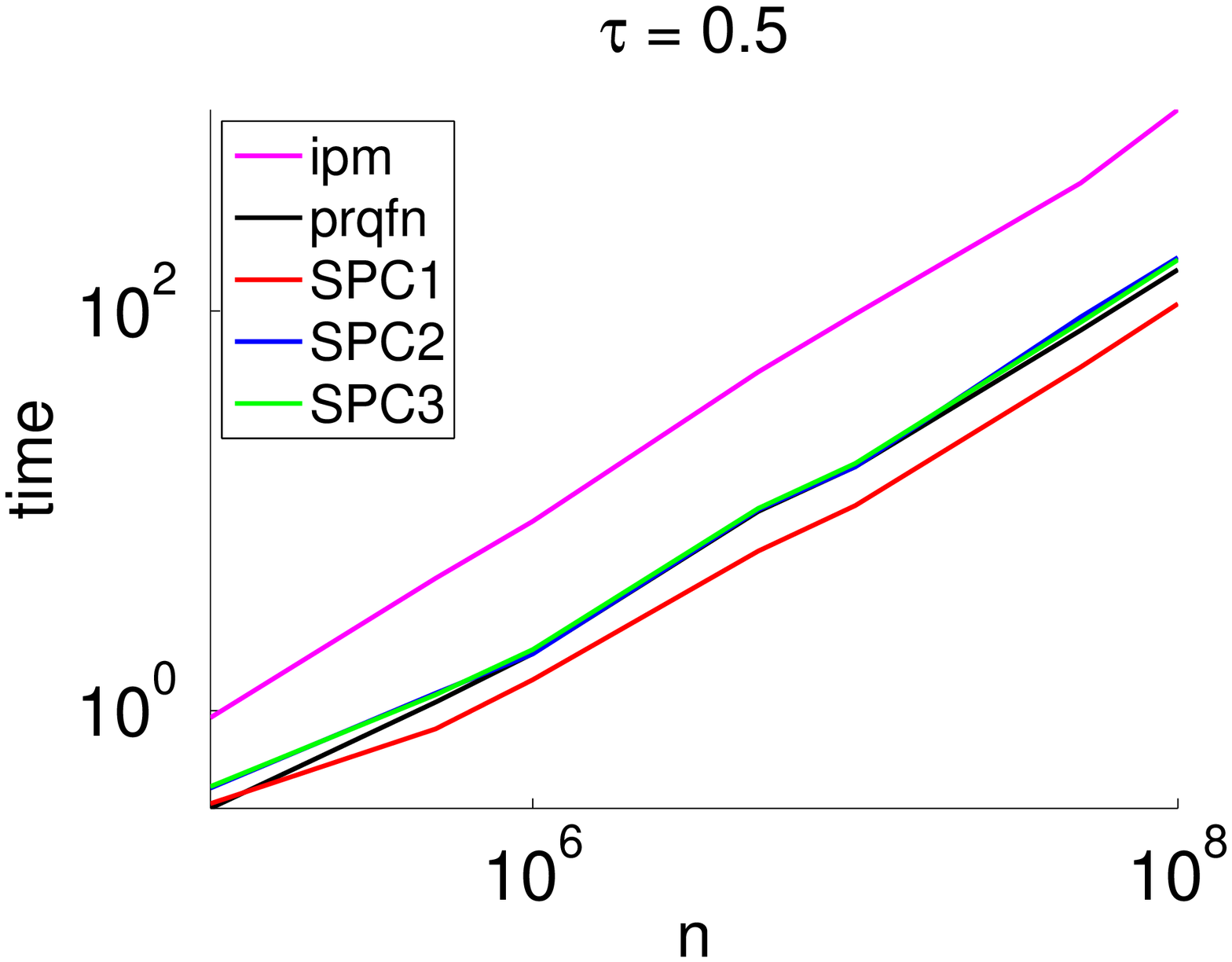}
 }
 &
\subfigure[$\tau  = 0.75$]{
   \includegraphics[width=0.3\textwidth] {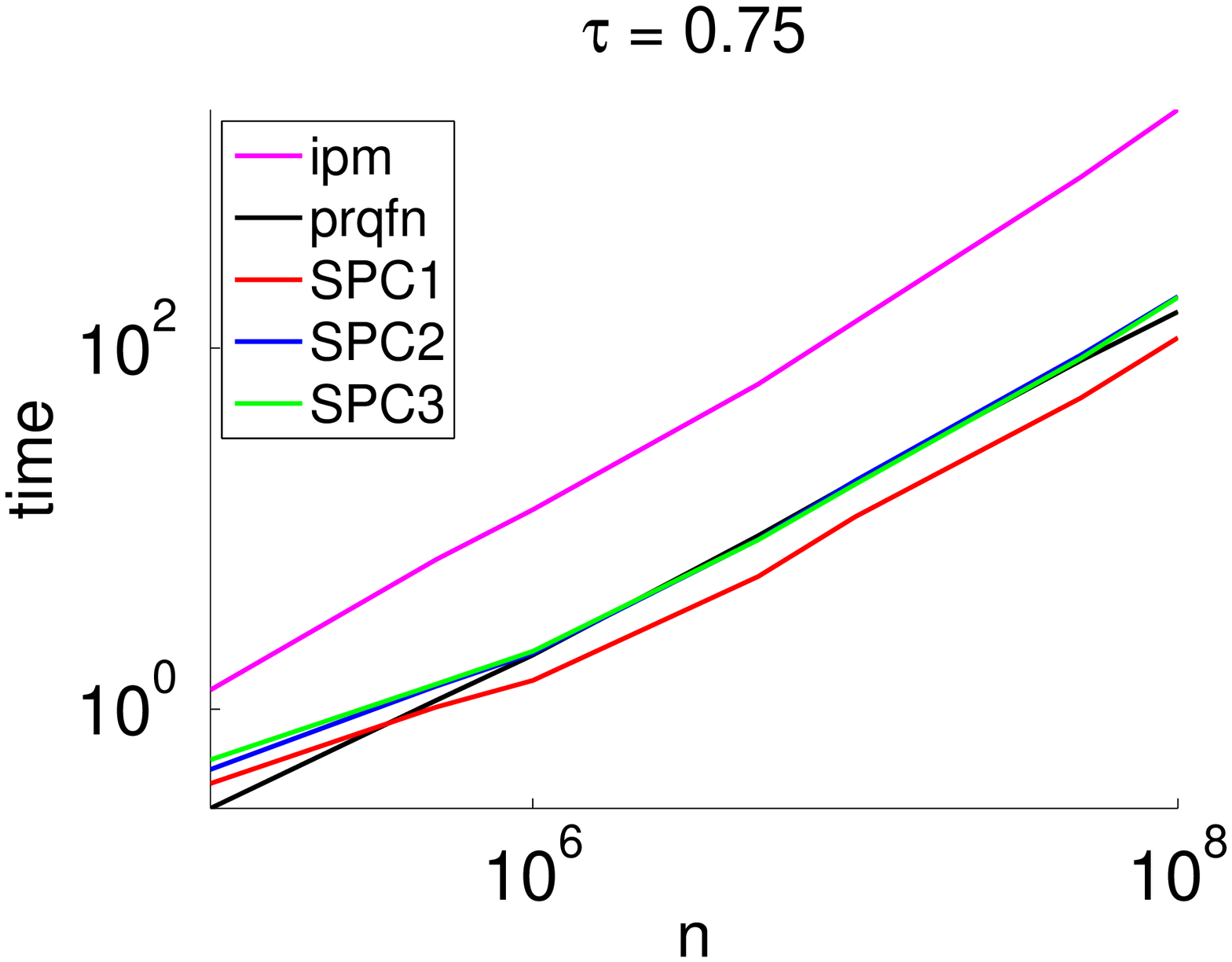}
 }
 &
 \subfigure[$\tau  = 0.95$]{
   \includegraphics[width=0.3\textwidth] {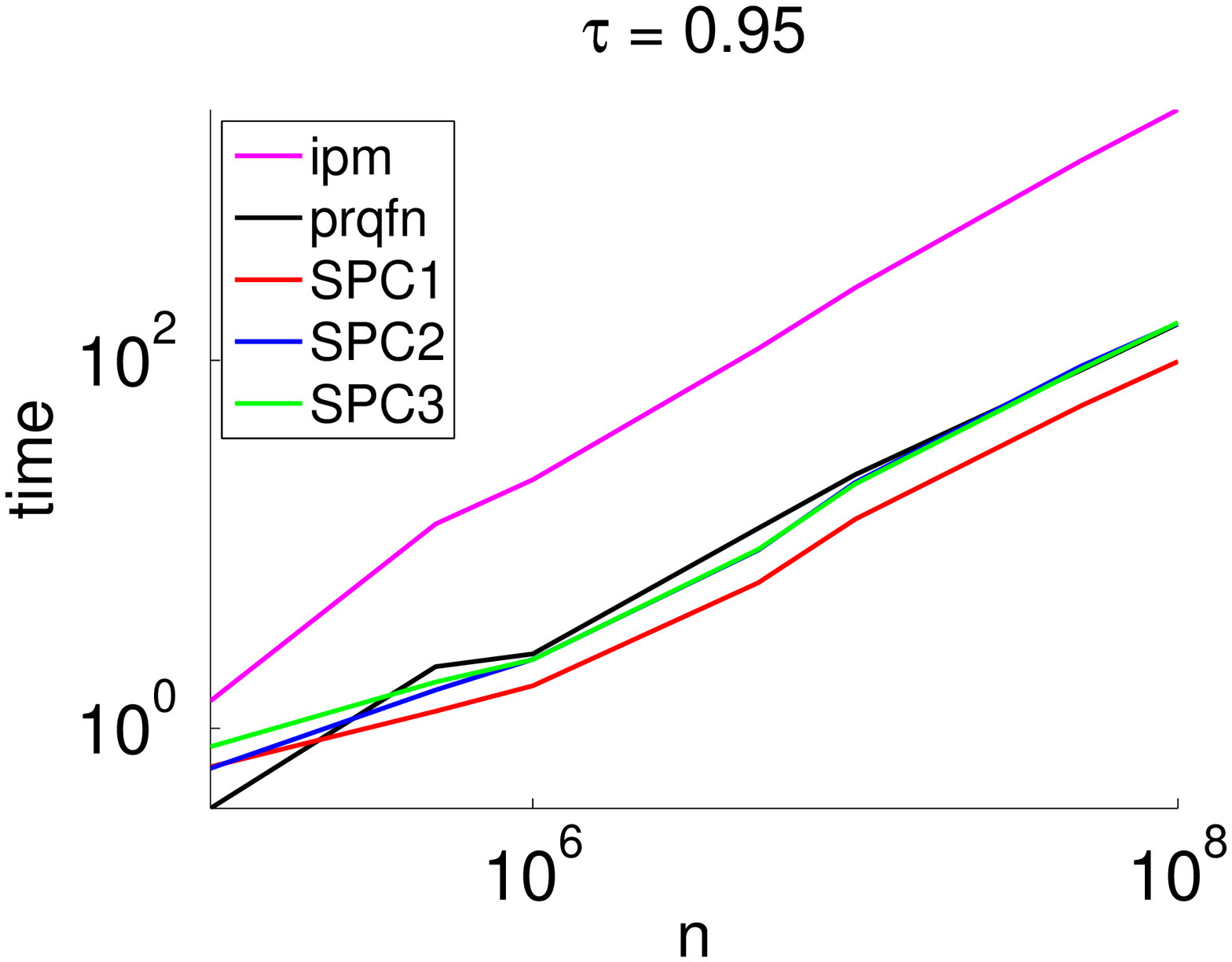}
 }
 \\
 \multicolumn{3}{c}{\bf Skewed data with $d = 11$}
 \\
   \subfigure[$\tau  = 0.5$]{
   \includegraphics[width=0.3\textwidth] {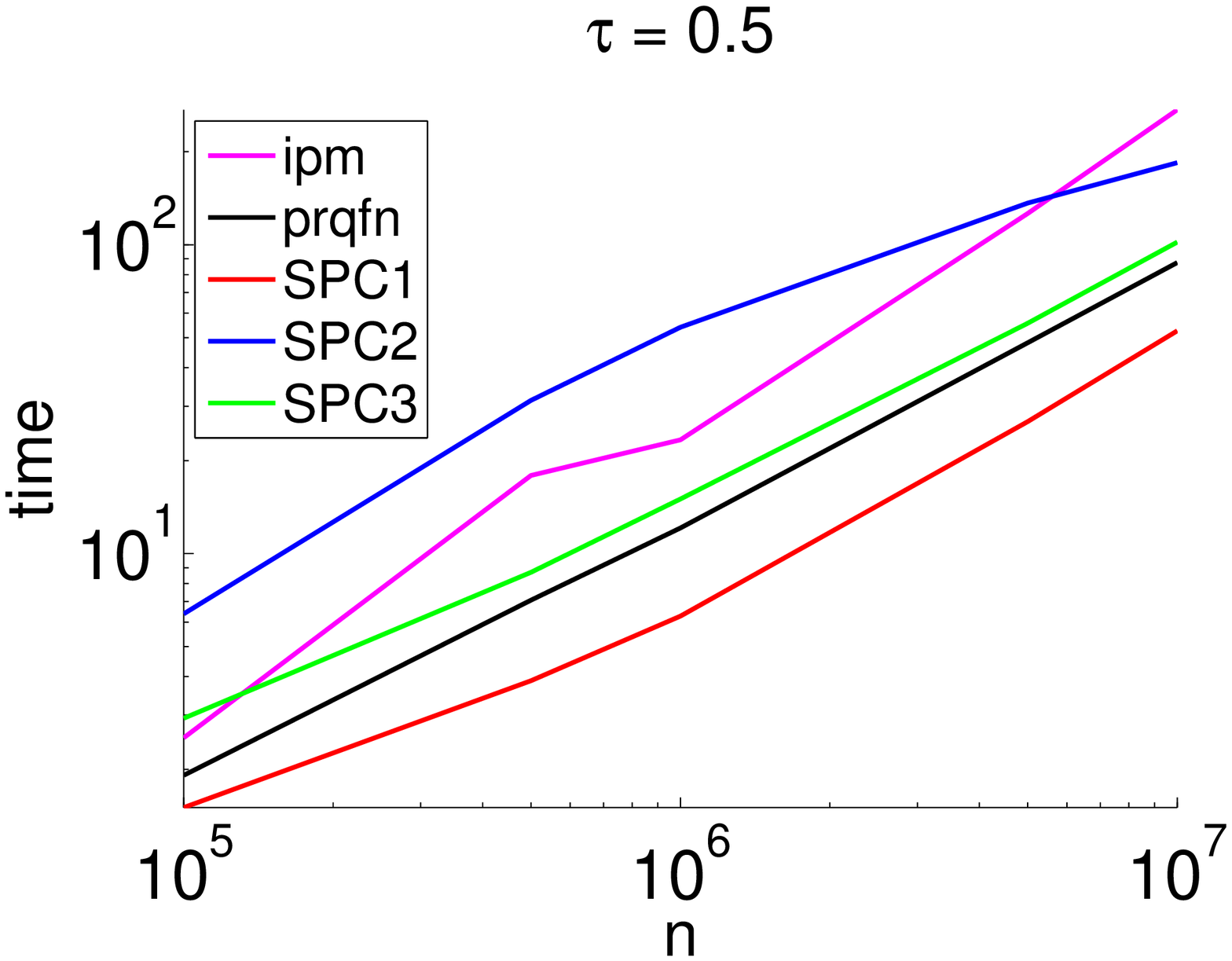}
 }
 &
\subfigure[$\tau  = 0.75$]{
   \includegraphics[width=0.3\textwidth] {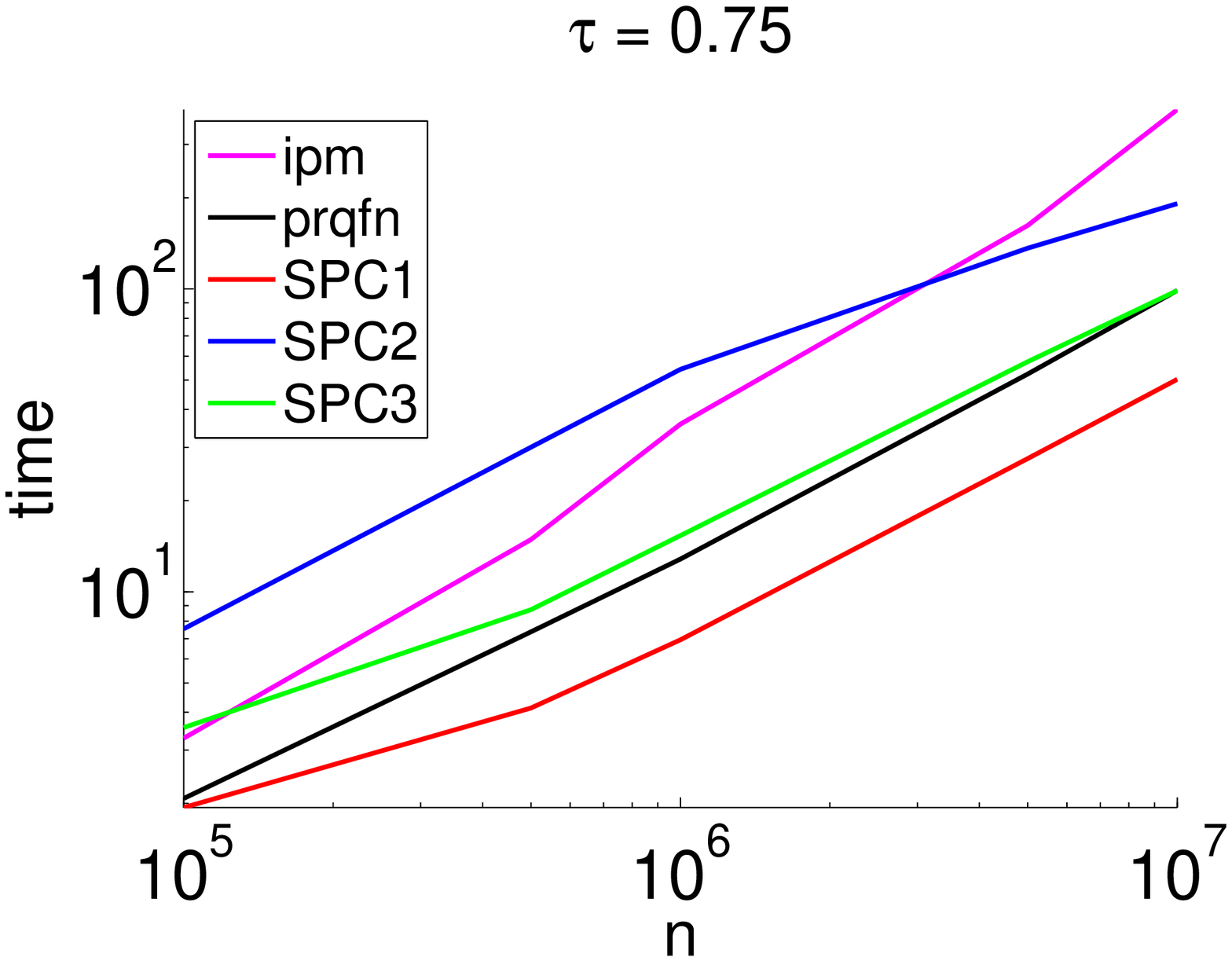}
 }
 &
 \subfigure[$\tau  = 0.95$]{
   \includegraphics[width=0.3\textwidth] {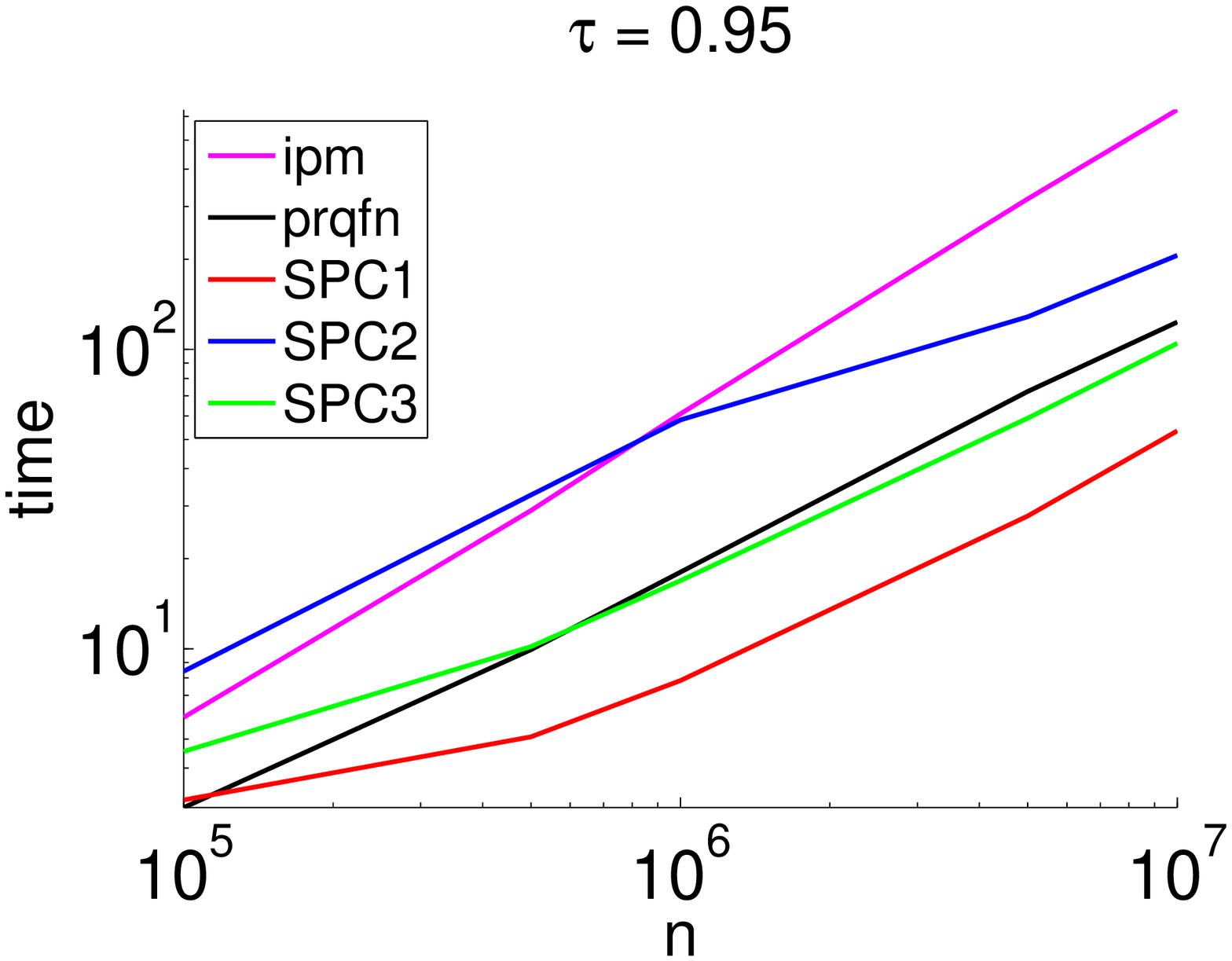}
 }
 \\
 \multicolumn{3}{c}{\bf Skewed data with $d = 50$} 
 \\
  \subfigure[$\tau  = 0.5$]{
  \includegraphics[width = 0.3\textwidth] {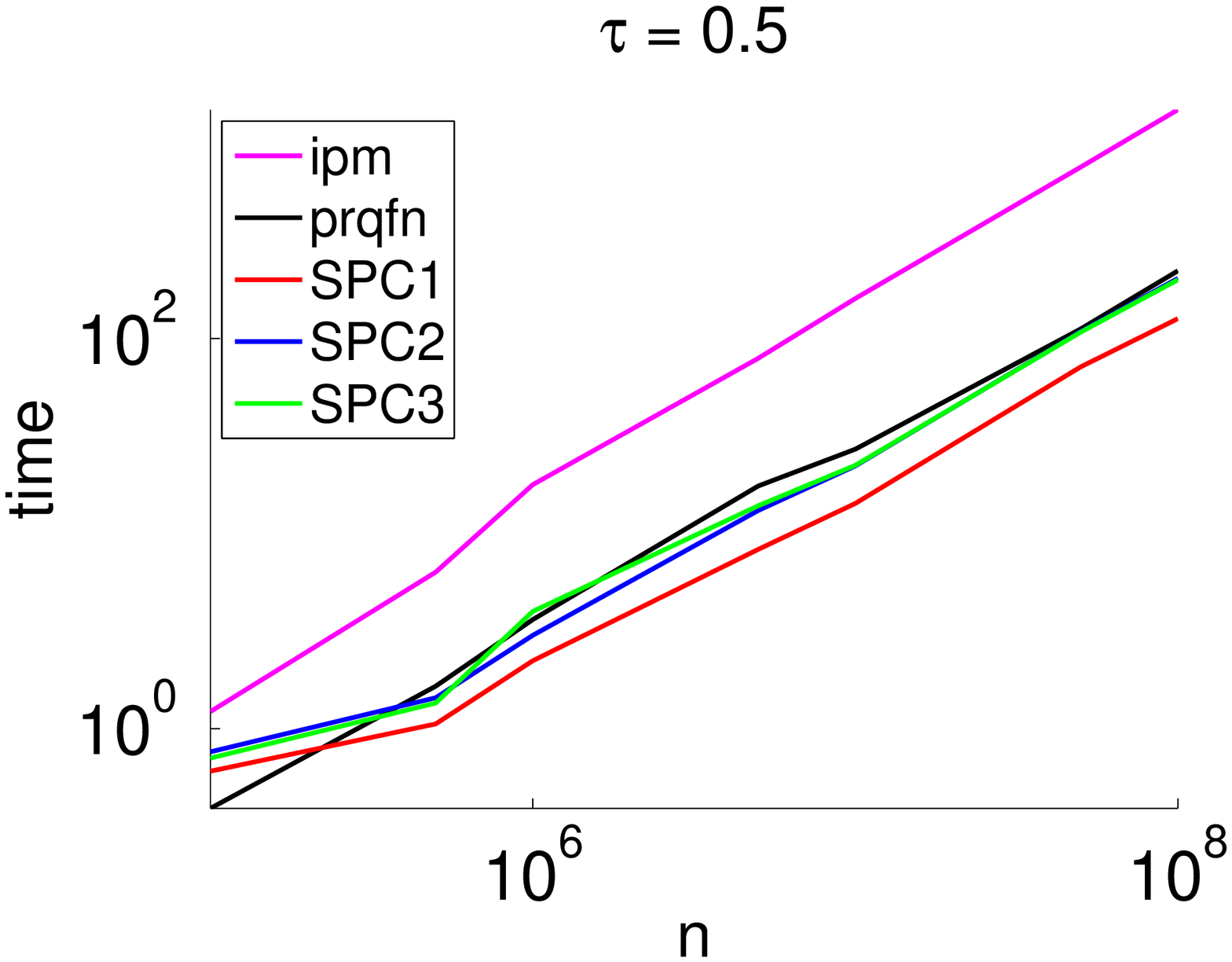}
 }
 &
\subfigure[$\tau  = 0.75$]{
   \includegraphics[width=0.3\textwidth] {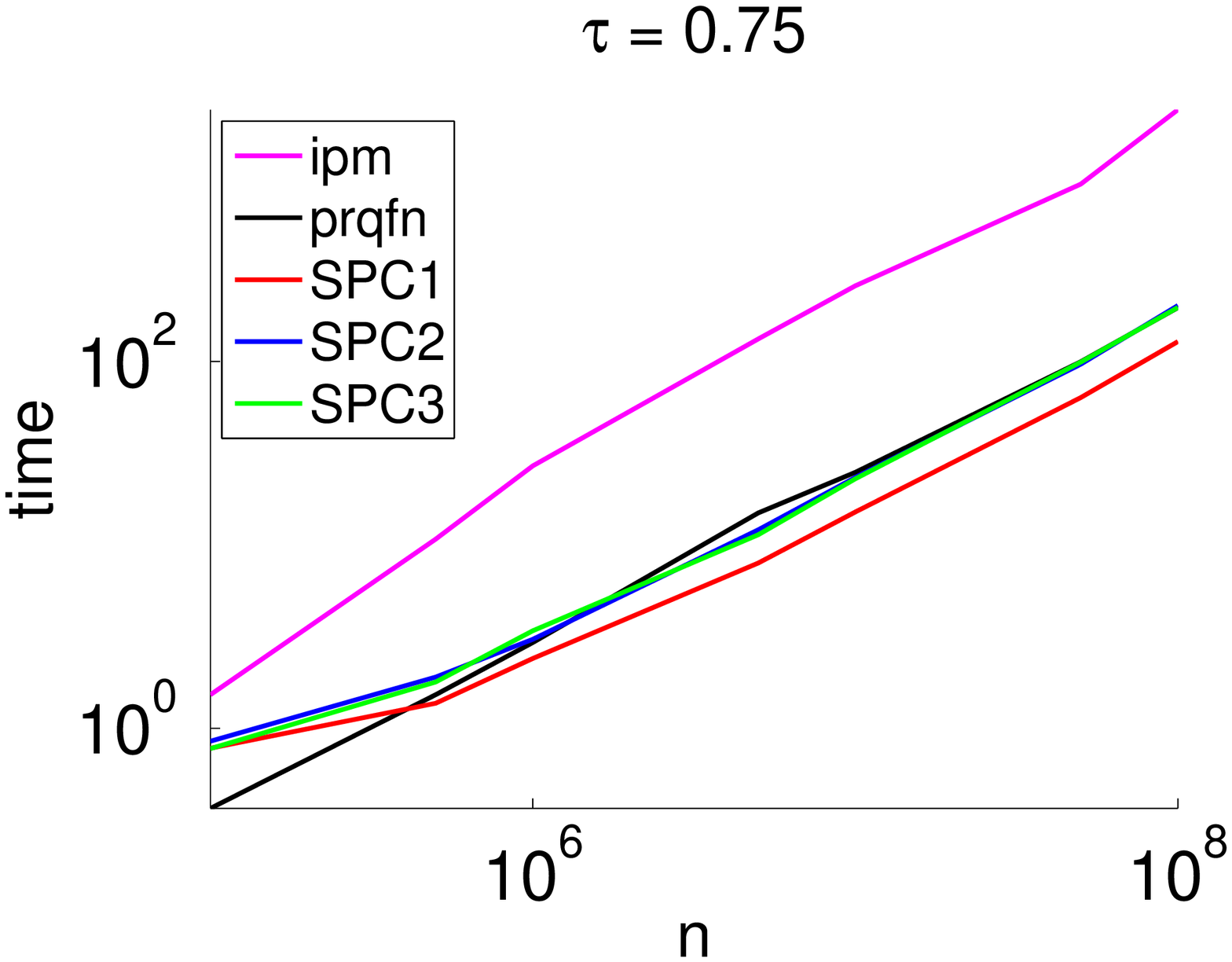}
 }
 &
 \subfigure[$\tau  = 0.95$]{
   \includegraphics[width=0.3\textwidth] {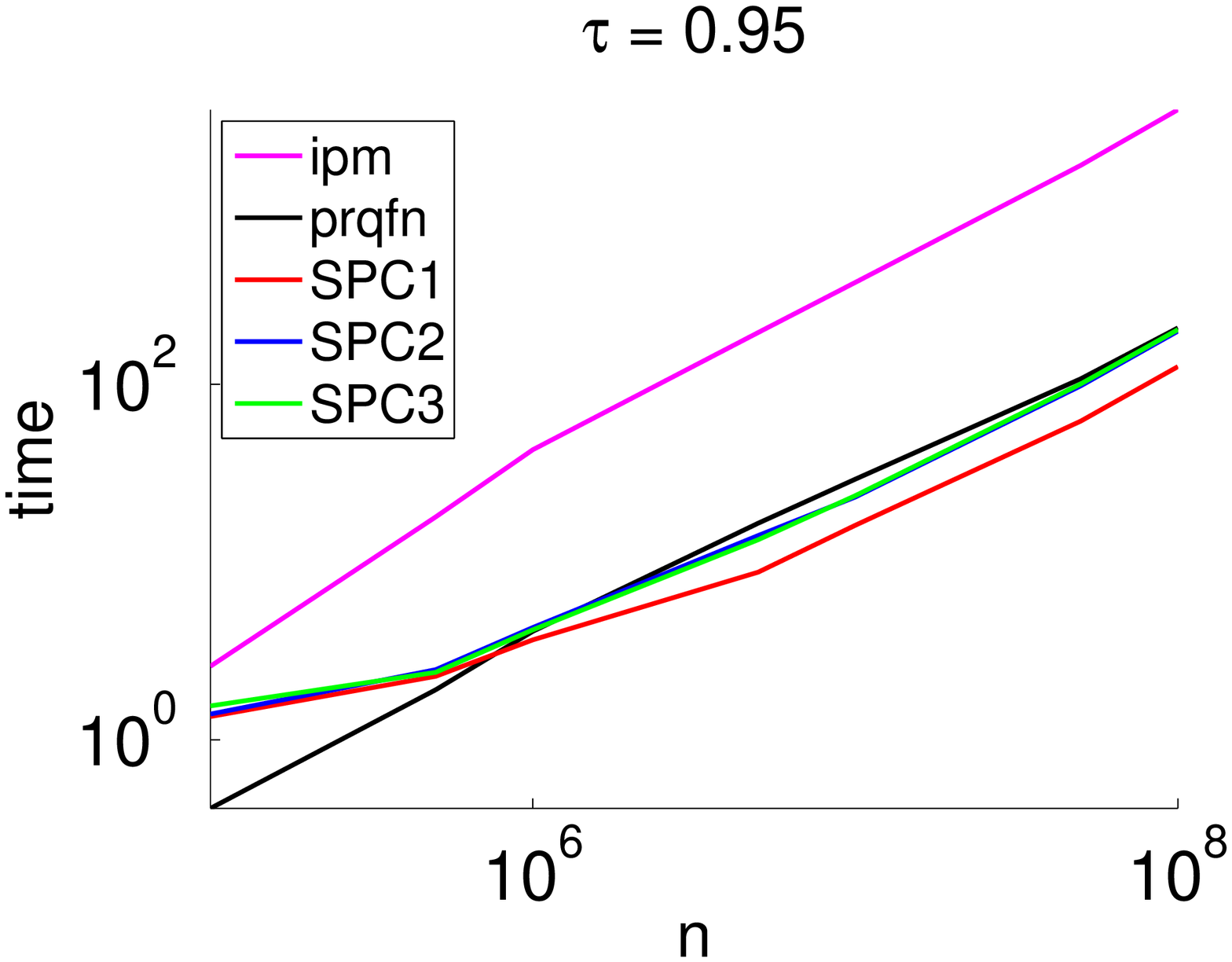}
 }
 \\
  \multicolumn{3}{c}{\bf Gaussian data with $d = 11$}
 \\
  \subfigure[$\tau  = 0.5$]{
   \includegraphics[width=0.3\textwidth] {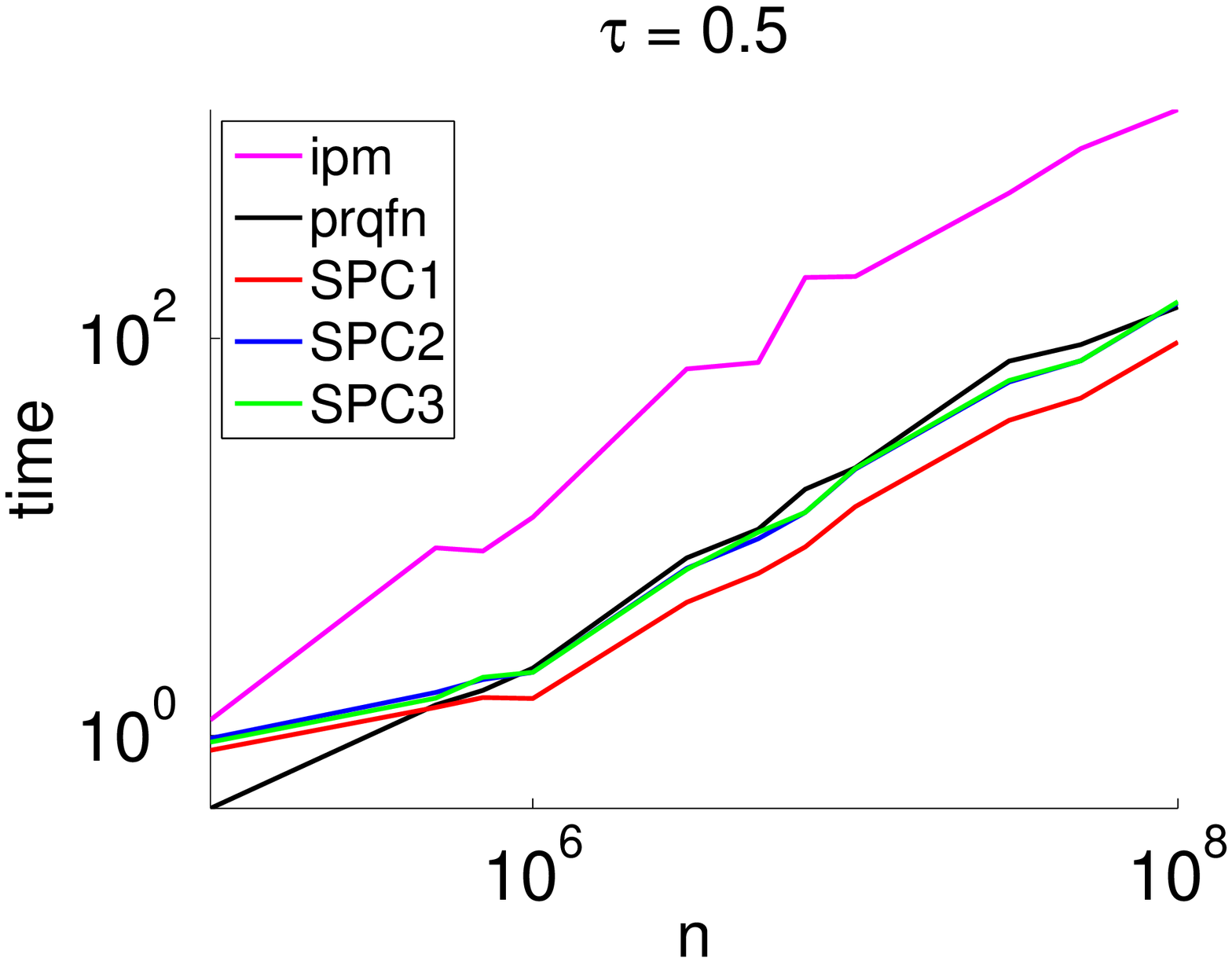}
 }
 &
\subfigure[$\tau  = 0.75$]{
   \includegraphics[width=0.3\textwidth] {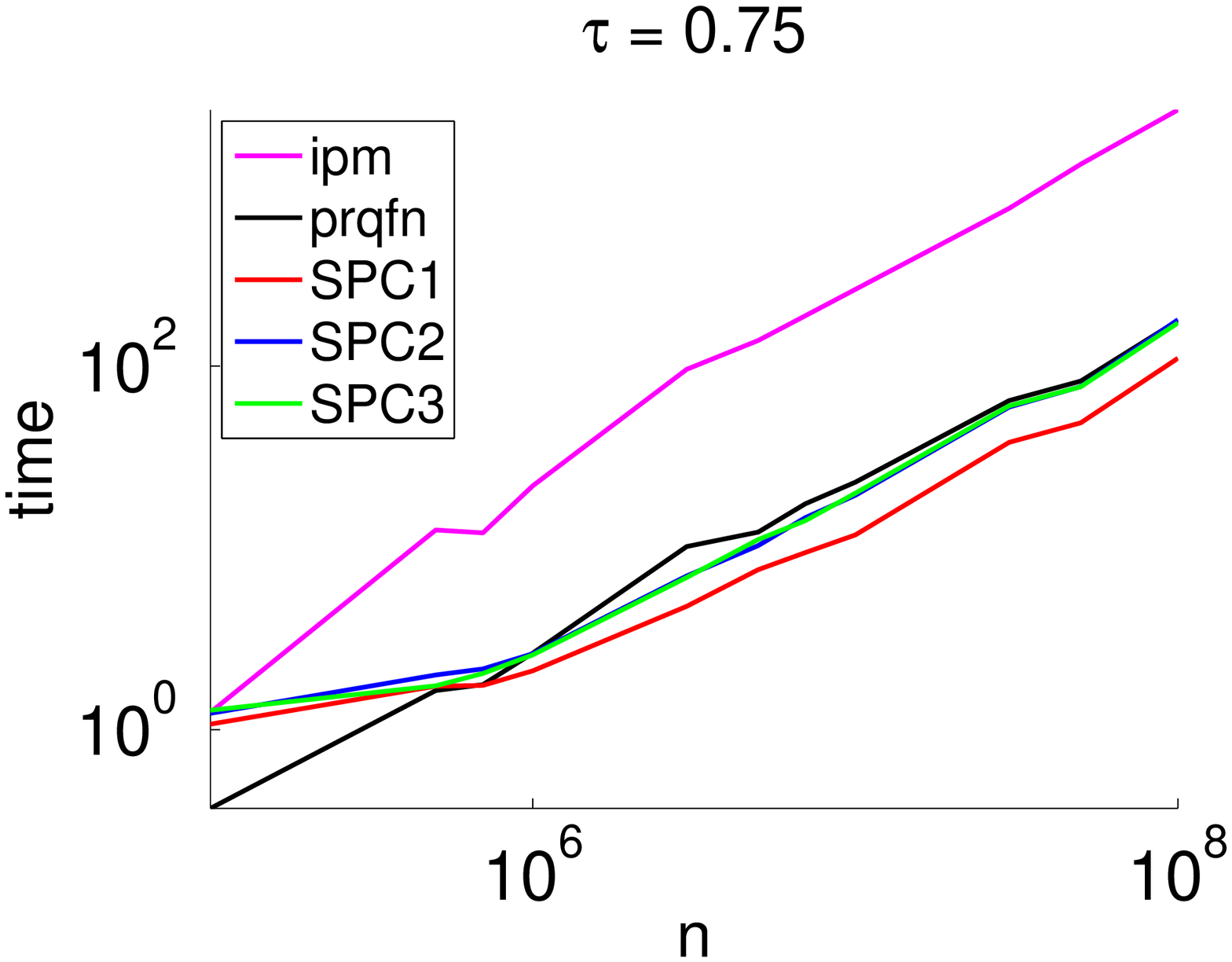}
 }
 &
 \subfigure[$\tau  = 0.95$]{
   \includegraphics[width=0.3\textwidth] {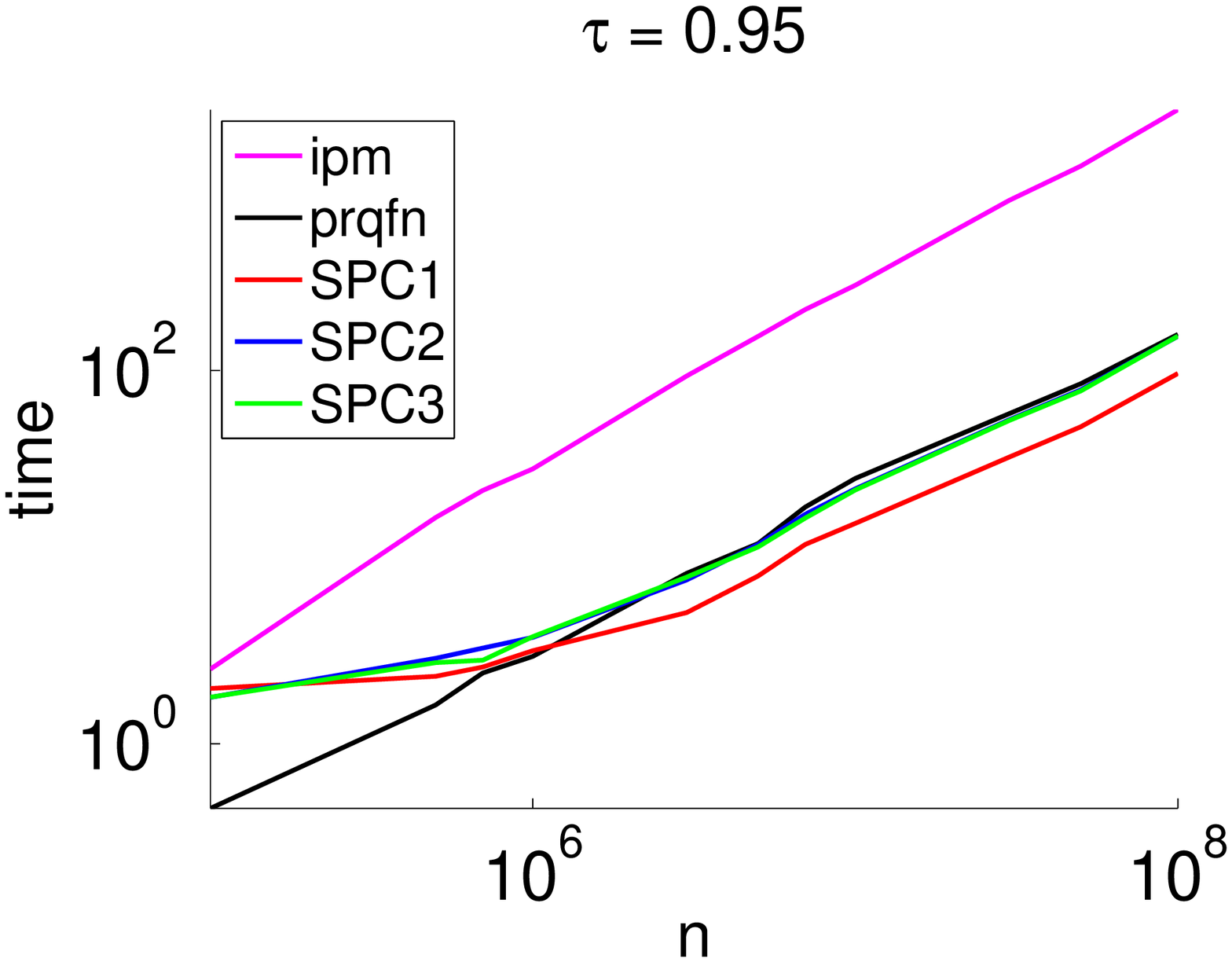}
 }
 \\
 \multicolumn{3}{c}{\bf Replicated census data with $d = 11$} 
  \\
  \end{tabular}
 \end{center}
 \caption{The running time for five methods
 (\texttt{ipm}, \texttt{prqfn}, SPC1, SPC2 and SPC3) on the same data set,
 with $d$ fixed and $n$ changing.
 The sampling size $s = 5e4$, and the three columns correspond to 
 $\tau=0.5, 0.75, 0.95$, respectively.
}
  \label{time_n}
\end{figure}

From the plots in Figure~\ref{time_n} we see, SPC1 runs faster than any 
other methods across all the data sets, in some cases significantly so. 
SPC2, SPC3 and \texttt{prqfn} perform similarly in most cases, and they 
appear to have a linear rate of increase.
Also, the relative performance between each method does not vary a lot as
the data type changes.

Notice that for the skewed data, when $d = 50$, SPC2 runs much slower than 
when $d = 10$.
The reason for this is that, for conditioning-based methods, the running 
time is composed of two parts, namely, the time for conditioning and the 
time for solving the subproblem.
For SPC2, an ellipsoid rounding needs to be applied on a smaller data set
whose larger dimension is a polynomial of $d$.
When the sampling size $s$ is small, \emph{i.e.}, the size of the subproblem 
is not too large, the dominant running time for SPC2 will be the time for ellipsoid 
rounding, and as $d$ increase (by, say, a factor of $5$) we expect a worse 
running time.
Notice also that, for all the methods, the running time does not vary a lot 
when $\tau$ changes.
Finally, notice that all the conditioning-based methods run faster on 
skewed data, especially when $d$ is small.
The reason is that the running time for these three methods is of the order 
of input-sparsity time, and the skewed data are very sparse.

\paragraph{When the lower dimension $d$ changes\\}

Finally, we will describe the scaling of the running time as the lower 
dimension $d$ changes.
To do so, we fixed $n = 1e6$ and the sampling size $s = 1e4$.
We let all five methods run on the data set with $d$ varying from $5$ up to 
$180$.
When $d \approx 200$, the scaling was such that all the methods 
except for SPC1 and SPC3 became too expensive.
Thus, we let only SPC1 and SPC3 run on additional data sets with $d$ up to 
$270$.
The plots are shown in Figure~\ref{time_d}.

 \begin{figure}[h!tbp]
  \begin{center}
   \begin{tabular}{ccc}
  \subfigure[$\tau  = 0.5$]{
   \includegraphics[width=0.3\textwidth] {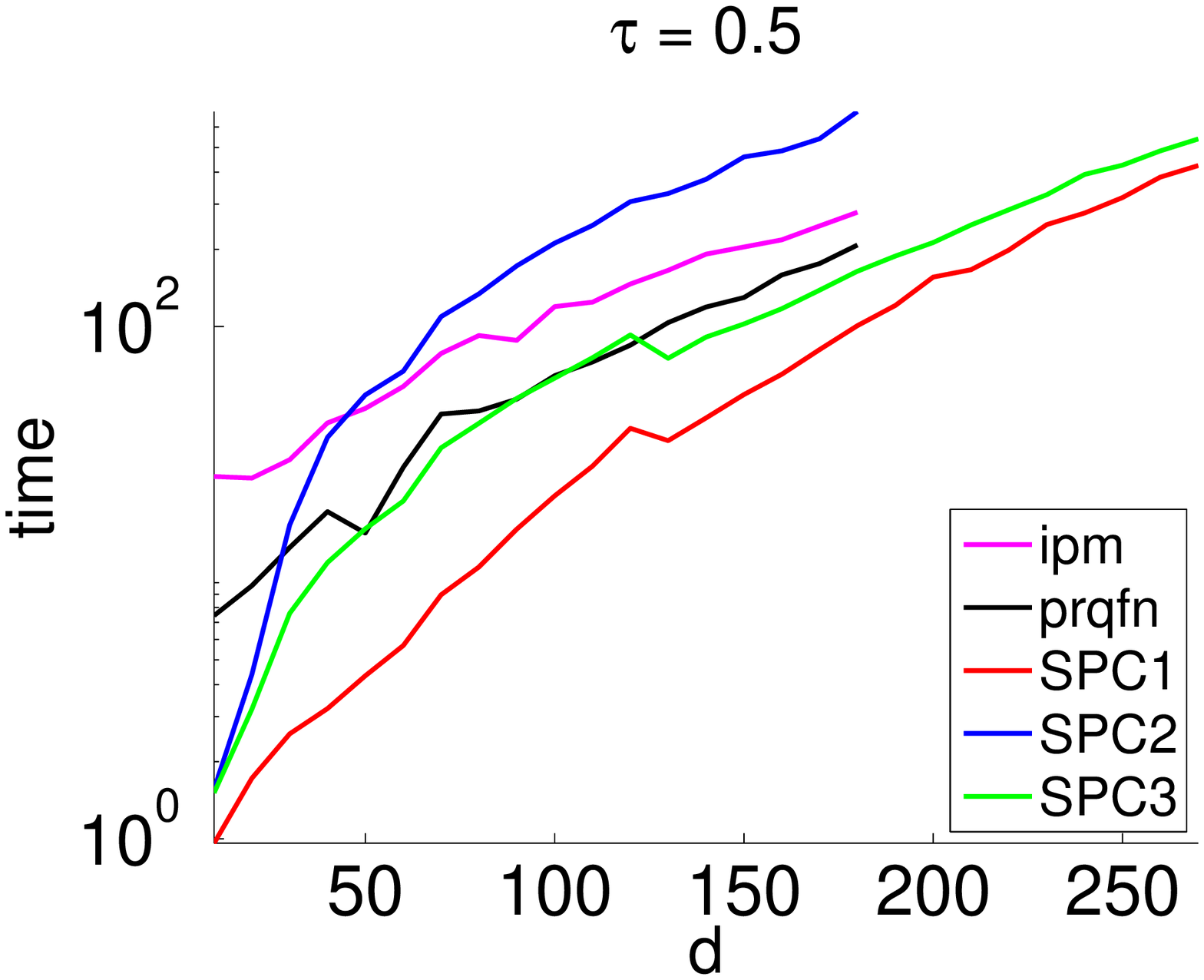}
 }
 &
\subfigure[$\tau  = 0.75$]{
   \includegraphics[width=0.3\textwidth] {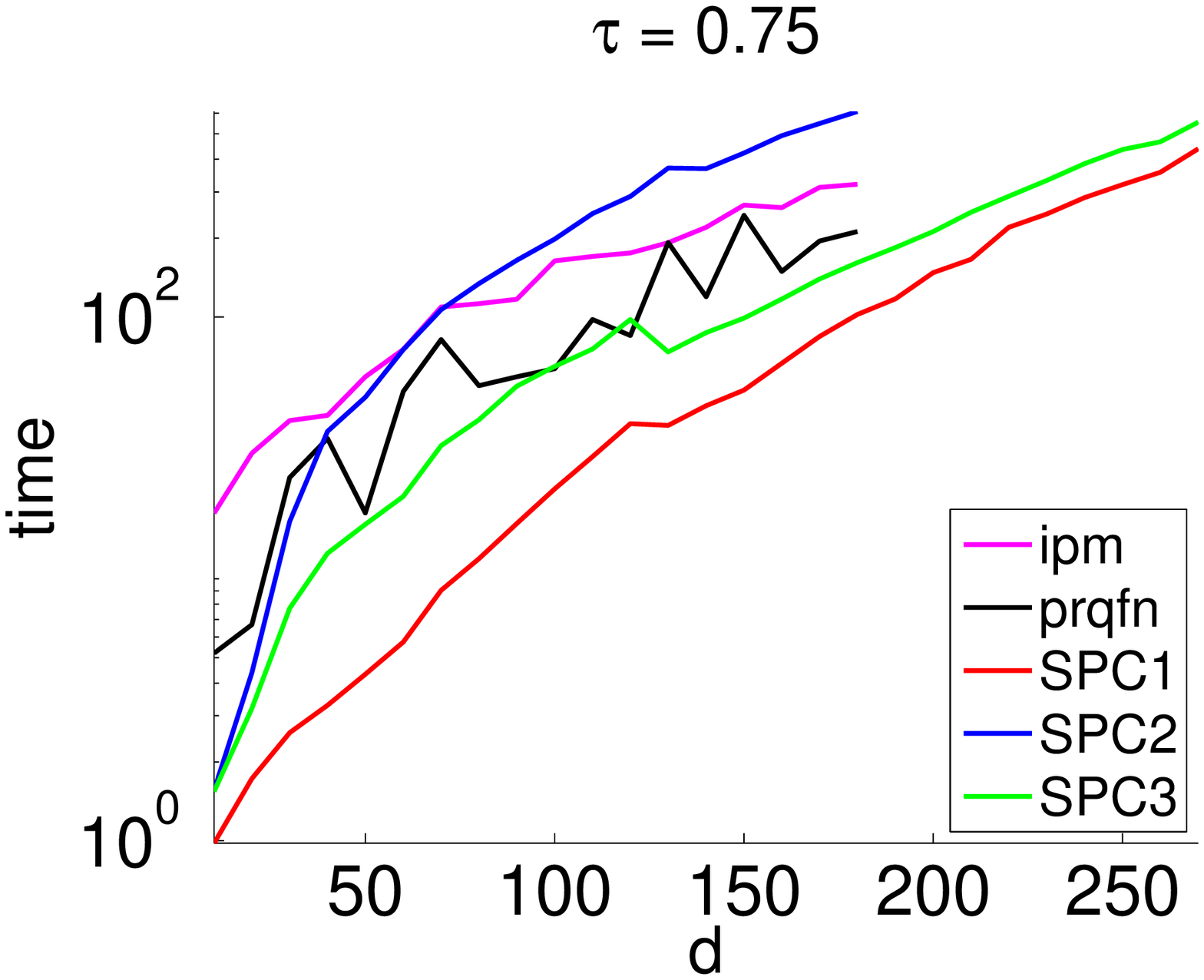}
 }
 &
 \subfigure[$\tau  = 0.95$]{
   \includegraphics[width=0.3\textwidth] {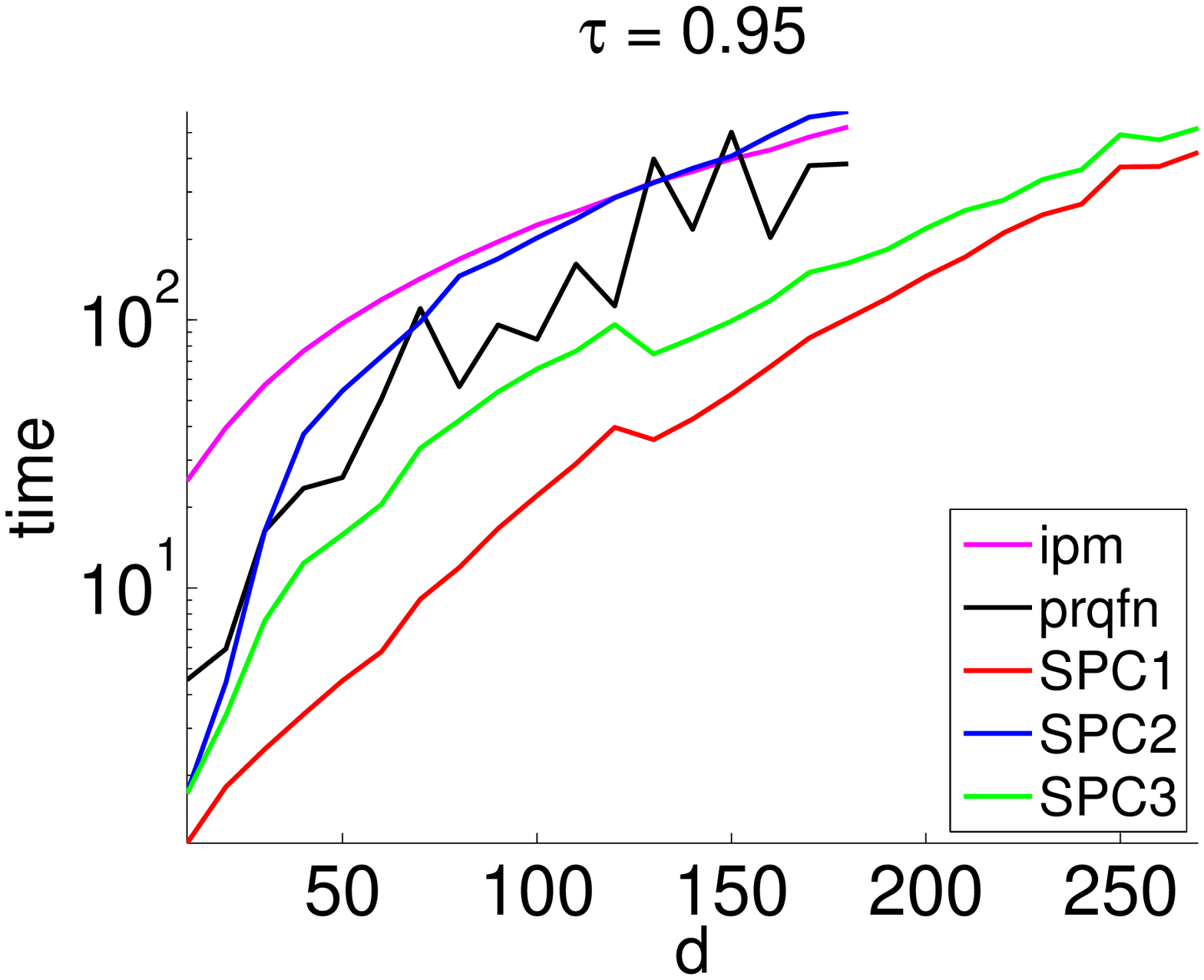}
 }
  \end{tabular}
  \end{center}
  \caption{The running time for five methods
    (\texttt{ipm}, \texttt{prqfn}, SPC1, SPC2, and SPC3) for solving skewed data, with $n = 1e6$, $s = 1e4$,
     when $d$ varies. SPC1 and SPC3 show better scaling than other methods when $d<180$.
      For this reason, we keep running the experiments
     for SPC1 and SPC3 until $d = 270$.
     When $d<100$, the three conditioning-based methods can yield 2-digit accuracy.
     When for $d \in [100, 180]$, they can yield 1-digit accuracy.
     }
  \label{time_d}
 \end{figure}

From the plots in Figure~\ref{time_d}, we can see that when $d < 180$, SPC1 
runs significantly faster than any other method, followed by SPC3 and 
\texttt{prqfn}.
The performance of \texttt{prqfn} is quite variable.
The reason for this is that there is a step in \texttt{prqfn} that involves 
uniform sampling, and the number of subproblems to be solved in each time 
might vary a lot.
The scalings of SPC2 and \texttt{ipm} are similar, and when $d$ gets much 
larger, say $d>200$, they may not be favorable due to the running time.
When $d<180$, all the conditioning methods can yield at least 1-digit 
accuracy.
Although one can only get an approximation to the true solution by using 
SPC1 and SPC3, they will be a good choice when $d$ gets even larger, say up 
to several hundred, as we shown in Figure~\ref{time_d}.
We note that we could let $d$ get even larger for SPC1 and SPC3, 
demonstrating that SPC1 and SPC3 is able to run with a much larger lower 
dimension than the other methods.

\noindent
\textbf{Remark.}
One may notice a slight but sudden change in the running time for SPC1 and 
SPC3 at $d \approx 130$.
After we traced down the reason, we found out that the difference come from the time in the conditioning step
(since the subproblems they are solving have similar size), especially the time for performing the QR factorization.
At this size, it will be normal to take more time to factorize a slightly smaller matrix due to the structure of cache line, and it is for this reason that we see that minor decrease in running time with increasing $d$.
We point out that the running time of our conditioning-based algorithm is mainly affected by the time for the conditioning step.
That is also the reason why it does not vary a lot when $\tau$ changes.


\subsection{Evaluation on solution of Census data}
\label{census_data}

Here, we will describe more about
the accuracy on the census data when SPC3 is applied to it.
The size of the census data is roughly $5e6 \times 11$.

We will generate plots that are similar to those appeared in \cite{KH01}.
For each coefficient, we will compute a few quantities of it, as a function of $\tau$,
when $\tau$ varies from 0.05 to 0.95.
We compute a point-wise 90 percent confidence interval
for each $\tau$ by bootstrapping.
These are shown as the shaded area in each subfigure.
Also, we compute the quartiles of the approximated solutions by using SPC3 from 200 independent trials
with sampling size $s = 5e4$
to show how close we can get to the confidence interval.
In addition, we also show the solution to Least Square regression (LS)
and Least Absolute Deviations regression (LAD) on the same problem. 
The plots are shown in Figure~\ref{ci}.

\begin{figure}[h!tbp]
 \begin{center}
  \begin{tabular}{ccc}
\subfigure[Intercept]{
   \includegraphics[width=0.3\textwidth] {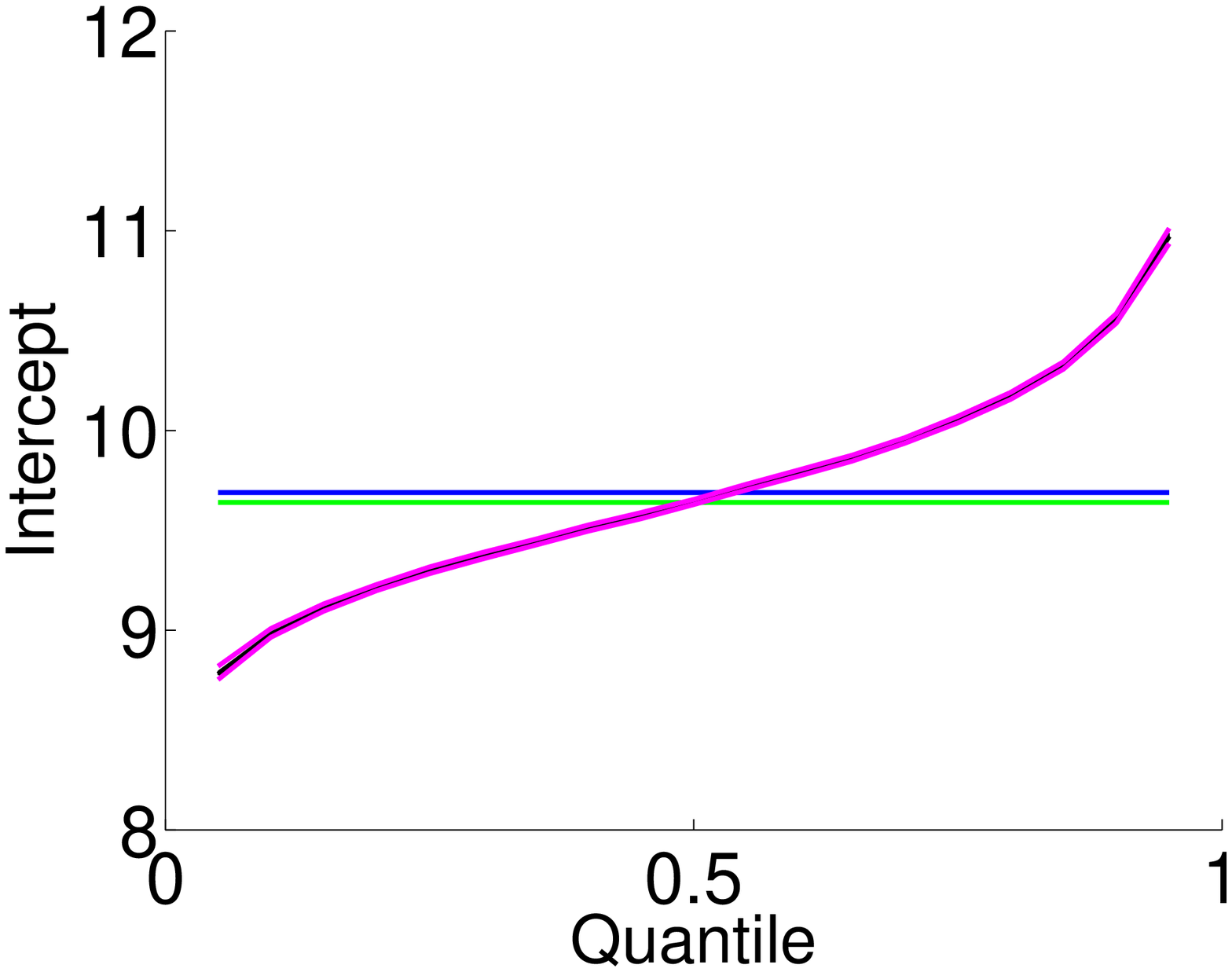}
 }
 &
\subfigure[Sex]{
   \includegraphics[width=0.3\textwidth] {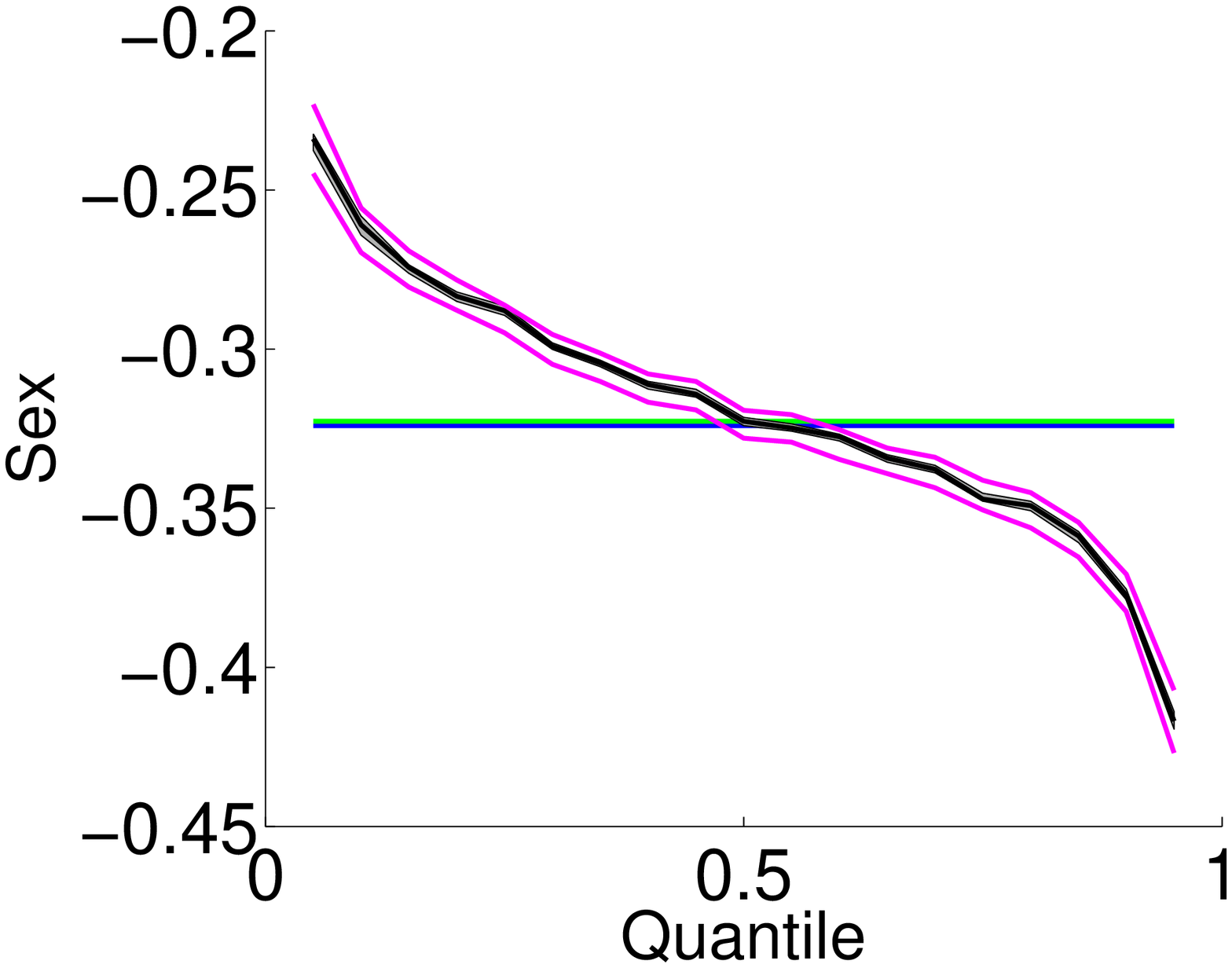}
 }
 &
\subfigure[Age $\in [30,40)$]{
   \includegraphics[width=0.3\textwidth] {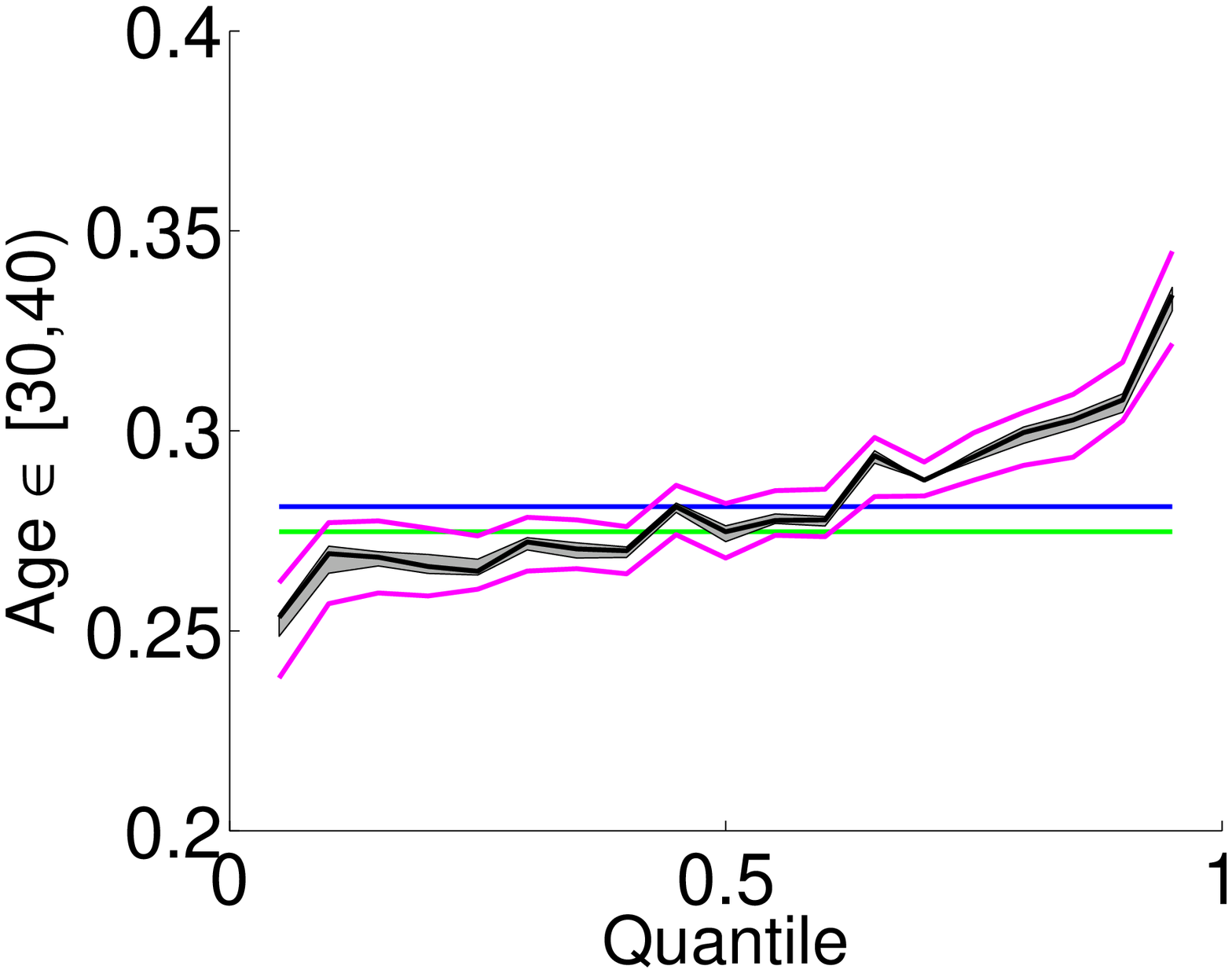}
 }
 \\
\subfigure[Age $\in [40,50)$]{
   \includegraphics[width=0.3\textwidth] {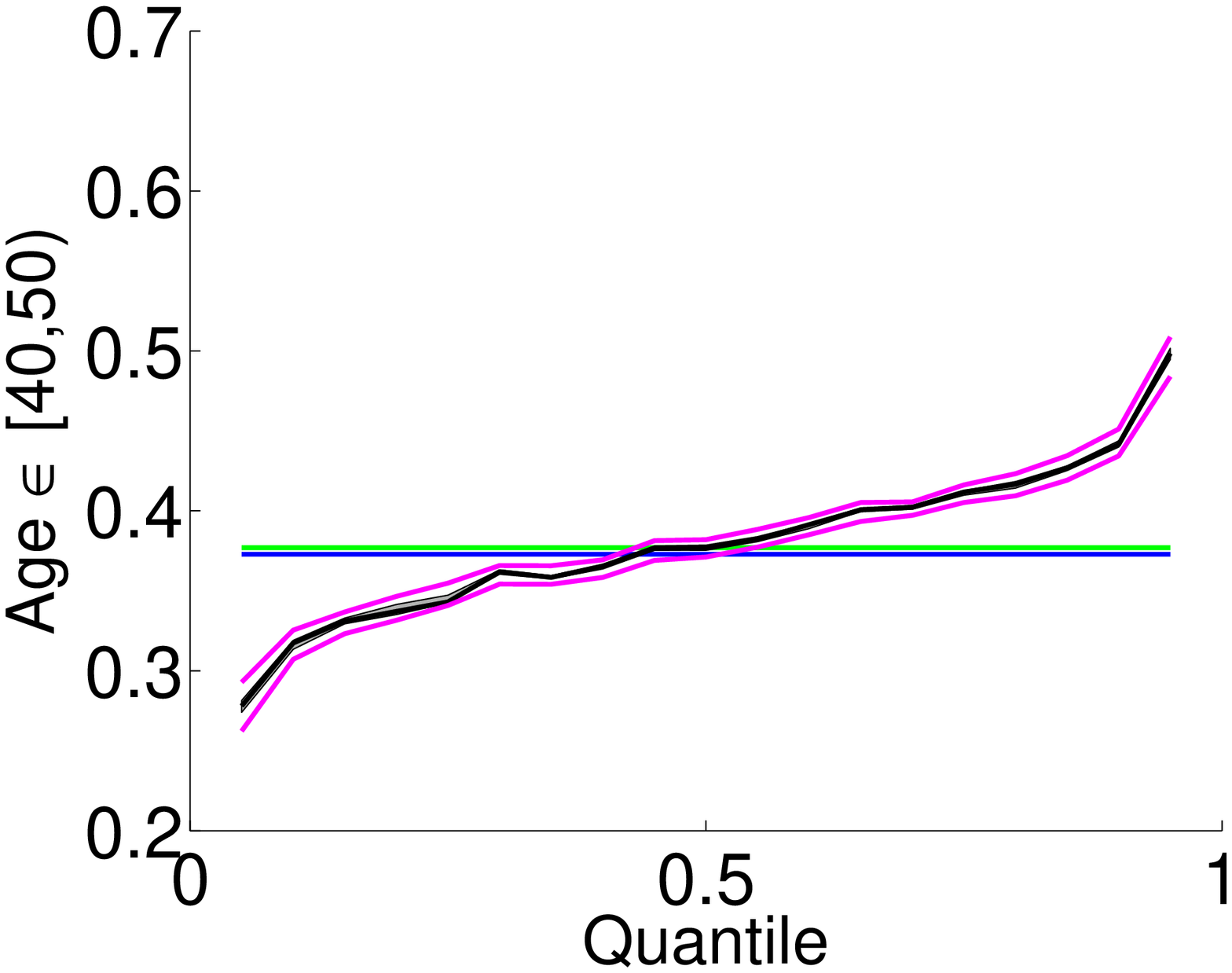}
 }
 &
\subfigure[Age $\in [50,60)$]{
   \includegraphics[width=0.3\textwidth] {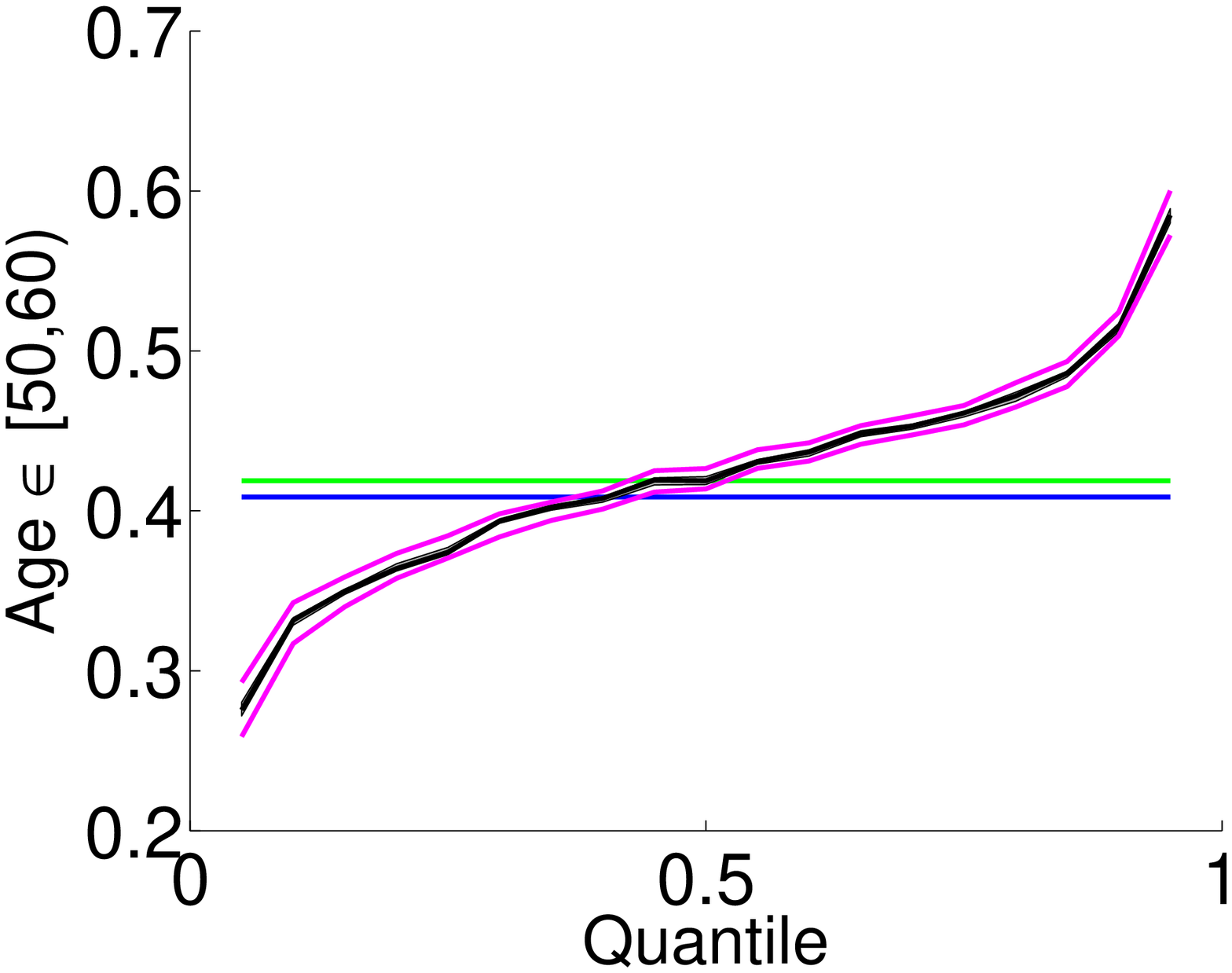}
 }
 &
\subfigure[Age $\in [60,70)$]{
   \includegraphics[width=0.3\textwidth] {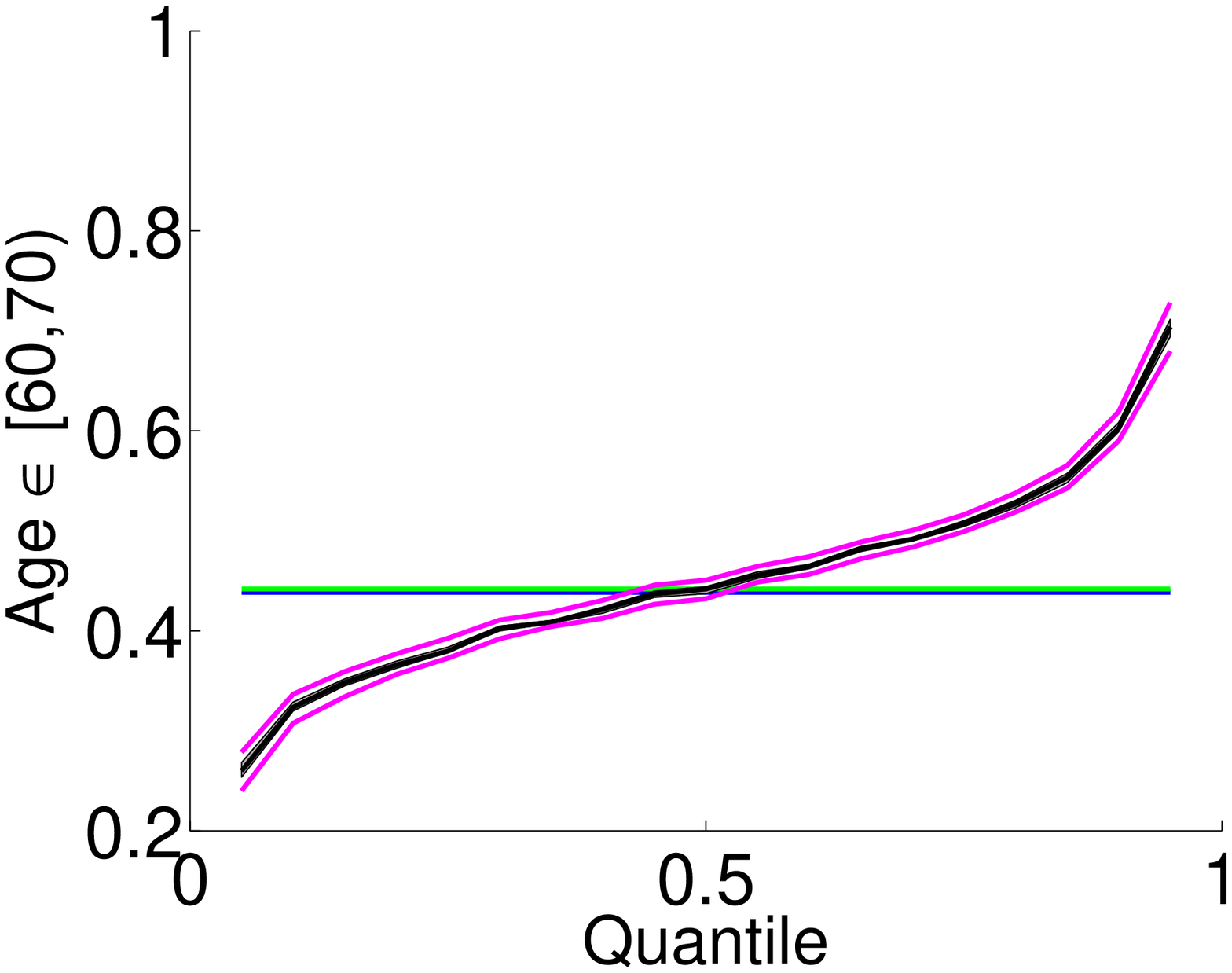}
 }
 \\
 \subfigure[Age $\geq 70$]{
   \includegraphics[width=0.3\textwidth] {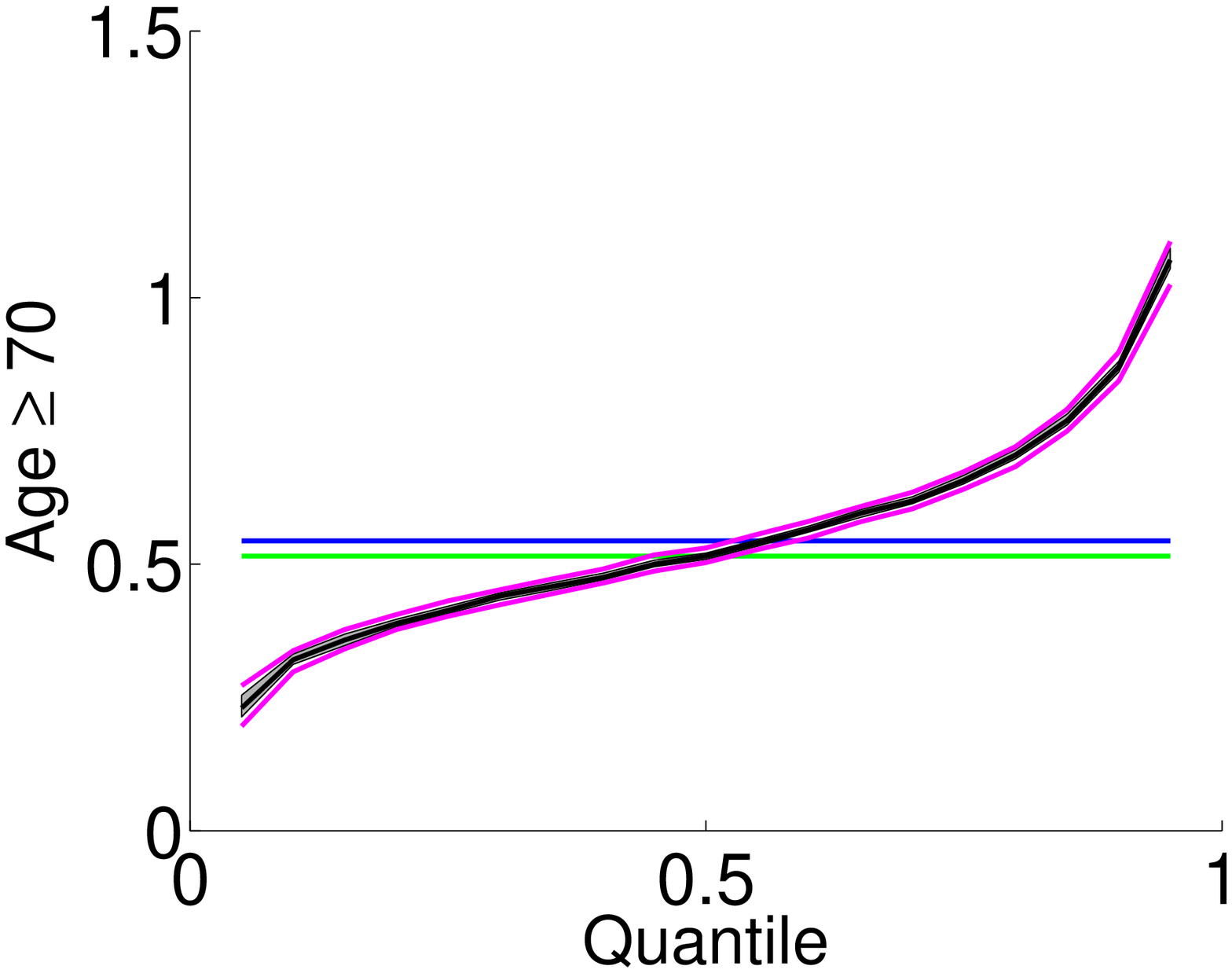}
 }
 &
\subfigure[Non\_white]{
   \includegraphics[width=0.3\textwidth] {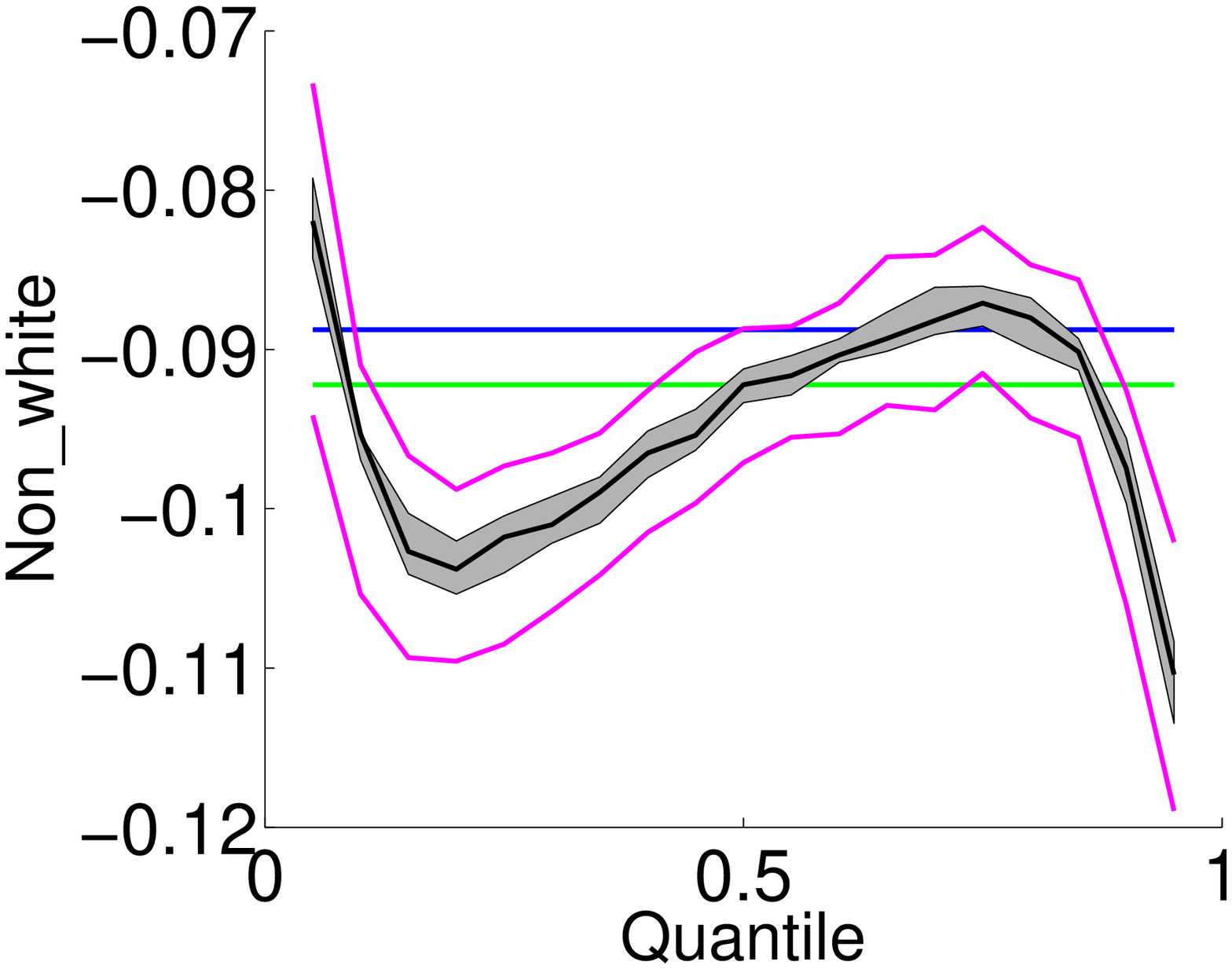}
 }
 &
\subfigure[Unmarried]{
   \includegraphics[width=0.3\textwidth] {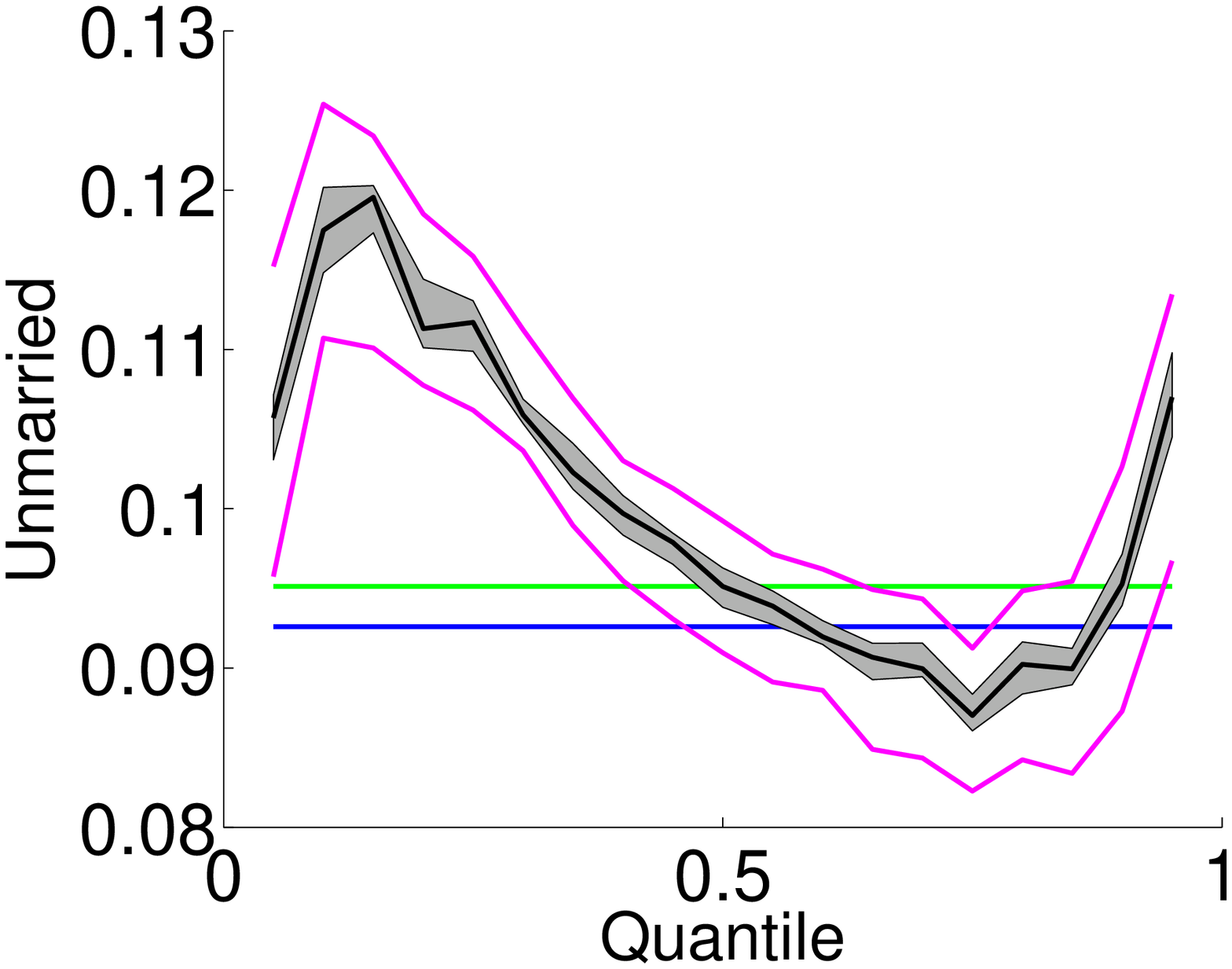}
 }
 \\
\subfigure[Education]{
   \includegraphics[width=0.3\textwidth] {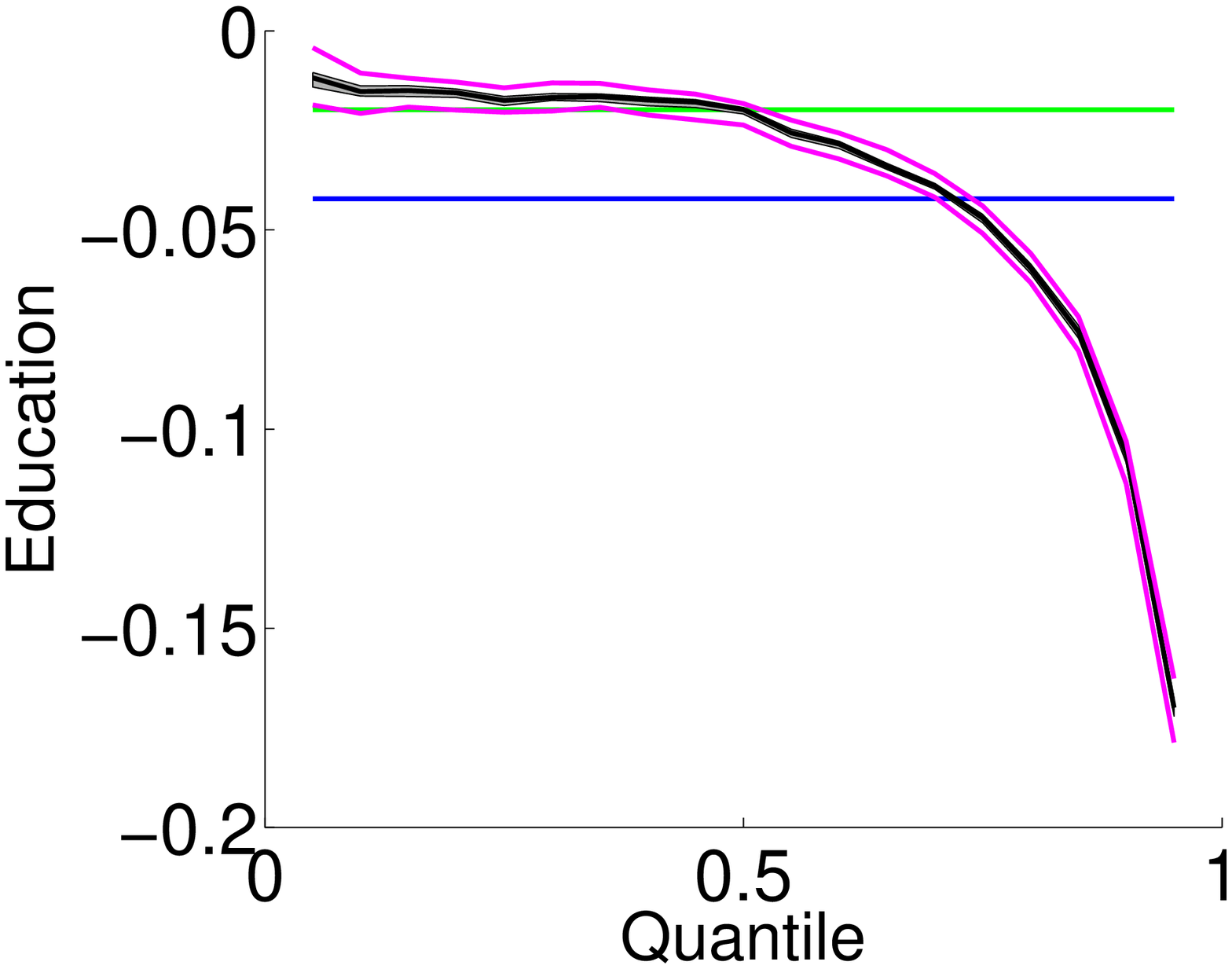}
 }   
 &
 \subfigure[Education$^2$]{
   \includegraphics[width=0.3\textwidth] {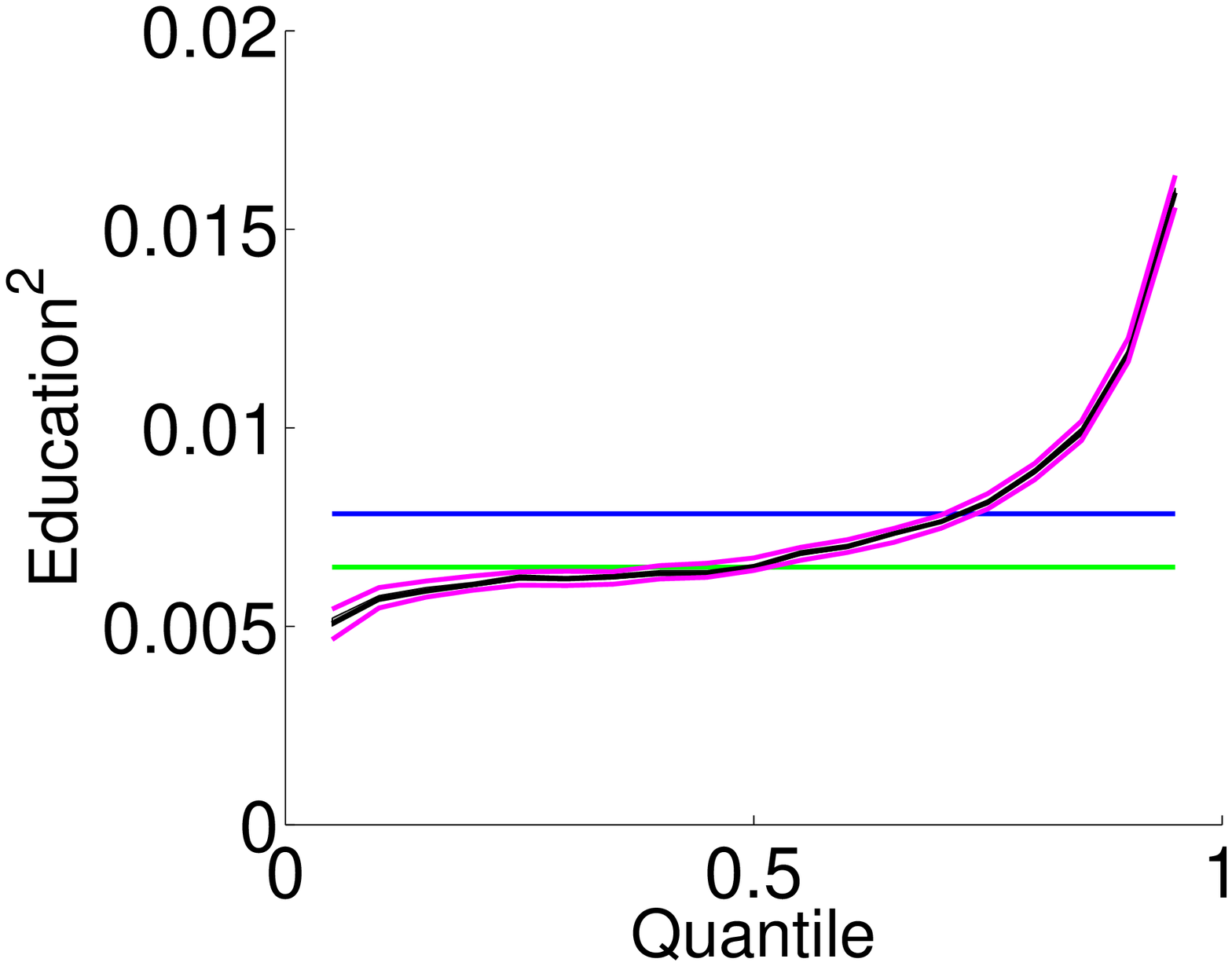}
 }  
 &
 \subfigure[Legend]{
   \includegraphics[width=0.3\textwidth] {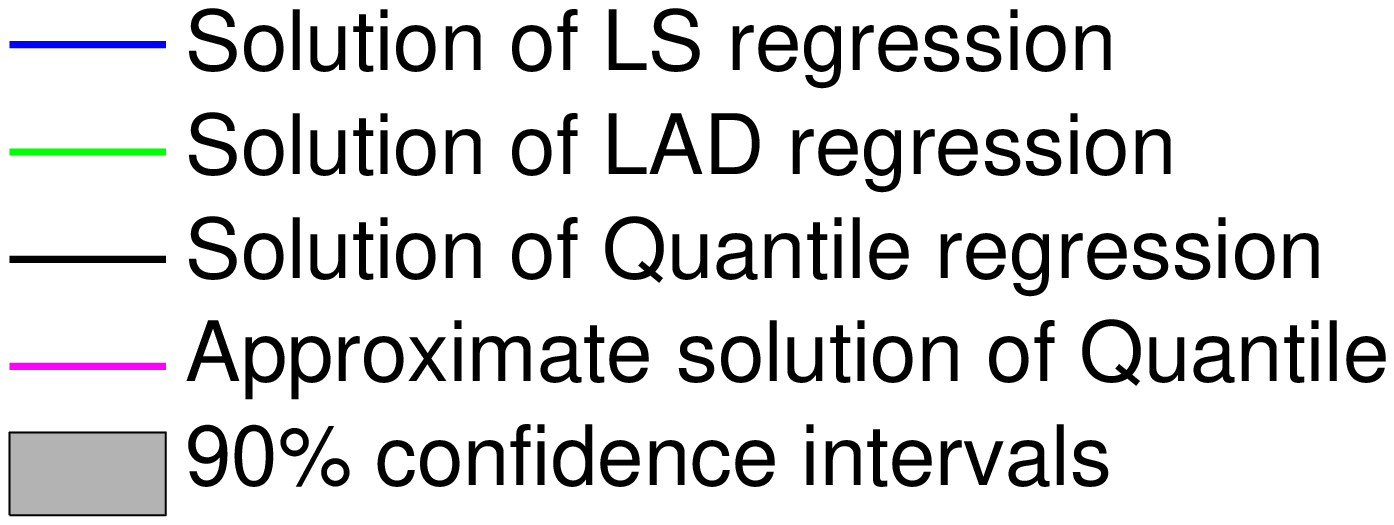}
 }  

  \end{tabular}
 \end{center}
 \caption{Each subfigure is associated with a coefficient in the census data. 
   The shaded area shows a point-wise 90\% confidence interval.
    The black curve inside is the true solution when $\tau$ changes from 0.05 to 0.95.
     The blue and green lines correspond to the $\ell_2$ and $\ell_1$ solution, respectively.
     The two magenta curves show the first and third quartiles of solutions obtained 
     by using SPC3, among 200 independent trials with sampling size $s = 5e4$ (about 1\% of the original data).
      }
   \label{ci}
 \end{figure}

From these plots we can see that, although the two quartiles are not inside
the confidence interval, they are quite close, even for this value of $s$.
The sampling size in each trial is only $5e4$ which is about 1 percent of the original data; while
for bootstrapping, we are resampling the same number of rows as in the original matrix with replacement.
In addition, the median of these 50 solutions is in the shaded area and close to the true solution.
Indeed, for most of the coefficients, SPC3 can generate 2-digit accuracy.
Note that we also computed the exact values of the quartiles; we don't 
present them here since 
they are very similar to those in Table~\ref{census_result} below in terms of accuracy.
See Table~\ref{census_result} in Section~\ref{large_emp} for more details.
All in all, SPC3 performs quite well on this real data.


\section{Empirical Evaluation on Large-scale Quantile Regression}
\label{large_emp}

In this section, we continue our empirical evaluation with an evaluation 
of our main algorithm applied to terabyte-scale problems.
Here, the data sets are generated by ``stacking'' the medium-scale data a 
few thousand times.
Although this leads to ``redundant'' data, which may favor sampling methods, this has the advantage that it
leads terabyte-sized problems whose optimal solution at different 
quantiles are known.
At this terabyte scale, \texttt{ipm} has two major issues: memory 
requirement and running time.
Although shared memory machines with more than a terabyte RAM exist, they 
are rare in practice (now in 2013).
Instead, the MapReduce framework is the \emph{de facto} standard parallel 
environment for large data analysis.
Apache Hadoop\footnote{Apache Hadoop, \url{http://hadoop.apache.org/}}, an 
open source implementation of MapReduce, is widely-used in practice. 
Since our sampling algorithm only needs several passes through the data and it 
is embarrassingly parallel, it is straightforward to implement it on Hadoop.

For a skewed data with size $1e6 \times 50$, we stack it vertically 2500 times.
This leads to a data with size $2.5e9 \times 50$.
In order to show the evaluations similar to Figure~\ref{err_s},
we still implement SC, SPC1, SPC2, SPC3, NOCO and UNIF.
Figure~\ref{err_s_large} shows the relative errors on the replicated skewed data set by using 
the six methods.
We only show the results for $\tau = 0.5$ and $0.75$ since the conditioning methods tend to
generate abnormal results when $\tau = 0.95$.
These plots correspond with and should be compared to the four subfigures in the first two rows and columns of 
Figure~\ref{err_s}. 

\begin{figure}[h!tbp]
 \begin{center}
 \begin{tabular}{ccc}
\subfigure[$\tau = 0.5$, $|f-f^*|/|f^*|$]{
   \includegraphics[width=0.45\textwidth] {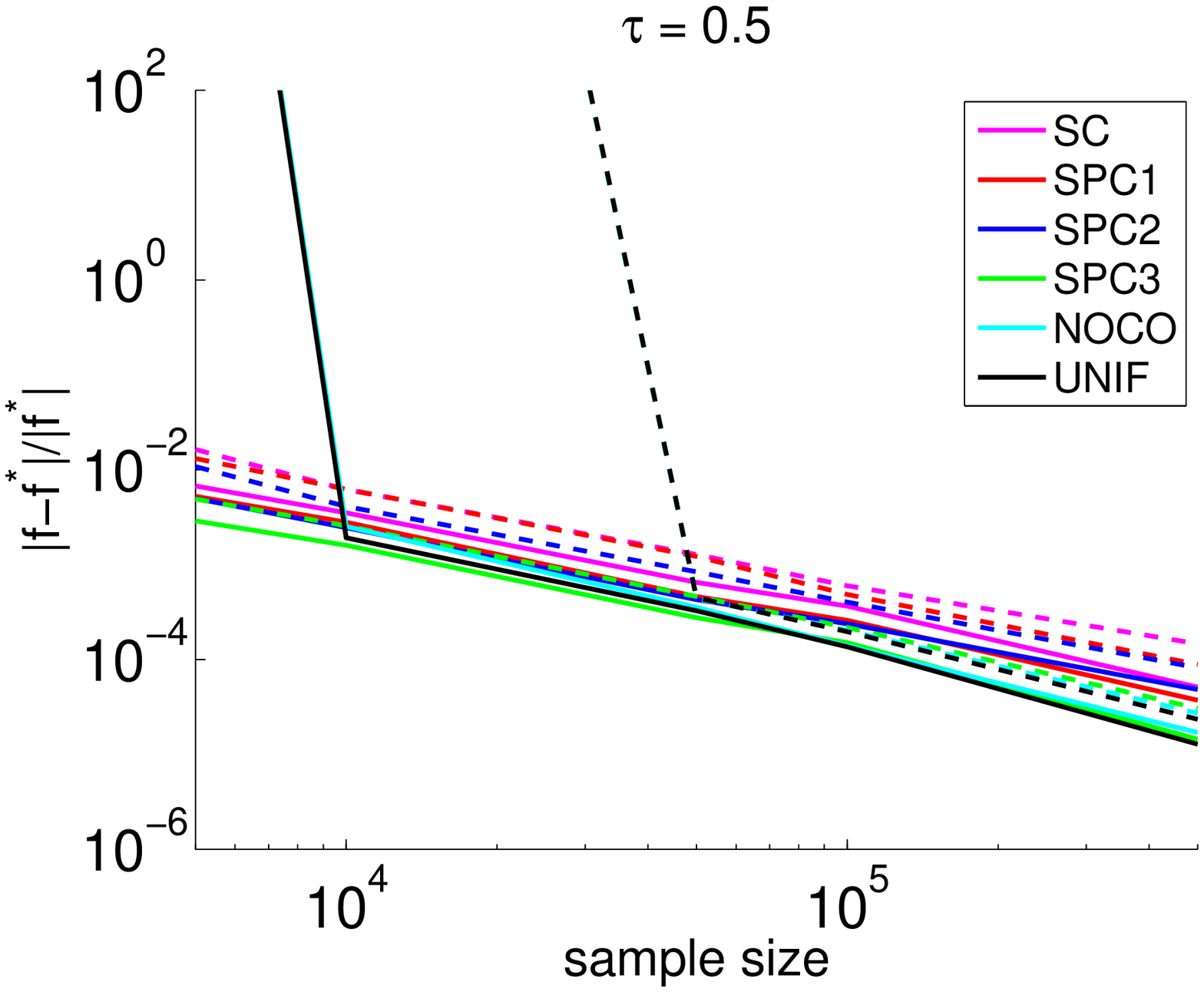}
 }
 &
 \subfigure[$\tau = 0.75$,  $|f-f^*|/|f^*|$]{
   \includegraphics[width=0.45\textwidth] {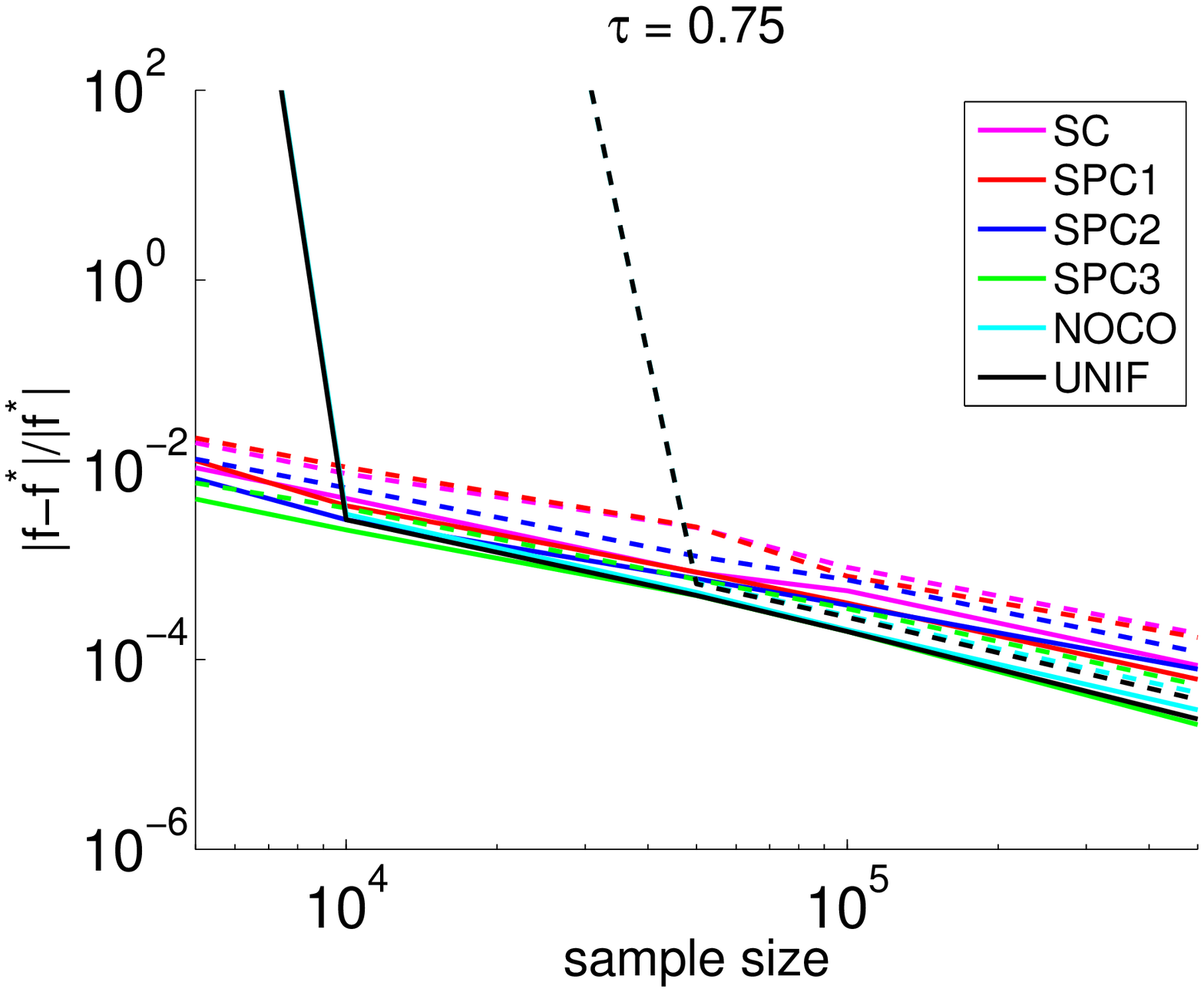}
 }
 \\
 \subfigure[$\tau = 0.5$, $\|x-x^*\|_2/\|x^*\|_2$]{
   \includegraphics[width=0.45\textwidth] {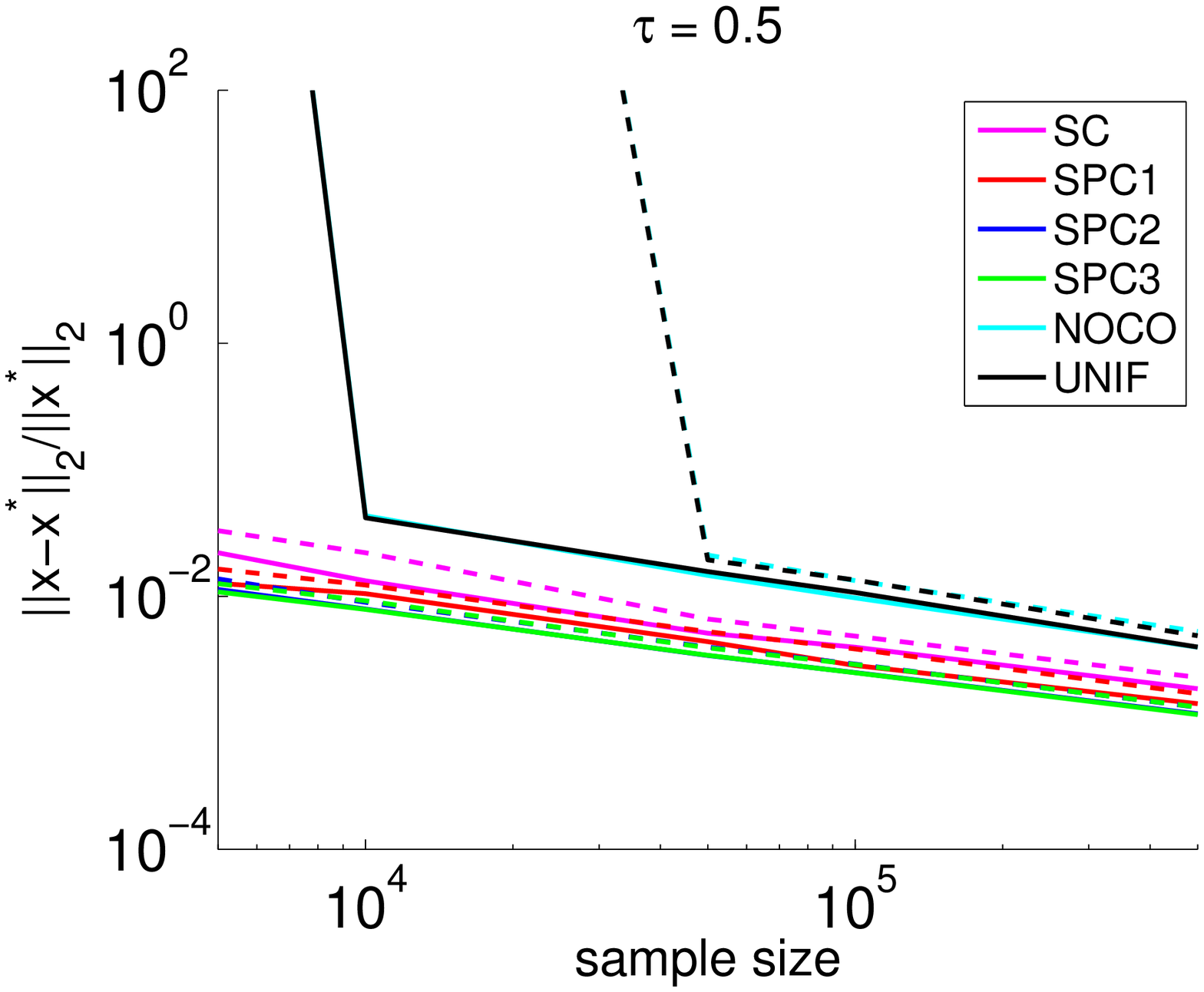}
 }
 &
 \subfigure[$\tau = 0.75$, $\|x-x^*\|_2/\|x^*\|_2$]{
   \includegraphics[width=0.45\textwidth] {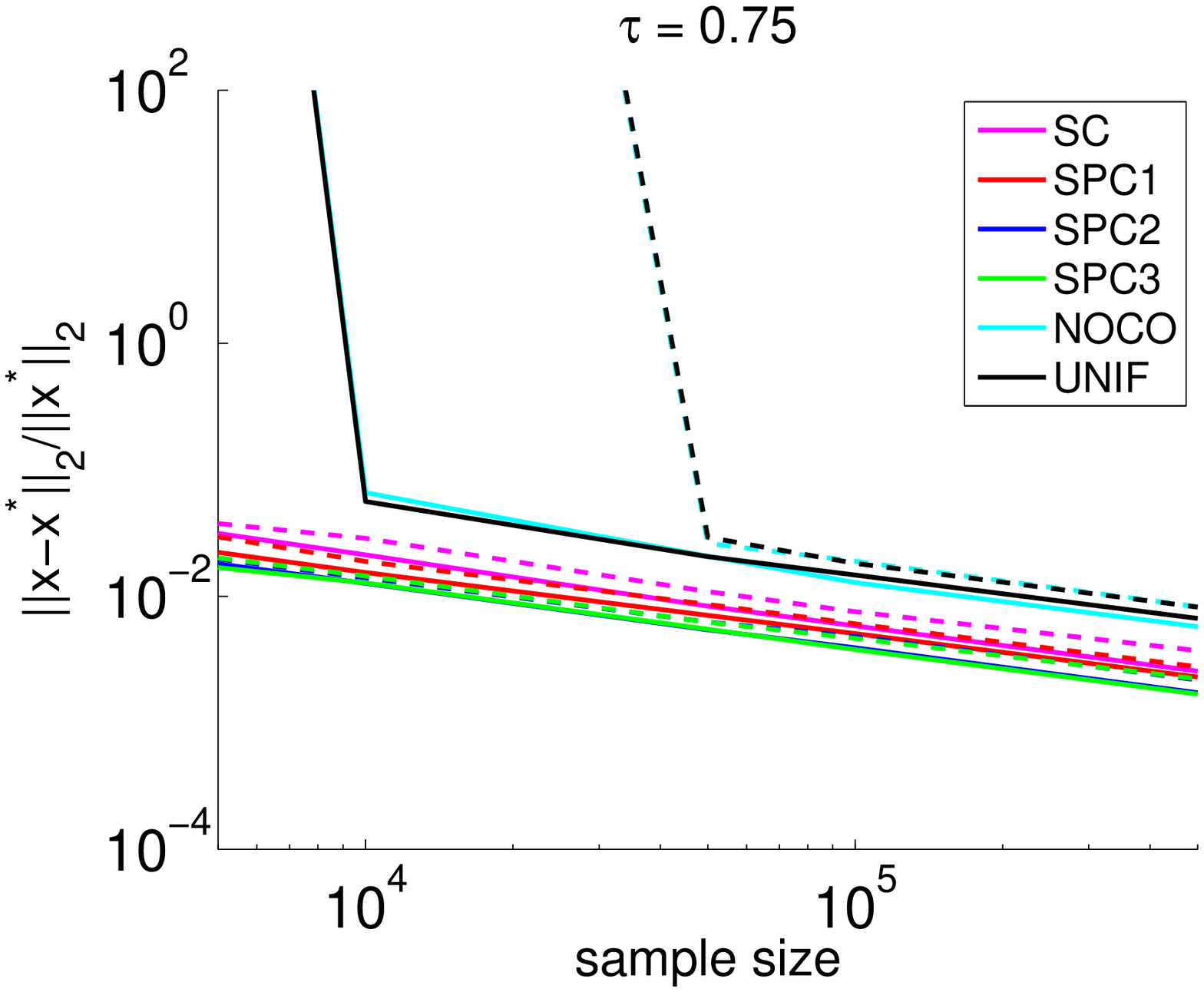}
 }
 \end{tabular}
 \end{center}
 \caption{ 
   The first (solid lines) and the third (dashed lines) quartiles of the relative errors of the objective value
   (namely, $|f-f^*|/|f^*|$) and solution vector (namely, $\|x-x^*\|_2/\|x^*\|_2$),
   by using 6 different methods, among 30 independent trials, as a function of the sample size $s$.
   The test is on replicated skewed data with size $2.5e9$ by 50.
   The three different columns correspond to $\tau = 0.5, 0.75$, respectively.
    }
   \label{err_s_large}
\end{figure}

As can be seen, the method preserves the same structure as when the method 
is applied to the medium-scale data.
Still, SPC2 and SPC3 performs slightly better than other methods when $s$ is large enough.
In this case, as before, NOCO and UNIF are not reliable when $s<1e4$.
When $s>1e4$, NOCO and UNIF perform sufficiently closely to the conditioning-based methods
on approximating the objective value.
However, the gap between the performance on approximating the solution vector is significant.

In order to show more detail on the quartiles of the relative errors, we 
generated a table similar to Table~\ref{sol_table} which records the 
quartiles of relative errors on vectors, measured in $\ell_1$, $\ell_2$, and 
$\ell_\infty$ norms by using the six methods when the sampling size $s = 5e4$ and $\tau = 0.75$.
Table~\ref{sol_table_large} shows similar quantities to and should be compared with Table~\ref{sol_table}.
Conditioning-based methods can yield 2-digit accuracy when $s = 5e4$ while NOCO and UNIF cannot.
Also, the relative error is somewhat higher when measured in $\ell_\infty$ norm.

\begin{table}[ht]
\begin{center}
\begin{sc}
\small
\begin{tabular}{c|ccc}
   &  $\|x - x^*\|_2/\|x^*\|_2$ & $\|x - x^*\|_1/\|x^*\|_1$ & $\|x - x^*\|_\infty/\|x^*\|_\infty$ \\
\hline
SC &  [0.0084,  0.0109] & [0.0075,  0.0086] & [0.0112,  0.0159] \\
SPC1 &  [0.0071,  0.0086] & [0.0066, 0.0079] & [0.0080, 0.0105] \\
SPC2 &  [0.0054,   0.0063] &  [0.0053,  0.0061] & [0.0050, 0.0064] \\
SPC3 & [0.0055,   0.0062]  &  [0.0054,  0.0064] &  [0.0050,  0.0067] \\
NOCO & [0.0207,  0.0262] &  [0.0163,  0.0193] &  [0.0288,  0.0397] \\
UNIF & [0.0206,   0.0293]  &  [0.0175,  0.0200] &   [0.0242,  0.0474] \\
\end{tabular}
\end{sc}
\end{center}
\caption{The first and the third quartiles of relative errors of the 
solution vector, measured in $\ell_1$, $\ell_2$, and $\ell_\infty$ norms.
The test is on replicated synthetic data with size $2.5e9$ by 50,
 the sampling size $s = 5e4$, and $\tau = 0.75$.
}
\label{sol_table_large}
\end{table}

Next, we will explore how the accuracy may change as the lower dimension $d$ varies,
and the capacity of our large-scale version algorithm.
In this experiment, we fix the higher dimension of the replicated skewed data to be $1e9$,
and let $d$ take values in $10, 50, 100, 150$.
We will only use SPC2 as it has the relative best condition number.
Figure~\ref{err_s_large_spc2} shows the results of the experiment described above.

\begin{figure}[h!tbp]
 \begin{center}
 \begin{tabular}{ccc}
\subfigure[$\tau = 0.5$, $|f-f^*|/|f^*|$]{
   \includegraphics[width=0.45\textwidth] {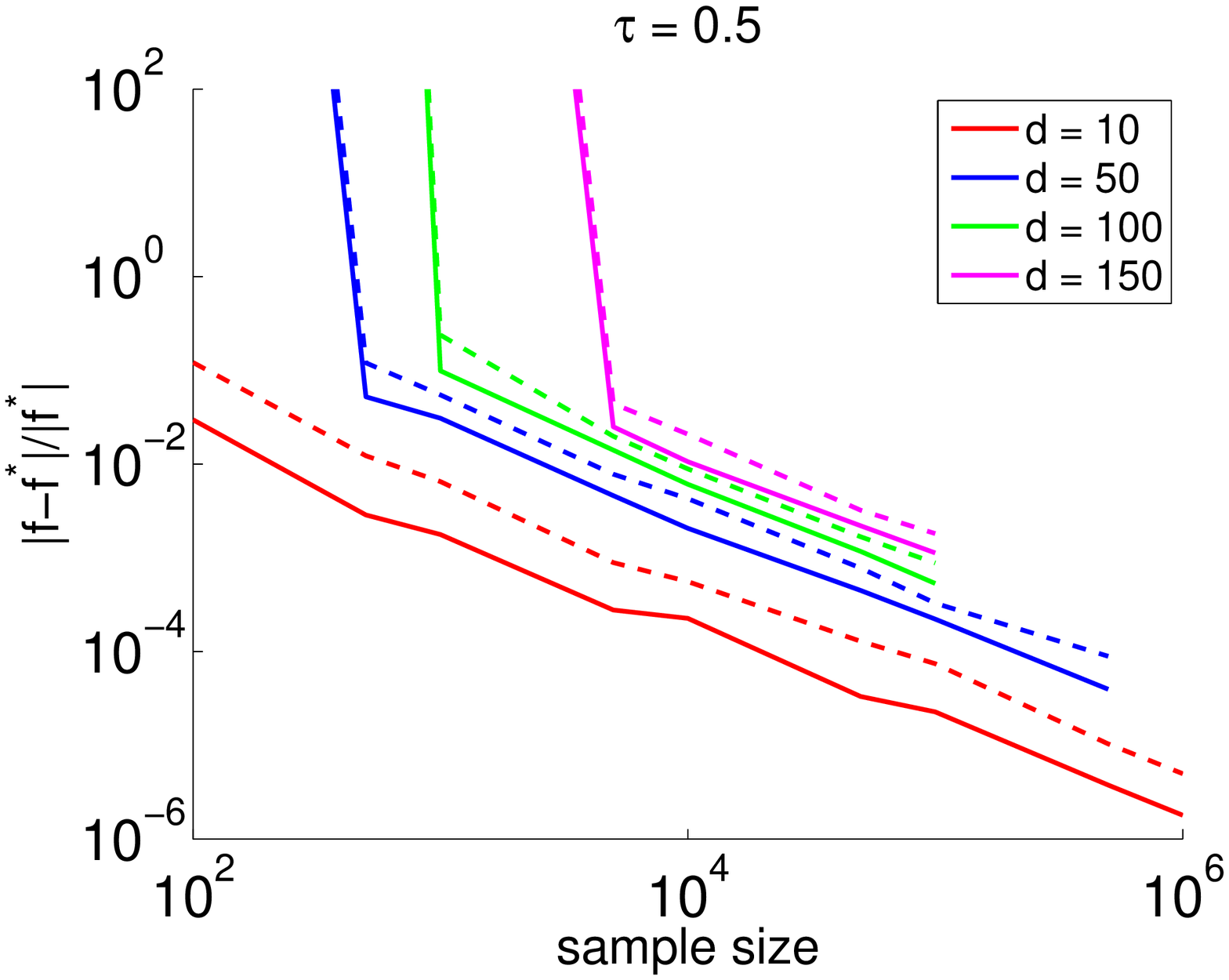}
 }
 &
 \subfigure[$\tau = 0.75$,  $|f-f^*|/|f^*|$]{
   \includegraphics[width=0.45\textwidth] {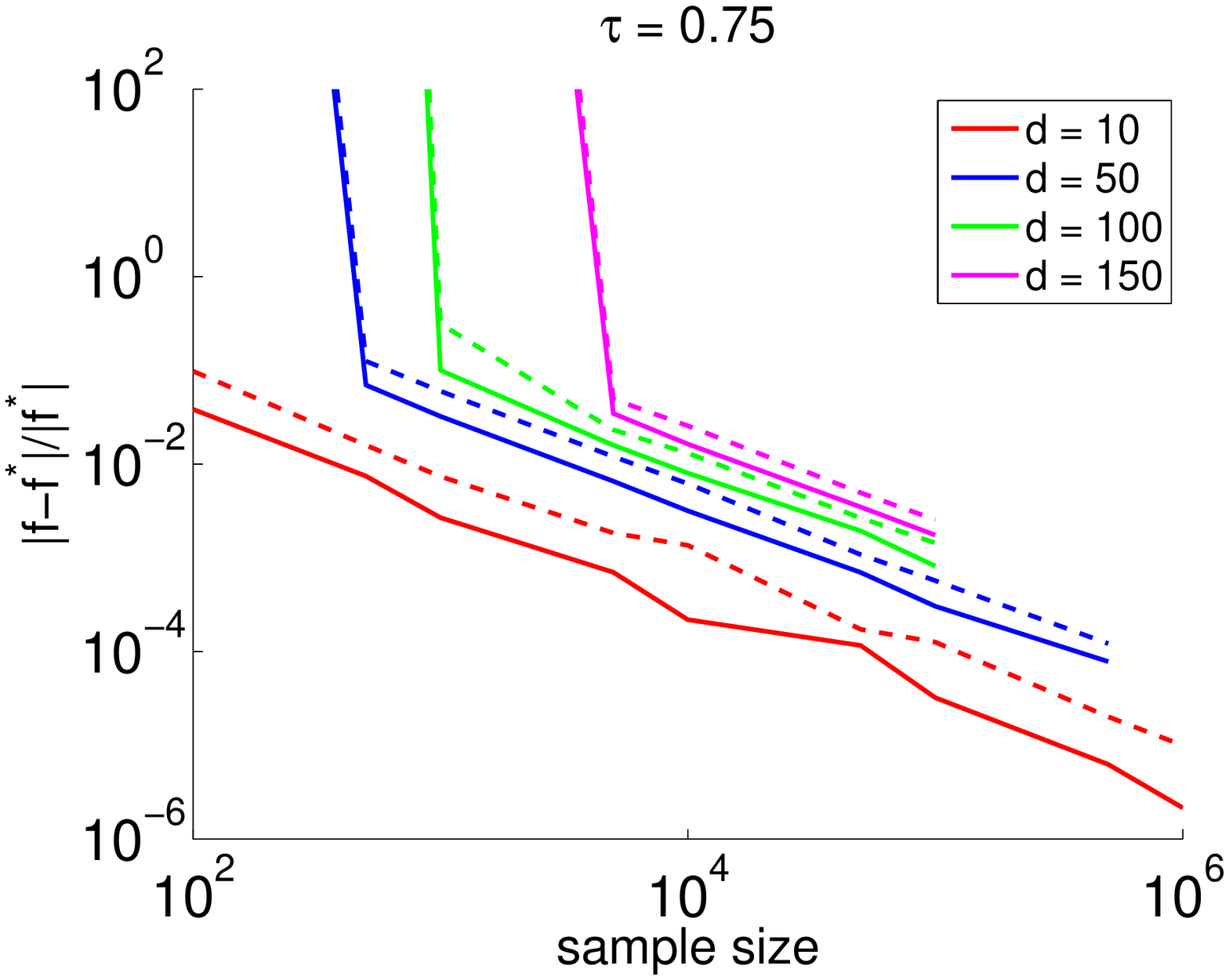}
 }
 \\
 \subfigure[$\tau = 0.5$, $\|x-x^*\|_2/\|x^*\|_2$]{
   \includegraphics[width=0.45\textwidth] {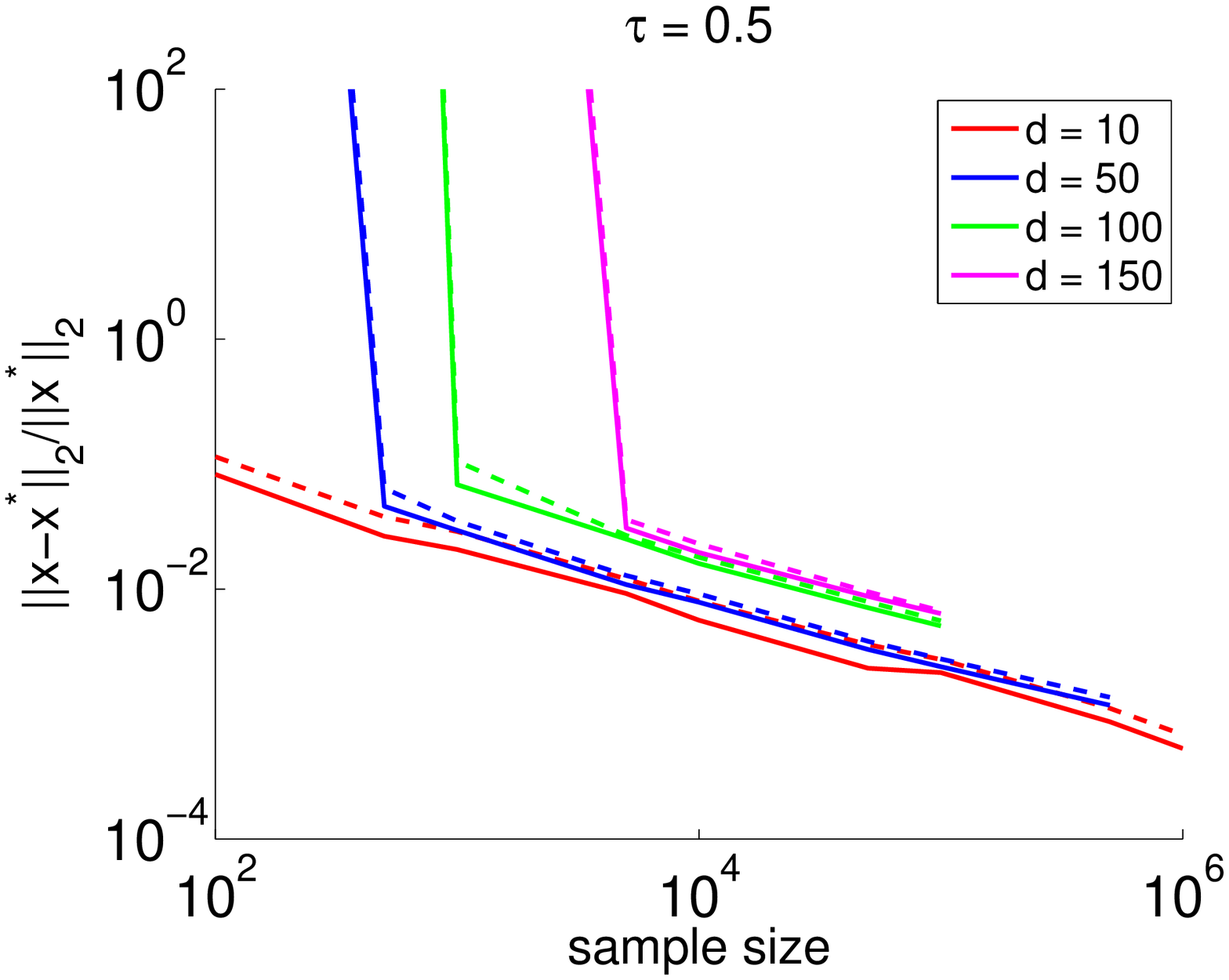}
 }
 &
 \subfigure[$\tau = 0.75$, $\|x-x^*\|_2/\|x^*\|_2$]{
   \includegraphics[width=0.45\textwidth] {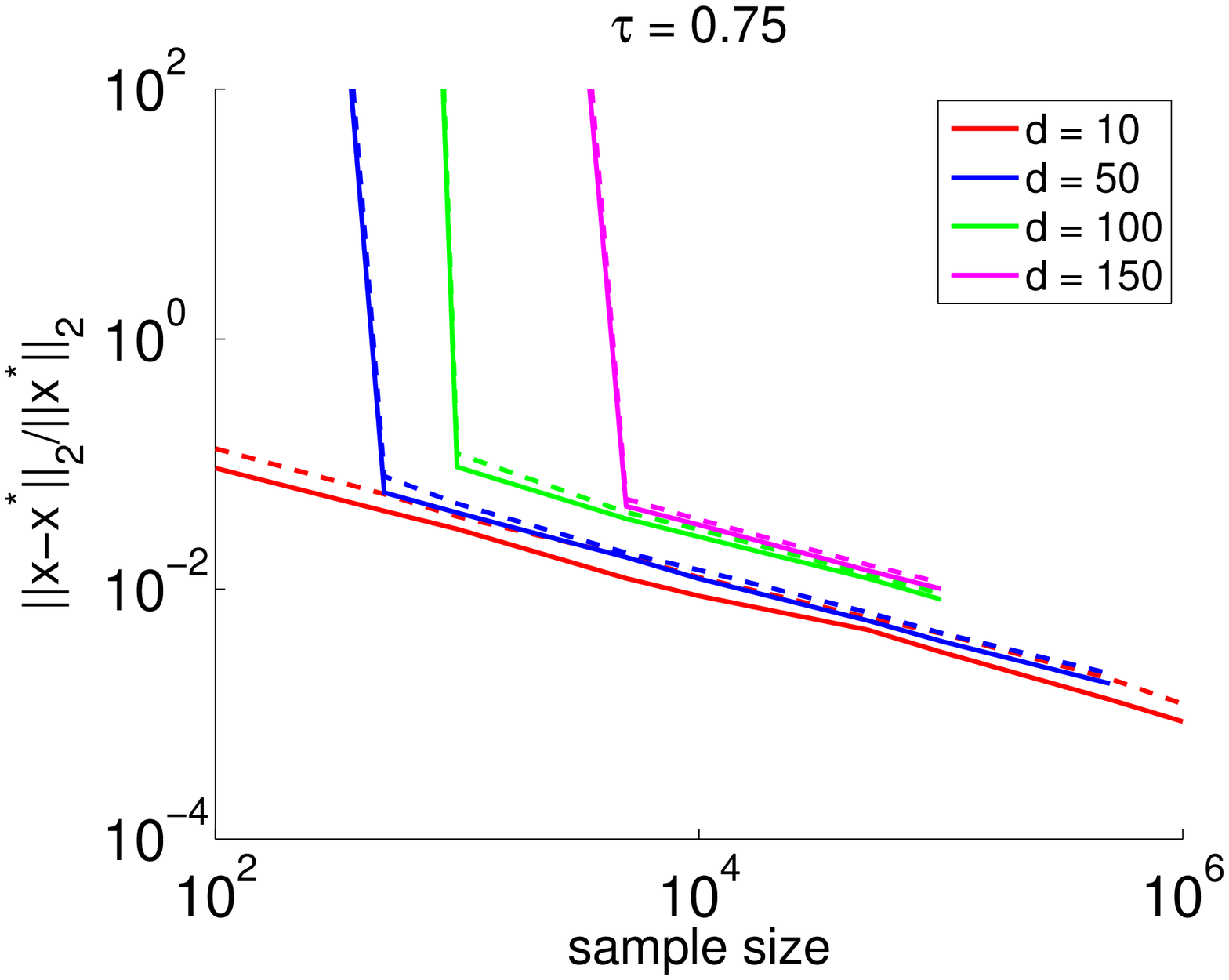}
 }
 \end{tabular}
 \end{center}
 \caption{ 
   The first (solid lines) and the third (dashed lines) quartiles of the relative errors of the objective value
   (namely, $|f-f^*|/|f^*|$) and solution vector (namely, $\|x-x^*\|_2/\|x^*\|_2$),
   by using SPC2, among 30 independent trials, as a function of the sample size $s$.
   The test is on replicated skewed data with $n=1e9$ and $d=10,50,100,150$.
   The two different columns correspond to $\tau = 0.5, 0.75$, respectively.
   The missing points mean that the subproblem on such sampling size
   with corresponding $d$ is unsolvable in RAM.
    }
   \label{err_s_large_spc2}
\end{figure}

From Figure~\ref{err_s_large_spc2},
except for some obvious fact such as the accuracies become lower as $d$ increases
when the sampling size is unchanged, we should also notice that,
the lower $d$ is, the higher the minimum sampling size required to yield acceptable relative errors will be.
For example, when $d=150$, we need to sample at least $1e4$ rows in order to obtain at least one digit accuracy.

Notice also that, there are some missing points in the plot.
That means we cannot solve the subproblem at that sampling size with certain $d$.
For example, solving a subproblem with size $1e6$ by 100 is unrealistic on a single machine.
Therefore, the corresponding point is missing.
Another difficulty we encounter is the capability of conditioning on a single machine.
Recall that, in Algorithm~\ref{cond_alg}, we need to perform QR factorization or ellipsoid rounding on
a matrix, say $SA$, whose size is determined by $d$.
In our large-scale version algorithm, since these two procedures are not parallelizable,
we have to perform these locally.
When $d=150$, the higher dimension of $SA$ will be over $1e7$.
Such size has reached the limit of RAM for performing QR factorization or ellipsoid rounding.
Hence, it prevents us from increasing the lower dimension $d$.

For the census data, we stack it vertically 2000 times to construct 
a realistic data set whose size is roughly $1e10 \times 11$.
In Table~\ref{census_result}, we present the solution computed by our 
randomized algorithm with a sample size $1e5$ at different quantiles, along 
with the corresponding optimal solution. 
As can be seen, for most coefficients, our algorithm provides at least 2-digit 
accuracy. 
Moreover, in applications such as this, the quantile regression result reveals some 
interesting facts about these data.
For example, for these data, marriage may entail a higher salary in lower 
quantiles; Education$^2$, whose value ranged from $0$ to $256$, has a strong 
impact on the total income, especially in the higher quantiles; and
the difference in age doesn't affect the total income much in lower 
quantiles, but becomes a significant factor in higher quantiles.

\begin{table}[h!tpb]
\begin{center}
\begin{sc}
\scriptsize
\begin{tabular}{c|ccccc} 
  Covariate  & $\tau = 0.1$ & $\tau = 0.25$ & $\tau = 0.5$ & $\tau = 0.75$  & $\tau = 0.9$ \\
\hline
    \multirow{2}{*}{intercept} &   8.9812 &    9.3022 &    9.6395 &   10.0515  &  10.5510  \\
            &    [8.9673, 8.9953] &   [9.2876, 9.3106]   &  [9.6337,  9.6484] & [10.0400, 10.0644]  & [10.5296,  10.5825] \\     
  \multirow{2}{*}{female}   & -0.2609 &  -0.2879  &  -0.3227 &   -0.3472  &  -0.3774  \\
   &   [ -0.2657,   -0.2549] &  [ -0.2924,   -0.2846]  &  [-0.3262,   -0.3185]  &  [-0.3481,   -0.3403] &  [ -0.3792,   -0.3708] \\
   \multirow{2}{*}{ Age $\in$ [30, 40)} &  0.2693 &   0.2649  &  0.2748  &  0.2936  &  0.3077   \\
  &  [0.2610,    0.2743]   &  [0.2613,    0.2723]  &   [0.2689,    0.2789]  &   [ 0.2903,    0.2981]  &   [0.3027,    0.3141]     \\
   \multirow{2}{*}{ Age $\in$ [40, 50) } &   0.3173  &  0.3431 &   0.3769  &  0.4118  &  0.4416  \\
  &  [0.3083,    0.3218]  &  [ 0.3407,    0.3561]  &   [ 0.3720,    0.3821]  &   [ 0.4066,    0.4162]  &   [ 0.4386,    0.4496]    \\
  \multirow{2}{*}{Age $\in$ [50, 60) } &   0.3316  &  0.3743  &  0.4188 &   0.4612   & 0.5145  \\
  &  [ 0.3190,    0.3400]  &   [ 0.3686,    0.3839]  &    [0.4118,    0.4266]  &   [0.4540,    0.4636]  &   [ 0.5071,    0.5230]   \\
  \multirow{2}{*}{ Age $\in$ [60, 70) } &  0.3237  &  0.3798   & 0.4418  &  0.5072 &   0.6027  \\
  &  [0.3038,    0.3387]  &    [0.3755,  0.3946]  &   [0.4329,    0.4497]  &   [0.4956,    0.5162]  &    [0.5840,    0.6176]   \\
  \multirow{2}{*}{ Age $\geq$ 70 } &  0.3206  &  0.4132 &   0.5152  &  0.6577 &   0.8699  \\
   &  [0.2962,    0.3455]  &   [0.4012,    0.4359]  &    [0.5036,    0.5308]  &   [ 0.6371,    0.6799]  &   [ 0.8385,    0.8996]  \\
   \multirow{2}{*}{ non\_white } &  -0.0953 &  -0.1018 &   -0.0922 &  -0.0871 &  -0.0975  \\
   & [-0.1023,   -0.0944]  &  [-0.1061,   -0.0975]  &  [-0.0985,   -0.0902]  &  [-0.0932,   -0.0860]  &   [-0.1041,   -0.0932]  \\
 \multirow{2}{*}{ married}  & 0.1175 &   0.1117 &   0.0951  &  0.0870  &  0.0953  \\
  &  [0.1121,    0.1238]  &   [ 0.1059,    0.1162 ]  &  [ 0.0918,    0.0989]  &    [0.0835,    0.0914]  &   [ 0.0909,    0.0987]  \\
  \multirow{2}{*}{ education } &  -0.0152 &   -0.0175 &   -0.0198 &  -0.0470 &  -0.1062  \\
 &  [ -0.0179,   -0.0117]  & [-0.0200,   -0.0149] &  [-0.0225,   -0.0189] &  [-0.0500,   -0.0448] &   [-0.1112,   -0.1032]    \\
  \multirow{2}{*}{ education$^2$ }  & 0.0057  &  0.0062  &  0.0065  &  0.0081 &   0.0119  \\
   & [0.0055,   0.0058]  &  [0.0061,    0.0064] &   [0.0064,    0.0066]  &  [0.0080,    0.0083] &   [0.0117,    0.0122]  \\
\end{tabular}
\end{sc}
\end{center}
  \caption{Quantile regression results for the U.S.~Census 2000 data. The response is the total annual income.
    Except for the intercept and the terms involved with education, all the covariates are $\{0, 1\}$ binary indicators.}
\label{census_result}
\end{table}

To summarize our large-scale evaluation, our main algorithm can handle terabyte-sized quantile regression
problems easily, obtaining, \emph{e.g.}, $2$ digits of accuracy by sampling 
about $1e5$ rows on a problem of size $1e10 \times 11$.  In addition, its running 
time is competitive with the best existing random sampling algorithms, and 
it can be applied in parallel and distributed environments.
However, its capability is restricted by the size of RAM since some steps of the algorithms
are needed to be performed locally.

\section{Conclusion}

We have proposed, analyzed, and evaluated new randomized algorithms for solving 
medium-scale and large-scale quantile regression problems.
Our main algorithm uses a subsampling technique that involves constructing 
an $\ell_1$-well-conditioned basis; and our main algorithm runs in nearly 
input-sparsity time, plus the time needed for solving a subsampled problem 
whose size depends only on the lower dimension of the design matrix.
The sampling probabilities used by our main algorithm are derived by 
calculating the $\ell_1$ norms of a well-conditioned basis; and this 
conditioning step is an essential step of our method.
For completeness, we have provided a summary of recently-proposed $\ell_1$ 
conditioning methods, and based on this we have introduced a new method 
(SPC3) in this article.

We have also provided a detailed empirical evaluation of our main algorithm.
This evaluation includes a comparison in terms of the quality of approximation
of several variants of our main algorithm that are obtained by applying 
several different conditioning methods.
The empirical results meet our expectation according to the theory.
Most of the conditioning methods, like our proposed method, SPC3, yield 
2-digit accuracy by sampling only 0.1\% of the data on our test problem.
As for running time, our algorithm is more scalable, when comparing to 
existing competing algorithms, especially when the lower dimension gets up to 
several hundred, while the large dimension is at least one million.
In addition, we show that our algorithm works well for terabytes-size data 
in terms of accuracy and solvability.
 
Finally, we should emphasize that our main algorithm relies heavily on the notion of $\ell_1$ 
conditioning, and that the overall performance of it can be improved if better 
$\ell_1$ conditioning methods are derived.


\section*{Acknowledgments}
This work was supported in part by a grant from the Army Research Office.

\bibliographystyle{plain}
\bibliography{qr_5-trfmt.bib}

\end{document}